\documentclass[mnsc,nonblindrev]{informs3_nojournal}
\usepackage{fix-cm}
\OneAndAHalfSpacedXI
\usepackage[utf8]{inputenc}
\usepackage{algorithmic}
\usepackage{algorithm}
\usepackage{subcaption}
\usepackage{csquotes}
\usepackage[section]{placeins}

\usepackage{enumitem}
\usepackage{fontenc}    % use 8-bit T1 fonts
\usepackage{url}            % simple URL typesetting
\usepackage{booktabs}       % professional-quality tables
\usepackage{amsfonts}       % blackboard math symbols
\usepackage{nicefrac}       % compact symbols for 1/2, etc.

\newcommand{\E} { \mathbb{E} }
\newcommand{\calR}{\mathcal{R}}
\newcommand{\calS}{\mathcal{S}}
\newcommand{\calW}{\mathcal{W}}

\newcommand{\vx}{\mathbf{x}}
\newcommand{\va}{\mathbf{a}}
\newcommand{\vb}{\mathbf{b}}

\usepackage{natbib}
 \bibpunct[, ]{(}{)}{,}{a}{}{,}%
 \def\bibfont{\small}%
\allowdisplaybreaks

\TheoremsNumberedThrough     % Preferred (Theorem 1, Lemma 1, Theorem 2)
\ECRepeatTheorems

\EquationsNumberedThrough    % Default: (1), (2), ...
\MANUSCRIPTNO{}

\begin{document}
%%%%%%%%%%%%%%%%

% Outcomment only when entries are known. Otherwise leave as is and
%   default values will be used.
%\setcounter{page}{1}
%\VOLUME{00}%
%\NO{0}%
%\MONTH{Xxxxx}% (month or a similar seasonal id)
%\YEAR{0000}% e.g., 2005
%\FIRSTPAGE{000}%
%\LASTPAGE{000}%
%\SHORTYEAR{00}% shortened year (two-digit)
%\ISSUE{0000} %
%\LONGFIRSTPAGE{0001} %
%\DOI{10.1287/xxxx.0000.0000}%

% Author's names for the running heads
% Sample depending on the number of authors;
% \RUNAUTHOR{Jones}
% \RUNAUTHOR{Jones and Wilson}
% \RUNAUTHOR{Jones, Miller, and Wilson}
% \RUNAUTHOR{Jones et al.} % for four or more authors
% Enter authors following the given pattern:
\RUNAUTHOR{Elmachtoub, Kim, and Tan}

% Title or shortened title suitable for running heads. Sample:
% \RUNTITLE{Bundling Information Goods of Decreasing Value}
% Enter the (shortened) title:
\RUNTITLE{Learning Fair Demand Models}

% Full title. Sample:
% \TITLE{Bundling Information Goods of Decreasing Value}
% Enter the full title:
\TITLE{Learning Fair Demand Models}

% Block of authors and their affiliations starts here:
% NOTE: Authors with same affiliation, if the order of authors allows,
%   should be entered in ONE field, separated by a comma.
%   \EMAIL field can be repeated if more than one author
\ARTICLEAUTHORS{%
\AUTHOR{Adam N. Elmachtoub}
\AFF{Department of Industrial Engineering and Operations Research and Data Science Institute, Columbia University, New York, NY 10027, \EMAIL{adam@ieor.columbia.edu}}
\AUTHOR{Hyemi Kim}
\AFF{Department of Industrial Engineering and Operations Research, Columbia University, New York, NY 10027, \EMAIL{hk3181@columbia.edu}}
\AUTHOR{Jonathan Y. Tan}
\AFF{Department of Industrial Engineering and Operations Research, Columbia University, New York, NY 10027, \EMAIL{jyt2123@columbia.edu}}
% Enter all authors
} % end of the block

\ABSTRACT{Data-driven pricing is increasingly prevalent in sectors such as airlines, lending, insurance, and retail. By learning demand models from customer features and setting prices accordingly, these systems may generate discriminatory outcomes that raise fairness concerns. This leads to fundamental questions -- how and where should systems incorporate fairness considerations in the pricing pipeline, and how does it ultimately affect societal outcomes? To answer these, we study a stylized model where a seller has a two-stage decision pipeline comprising linear demand model estimation followed by price optimization. The seller considers fairness notions in training loss, price, and demand, under both parity-wise and Rawlsian perspectives. 

We show that equalizing training loss across consumer groups leads to multiple solutions, which in turn can result in undesirable outcomes despite being a standard approach in fair machine learning. Focusing instead on fairness applied directly to prices or demand, we compare two strategies that enforce fairness in either the demand estimation stage or the  price optimization stage. For parity-wise fairness, we characterize when each strategy yields higher social welfare under small fairness levels. We show that when market sizes and prices in the dataset are similar, imposing price fairness in the estimation stage is more beneficial to consumers, whereas imposing demand fairness in the optimization stage yields better consumer outcomes. For Rawlsian fairness, the two strategies coincide exactly. Lastly, we extend our model to alternate demand functions and conduct a case study using real-world vaccine pricing data.
}

% Fill in data. If unknown, outcomment the field
\KEYWORDS{fairness; pricing; machine learning}
%\HISTORY{This paper was submitted on XXXX, 2017.}

\maketitle
% \vspace{-25pt}

% Text of your paper here
\section{Introduction}
\label{sec:intro}
Data-driven pricing is now pervasive in practice across sectors including airlines, loans, insurance, transportation services, and retail. These systems estimate heterogeneous demand and set prices using customer features or segment-level attributes that incorporate behavioral, geographic, and demographic information, which raise broader concerns about transparency, consumer agency, and fairness.\footnote{https://www.nytimes.com/2025/11/28/opinion/dynamic-pricing-algorithms.html} For instance, Delta Air Lines recently faced criticism from U.S. lawmakers over fears that its planned use of artificial intelligence (AI) pricing tools could set individualized fares up to each consumer’s ``pain point''.\footnote{https://www.reuters.com/business/delta-air-assures-us-lawmakers-it-will-not-personalize-fares-using-ai-2025-08-01/}. Likewise, field experiments on the Instacart platform show that prices for  products from the same store at the same time can vary across consumers, breaking the commonly held notion that prices are uniform.\footnote{https://www.nytimes.com/2025/12/09/business/instacart-algorithmic-pricing.html} Beyond price disparities, fairness concerns also arise through market access. Survey evidence on vaccines shows willingness-to-pay differs strongly by income and socioeconomic status, implying that pricing policies can translate directly into access gaps \citep{wong2024cost}. Together, these examples highlight a fundamental tension between profit-oriented algorithmic pricing and fairness in both price and demand.

In this paper, we investigate how to learn demand models to ensure fairness and what the resulting consequences are on social outcomes.
A substantial body of research has focused on defining and enforcing fairness in machine learning \citep{mehrabi2021survey}. However, a lot of existing approaches treat fairness as a property of the predictive model instead of downstream decisions such as pricing. \citet{wang2024against} point out that systems that leverage machine learning to guide real-world decisions rely on proxy targets and produce disparate outcomes across groups. Similarly, \citet{scantamburlo2025prediction} emphasize the need for a clear conceptual separation between prediction and decision-making. This motivates us to examine how fairness can be incorporated across the pricing pipeline.

In particular, we consider a pricing process that involves two stages. First, in the estimation stage, a demand model is learned from observed data. Second, in the decision-making, an optimization problem is solved to find the profit-maximizing price. Several complementary questions arise which we address in this paper: \emph{What are the appropriate notions of fairness in this setting? Where should fairness be imposed -- the estimation stage or the decision-making stage? What are the corresponding impacts on societal outcomes?}

%Specifically, can fairness enforced at the estimation stage ensure equitable outcomes even if the subsequent decision-making process ignores fairness? Or, conversely, must fairness be imposed directly at the decision stage to achieve equitable results? Furthermore, if one stage proves more effective, under what conditions does it yield superior outcomes in terms of revenue, consumer surplus, and social welfare?

To begin, we consider prediction loss fairness \citep{khalili2023loss}, which is a common fairness notion in machine learning.  We next study two fairness notions that directly align with downstream objectives, originally proposed by \citet{cohen2022price}: \emph{price fairness}, which considers disparities in prices across groups, and \emph{demand fairness}, which considers disparities in market share (or access) across groups. We also consider two normative perspectives on fairness: parity-wise fairness and Rawlsian fairness. Parity-wise fairness focuses on reducing disparities in price and demand across groups. In many cases, fairness concerns arise at the group level, where pricing policies should not produce systematic disparities across protected attributes. For instance, New York has prohibited the so-called ``pink tax'' -- i.e., charging different prices for similar products based on gender -- since $2020$.\footnote{https://dos.ny.gov/news/former-governor-cuomo-reminds-new-yorkers-pink-tax-ban-goes-effect-today} Rawlsian fairness, in contrast, emphasizes improving outcomes for the least advantaged individual. In markets tied to essential goods and services, this concern is reflected in policy safeguards such as price caps that protect economically vulnerable consumers, including statutory caps on insulin out-of-pocket costs\footnote{S.954 -- Affordable Insulin Now Act of 2023: https://www.congress.gov/bill/118th-congress/senate-bill/954} and regulations that limit allowable rent increases\footnote{Rent Stabilization in NYC: https://www.nyc.gov/site/mayorspeu/programs/rent-stabilization.page}.

To formalize our analysis, we employ a stylized model with two customer groups and linear demand functions. Under this stylized model, we compare two frameworks. The first, which we term \emph{Fair-Estimate-then-Optimize} (FEO), incorporates fairness during the estimation of the demand function and subsequently uses the estimated model for pricing. The second, \emph{Estimate-then-Fair-Optimize} (EFO), applies fairness constraints in the pricing phase only similar to \citet{cohen2022price}. The two frameworks offer distinct advantages. FEO leaves the optimization stage untouched, allowing sales teams to focus solely on profit and commissions. In contrast, EFO produces an unbiased demand model. To provide a deeper comparison, we analytically compare the EFO and FEO frameworks under the proposed fairness criteria in terms of profit, consumer surplus, and social welfare. Our contributions can be summarized as follows:
\begin{enumerate}
%\item We propose two frameworks, Estimate-then-Fair-Optimize (EFO) and Fair-Estimate-then-Optimize (FEO), to model how fairness can be incorporated at two different stages, estimation and decision-making, of the data-driven pricing pipeline.
\item We demonstrate that under a stylized model, imposing parity-wise prediction loss fairness yields two potential solutions that induce opposite effects on consumer surplus and social welfare (Proposition~\ref{prop:ML_fairness_each}). When one of these solutions is chosen uniformly at random, both expected consumer surplus and social welfare decrease initially as the fairness level increases (Corollary~\ref{prop:ML_fairness_E}).
\item We analyze the impact of applying a little parity-wise fairness in the two frameworks. There exists a unique threshold, determined by the market sizes and price distributions, that characterizes whether EFO or FEO performs better in terms of consumer surplus and social welfare (Proposition~\ref{prop:parity_fairness}). We show that when the two groups have approximately the same market sizes and the same average prices in the training data, FEO is more beneficial to customers under parity-wise price fairness, whereas EFO is more beneficial under parity-wise demand fairness (Corollary~\ref{corollary2}).
\item We compare the seller's profit for FEO and EFO when imposing a little parity-wise fairness, and provide sufficient conditions under which FEO acts as a regularizer, improves estimation accuracy, and achieves higher profit than EFO (Proposition~\ref{prop4}).
\item Under Rawlsian fairness, EFO and FEO lead to the same outcome. Under certain conditions, fairness can increase profit, consumer surplus, and social welfare, challenging the conventional view that fairness constraints necessarily come at the cost of profit (Proposition~\ref{prop:rawlsian_fairness}).
\item We extend the two frameworks to linear and logistic demand models with personalized features. The corresponding problems are all convex optimization problems (Proposition~\ref{prop:convexity_linear} and \ref{prop:convexity_logistic}). 
% Furthermore, the training loss is convex with respect to the fairness level when imposing a little fairness (Proposition~\ref{prop:convexity_wrt_error_linear} and Corollary~\ref{prop:convexity_wrt_error_logistic})
\item We evaluate our frameworks on synthetic datasets and through a real-world case study of vaccination against tick-borne encephalitis (TBE). We define two groups based on income level, and estimate a logistic demand model using customer features such as socioeconomic status. Under parity-wise demand fairness, EFO strictly dominates FEO by delivering higher profit, consumer surplus, and social welfare. Other criteria offer no clear winner, instead requiring a policy choice to navigate the trade-off between profit and consumer surplus.%While most results align with our theoretical findings for the stylized model, feature-based demand models exhibit richer phenomena.
\end{enumerate}

\subsection{Related Literature}
Our work builds on and bridges three key areas of related literature: fairness in pricing, fairness in machine learning, and fairness in decision-making.

\paragraph{Fairness in pricing.}
There has been growing interest in developing fair pricing policies. \citet{cohen2022price} introduce formal fairness definitions in pricing and evaluate their implications in both infinite supply and monopolistic market settings. In a related vein, \citet{elmachtoub2024fair} examine fairness considerations under finite supply conditions, with a particular focus on vehicle-sharing systems. \citet{kallus2021fairness} provide a comprehensive taxonomy of fairness and welfare principles for personalized pricing across various domains, including lending, consumer goods, and public services, highlighting the inherent trade-offs among revenue, accessibility, and social equity. \citet{xu2022regulatory} propose two regulatory policy instruments: the $\epsilon$-difference constraint and the $\gamma$-ratio constraint, designed to balance consumer and producer surplus in the context of personalized pricing. Several papers investigate dynamic pricing and learning algorithms under fairness constraints \citep{cohen2025dynamic, chen2025fairness, chen2025utility}. %explore the integration of fairness into personalized contextual pricing, analyzing the trade-offs between fairness requirements and optimal revenue generation in both static and dynamic environments. 
\citet{gillis2019big} highlight that complex machine learning algorithms in pricing limit the applicability of existing law. \citet{quanexpress} provide field-experimental evidence that greater consumer data availability intensifies data-driven price discrimination, while regulatory intervention effectively alleviate its extent.

\paragraph{Fairness in Machine Learning.}
In the machine learning literature, fairness has been extensively studied at the level of predictive models. Numerous formal definitions have been proposed, including statistical parity or group fairness \citep{dwork2012fairness}, equalized loss \citep{khalili2023loss}, equal opportunity \citep{chouldechova2017fair}, equalized odds \citep{hardt2016equality}, accuracy equality \citep{berk2021fairness}, calibration \citep{kleinberg2016inherent}, and counterfactual fairness \citep{kusner2017counterfactual}.
These criteria are designed to mitigate disparate impact or disparate treatment by ensuring that predictive models do not systematically favor or disadvantage specific subpopulations. \citet{kallus2022assessing} provide bounds on disparity when protected class membership is not observed in the data.

\paragraph{Fairness in Decision-Making.}
In addition to the pricing context, a growing body of work -- particularly in operations research and operations management -- studies fairness at the level of decision-making, emphasizing fairness criteria defined over the outcomes induced by decisions. \citet{xinying2023guide} provide a comprehensive survey of how equity and fairness are formulated within optimization models, covering a wide range of decision-level criteria that integrate efficiency and fairness considerations.
In this line of work, \citet{bertsimas2011price,bertsimas2012efficiency} study resource allocation problems under a broad class of fairness objectives, including max–min and proportional fairness, characterizing the efficiency–fairness trade-off by quantifying the ``price of fairness.'' Beyond foundational formulations, related work also develops algorithms for decision-level fairness in specific operational contexts. Several works  provide fair algorithms for assortment planning \citep{chen2022fair, housni2025fairness, lu2023simple} and online matching \citep{ma2020group, ma2023fairness}. \citet{freund2023group} develop algorithms to access group fairness in dynamic refugee assignment problems. In particular, \citet{tang2023learning} develop fair policies for online resource allocation from collected data, showing how fairness can be embedded into data-driven decision pipelines at the allocation stage. \citet{tsang2026unified} develop a unified framework for convex fairness measures, providing an axiomatic characterization and a dual representation, and analyzing how different fairness measures affect the optimal value and solution in fairness-promoting optimization problems. 

A growing literature studies settings in which predictions are used to inform decisions, and examines where fairness constraints should be imposed within this prediction–decision pipeline. \citet{kleinberg2018human} distinguish prediction from decision-making, showing that improvements in predictive accuracy do not necessarily lead to better decisions or outcomes. \citet{mclaughlin2022fairness} show that excluding information about protected groups from the prediction may fail to reduce disparities in machine-assisted human decisions. \citet{corbett2017algorithmic} formulate fairness in pretrial decision-making as a constrained optimization problem that seeks to maximize utility while satisfying fairness constraints. \citet{liu2018delayed} demonstrate that focusing solely on fairness in prediction, particularly in the context of lending, can adversely affect customer utility. \citet{chohlas2024learning} propose a consequentialist framework that learns optimal decision policies by maximizing stakeholder-defined utility, thereby balancing fairness and practical outcomes through the use of contextual bandits and linear programming.

\section{Modeling Framework}
In Section~\ref{sub:demand_model}, we introduce a stylized model to analyze how fairness constraints influence pricing decisions and measures such as profit, consumer surplus, and social welfare. In Section~\ref{sub:pricing_pipeline}, we  describe the two-stage pricing pipeline comprising of demand estimation and pricing optimization. In Section~\ref{sub:fairness_criteria}, we present several fairness criteria -- prediction loss, price, and demand -- under two normative perspectives: parity-wise and Rawlsian fairness. Finally in Section~\ref{sub:FEO_vs_EFO}, we distinguish between two implementation frameworks, Fair-Estimate-then-Optimize (FEO) and Estimate-then-Fair-Optimize (EFO), which impose fairness at different stages of the pipeline.

\subsection{Demand Model}
\label{sub:demand_model}
We consider a stylized model in which a seller offers a single product, with a cost $c \geq 0$, to two distinct customer groups. The seller must determine a price for each group to maximize profit. Customers are distinguished based on an observable binary attribute $g \in \{0,1\}$, which allows the seller to set group-specific prices $p_g$. For example, $g$ may correspond to attributes such as operating system, income level, or geographic location. We assume a linear demand function,
\begin{equation*}
% d(g):=\left(1-g\right)a_0+ga_1+\left(\left(1-g\right)b_0+g b_1\right)p+\left(1-g\right)\epsilon_0+g\epsilon_1,
a_g+b_gp+\epsilon_g,
\end{equation*}
where $a_g$ is the market size of group $g$, $b_g$ is the price sensitivity of group $g$, and $\epsilon_g$ is some random noise for group $g$. We assume $\mathbb{E}[\epsilon_g]=0$ for $g\in\{0,1\}$. The variance of $\epsilon_g$, denoted by $\sigma_g^2:=\E[\epsilon_g^2]$, differs across groups. This reflects heterogeneous demand uncertainty across customer segments, which commonly arises in practice from varying measurement noise or data quality across groups. %as is common in practice, for instance due to differences in measurement noise or data quality across groups. 
We assume $b_0 < 0$ and $b_1 < 0$, implying that demand is downward sloping with respect to price.

To assess the implications of fairness criteria, we evaluate their impact using canonical economic measures: \textit{profit}, \textit{consumer surplus}, and \textit{social welfare}. The \textit{profit} with prices $(p_0,p_1)$ is
\begin{equation}
    \calR(p_0,p_1) := \sum_{g \in \{0,1\}} \left(p_g - c\right) \max\left(0,a_g+b_g p_g\right).
    \label{eq:profit}
\end{equation}
The \textit{consumer surplus} for group $g$ is then $\calS_g(p_g) = \frac{1}{2}(a_g+b_g p_g)\max\left(-\frac{a_g}{b_g}-p_g, 0\right)$,
and the \textit{total surplus} is $\calS(p_0,p_1) =  \calS_0(p_0) +  \calS_1(p_1)$. Finally, we define \textit{social welfare} as the sum of profit and consumer surplus, i.e., $\calW(p_0,p_1) := \calR(p_0,p_1) + \calS(p_0,p_1)$, which shall quantifies the overall societal impact of fairness constraints.

\subsection{Pricing Pipeline}
\label{sub:pricing_pipeline}
The pricing process typically involves two stages: learning (or estimating) a demand model, followed by pricing decisions (or optimization). Machine learning models primarily address the first stage while the outputs of these models are subsequently used in downstream decision-making processes, such as profit maximization. Our work assumes that the downstream decision-maker seeks to maximize profit using a pretrained demand model. We formulate these two stages below. 
\paragraph{Estimation Stage.} 
Suppose the machine learning team is given a dataset $\mathcal{D} := \{(g^{(i)}, p^{(i)}, d^{(i)})\}_{i=1}^{n}$, where $g^{(i)} \in \{0,1\}$ represents the group label, $p^{(i)} \in \mathbb{R}_{\geq0}$ denotes the price, and $d^{(i)} \in \mathbb{R}_{\geq0}$ indicates the demand of customer $i$ at price $p^{(i)}$. Here, we assume that the market sizes, $a_0$ and $a_1$, are known (as in \citet{qiang2016dynamic}). The machine learning model aims to estimate the demand parameter $b_g$ from the data. In this work, we adopt the classical mean squared error as the loss function. Thus, the optimal estimate ignoring fairness is obtained by
\begin{equation}
\begin{aligned}
    \left(\hat{b}_0^{\mathrm{LS}},\hat{b}_1^{\mathrm{LS}}\right)  := \arg\min_{\hat{b}_0, \hat{b}_1\leq0} \sum_{g\in\{0,1\}} \ell_g(\hat{b}_g),
\end{aligned}
\label{eq:unconstrained}
\end{equation}
where $\ell_g(\hat{b}_g):=\frac{1}{n_g}\sum_{\{i|g^{(i)}=g\}}\left(a_g+\hat{b}_gp^{(i)}-d^{(i)}\right)^2$ and $n_g:=\left|\left\{i|g^{(i)}=g\right\}\right|$ denotes the number of samples in group $g$.
\paragraph{Optimization Stage.} 
Given an estimated demand model with estimated parameters $\hat{b}_0$ and $ \hat{b}_1$, the optimization team aims to maximize the expected profit based on the demand model. More specifically, we denote by $\hat{\mathcal{R}}$ the profit function computed by the least squares estimators, i.e., 
\begin{equation*}
    \hat\calR(p_0,p_1;\hat b_0, \hat b_1) := \sum_{g \in \{0,1\}} \left(p_g - c\right) \max\left(0,a_g+\hat b_g p_g\right).
\end{equation*}
Therefore, the optimal price is set by
\begin{equation}
\begin{aligned}
    \left(p_0^*(\hat{b}_0,\hat{b}_1), p_1^*(\hat{b}_0,\hat{b}_1)\right):= \arg\max_{p_0, p_1\geq 0} \hat\calR(p_0,p_1;\hat b_0, \hat b_1).
\end{aligned}
\label{eq:price}
\end{equation}
Since the company cannot make any profit when setting a price less or equal than $c$, we have the following assumption:
\begin{assumption}
    \label{assumption:1}
    The least squares estimator satisfies $a_g+\hat{b}_g^{\mathrm{LS}}c>0$ for $g\in\{0,1\}$.
\end{assumption}

For simplicity, we use $p_g^{\mathrm{LS}}$ and $d_g^{\mathrm{LS}}$ to denote the optimal prices and corresponding estimated demand of least squares estimators, i.e., $p_g^{\mathrm{LS}}:=p_g^*\left(\hat{b}_0^{\mathrm{LS}},\hat{b}_1^{\mathrm{LS}}\right)$ and $d_g^{\mathrm{LS}}:=a_g+\hat{b}_g^{\mathrm{LS}}p_g^{\mathrm{LS}}$. Under Assumption \ref{assumption:1}, they can be written as $p_g^{\mathrm{LS}}=-\tfrac{a_g}{2\hat{b}_g^{\mathrm{LS}}}+\tfrac{c}{2}$ and $d_g^{\mathrm{LS}}=\tfrac{a_g}{2}+\tfrac{c\hat{b}_g^{\mathrm{LS}}}{2}$. %We also denote $\tilde{d}_g^{\mathrm{LS}}:=\tfrac{d_g^{\mathrm{LS}}}{a_g}$ as the normalized estimated demand for each group under LS.

\subsection{Fairness Criteria}
\label{sub:fairness_criteria}
Fairness concerns in pricing systems can arise at multiple stages of the pricing pipeline, from learning demand models to setting prices and determining market outcomes. As a result, what it means for a pricing system to be ``fair'' depends on which aspect of the system is being evaluated. In this section, we first clarify the objects of fairness considered in this paper. % before formalizing the corresponding fairness notions.}
We consider three objects of fairness:  prediction loss, price, and demand. Each captures a distinct dimension at which unfairness may arise.

First, we consider fairness in  prediction loss (or estimation error), which captures how accurately each group's demand model is learned. Because demand learning underlies subsequent pricing decisions, disparities in learning loss across groups indicate that the algorithm is systematically less accurate for some groups than for others. These learning-stage disparities shape the information on which pricing decisions are based and may therefore affect downstream pricing outcomes even when pricing policies appear similar across groups. Moreover, considering fairness in prediction loss across groups is a standard notion of fairness in machine learning.

Second, we consider price fairness. Because price reflects the pricing policy itself, differences in prices across groups are directly interpretable as differences in how the algorithm treats those groups. This interpretability makes price-based fairness particularly salient in practice and closely aligned with regulatory approaches, such as bans on particular types of  price discrimination.

Finally, we consider demand fairness. Because demand captures who participates in the market and at what level, disparities in demand across groups indicate differences in access or exclusion resulting from pricing decisions. These outcome-level disparities matter because they capture the ultimate consequences of pricing algorithms. This is particularly important in setting such as education and healthcare. 

For each type of fairness, we consider two normative notions: parity-wise fairness and Rawlsian fairness. Parity-wise fairness is a group fairness criterion that aims to equalize the corresponding metric across groups. For example, parity-wise price fairness requires prices to be (near) equal across groups. In contrast, Rawlsian fairness, also known as max-min fairness \citep{bertsimas2011price}, is an individual fairness notion that prioritizes protecting the worst-off individuals in terms of price or demand. For instance, Rawlsian demand fairness focuses on improving the demand of the group with the lowest demand.

Mathematically, learning demand models with parity-wise loss fairness can be formulated as the following constrained optimization problem:
\begin{equation*}
\begin{aligned}
    \min_{\hat{b}_0,\hat{b}_1\leq0} \quad & \ell_0(\hat{b}_0)+\ell_1(\hat{b}_1) \\
    \text{s.t.} \quad & \left|\,\ell_0(\hat{b}_0)-\ell_1(\hat{b}_1) \,\right| \leq (1 - \alpha)\left|
    \,\ell_0(\hat{b}_0^{\mathrm{LS}})-\ell_1(\hat{b}_1^{\mathrm{LS}})\, \right|.
    \label{eq:fair_opt}
\end{aligned}
\end{equation*}
The unit-less parameter $\alpha \in [0,1]$ controls the level of fairness. When $\alpha = 0$, the fairness constraint is inactive, corresponding to the status quo \eqref{eq:unconstrained} in which the optimization is fairness-agnostic. In contrast, $\alpha = 1$ enforces perfect fairness by requiring equal loss (i.e., equal error) across the two groups. Note that Rawlsian loss fairness is vacuous in our stylized model because losses across groups are uncoupled. Specifically, each group’s loss depends only on its own group parameter, and the unconstrained optimization already minimizes each group’s loss independently, leaving no scope for further improvement under a Rawlsian objective.

%The remaining fairness criteria share the same objective as in~\eqref{eq:fair_opt}, 
Table~\ref{tab:fairness} summarizes the corresponding constraints for price and demand fairness under parity-wise and Rawlsian notions. These constraints can appear either in the estimation stage or the optimization stage, as we shall discuss in the next subsection. The parameters $\bar{p}$, $\underline{p}$, $\bar{d}$, and $\underline{d}$ in Table~\ref{tab:fairness} can be chosen manually depending on the desired maximum or minimum price or demand levels. 
%For instance, in a stylized model, $\bar{p}$ (or $\bar{d}$) is the maximum optimal price (or maximum normalized demand) of two groups, while $\underline{p}$ (or $\underline{d}$) is the minimum optimal price (or minimum normalized demand) of two groups. 
For one specific fairness criterion, denote $\mathcal{F}(\alpha)$ as the set of feasible prices that satisfy this fairness criterion. For instance, under parity-wise price fairness, $\mathcal{F}(\alpha)=\{(p_0,p_1):\left|p_0-p_1\right|\leq\left(1-\alpha\right)\left|p_0^{\mathrm{LS}}-p_1^{\mathrm{LS}}\right|\}$.

\begin{table}[htbp]
    \centering
    \caption{Price and demand fairness under Rawlsian and parity-wise perspectives}
    \label{tab:fairness}
    {\begin{tabular}{c|cc}
        \toprule
        &Parity-wise & Rawlsian \\
        \midrule
        % Loss 
        % & $|\ell_0(\hat{b}_0)-\ell_1(\hat{b}_0) | \leq (1 - \alpha)\ell_0(\hat{b}_0^{\mathrm{LS}})-\ell_1(\hat{b}_0^{\mathrm{LS}})|$
        % & $-$\\[3pt]
        
        Price 
        & $\left|\;p_0-p_1\right|\leq\left(1-\alpha\right)\left|p_0^{\mathrm{LS}}-p_1^{\mathrm{LS}}\right|$
        & $p_g\leq \left(1-\alpha\right)\bar{p}+\alpha\underline{p}$\\[3pt]

        Demand
        &$|\nicefrac{d_0}{a_0}-\nicefrac{d_1}{a_1}| \leq (1 - \alpha)|\nicefrac{d_0^{\mathrm{LS}}}{a_0}-\nicefrac{d_1^{\mathrm{LS}}}{a_1}|$
        &$\nicefrac{d_g}{a_g}\geq \left(1-\alpha\right)\underline{d}+\alpha\bar{d}$\\
        \bottomrule
\end{tabular}}
    \begin{flushleft}
    \footnotesize
    \begin{center}
        \vspace{0.7em}
        \textit{Note.} $d_g$ represents the ex-ante demand for group $g$, derived from the firm’s pricing decision $p_g$, i.e., $a_g + \hat{b}_gp_g$.
    \end{center}
    \end{flushleft}
\end{table}

\subsection{Fair-Estimate-then-Optimize (FEO) vs. Estimate-then-Fair-Optimize (EFO)}
\label{sub:FEO_vs_EFO}
Equalized loss fairness is naturally imposed at the estimation stage -- where model predictions directly inform downstream decisions. In contrast, Rawlsian and parity-wise fairness can be applied either during demand estimation or at the decision stage. We formalize this distinction through two frameworks: \emph{Fair-Estimate-then-Optimize (FEO)}, which imposes fairness during estimation, and \emph{Estimate-then-Fair-Optimize (EFO)}, which enforces fairness during decision making. Mathematically, for a certain fairness criterion, FEO can be written as:
\begin{equation}
    \label{FEO}
    \begin{aligned}
        \min_{\hat{b}_0, \hat{b}_1\leq0} \quad&\sum_{g\in\{0,1\}}\ell_g(\hat{b}_g)\qquad \text{s.t.}~ \left(p_0^*(\hat{b}_0,\hat{b}_1), p_1^*(\hat{b}_0,\hat{b}_1)\right)\in \mathcal{F}(\alpha).
        %\text{s.t.}\quad& \left(p_0^*(\hat{b}_0,\hat{b}_1), p_1^*(\hat{b}_0,\hat{b}_1)\right)\in \mathcal{F}(\alpha).
    \end{aligned}
\end{equation} 
We replace the unconstrained problem \eqref{eq:unconstrained} with the fairness-constrained formulation \eqref{FEO}, which can be interpreted as a bilevel optimization problem. In this formulation, the upper-level problem selects the estimators $(\hat{b}_0, \hat{b}_1)$ by minimizing the estimation loss $\ell_g(\hat{b}_g)$, while the lower-level problem determines the induced optimal prices $p_g^*$ as a function of $(\hat{b}_0, \hat{b}_1)$ by maximizing the profit in~\eqref{eq:price}. This internalizes fairness constraints within the estimation stage, ensuring that the resulting decisions satisfy the desired fairness criteria even when the downstream decision stage is purely profit-maximizing. We denote $\left(\hat{b}_0^{\mathrm{FEO}},\hat{b}_1^{\mathrm{FEO}}\right)$ as the optimal solution of \eqref{FEO} and $p_g^{\mathrm{FEO}}:=p_g^*(\hat{b}_0^{\mathrm{FEO}},\hat{b}_1^{\mathrm{FEO}})$.
%Here, the decision will satisfy a certain level of fairness even if the downstream decision-maker does not consider fairness and only maximizes the expected profit. We denote $\left(\hat{b}_0^{\mathrm{FEO}},\hat{b}_1^{\mathrm{FEO}}\right)$ as the optimal solution of \eqref{FEO} and $p_g^{\mathrm{FEO}}:=p_g^*(\hat{b}_0^{\mathrm{FEO}},\hat{b}_1^{\mathrm{FEO}})$. 

On the other hand, EFO can be written as optimization problem \eqref{EFO}.
\begin{equation}
    \label{EFO}
    \begin{aligned}
        \max_{p_0, p_1\geq 0} \quad&\hat{\mathcal{R}}(p_0,p_1;\hat b_0^{\mathrm{LS}}, \hat b_1^{\mathrm{LS}})\qquad \text{s.t.}~ \left(p_0, p_1\right)\in \mathcal{F}(\alpha).
        % \max_{p_0, p_1\geq 0} \quad&\hat{\mathcal{R}}(p_0,p_1;\hat b_0^{\mathrm{LS}}, \hat b_1^{\mathrm{LS}})\\
        % \text{s.t.}\quad& \left(p_0, p_1\right)\in \mathcal{F}(\alpha).
    \end{aligned}
\end{equation}
Here, we utilize the estimates $(\hat{b}_0^{\text{LS}}, \hat{b}_1^{\text{LS}})$ to solve a profit-maximization problem under fairness constraints. This ensures the resulting prices are optimal given the estimated parameters while satisfying the fairness criterion. We denote $\left(p_0^{\mathrm{EFO}}, p_1^{\mathrm{EFO}}\right)$ as the optimal solution of \eqref{EFO}. 

The detailed comparison between the two frameworks, FEO and EFO, will be discussed in Section \ref{sec:theory}. It is worth noting that, from a practical standpoint, there are clear distinctions between the two frameworks in terms of flexibility and the scope of fairness enforcement. Under FEO, fairness is incorporated at the estimation stage, leaving the downstream optimization stage untouched. This is particularly convenient when the optimization engine may be a complex black box or when sales agents are only driven by profits. Once the demand model is estimated under a given fairness constraint, the decision maker can reuse it without having to further account for fairness. In contrast, EFO enforces fairness only at the optimization (decision) stage, allowing the decision maker greater flexibility to adjust fairness levels to specific contexts or objectives without retraining the demand model. Moreover, the demand model is unbiased and may yield higher profit. Beyond these differences, we will theoretically study the impact on how these strategies affect consumer surplus and social welfare in Section \ref{sec:theory}.

\section{Theoretical Analysis}
\label{sec:theory}
In this section, we analytically compare two frameworks, Fair-Estimate-then-Optimize (FEO) and Estimate-then-Fair-Optimize (EFO), through the lens of Rawlsian and parity-wise perspectives.

\subsection{Parity-wise Fairness}
\paragraph{Prediction Loss Fairness.} Within parity-wise fairness, we first consider loss fairness. Since the mean squared error (MSE) is a quadratic function, enforcing parity-wise loss fairness with a quadratic objective may result in two optimal solutions. Proposition \ref{prop:ML_fairness_each} shows that the choice between these two optimal solutions has a distinct impact on downstream outcomes such as total surplus and social welfare, even with infinite data. Notably, one may be considered more effective than the other. All proofs are provided in Appendix~\ref{appendix_proof}.

\begin{proposition}
    Without loss of generality, assume $\sigma_1^2 >\sigma_0^2$ and $n_g \rightarrow \infty$ for all $g \in \{0,1\}$. For $\alpha \leq \alpha^{\star}:=\left(b_0\right)^2\tfrac{\E[p^2 \mid g=0]}{\sigma_1^2 - \sigma_0^2}$, there are two optimal solutions: $\big(b_0^{-}(\alpha), b_1\big)$ and $\big(b_0^{+}(\alpha), b_1\big)$.
    \begin{itemize}
    \item Under $\big(b_0^{-}(\alpha), b_1\big)$, both consumer surplus and social welfare increase monotonically with $\alpha$.  
    \item Under $\big(b_0^{+}(\alpha), b_1\big)$, both consumer surplus and social welfare decrease monotonically with $\alpha$.  %for $\alpha \leq \alpha^{\star} \!\left( \tfrac{a_0 + b_0c}{2a_0 + b_0 c} \right)^2\!$, consumer surplus and social welfare decreases. For the remaining range of $\alpha$, consumer surplus and social welfare are zero.
    \item Profit with $\big(b_0^{+}(\alpha), b_1\big)$ decreases more sharply than under $\big(b_0^{-}(\alpha), b_1\big)$ for all $\alpha$ up to a certain threshold $\bar{\alpha}$. In the remaining range, the profit under at least one of the cases, $b_0^{+}$ or $b_0^{-}$, becomes zero.
    \end{itemize}
    For $\alpha > \alpha^{\star}$, the unique solution is $\big(b_0^{-}(\alpha), b_1\big)$, and the changes in profit, consumer surplus, and social welfare match those for $\alpha < \alpha^{\star}$.
    \label{prop:ML_fairness_each}
\end{proposition}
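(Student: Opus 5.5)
In the limit $n_g\to\infty$ the losses become explicit quadratics, and I would reduce the whole statement to a one-dimensional analysis in $\hat b_0$. By the law of large numbers,
\[
\ell_g(\hat b_g)\to \E\big[((\hat b_g-b_g)p+\epsilon_g)^2\mid g\big]=(\hat b_g-b_g)^2 m_g+\sigma_g^2,
\]
where $m_g:=\E[p^2\mid g]$ and I assume the noise is independent of the price. Hence $\hat b_g^{\mathrm{LS}}=b_g$, and the unconstrained loss gap is $\sigma_1^2-\sigma_0^2>0$. The constraint then reads
\[
\bigl|(\hat b_0-b_0)^2m_0+\sigma_0^2-(\hat b_1-b_1)^2m_1-\sigma_1^2\bigr|\le(1-\alpha)(\sigma_1^2-\sigma_0^2).
\]
Raising group 1's loss only increases the objective and widens the gap, so the minimizer has $\hat b_1=b_1$. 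The problem becomes: minimize $(\hat b_0-b_0)^2m_0$ subject to $(\hat b_0-b_0)^2m_0\ge \alpha(\sigma_1^2-\sigma_0^2)$ (the other side of the constraint is slack). The minimum is attained with equality, which gives
\[
b_0^{\pm}(\alpha)=b_0\pm\sqrt{\alpha(\sigma_1^2-\sigma_0^2)/m_0}.
\]
Feasibility requires $\hat b_0\le 0$. The root $b_0^-$ is always admissible. The root $b_0^+$ is admissible iff $\sqrt{\alpha(\sigma_1^2-\sigma_0^2)/m_0}\le -b_0$, i.e.\ $\alpha\le b_0^2 m_0/(\sigma_1^2-\sigma_0^2)=\alpha^\star$. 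This yields two solutions for $\alpha\le\alpha^\star$ and only $b_0^-$ beyond it.

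Next I would plug in the downstream prices. Group 1 is unchanged, so only group 0 matters. With estimate $\hat b_0$ the price is $p_0=-a_0/(2\hat b_0)+c/2$, valid when $a_0+\hat b_0c>0$ (otherwise no profitable sale). True demand is $a_0+b_0p_0$, and $\calS_0=(a_0+b_0p_0)^2/(-2b_0)$ whenever that demand is positive. For $b_0^-$, the estimate is more negative than $b_0$, so $p_0$ decreases monotonically in $\alpha$ while staying above $c$ as long as $a_0+\hat b_0 c>0$. Below the monopoly price, a lower price increases both consumer surplus and welfare, since welfare is decreasing in price for prices above $c$. So both increase. For $b_0^+$, $\hat b_0\uparrow 0$, so $p_0$ increases monotonically (to $\infty$ at $\alpha^\star$). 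True demand falls and eventually hits zero at $\alpha=\alpha^\star\big(\tfrac{a_0+b_0c}{2a_0+b_0c}\big)^2$, which I would compute as the point where $a_0+b_0p_0=0$. Hence $\calS$ and $\calW$ are nonincreasing. I would state the monotonicity as weak where the quantity is clamped at zero, or restrict accordingly.

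For profit, I would write $R_0$ as a function of the deviation $\delta=\hat b_0-b_0=\pm t$, with $t=\sqrt{\alpha(\sigma_1^2-\sigma_0^2)/m_0}$. The true profit $(p-c)(a_0+b_0p)$ is a concave quadratic in $p$ maximized at $p^\star=p_0(b_0)$. Hence the loss is $-b_0(p-p^\star)^2$, and I need $|p_0(b_0+t)-p^\star|\ge|p_0(b_0-t)-p^\star|$. Since $p_0(\hat b)=-a_0/(2\hat b)+c/2$, this is the comparison
\[
\frac{1}{|b_0|-t}-\frac{1}{|b_0|}\;\ge\;\frac{1}{|b_0|}-\frac{1}{|b_0|+t},
\]
which holds by convexity of $x\mapsto 1/x$. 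This is the main technical step, and it must also handle the truncations. I would take $\bar\alpha$ as the first $\alpha$ at which either case's profit hits zero: for $b_0^+$ when true demand vanishes, and for $b_0^-$ when $a_0+\hat b_0c\le 0$ so no profitable price exists. Past $\bar\alpha$ the last bullet holds by construction. Finally, for $\alpha>\alpha^\star$ the solution is still $b_0^-(\alpha)$ with the same formula, so the same monotonicity arguments carry over verbatim.
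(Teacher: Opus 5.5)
Outside the profit bullet, your plan matches the paper's proof. That covers the large-sample losses $(\hat b_g-b_g)^2\E[p^2\mid g]+\sigma_g^2$, the binding constraint with $\hat b_1=b_1$, and the roots $b_0\pm\sqrt{K\alpha}$ with $b_0^+$ admissible exactly up to $\alpha^\star$. It also covers surplus and welfare monotonicity, which you get from the monotone movement of $p_0^{\pm}$ together with welfare decreasing in price above $c$; this is the paper's $d\calW^-/d\alpha=b_0(p_0^--c)\,dp_0^-/d\alpha$. Your $\bar\alpha$, the first point where either profit vanishes, is the range $\alpha<\min\{\alpha^+,\alpha^-,\alpha^\star\}$ on which the paper compares profits.

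\textbf{The gap is in the profit bullet.} Convexity of $1/x$ gives you only a comparison of levels, $\calR^{+}(\alpha)\le\calR^{-}(\alpha)$, i.e.\ the cumulative profit loss since $\alpha=0$ is larger under $b_0^+$. The claim that profit under $b_0^+$ ``decreases more sharply'' for every $\alpha<\bar\alpha$ is about slopes, $\frac{d\calR^{+}}{d\alpha}<\frac{d\calR^{-}}{d\alpha}<0$. This slope form is exactly what the proof of Corollary~\ref{prop:ML_fairness_E} invokes. Two curves that agree at $\alpha=0$ and are ordered in level need not be ordered in slope at every $\alpha$, so your argument does not deliver the bullet as stated.

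The repair is one differentiation in your own parametrization. Set $B=|b_0|$, $K=(\sigma_1^2-\sigma_0^2)/\E[p^2\mid g=0]$, $t=\sqrt{K\alpha}$, and keep your $p^\star=p_0(b_0)$. For $\alpha<\bar\alpha$,
\[
-b_0\bigl(p_0^{\pm}(\alpha)-p^\star\bigr)^2=\frac{a_0^2\,t^2}{4B\,(B\mp t)^2},
\qquad
\frac{d}{d\alpha}\left[\frac{a_0^2\,t^2}{4B\,(B\mp t)^2}\right]=\frac{a_0^2K}{4\,(B\mp t)^3}.
\]
Hence the ratio of the profit slopes is $\bigl(\frac{B+t}{B-t}\bigr)^3=\bigl(b_0^-(\alpha)/b_0^+(\alpha)\bigr)^3>1$, which is exactly the ratio the paper derives. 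Integrating from $\alpha=0$ then recovers your level inequality as a consequence.

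A smaller loose end is the $b_0^-$ branch beyond the point $\alpha^-$ where $a_0+b_0^-(\alpha)c=0$. There the seller's estimated-profit argmax is the whole interval $[-a_0/b_0^-(\alpha),\infty)$, so ``no profitable sale'' does not pin down the price. The paper adopts $p_0^-=c$, which keeps $\calS^-$ and $\calW^-$ constant (so only weakly increasing) and the profit at zero. You need some such convention for the first bullet to hold on that range.
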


% The first plot in Figure~\ref{fig:ML_fairness} 
Figure~\ref{fig:EL_fairness_1} illustrates the statements of Proposition~\ref{prop:ML_fairness_each}. With respect to metrics such as profit, consumer surplus, and social welfare, the solution $b^{-}(\alpha)$ is preferable to $b^{+}(\alpha)$. In practice, however, when an optimization problem admits multiple optimal solutions, solvers -- such as Gurobi \citep{gurobi} or Pyomo \citep{bynum2021pyomo} -- return a single optimal solution, selected according to internal heuristics and default solver behavior. In the absence of specific user configurations (e.g., enabling solution pools, setting deterministic options, or applying post-processing), no attempt is made to identify or enumerate alternative optimal solutions.

\begin{figure}[htbp]
\FIGURE{
\begin{minipage}{\textwidth}
\centering
\captionsetup{justification=centering}
\begin{subfigure}{0.3\textwidth}
    \centering
    \includegraphics[width=\textwidth]{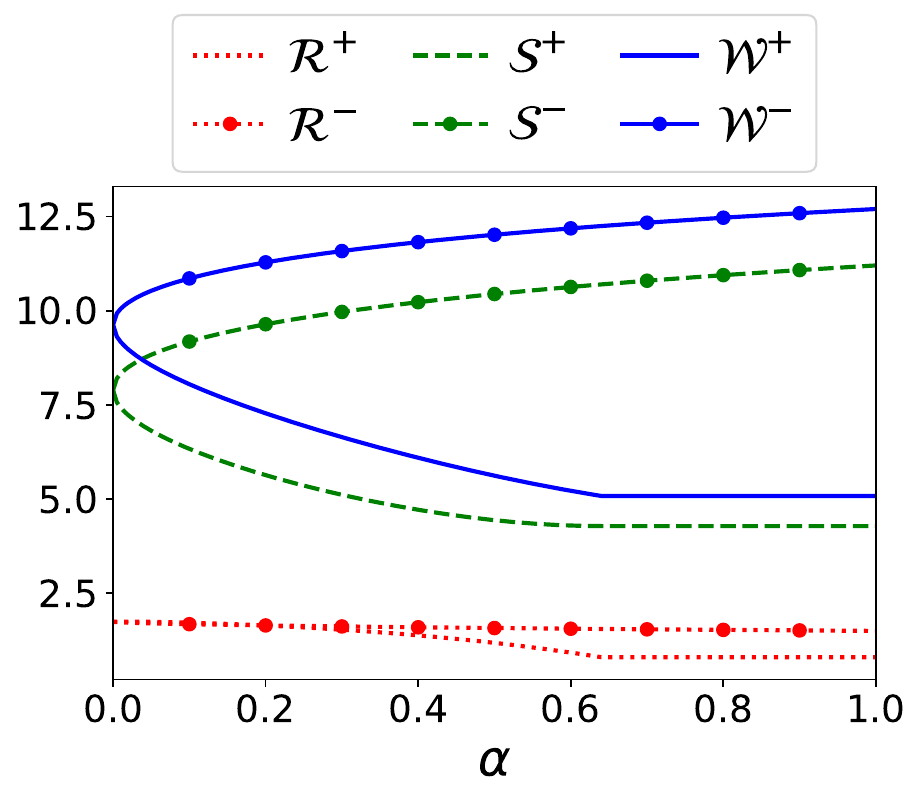}
    \caption{FEO}
    \label{fig:EL_fairness_1}
\end{subfigure}
\begin{subfigure}{0.3\textwidth}
    \centering
    \includegraphics[width=\textwidth]{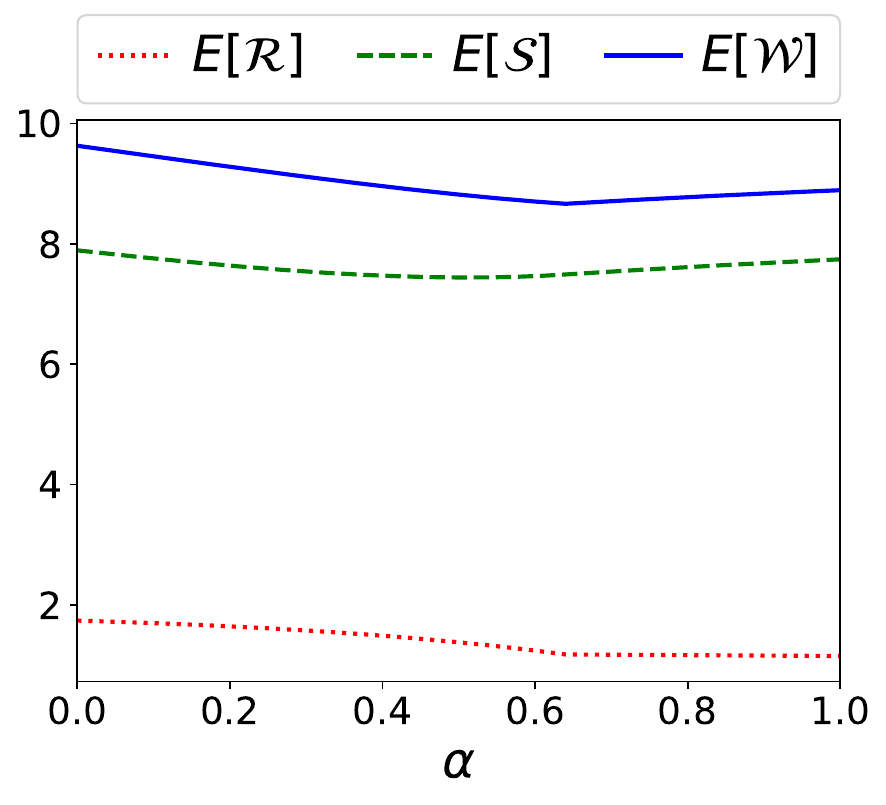}
    \caption{EFO}
    \label{fig:EL_fairness_2}
\end{subfigure}
\end{minipage}
}
{Results under Parity-wise Loss Fairness \label{fig:ML_fairness} \vspace{.3cm}}
{Figures \ref{fig:EL_fairness_1} and \ref{fig:EL_fairness_2} correspond to Proposition~\ref{prop:ML_fairness_each} 
and Corollary~\ref{prop:ML_fairness_E}, respectively. 
% The third plot illustrates the price gap, and the fourth plot depicts the demand gap. 
The parameters are 
$a_0 = 3.8$, $a_1 = 4$, 
$b_0 = -3.2$, $b_1 = -4$, 
$\sigma_0 = 0.1$, $\sigma_1 = 0.5$, 
and $c = 0.1$.}
\end{figure}

Reflecting this scenario, we now consider the case in which the solver selects among optimal solutions uniformly at random. Corollary \ref{prop:ML_fairness_E} demonstrates that, under these conditions, the expected value of relevant metrics -- consumer surplus and social welfare -- decline up to a certain point.

\begin{corollary}
\label{prop:ML_fairness_E}
Consider the case where $b^{-}(\alpha)$ and $b^+(\alpha)$ each are returned with probability $\nicefrac{1}{2}$. The expected consumer surplus and social welfare decrease with $\alpha$ when $\alpha\le\bar\alpha$. %is less than threshold $\bar\alpha$.
\end{corollary}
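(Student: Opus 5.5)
The plan is to reduce everything to a one-dimensional perturbation of $\hat b_0$ around $b_0$, and then show that the symmetric average of a metric over $b_0\pm\delta$ decreases in $\delta$.

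\textbf{Step 1: reduce to a one-parameter family.} As in the analysis behind Proposition~\ref{prop:ML_fairness_each}, write $m_0:=\E[p^2\mid g=0]$ and $\Delta:=\sigma_1^2-\sigma_0^2>0$. With $n_g\to\infty$, the loss becomes
\[
\ell_g(\hat b_g)=m_g(\hat b_g-b_g)^2+\sigma_g^2 .
\]
The two optimal solutions keep $\hat b_1=b_1$ and set $\hat b_0=b_0^{\pm}(\alpha)=b_0\pm\delta(\alpha)$, where $\delta(\alpha):=\sqrt{\alpha\Delta/m_0}$ is strictly increasing in $\alpha$. Group $1$ is unaffected, so only group $0$ matters for both metrics. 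Let $h(\hat b)$ denote the true group-$0$ consumer surplus, or social welfare, evaluated at the induced price $p(\hat b)=-\tfrac{a_0}{2\hat b}+\tfrac c2$. The expected metric is then, up to the constant group-$1$ term,
\[
K(\delta):=\tfrac12\big[h(b_0+\delta)+h(b_0-\delta)\big].
\]
It therefore suffices to show $K'(\delta)<0$ for $\delta\in(0,\delta(\bar\alpha)]$, where $\bar\alpha$ is the threshold from Proposition~\ref{prop:ML_fairness_each}. Beyond that threshold the $b_0^+$ branch has zero true demand, so $h$ is constant there.

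\textbf{Step 2: write $h$ in closed form.} Substituting the price gives true demand
\[
a_0+b_0p(\hat b)=a_0\Big(1-\tfrac{b_0}{2\hat b}\Big)+\tfrac{b_0c}{2}.
\]
The surplus is $\mathcal S_0=(a_0+b_0p)^2/(2|b_0|)$ and the welfare is $\calW_0=(p-c)(a_0+b_0p)+\mathcal S_0$. Both are explicit rational functions of $\hat b$ on the region where demand is positive. Since $K'(\delta)=\tfrac12\big[h'(b_0+\delta)-h'(b_0-\delta)\big]$, a sufficient condition is that $h$ is strictly concave in $\hat b$ on the interval $[b_0-\delta(\bar\alpha),\,b_0+\delta(\bar\alpha)]$.

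\textbf{Step 3: establish concavity, which is the main obstacle.} The natural change of variables is $u:=-1/\hat b>0$, under which the price $p$ is affine in $u$.
\begin{itemize}
\item For welfare, $\calW_0$ is a concave quadratic in $p$ with its peak at $p=c$, and $h$ is decreasing on the relevant range $p>c$. Since $\hat b\mapsto p(\hat b)$ is convex, a decreasing concave function of a convex map is concave. This gives concavity of $\calW_0$ directly.
\item For consumer surplus, $\mathcal S_0$ is a decreasing \emph{convex} quadratic in $p$, so the composition argument fails. Here I would compute $h''(\hat b)$ explicitly, using $p'=a_0/(2\hat b^2)$ and $p''=-a_0/\hat b^3$. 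The goal is to show that the negative term $\phi'(p)p''$ dominates the positive term $\phi''(p)(p')^2$ whenever demand is positive.
\item If global concavity of $\mathcal S_0$ fails, I would fall back to a direct comparison. The aim is to show $|h'(b_0+\delta)|>|h'(b_0-\delta)|$ on the interval. This should hold because the price increase on the $+$ branch exceeds the price decrease on the $-$ branch, by convexity of $p(\hat b)$. I would combine this with $|\phi'(p)|$ being comparable at the two points, which is the case while $\alpha\le\bar\alpha$ keeps true demand positive on the $+$ branch.
\end{itemize}

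\textbf{Step 4: sanity check near $\alpha=0$.} A second-order expansion gives
\[
K(\delta)-h(b_0)\approx\tfrac12h''(b_0)\,\alpha\Delta/m_0 .
\]
This confirms that the initial decrease hinges exactly on $h''(b_0)<0$, which is the same sign condition as in Step 3.
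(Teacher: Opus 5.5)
Your welfare half is sound, and it takes a different route from the paper. Since $\mathcal W_0'(p)=b_0(p-c)$, the map $h=\mathcal W_0\circ p$ is concave, strictly wherever $p>c$, as long as both induced prices stay in $[c,-a_0/b_0]$. The paper instead obtains welfare by adding its surplus inequality $-d\mathcal S^+/d\alpha>d\mathcal S^-/d\alpha$ to the profit comparison from Proposition~\ref{prop:ML_fairness_each}.

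\textbf{The consumer-surplus half has a genuine gap.} The concavity route cannot work. With $d$ the true demand,
$h''(\hat b)=\frac{a_0}{|\hat b|^3}\bigl(\frac{a_0|b_0|}{4|\hat b|}-d\bigr)$.
For $c=0$ this is positive whenever $\tfrac12|b_0|<|\hat b|<\tfrac34|b_0|$, and demand is still positive there. So the claim that the negative term dominates whenever demand is positive is false. At $c=0$, concavity on $[b_0-\delta,b_0+\delta]$ holds only for $\alpha\le\alpha^\star/16$.

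Your fallback inequality $|h'(b_0+\delta)|>|h'(b_0-\delta)|$ is the right one: it is equivalent to $K'(\delta)<0$ and is exactly the paper's condition $(*)<0$. Your justification for it, however, is wrong. Since $|h'(b_0^\pm)|=\frac{a_0}{2}\,d^\pm/(b_0^\pm)^2$, the steeper price response on the $+$ branch is multiplied by a demand $d^+$ that tends to $0$. The inequality therefore fails \emph{before} $d^+$ vanishes. At $c=0$, with $u=\alpha/\alpha^\star$, the expected surplus is proportional to $8-\frac{8}{1-u}+\frac{2(1+u)}{(1-u)^2}$. Its derivative has the sign of $10u-2$, so it decreases only for $u<1/5$, while the $+$ branch keeps positive demand until $u=1/4$. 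If you read $\bar\alpha$ as the zero-demand point, as in your Step~1, the surplus claim is false on part of the range.

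\textbf{The missing step is the computation that pins down $\bar\alpha$,} which is the core of the paper's proof:
\begin{itemize}
\item substitute $b_0^\pm=b_0\pm\sqrt{K\alpha}$ and $d^\pm=a_0+\frac{b_0c}{2}-\frac{a_0b_0}{2b_0^\pm}$ into $d^-(b_0^+)^2<d^+(b_0^-)^2$;
\item divide by $b_0^+-b_0^->0$;
\item solve the resulting inequality in $\alpha$.
\end{itemize}
When you do this, note that $(b_0^+)^2+b_0^+b_0^-+(b_0^-)^2=3b_0^2+K\alpha$; the paper's display writes $3b_0^2-K\alpha$ at this step. The condition then becomes $a_0+2b_0c>(5a_0+2b_0c)\,\alpha/\alpha^\star$. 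The threshold is therefore $\frac{a_0+2b_0c}{5a_0+2b_0c}\alpha^\star$, not the paper's $\frac{a_0+2b_0c}{3a_0+2b_0c}\alpha^\star$, which agrees with the $c=0$ check above. This corrected threshold always lies strictly below the zero-demand point $\bigl(\frac{a_0+b_0c}{2a_0+b_0c}\bigr)^2\alpha^\star$, because $(5a_0+2b_0c)(a_0+b_0c)^2-(a_0+2b_0c)(2a_0+b_0c)^2=a_0^3>0$. Once you have this threshold, your welfare argument covers the welfare half on the same range.
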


Corollary~\ref{prop:ML_fairness_E} illustrates that ignoring how an estimator affects downstream metrics can lead to undesirable outcomes. The second plot in Figure~\ref{fig:EL_fairness_2} demonstrates this result: up to a certain threshold $\bar\alpha$, the expected profit, consumer surplus, and social welfare all decrease as $\alpha$ increases.

\paragraph{Price and Demand Fairness.} Unlike loss fairness, fairness in prices and demand admits the two frameworks: FEO and EFO. We focus on the non-trivial case where $p_0^{\mathrm{LS}} \neq p_1^{\mathrm{LS}}$. If the unconstrained prices were already equal, adding price or demand fairness constraints would be unnecessary. 
%Without loss of generality, we have $p_0^{\mathrm{LS}} \neq p_1^{\mathrm{LS}}$, or otherwise there is no need to add price or demand fairness constraints. 
We restrict our attention to the setting in which only a small amount of fairness is imposed, meaning the fairness level $\alpha$ increases from $0$ to a small positive value $\epsilon$. 
This focus is motivated by the fact that $\alpha = 0$ corresponds to the prevailing practice in which firms typically do not incorporate fairness considerations. 
In other words, $\alpha = 0$ represents the status quo. Therefore, our main interest lies in understanding how introducing a small degree of fairness -- increasing $\alpha$ slightly from zero -- affects the resulting outcomes under this realistic baseline setting. Moreover, focusing on the neighborhood of $\alpha=0$ allows tractable analytical characterization with the status quo regime. Since closed-form comparisons are generally difficult to obtain for arbitrary values of $\alpha$, this difficulty motivates our local comparative analysis around $\alpha=0$. By Assumption \ref{assumption:1}, we can also assume that when there is a small amount of fairness applied, the optimal prices of FEO and EFO are still above $c$.

Proposition~\ref{prop3} shows that, under a little fairness, either FEO or EFO can yield higher consumer surplus and social welfare, depending on the specific fairness criterion and a threshold determined by the ratio of the estimated price distribution.
\begin{proposition}
\label{prop3}
Suppose Assumption~\ref{assumption:1} holds. Define
\begin{equation*}\label{tau}
\tau_p := \left(\frac{a_0^2 \, \frac{1}{n_1}\sum_{\{i:g^{(i)}=1\}} p^{(i)^2}} {a_1^2 \, \frac{1}{n_0}\sum_{\{i:g^{(i)}=0\}} p^{(i)^2}}\right)^{1/3},\quad \tau_a :=\frac{a_0}{a_1},\quad \tau_b :=\frac{\hat b_0^{\mathrm{LS}}}{\hat  b_1^{\mathrm{LS}}}.
\end{equation*}
Then there exists $\xi>0$ such that the following statements hold for all $\alpha\in(0,\xi)$.
\begin{enumerate}
\item Consider parity-wise price fairness.
\begin{enumerate}
    \item If $(\tau_p-\tau_b)(\tau_a-\tau_b)>0$, then $p_0^{\mathrm{FEO}}<p_0^{\mathrm{EFO}}$ and $p_1^{\mathrm{FEO}}<p_1^{\mathrm{EFO}}$. Consequently, $\mathcal S^{\mathrm{FEO}}> \mathcal S^{\mathrm{EFO}}$ and $\mathcal W^{\mathrm{FEO}}> \mathcal W^{\mathrm{EFO}}$. \label{prop2:case1a}
    \item If $(\tau_p-\tau_b)(\tau_a-\tau_b)\leq 0$, then  $p_0^{\mathrm{FEO}}>p_0^{\mathrm{EFO}}$ and $p_1^{\mathrm{FEO}}>p_1^{\mathrm{EFO}}$. Consequently, $\mathcal S^{\mathrm{FEO}}< \mathcal S^{\mathrm{EFO}}$ and $\mathcal W^{\mathrm{FEO}}< \mathcal W^{\mathrm{EFO}}$.
    \label{prop2:case1b}
\end{enumerate}
\item Consider parity-wise demand fairness and assume $c>0$.
\begin{enumerate}
    \item If $(\tau_p-\tau_b)(\tau_a-\tau_b)\geq0$, then $p_0^{\mathrm{FEO}}>p_0^{\mathrm{EFO}}$ and $p_1^{\mathrm{FEO}}>p_1^{\mathrm{EFO}}$. Consequently, $\mathcal S^{\mathrm{FEO}}< \mathcal S^{\mathrm{EFO}}$ and $\mathcal W^{\mathrm{FEO}}< \mathcal W^{\mathrm{EFO}}$. \label{prop2:case2a}
    \item If $(\tau_p-\tau_b)(\tau_a-\tau_b)<0$, then  $p_0^{\mathrm{FEO}}<p_0^{\mathrm{EFO}}$ and $p_1^{\mathrm{FEO}}<p_1^{\mathrm{EFO}}$. Consequently, $\mathcal S^{\mathrm{FEO}}> \mathcal S^{\mathrm{EFO}}$ and $\mathcal W^{\mathrm{FEO}}> \mathcal W^{\mathrm{EFO}}$. \label{prop2:case2b}
\end{enumerate}
\end{enumerate}
\label{prop:parity_fairness}
\end{proposition}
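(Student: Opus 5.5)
The plan is to reduce both FEO and EFO, for small $\alpha$, to the same one-dimensional problem: perturb the LS prices along the single binding fairness direction. The two frameworks then differ only in the quadratic weights that govern how the adjustment is split between the groups, and comparing those weights produces the thresholds $\tau_p,\tau_a,\tau_b$.

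\textbf{Step 1 (reformulation in price space).} Write $S_g:=\frac1{n_g}\sum_{\{i:g^{(i)}=g\}}p^{(i)2}$, so that $\ell_g(\hat b_g)=S_g(\hat b_g-\hat b_g^{\mathrm{LS}})^2+\mathrm{const}$. By Assumption~\ref{assumption:1}, for $\hat b_g$ near $\hat b_g^{\mathrm{LS}}$ the induced price is $p_g=-a_g/(2\hat b_g)+c/2$. This map is a smooth bijection with inverse $\hat b_g(p_g)=-a_g/(2(p_g-c/2))$. Hence FEO is equivalent to choosing prices $(p_0,p_1)\in\mathcal F(\alpha)$ that minimize $\sum_g S_g(\hat b_g(p_g)-\hat b_g^{\mathrm{LS}})^2$, and EFO chooses prices in $\mathcal F(\alpha)$ that maximize $\hat{\mathcal R}$.

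For $\alpha>0$ the LS point is infeasible. Both objectives are strictly convex or concave near it, so the constraint binds. Take $\Delta:=p_0^{\mathrm{LS}}-p_1^{\mathrm{LS}}$. Since $p_g-c/2=-a_g/(2\hat b_g^{\mathrm{LS}})$, we have $\mathrm{sign}(\Delta)=\mathrm{sign}(\tau_a-\tau_b)$. Under price fairness the binding constraint is $p_0-p_1=(1-\alpha)\Delta$.

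\textbf{Step 2 (second-order expansion).} Put $x_g=p_g-p_g^{\mathrm{LS}}$.
\begin{itemize}
\item For FEO, $\hat b_g'(p_g^{\mathrm{LS}})=2(\hat b_g^{\mathrm{LS}})^2/a_g$, so the objective is $\sum_g w^F_g x_g^2+O(|x|^3)$ with $w_g^F=4S_g(\hat b_g^{\mathrm{LS}})^4/a_g^2$.
\item For EFO, $\hat{\mathcal R}$ is exactly quadratic with zero gradient at LS, so the loss is $\sum_g w_g^E x_g^2$ with $w^E_g=-\hat b_g^{\mathrm{LS}}$.
\end{itemize}
Minimizing $w_0x_0^2+w_1x_1^2$ subject to $x_0-x_1=-\alpha\Delta$ gives $x_0=-\alpha\Delta\, w_1/(w_0+w_1)$. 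The constraint fixes $x_0-x_1$, so the FEO prices lie below the EFO prices for both groups simultaneously. To first order in $\alpha$, this happens iff $\Delta\bigl(w_1^F/w_0^F-w_1^E/w_0^E\bigr)>0$.

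Compute the ratios: $w_1^F/w_0^F=\tau_p^3/\tau_b^4$ and $w_1^E/w_0^E=1/\tau_b$. The bracket therefore has the sign of $\tau_p^3-\tau_b^3$, which is the sign of $\tau_p-\tau_b$. This yields case~\ref{prop2:case1a}. An implicit-function argument, applied to the KKT system in $\alpha$, upgrades the first-order comparison to all $\alpha\in(0,\xi)$.

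\textbf{Step 3 (demand fairness).} Here $c>0$ is needed.
\begin{itemize}
\item For FEO the induced normalized demand is $d_g/a_g=\tfrac12+c\hat b_g/(2a_g)$. The constraint is therefore linear in $\hat b$: $\hat b_0/a_0-\hat b_1/a_1=(1-\alpha)(\hat b_0^{\mathrm{LS}}/a_0-\hat b_1^{\mathrm{LS}}/a_1)$. FEO becomes an exactly solvable weighted projection, with weights $S_g a_g^2$ in the variables $\hat b_g/a_g$.
\item For EFO, $d_g/a_g=1+\hat b_g^{\mathrm{LS}}p_g/a_g$ is linear in $p$, so EFO is a quadratic program with one linear equality.
\end{itemize}
I would solve both, map the FEO solution back to prices through $p_g(\hat b_g)$ to first order, and compare $x_0$ as in Step 2.

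The rescaling in the constraint direction changes which group absorbs more of the adjustment. I expect this to flip the comparison relative to Step 2, so that FEO prices lie below EFO prices exactly when $(\tau_p-\tau_b)(\tau_a-\tau_b)<0$. The sign of $\Delta$ in demand terms must be tracked carefully, since $d_g^{\mathrm{LS}}/a_g$ is ordered by $\hat b_g^{\mathrm{LS}}/a_g$.

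\textbf{Step 4 (welfare consequences).} The surplus and welfare conclusions follow from monotonicity near the LS prices:
\begin{itemize}
\item $\mathcal S_g$ is strictly decreasing in $p_g$ on the region where demand is positive.
\item $\partial\mathcal W_g/\partial p_g=(p_g-c)b_g<0$ for $p_g>c$, and Assumption~\ref{assumption:1} keeps prices above $c$ for small $\alpha$.
\end{itemize}
So lower prices in both groups imply larger $\mathcal S$ and $\mathcal W$.

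\textbf{Main obstacle.} The hardest part is the boundary case $\tau_p=\tau_b$. The first-order terms then coincide, and the statement assigns this case to 1(b) for price fairness and to 2(a) for demand fairness. I would handle it by expanding to order $\alpha^2$.
\begin{itemize}
\item For EFO the solution is exactly linear in $\alpha$.
\item For FEO the $\alpha^2$ term comes from the cubic term of $(\hat b_g(p)-\hat b_g^{\mathrm{LS}})^2$, through $\hat b_g''$, in the price-fairness case. In the demand case it comes from the curvature of $p_g(\hat b_g)=-a_g/(2\hat b_g)+c/2$.
\end{itemize}
I would then check that the sign of this $\alpha^2$ term pushes FEO prices above EFO prices. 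This check is where the computation could fail.
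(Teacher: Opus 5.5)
Your price-fairness argument (Steps~1--2, and the $\alpha^2$ expansion when $\tau_p=\tau_b$) is essentially the paper's. The paper also passes to price space with weights $a_g^2S_g$ on $\bigl(1/(p_g-c/2)-1/(p_g^{\mathrm{LS}}-c/2)\bigr)^2$, shows the constraint binds, differentiates the first-order condition at $\alpha=0$, and compares with the EFO solution, which is affine in $\alpha$; your ratios $\tau_p^3/\tau_b^4$ versus $1/\tau_b$ reproduce its threshold.

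The gap is Step~3. You evaluate FEO's demand with the FEO estimate itself, $a_g+\hat b_g\,p_g^*(\hat b)$, which turns the constraint into $|\hat b_0/a_0-\hat b_1/a_1|\le(1-\alpha)|\hat b_0^{\mathrm{LS}}/a_0-\hat b_1^{\mathrm{LS}}/a_1|$. But $\mathcal F(\alpha)$ is a single set of prices shared by FEO and EFO, so ex-ante demand must be computed with the fixed model $\hat b^{\mathrm{LS}}$. In both frameworks the constraint is $\bigl|\tfrac{\hat b_1^{\mathrm{LS}}}{a_1}p_1-\tfrac{\hat b_0^{\mathrm{LS}}}{a_0}p_0\bigr|\le(1-\alpha)\bigl|\tfrac{\hat b_1^{\mathrm{LS}}}{a_1}p_1^{\mathrm{LS}}-\tfrac{\hat b_0^{\mathrm{LS}}}{a_0}p_0^{\mathrm{LS}}\bigr|$, which is how the paper writes FEO.

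Your reading is not a harmless variant; it makes the conclusion false. Take $p_0^{\mathrm{LS}}<p_1^{\mathrm{LS}}$, i.e.\ $\hat b_0^{\mathrm{LS}}/a_0<\hat b_1^{\mathrm{LS}}/a_1<0$.
Pulling $\hat b_0/a_0$ and $\hat b_1/a_1$ together raises $p_0=-a_0/(2\hat b_0)+c/2$ and lowers $p_1$. EFO does the opposite: it lowers $p_0$ and raises $p_1$, because under $\hat b^{\mathrm{LS}}$ group~0 has the lower normalized demand $\tfrac12+\tfrac c2\hat b_0^{\mathrm{LS}}/a_0$. So FEO and EFO prices would be ordered oppositely in the two groups for every $\tau_p$, and no ``flip'' can deliver the statement. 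You would also lose the device from Step~2 that transfers the group-0 comparison to group~1, since it needs both frameworks to share one binding price-space constraint. (A quick check: under your reading FEO's price shift does not depend on $c$, while EFO's is proportional to $c$.)

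With the correct constraint, your Step~2 goes through almost verbatim. Both problems minimize $w_0x_0^2+w_1x_1^2$, with the same $w^F,w^E$ as before, now subject to $\beta_0x_0-\beta_1x_1=-\alpha D$. Here $\beta_g=\hat b_g^{\mathrm{LS}}/a_g<0$ and $D=\beta_0p_0^{\mathrm{LS}}-\beta_1p_1^{\mathrm{LS}}=\tfrac c2(\beta_0-\beta_1)$, which is nonzero since $c>0$. Hence $x_0=-\tfrac{\alpha D}{\beta_0}\bigl(1+\tfrac{\beta_1^2w_0}{\beta_0^2w_1}\bigr)^{-1}$.

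Because $D$ has the sign of your $\Delta$, the extra factor $1/\beta_0<0$ is exactly what reverses the comparison relative to $x_0=-\alpha\Delta(1+w_0/w_1)^{-1}$. The identity $\beta_0(x_0^F-x_0^E)=\beta_1(x_1^F-x_1^E)$ then carries the comparison to group~1.

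The tie case $\tau_p=\tau_b$ is settled by the same cubic-term expansion of $\bigl(\hat b_g(p)-\hat b_g^{\mathrm{LS}}\bigr)^2$ as for price fairness, not by the curvature of $p_g(\hat b_g)$. EFO remains affine in $\alpha$, and FEO's $\alpha^2$ correction to $p_0$ is positive.
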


\begin{figure}[htbp]
\FIGURE{
\begin{minipage}{\textwidth}
\centering
\captionsetup{justification=centering}
\begin{subfigure}{0.24\textwidth}
    \centering
    \includegraphics[width=\textwidth]{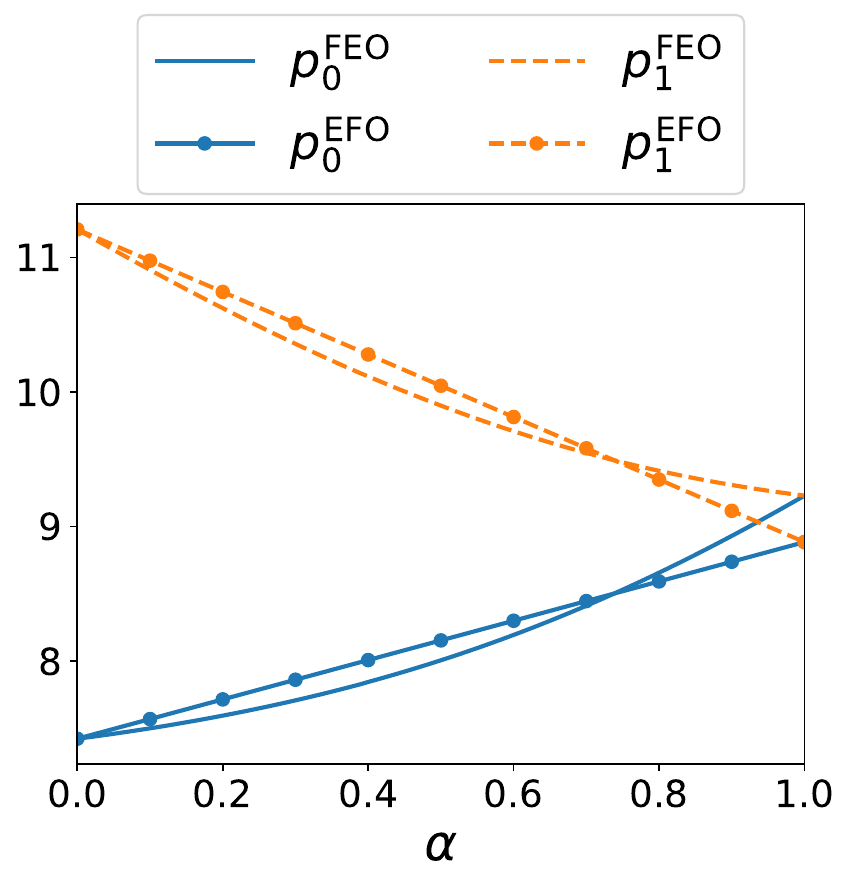}
    \caption*{Case \eqref{prop2:case1a}: Price}
    \label{fig:case1a_prices}
\end{subfigure}
\begin{subfigure}{0.24\textwidth}
    \centering
    \includegraphics[width=\textwidth]{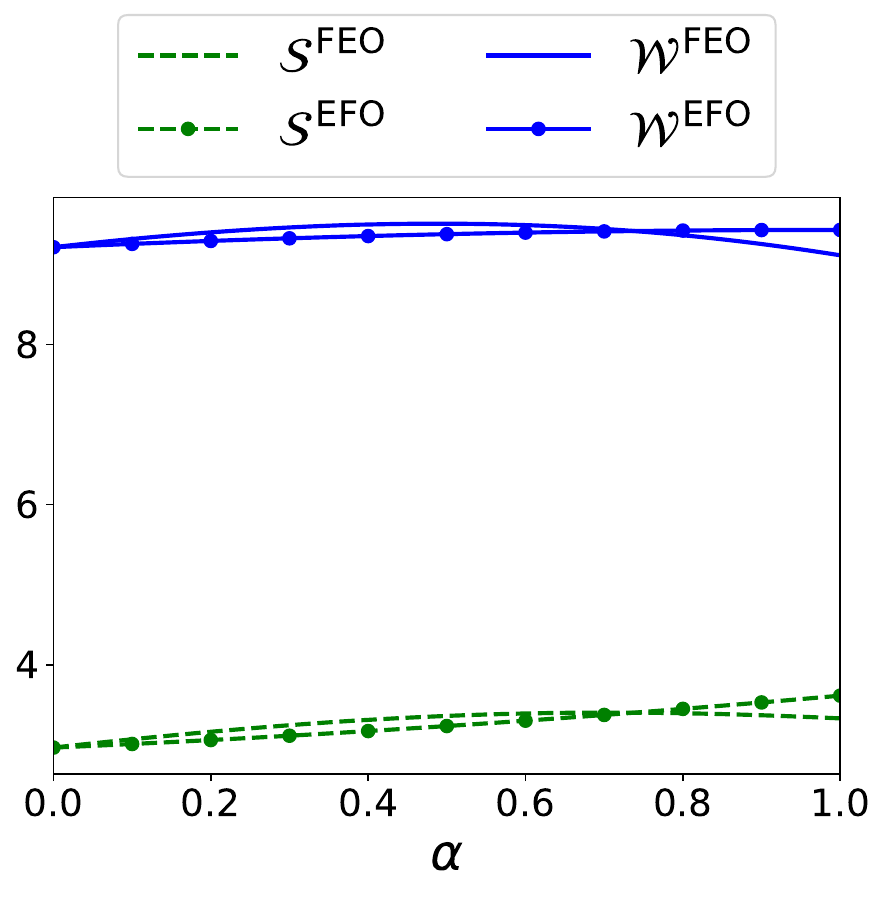}
    \caption*{Case \eqref{prop2:case1a}: Outcomes}
    \label{fig:case1a_outcomes}
\end{subfigure}
\begin{subfigure}{0.24\textwidth}
    \centering
    \includegraphics[width=\textwidth]{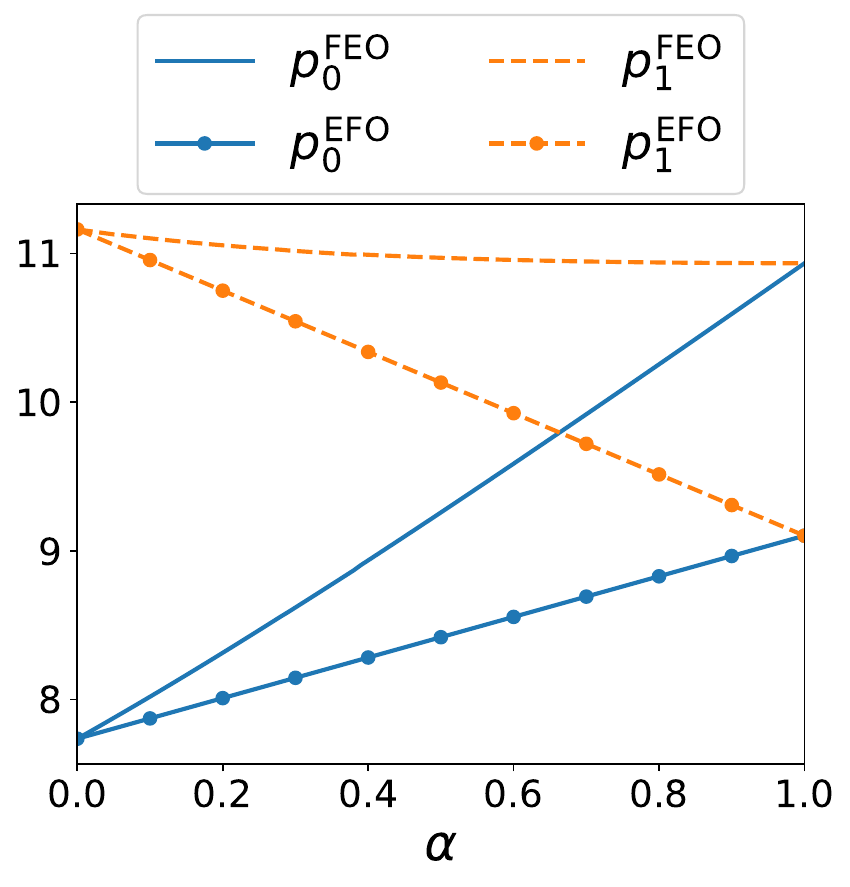}
    \caption*{Case \eqref{prop2:case1b}: Price}
    \label{fig:case1b_prices}
\end{subfigure}
\begin{subfigure}{0.24\textwidth}
    \centering
    \includegraphics[width=\textwidth]{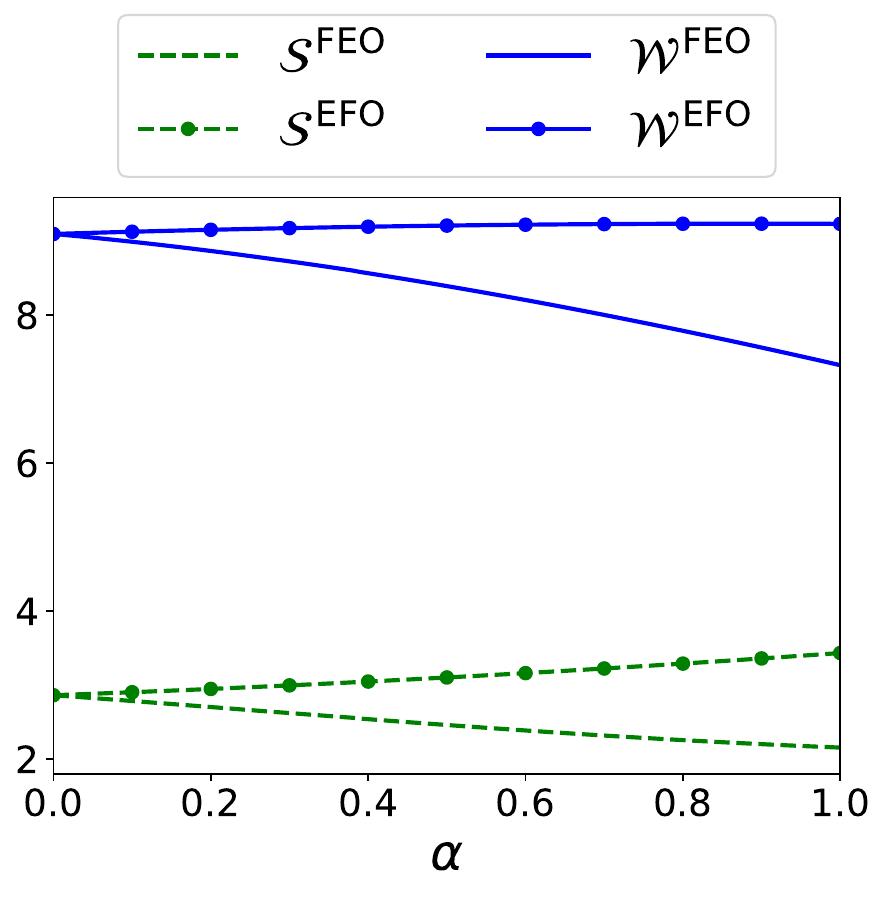}
    \caption*{Case \eqref{prop2:case1b}: Outcomes}
    \label{fig:case1b_outcomes}
\end{subfigure}
\end{minipage}
}
{Different cases in Proposition \ref{prop3}: parity-wise price fairness \label{fig:different_cases_price}\vspace{.3cm}}
{The parameters are $a_0=1$, $a_1=1$, $b_0=-0.08$, $b_1=-0.05$ and $c=2$. For Case \eqref{prop2:case1a}, prices of group $0$ are generated from $\mathrm{Unif}[4,8]$ and prices of group $0$ are generated from $\mathrm{Unif}[4,10]$; for Case \eqref{prop2:case1b}, prices of group $0$ are generated from $\mathrm{Unif}[2,3]$ and prices of group $0$ are generated from $\mathrm{Unif}[11,12]$.}
\end{figure}

\begin{figure}[htbp]
\FIGURE{
\begin{minipage}{\textwidth}
\centering
\captionsetup{justification=centering}
\begin{subfigure}{0.24\textwidth}
    \centering
    \includegraphics[width=\textwidth]{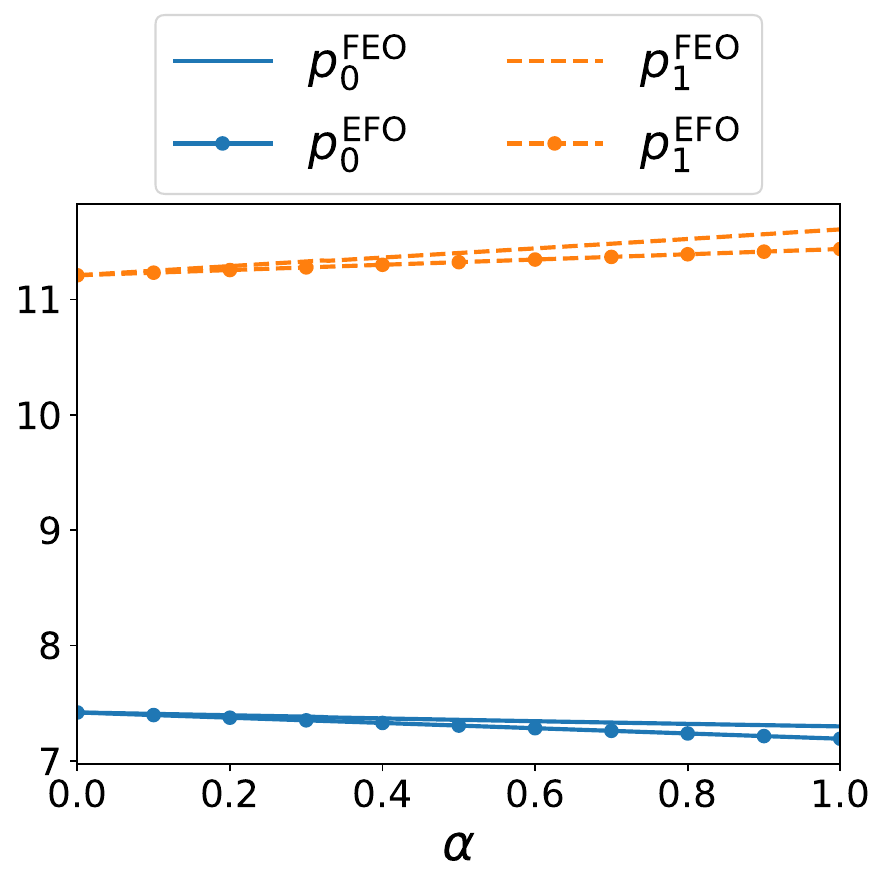}
    \caption*{Case \eqref{prop2:case2a}: Price}
    \label{fig:case2a_prices}
\end{subfigure}
\begin{subfigure}{0.24\textwidth}
    \centering
    \includegraphics[width=\textwidth]{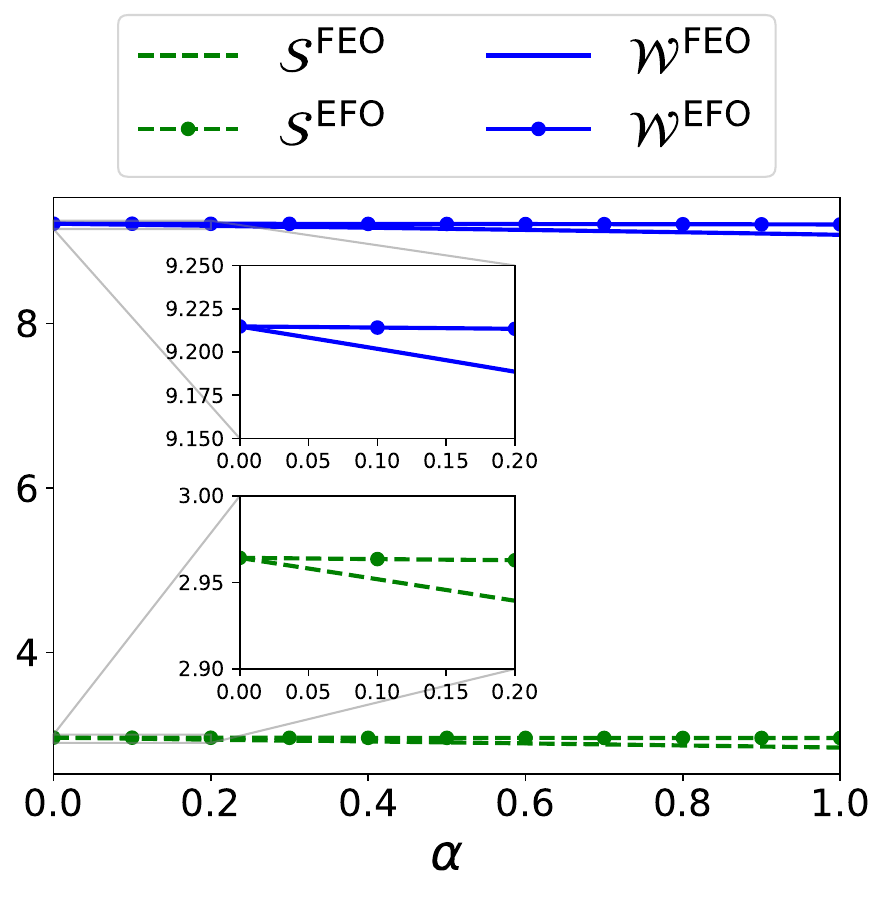}
    \caption*{Case \eqref{prop2:case2a}: Outcomes}
    \label{fig:case2a_outcomes}
\end{subfigure}
\begin{subfigure}{0.24\textwidth}
    \centering
    \includegraphics[width=\textwidth]{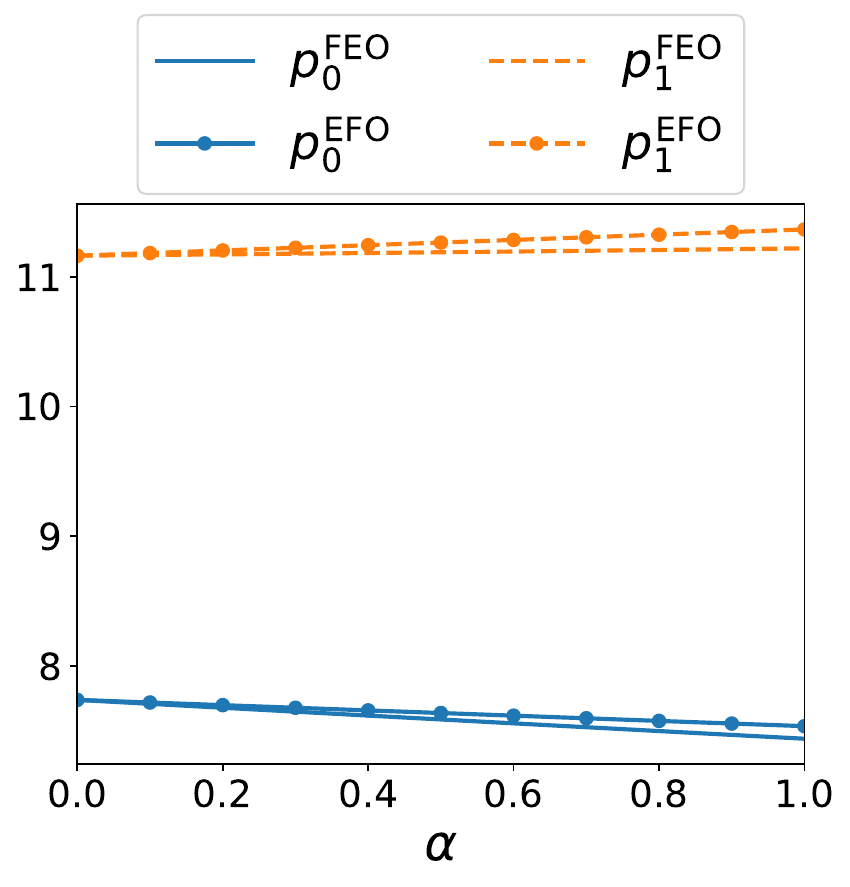}
    \caption*{Case \eqref{prop2:case2b}: Price}
    \label{fig:case2b_prices}
\end{subfigure}
\begin{subfigure}{0.24\textwidth}
    \centering
    \includegraphics[width=\textwidth]{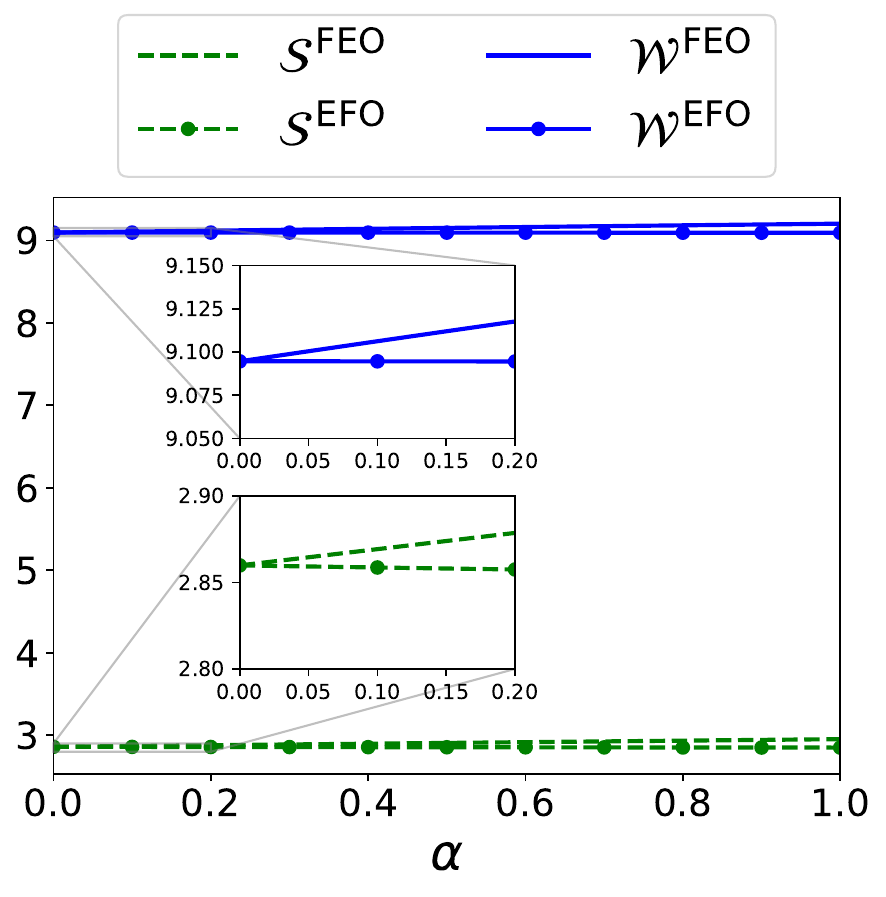}
    \caption*{Case \eqref{prop2:case2b}: Outcomes}
    \label{fig:case2b_outcomes}
\end{subfigure}
\end{minipage}
}
{Different cases in Proposition \ref{prop3}: parity-wise demand fairness \label{fig:different_cases_demand}\vspace{.3cm}}
{The parameters are the same as Figure~\ref{fig:different_cases_price}. Training data used for Case \eqref{prop2:case2a} and Case \eqref{prop2:case2b} are the same as Case \eqref{prop2:case1a} and Case \eqref{prop2:case1b}, respectively.}
\end{figure}

Figure~\ref{fig:different_cases_price} and Figure~\ref{fig:different_cases_demand} visualize the results for each case. To compare consumer surplus and social welfare between FEO and EFO, we only have to compare the price decisions. Lower prices always leave more surplus to consumers, and it always brings in more consumers. Since each consumer contributes positively to social welfare, lower prices also lead to higher overall welfare. 

$\tau_p$ can be interpreted as a balance index of the historical pricing data. It is determined by the average squared prices of each group in the training dataset and the market sizes $a_g$. To illustrate its dependence on these two components, we consider two simple cases. First, assuming equal market sizes ($a_0 = a_1$), the loss function simplifies to
$$\ell_0(\hat b_0)+\ell_1( \hat b_1) \propto (\hat b_0-\hat b_0^{\mathrm{LS}})^2 +  \tau_p^3(\hat b_1-\hat b_1^{\mathrm{LS}})^2 +C,$$
where $C$ is a constant. This formulation reveals that $\tau_p$ dictates the relative weight of the two groups: a smaller $\tau_p$ assigns greater penalty weight to group $0$'s fit. Consequently, under price fairness, a small $\tau_p$ restricts FEO from increasing $p_0$ as rapidly as EFO (corresponding to Case \ref{prop2:case1a}). Since the model preserves a price gap, $p_1$ is similarly lower under FEO, which drives up overall consumer surplus under FEO.%As a result, under price fairness, when $\tau_p$ is small (corresponding to Case~\ref{prop2:case1a}), FEO discourages increases in $p_0$. In other words, $p_0$ under FEO increases more slowly than under EFO. The same reasoning applies to $p_1$, and thus FEO yields higher consumer surplus. 
%This interpretation suggests that $\tau_p$ captures the relative weighting of the two groups in the loss function. 

Second, suppose the average squared prices in the training data are equal across both groups, i.e., $\frac{1}{n_1}\sum_{\{i:g^{(i)}=1\}} p^{(i)^2}=\frac{1}{n_0}\sum_{\{i:g^{(i)}=0\}} p^{(i)^2}$. The index reduces to $\tau_p=\tau_a^{2/3}$, implying that $\tau_p$ strictly increases with the market size ratio $\tau_a$. Therefore, the terms $(\tau_p-\tau_b)$ and $(\tau_a-\tau_b)$ are more likely to share the same sign, making the surplus-enhancing outcomes of Cases \eqref{prop2:case1a} and \eqref{prop2:case2a} in Proposition \ref{prop3} much more likely to occur.
%Second, suppose $\frac{1}{n_1}\sum_{\{i:g^{(i)}=1\}} p^{(i)^2}=\frac{1}{n_0}\sum_{\{i:g^{(i)}=0\}} p^{(i)^2}$. Then $\tau_p=\tau_a^{2/3}$, implying that $\tau_p$ varies in the same direction as $\tau_a$.

One practical scenario is that two different groups have similar market sizes, firms charge similar prices to all customers, and data are collected randomly across groups. Corollary \ref{corollary2} shows that if the market sizes and average squared prices in the training dataset are the same between two groups, Cases \eqref{prop2:case1a} and \eqref{prop2:case2a} in Proposition \ref{prop3} apply. In these cases, FEO yields higher consumer surplus and social welfare than EFO under parity-wise price fairness, while the opposite holds under parity-wise demand fairness.

\begin{corollary}
    \label{corollary2}
    Suppose Assumption~\ref{assumption:1} holds. If $\frac{1}{n_0}\sum_{\{i|g^{(i)}=0\}}p^{(i)^2}=\frac{1}{n_1}\sum_{\{i|g^{(i)}=1\}}p^{(i)^2}$ and $a_0=a_1$, then for parity-wise price fairness, $\calS^{\mathrm{FEO}}>\calS^{\mathrm{EFO}}$ and $\calW^{\mathrm{FEO}}>\calW^{\mathrm{EFO}}$.
    For parity-wise demand fairness, $\calS^{\mathrm{FEO}}<\calS^{\mathrm{EFO}}$ and $\calW^{\mathrm{FEO}}<\calW^{\mathrm{EFO}}.$
\end{corollary}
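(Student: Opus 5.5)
The plan is to derive the corollary directly from Proposition~\ref{prop:parity_fairness}. Under the two hypotheses, equal average squared training prices and $a_0=a_1$, we get $\tau_a=1$. Substituting into the definition of $\tau_p$, the ratio inside the cube root is $\frac{a_0^2}{a_1^2}\cdot 1=1$, so $\tau_p=1$ as well. The sign condition in Proposition~\ref{prop:parity_fairness} therefore collapses to
\begin{equation*}
(\tau_p-\tau_b)(\tau_a-\tau_b)=(1-\tau_b)^2\ \ge 0 .
\end{equation*}

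The next step is to show this quantity is strictly positive, i.e.\ $\tau_b\neq 1$, equivalently $\hat b_0^{\mathrm{LS}}\neq \hat b_1^{\mathrm{LS}}$. Here I would use the standing nontriviality assumption $p_0^{\mathrm{LS}}\neq p_1^{\mathrm{LS}}$. Under Assumption~\ref{assumption:1} we have $p_g^{\mathrm{LS}}=-\frac{a_g}{2\hat b_g^{\mathrm{LS}}}+\frac{c}{2}$. With $a_0=a_1$, equal estimated slopes would force equal prices, which is excluded. Hence $\tau_b\neq1$ and $(1-\tau_b)^2>0$.

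With strict positivity in hand, both conclusions follow from the proposition:
\begin{itemize}
\item For parity-wise price fairness, case~\eqref{prop2:case1a} applies and gives $\calS^{\mathrm{FEO}}>\calS^{\mathrm{EFO}}$ and $\calW^{\mathrm{FEO}}>\calW^{\mathrm{EFO}}$.
\item For parity-wise demand fairness, case~\eqref{prop2:case2a} applies, since its condition needs only $\ge 0$, and gives the reverse inequalities.
\end{itemize}

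Both statements hold for $\alpha\in(0,\xi)$, inheriting the small-fairness regime of Proposition~\ref{prop:parity_fairness}. The demand case also inherits the proposition's requirement $c>0$, so I would state that caveat explicitly.

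The only delicate step is ruling out $\tau_b=1$ in the price case. If $\tau_b=1$ held, the product would vanish and we would land in case~\eqref{prop2:case1b}, which gives the opposite conclusion. The argument via $p_0^{\mathrm{LS}}\neq p_1^{\mathrm{LS}}$ handles this, so I expect no real obstacle.
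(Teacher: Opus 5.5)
Your proposal is correct and follows the paper's own argument. The paper gives no separate appendix proof; its text notes that equal average squared prices give $\tau_p=\tau_a^{2/3}$, so $a_0=a_1$ yields $\tau_p=\tau_a=1$, and then invokes Cases (1a) and (2a) of Proposition~\ref{prop:parity_fairness}. Your explicit check that $\tau_b\neq 1$ follows from $p_0^{\mathrm{LS}}\neq p_1^{\mathrm{LS}}$ with $a_0=a_1$; together with your flagged small-$\alpha$ and $c>0$ caveats, it makes precise the strictness the paper leaves implicit.
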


Notice that, under the same condition, parity-wise price fairness and parity-wise demand fairness lead to opposite results. This is because price fairness constraints and demand fairness constraints are pushing prices to different directions. For the group with higher price, adding price fairness constraints decreases the price of this group. However, at the same time, it has larger demand, so adding demand fairness constraints increases the price of this group to decrease the demand.

Comparing the profit of FEO and EFO also depends on different cases and fairness criteria. The objective of EFO is to maximize the estimated profit under certain fairness constraints, but there is a gap between the estimated profit and the real expected profit, since we can only estimate unknown parameters $(b_0, b_1)$ from historical data. Therefore, we are able to determine several cases where the profit of FEO is larger than that of EFO in Proposition \ref{prop4}.

\begin{proposition}
    \label{prop4}
    Suppose Assumption~\ref{assumption:1} holds. Define $\tau_p$, $ \tau_a$ and $\tau_b$ the same as in Proposition~\ref{prop3}. There exists $\xi>0$ such that the following statements hold for all $\alpha\in (0,\xi)$.
    \begin{enumerate}
        \item Consider parity-wise price fairness.
        \begin{enumerate}
            \item If $(\tau_p-\tau_b)(\tau_a-\tau_b)>0$, then $\mathcal{R}^{\mathrm{FEO}}>\mathcal{R}^{\mathrm{EFO}}$ when $a_0\Big(1-\tfrac{b_0}{\hat{b}_0^{\mathrm{LS}}}\Big)+a_1\Big(1-\tfrac{b_1}{\hat{b}_1^{\mathrm{LS}}}\Big)<0$ and $\mathcal{R}^{\mathrm{FEO}}<\mathcal{R}^{\mathrm{EFO}}$ when $a_0\Big(1-\tfrac{b_0}{\hat{b}_0^{\mathrm{LS}}}\Big)+a_1\Big(1-\tfrac{b_1}{\hat{b}_1^{\mathrm{LS}}}\Big)>0$.
            % \begin{equation*}
            %     a_0\left(1-\frac{b_0}{\hat{b}_0^{\mathrm{LS}}}\right)+a_1\left(1-\frac{b_1}{\hat{b}_1^{\mathrm{LS}}}\right)<0.
            % \end{equation*}
            \item If $(\tau_p-\tau_b)(\tau_a-\tau_b)\leq0$, then $\mathcal{R}^{\mathrm{FEO}}>\mathcal{R}^{\mathrm{EFO}}$ when $a_0\Big(1-\tfrac{b_0}{\hat{b}_0^{\mathrm{LS}}}\Big)+a_1\Big(1-\tfrac{b_1}{\hat{b}_1^{\mathrm{LS}}}\Big)>0$ and $\mathcal{R}^{\mathrm{FEO}}<\mathcal{R}^{\mathrm{EFO}}$ when $a_0\Big(1-\tfrac{b_0}{\hat{b}_0^{\mathrm{LS}}}\Big)+a_1\Big(1-\tfrac{b_1}{\hat{b}_1^{\mathrm{LS}}}\Big)<0$.
        \end{enumerate}
        \item Consider parity-wise price fairness.
        \begin{enumerate}
            \item If $(\tau_p-\tau_b)(\tau_a-\tau_b)\geq0$, then $\mathcal{R}^{\mathrm{FEO}}>\mathcal{R}^{\mathrm{EFO}}$ when $a_0\Big(1-\tfrac{b_0}{\hat{b}_0^{\mathrm{LS}}}\Big)+a_1\Big(1-\tfrac{b_1}{\hat{b}_1^{\mathrm{LS}}}\Big)>0$ and $\mathcal{R}^{\mathrm{FEO}}<\mathcal{R}^{\mathrm{EFO}}$ when $a_0\Big(1-\tfrac{b_0}{\hat{b}_0^{\mathrm{LS}}}\Big)+a_1\Big(1-\tfrac{b_1}{\hat{b}_1^{\mathrm{LS}}}\Big)<0$.
            \item If $(\tau_p-\tau_b)(\tau_a-\tau_b)<0$, then $\mathcal{R}^{\mathrm{FEO}}>\mathcal{R}^{\mathrm{EFO}}$ when $a_0\Big(1-\tfrac{b_0}{\hat{b}_0^{\mathrm{LS}}}\Big)+a_1\Big(1-\tfrac{b_1}{\hat{b}_1^{\mathrm{LS}}}\Big)<0$ and $\mathcal{R}^{\mathrm{FEO}}<\mathcal{R}^{\mathrm{EFO}}$ when $a_0\Big(1-\tfrac{b_0}{\hat{b}_0^{\mathrm{LS}}}\Big)+a_1\Big(1-\tfrac{b_1}{\hat{b}_1^{\mathrm{LS}}}\Big)>0$.
        \end{enumerate}
    \end{enumerate}
\end{proposition}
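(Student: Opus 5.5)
Proposition~\ref{prop4} is a first-order perturbation argument around $\alpha=0$. We reuse the local price characterization from Proposition~\ref{prop3}, and then compare true profits. At $\alpha=0$ both frameworks return the least squares prices $p_g^{\mathrm{LS}}=-\tfrac{a_g}{2\hat b_g^{\mathrm{LS}}}+\tfrac c2$. For small $\alpha$ each framework moves the price vector along a smooth curve $p_g(\alpha)=p_g^{\mathrm{LS}}+\alpha\,\dot p_g+O(\alpha^2)$. By Assumption~\ref{assumption:1}, the max in \eqref{eq:profit} is inactive near $p^{\mathrm{LS}}$, so the true profit is locally the smooth quadratic $\calR(p_0,p_1)=\sum_g (p_g-c)(a_g+b_gp_g)$. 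Hence
\[
\calR^{\mathrm{FEO}}-\calR^{\mathrm{EFO}}=\alpha\sum_{g}\partial_g\calR(p^{\mathrm{LS}})\bigl(\dot p_g^{\mathrm{FEO}}-\dot p_g^{\mathrm{EFO}}\bigr)+O(\alpha^2),
\]
and the sign for $\alpha\in(0,\xi)$ is the sign of the first-order term whenever it is nonzero.

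The next step is to compute the true marginal profit at the least squares prices. We have $\partial_g\calR=a_g+2b_gp_g-b_gc$. Substituting $p_g^{\mathrm{LS}}$ gives
\[
\partial_g\calR(p^{\mathrm{LS}})=a_g-\frac{a_gb_g}{\hat b_g^{\mathrm{LS}}}=a_g\Bigl(1-\frac{b_g}{\hat b_g^{\mathrm{LS}}}\Bigr).
\]
This is exactly where the quantity in the statement comes from. It vanishes when the estimator is exact, and its sign records whether the estimated optimum lies to the left or right of the true optimum.

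Next we need the price-derivative differences $\dot p_g^{\mathrm{FEO}}-\dot p_g^{\mathrm{EFO}}$. From the proof of Proposition~\ref{prop3} I expect these to be derived explicitly. In the EFO problem the active constraint (price gap, or normalized demand gap) is linear, and the estimated marginal profit vanishes at $p^{\mathrm{LS}}$. The first-order move therefore solves a KKT system with the Hessian weights $\hat b_g^{\mathrm{LS}}$. In FEO the same constraint is written in $\hat b$-space through $p_g^*=-a_g/(2\hat b_g)+c/2$, and the loss has curvature weights proportional to $\frac1{n_g}\sum p^{(i)2}$. Linearizing both KKT systems, I expect the difference to be a common scalar times a common vector, so that $\dot p_0^{\mathrm{FEO}}-\dot p_0^{\mathrm{EFO}}$ and $\dot p_1^{\mathrm{FEO}}-\dot p_1^{\mathrm{EFO}}$ are equal. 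In the price-fairness case the gap constraint fixes $\dot p_0-\dot p_1$ identically in both frameworks, and in the demand case a similar relation should hold. This common value should be $\kappa\,(\tau_p-\tau_b)(\tau_a-\tau_b)$ up to a positive factor, with the sign of $\kappa$ depending on the criterion. That is consistent with the conclusion of Proposition~\ref{prop3} that both prices move in the same direction.

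Granting that structure, the first-order term becomes
\[
\alpha\,\Delta\sum_g a_g\Bigl(1-\tfrac{b_g}{\hat b_g^{\mathrm{LS}}}\Bigr),
\]
where $\Delta$ is the common price difference. Its sign is negative in cases 1(a) and 2(b) and positive in cases 1(b) and 2(a), as established in Proposition~\ref{prop3}. The four sub-claims then follow by multiplying signs; note that the second group of cases in the statement must refer to demand fairness.

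The main obstacle is verifying that the two price shifts differ by the same amount for both groups, not merely in the same direction. If this fails, the criterion would be a weighted rather than an unweighted sum of $a_g(1-b_g/\hat b_g^{\mathrm{LS}})$. I would check it first for price fairness, where the gap constraint makes it immediate. For demand fairness I would recompute: the constraint is on $d_g/a_g=\tfrac12+\tfrac{c\hat b_g}{2a_g}$ with the $\hat b$ values held at least squares in EFO, and I would verify that the resulting linear relation between $\dot p_0$ and $\dot p_1$ is shared by FEO to first order. The boundary cases where $\tau_p=\tau_b$ and the difference vanishes to first order would need either a second-order argument or should be assigned to the non-strict case, matching the conventions in Proposition~\ref{prop3}.
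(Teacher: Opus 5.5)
Your route is the paper's: expand the true-profit gap to first order at $\alpha=0$, use $\partial_g\calR(p^{\mathrm{LS}})=a_g\bigl(1-b_g/\hat b_g^{\mathrm{LS}}\bigr)$, and take the sign of the price shift from Proposition~\ref{prop3}. Write $\Delta_g:=\dot p_g^{\mathrm{FEO}}(0)-\dot p_g^{\mathrm{EFO}}(0)$. The price-fairness half goes through, because both frameworks keep $p_1-p_0=(1-\alpha)(p_1^{\mathrm{LS}}-p_0^{\mathrm{LS}})$ active, so $\Delta_0=\Delta_1$. You still owe the boundary case $\tau_p=\tau_b$, which the statement places in 1(b). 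The paper handles it with a second-order step: $\ddot p_0^{\mathrm{FEO}}(0)>0=\ddot p_0^{\mathrm{EFO}}(0)$ and $\ddot p_0=\ddot p_1$ in both frameworks. Hence the second derivative of the profit gap at $0$ is $\ddot p_0^{\mathrm{FEO}}(0)\sum_g a_g\bigl(1-b_g/\hat b_g^{\mathrm{LS}}\bigr)$.

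The demand half breaks exactly at the step you flagged and then granted.

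\emph{Why the shifts are unequal.} Both frameworks keep the same linear constraint $\tfrac{\hat b_1^{\mathrm{LS}}}{a_1}p_1-\tfrac{\hat b_0^{\mathrm{LS}}}{a_0}p_0=(1-\alpha)K$ active. Differentiating gives $\tfrac{\hat b_1^{\mathrm{LS}}}{a_1}\Delta_1=\tfrac{\hat b_0^{\mathrm{LS}}}{a_0}\Delta_0$, i.e.\ $\Delta_g=\kappa\,(2p_g^{\mathrm{LS}}-c)$ for a common $\kappa$. So $\Delta_1/\Delta_0=\tau_b/\tau_a$, which is never $1$ in the nontrivial case, since $\tau_a=\tau_b$ iff $p_0^{\mathrm{LS}}=p_1^{\mathrm{LS}}$. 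A shared linear relation forces equal shifts only when it is a pure difference, as it is for prices.

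\emph{Consequence.} The first-order term is $\alpha\kappa\sum_g(2p_g^{\mathrm{LS}}-c)\,a_g\bigl(1-b_g/\hat b_g^{\mathrm{LS}}\bigr)$, the weighted sum you warned about. It can have the opposite sign from the unweighted sum. Take $a_0=a_1=1$, $c=0.1$, $\hat b_0^{\mathrm{LS}}=-2$, $\hat b_1^{\mathrm{LS}}=-1$, so $p^{\mathrm{LS}}=(0.3,0.55)$ and the weights are $(0.5,1)$. Use equal empirical second moments, so $\tau_p=1<\tau_b=2$; this is case 2(a), with $\kappa>0$. Let the true slopes be $b_0=-1$, $b_1=-1.35$. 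The unweighted sum is $0.5-0.35>0$, but the weighted one is $0.25-0.35<0$. Hence $\calR^{\mathrm{FEO}}<\calR^{\mathrm{EFO}}$ for small $\alpha$, contrary to 2(a). Part 2 therefore cannot be proved as stated.

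\emph{The paper has the same slip.} Its proof rewrites $\sum_g b_g\Delta_g\bigl(2p_g^{\mathrm{LS}}+\tfrac{a_g}{b_g}-c\bigr)$ as $\Delta_0\sum_g a_g\bigl(1-b_g/\hat b_g^{\mathrm{LS}}\bigr)$ for both criteria, which is valid only for price fairness. The correct demand-fairness condition uses the weights $2p_g^{\mathrm{LS}}-c$. The same weighting also enters the second-order boundary case.
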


% \begin{proposition}
%     \label{prop4}
%     % Assumption 1?
%     Suppose Assumption~\ref{assumption:1} holds. Without loss of generality, suppose $p_0^{\mathrm{LS}}<p_1^{\mathrm{LS}}$. If
%     \begin{equation}
%     \label{eq:4}
%         \left[a_0\left(1-\frac{b_0}{\hat{b}_0^{\mathrm{LS}}}\right)+a_1\left(1-\frac{b_1}{\hat{b}_1^{\mathrm{LS}}}\right)\right]\left[\hat{b}_0^{\mathrm{LS}}-\tau\hat{b}_1^{\mathrm{LS}}\right]>0,
%     \end{equation}
%     then there exists $\xi>0$ such that, whenever the level of price fairness constraints $\alpha$ satisfies $0<\alpha<\xi$, we have $\calR^{FEO}>\calR^{EFO}$.
%     If
%     \begin{equation}
%     \label{eq:5}
%         \left[a_0\left(1-\frac{b_0}{\hat{b}_0^{\mathrm{LS}}}\right)+a_1\left(1-\frac{b_1}{\hat{b}_1^{\mathrm{LS}}}\right)\right]\left[\hat{b}_0^{\mathrm{LS}}-\tau\hat{b}_1^{\mathrm{LS}}\right]<0,
%     \end{equation}
%     then there exists $\xi>0$ such that, when the level of demand fairness constraints $\alpha$ satisfies $0<\alpha<\xi$, we have $\calR^{FEO}>\calR^{EFO}$, where $\tau$ is defined in~\eqref{tau}.
% \end{proposition}

Proposition \ref{prop4} shows that fairness constraints in the estimation stage can sometimes act as a regularizer. When the unknown parameters are accurately estimated, i.e., $\hat{b}_0^{\mathrm{LS}}=b_0$ and $\hat{b}_1^{\mathrm{LS}}=b_1$, there is no gap between the estimated profit and the real expected profit, and it is easy to show that EFO will give larger profit than FEO. The benefit of fairness constraints therefore arises in the presence of estimation error, where they can improve parameter accuracy and, consequently, increase profit.

\subsection{Rawlsian Fairness}
Rawlsian fairness seeks to protect the least advantaged individuals. More specifically, for price fairness it minimizes the maximum price charged to customers, while in demand fairness, it maximizes the minimum demand faced by customers.
For Rawlsian price fairness and Rawlsian demand fairness, Proposition \ref{prop:rawlsian_fairness} shows that FEO and EFO lead to the same exact prices. Thus, one can choose FEO or EFO based on discussion in Section~\ref{sub:FEO_vs_EFO}.
\begin{proposition}
    \label{prop:rawlsian_fairness}
    Suppose Assumption~\ref{assumption:1} holds. %Without loss of generality, suppose $p_0^{\mathrm{LS}}<p_1^{\mathrm{LS}}$. 
    Then, under either Rawlsian price or demand fairness with fairness level $\alpha$,
    \begin{equation*}
        p_0^*\left(\hat{b}_0^{\mathrm{FEO}},\hat{b}_1^{\mathrm{FEO}}\right)=p_0^{\mathrm{EFO}}\quad\text{and}\quad p_1^*\left(\hat{b}_0^{\mathrm{FEO}},\hat{b}_1^{\mathrm{FEO}}\right)=p_1^{\mathrm{EFO}}.%=(1-\alpha)p_1^{\mathrm{LS}}+\alpha p_0^{\mathrm{LS}}.
    \end{equation*}
    % Under Rawlsian demand fairness,
    % \begin{equation*}
    %     p_0^*\left(\hat{b}_0^{  \mathrm{FEO}},\hat{b}_1^{\mathrm{FEO}}\right)=p_0^{\mathrm{EFO}}=p_0^{\mathrm{LS}},\quad p_1^*\left(\hat{b}_0^{\mathrm{FEO}},\hat{b}_1^{\mathrm{FEO}}\right)=(1-\alpha)p_1^{\mathrm{LS}}+\alpha\frac{a_1\hat{b}_0^{\mathrm{LS}}}{a_0\hat{b}_1^{\mathrm{LS}}}p_0^{\mathrm{LS}}.
    % \end{equation*}
If $p_0^\ast(\hat b_0^{\mathrm{LS}}, \hat b_1^{\mathrm{LS}})<p_1^\ast(b_0, b_1) <p_1^\ast(\hat b_0^{\mathrm{LS}}, \hat b_1^{\mathrm{LS}})$, then both profit and total surplus can increase simultaneously up to a certain threshold, thereby improving overall social welfare.
\end{proposition}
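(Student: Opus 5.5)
Under Rawlsian fairness the constraints are separable across groups, so the plan is to prove the equality of prices group by group and then do the welfare claim by a one-dimensional comparison.

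\textbf{Step 1: EFO.} Under Assumption~\ref{assumption:1} and for prices near the unconstrained optimum, the estimated profit $\hat\calR(p_0,p_1;\hat b_0^{\mathrm{LS}},\hat b_1^{\mathrm{LS}})$ is a sum of two strictly concave quadratics, one in each $p_g$. Each has unconstrained maximizer $p_g^{\mathrm{LS}}$.
\begin{itemize}
\item Rawlsian price fairness imposes the cap $p_g\le \bar p_\alpha:=(1-\alpha)\bar p+\alpha\underline p$. Hence $p_g^{\mathrm{EFO}}=\min(p_g^{\mathrm{LS}},\bar p_\alpha)$.
\item Rawlsian demand fairness imposes $(a_g+\hat b_g^{\mathrm{LS}}p_g)/a_g\ge \underline d_\alpha$. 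Since $\hat b_g^{\mathrm{LS}}<0$, this is an upper bound $p_g\le q_g:=a_g(1-\underline d_\alpha)/(-\hat b_g^{\mathrm{LS}})$. Hence $p_g^{\mathrm{EFO}}=\min(p_g^{\mathrm{LS}},q_g)$.
\end{itemize}
In both cases EFO simply truncates each group's LS price from above.

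\textbf{Step 2: FEO.} The loss $\ell_g(\hat b_g)$ is a convex quadratic in $\hat b_g$ minimized at $\hat b_g^{\mathrm{LS}}$. The induced price $p_g^*(\hat b_g)=-a_g/(2\hat b_g)+c/2$ is strictly increasing in $\hat b_g$ on $\hat b_g<0$.
\begin{itemize}
\item \emph{Price fairness.} The constraint $p_g^*(\hat b_g)\le\bar p_\alpha$ is equivalent to $\hat b_g\le \beta_g$ for an explicit threshold $\beta_g$. Minimizing a convex quadratic over a half-line gives $\hat b_g^{\mathrm{FEO}}=\min(\hat b_g^{\mathrm{LS}},\beta_g)$. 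Applying the monotone map $p_g^*$ then yields $p_g^{\mathrm{FEO}}=\min(p_g^{\mathrm{LS}},\bar p_\alpha)=p_g^{\mathrm{EFO}}$.
\item \emph{Demand fairness.} The FEO demand ratio is computed with the \emph{fitted} slope: $(a_g+\hat b_g p_g^*)/a_g=\tfrac12+\tfrac{c\hat b_g}{2a_g}$. This is decreasing in $\hat b_g$, so the constraint is again a half-line $\hat b_g\le\gamma_g$, and $\hat b_g^{\mathrm{FEO}}=\min(\hat b_g^{\mathrm{LS}},\gamma_g)$.
\end{itemize}

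\textbf{Main obstacle: the demand case.} The EFO demand is computed with $\hat b_g^{\mathrm{LS}}$, while the FEO demand uses the refitted $\hat b_g$, so equality of prices is not automatic.
\begin{itemize}
\item I would first check directly whether the bound $\hat b_g\le\gamma_g$, pushed through $p_g^*$, lands on $q_g$. The identity needed is $-a_g/(2\gamma_g)+c/2=q_g$.
\item If it does not match literally, I would fix the interpretation consistent with Table~\ref{tab:fairness}: $\underline d,\bar d$ are taken from the LS solution, and demand is measured as the ex-ante demand under the model being used. I would then verify that both approaches give the same binding price on the active group.
\item If that still fails, I would check whether for $c=0$ the demand constraint is inactive for FEO. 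In that case I would restrict the claim to the regime where both truncations coincide, which I suspect the authors' Table note intends.
\end{itemize}

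\textbf{Step 3: welfare.} Assume w.l.o.g.\ that group 1 has the higher price. For small $\alpha$ only its price is truncated, moving continuously down from $p_1^{\mathrm{LS}}$ toward $p_0^{\mathrm{LS}}$ as $\alpha$ grows.
\begin{itemize}
\item The true profit for group 1 is a concave quadratic in $p_1$ with maximizer $p_1^*(b_0,b_1)$. By hypothesis this maximizer lies strictly between $p_0^{\mathrm{LS}}$ and $p_1^{\mathrm{LS}}$. So lowering $p_1$ from $p_1^{\mathrm{LS}}$ increases true profit until $p_1$ reaches $p_1^*(b_0,b_1)$, which happens at a positive threshold $\alpha$ before the cap reaches $p_0^{\mathrm{LS}}$.
\item Consumer surplus $\calS_1$ is decreasing in $p_1$, so it increases throughout. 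Group 0 is unchanged.
\item Hence profit and surplus both rise up to that threshold, and so does $\calW$.
\end{itemize}
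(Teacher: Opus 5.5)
Your treatment of Rawlsian \emph{price} fairness matches the paper's argument. The problem separates by group, and EFO truncates $p_g^{\mathrm{LS}}$ at the cap. FEO minimizes a convex quadratic over the half-line $\hat b_g\le\beta_g$, and the increasing map $p_g^*$ sends that minimizer to the same truncated price.

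The gap is the \emph{demand} half, which the proposal does not establish. With the refitted slope, the FEO demand ratio $\tfrac12+\tfrac{c\hat b_g}{2a_g}$ is \emph{increasing} in $\hat b_g$ for $c>0$, not decreasing. So the constraint reads $\hat b_g\ge\gamma_g:=a_g(2\tilde d-1)/c$, with $\tilde d=(1-\alpha)\underline d+\alpha\bar d$. When it binds, FEO moves $\hat b_g$ toward $0$ and \emph{raises} $p_g^*$, while EFO lowers $p_g$. Under this reading the equality in the proposition fails. The identity you plan to check, $-a_g/(2\gamma_g)+c/2=q_g$, holds only when $\tilde d=\tfrac12+\tfrac{c\hat b_g^{\mathrm{LS}}}{2a_g}$, i.e.\ at $\alpha=0$ for the binding group. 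Your fallback bullets leave the interpretation open, and the last one retreats to a weaker claim.

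The missing step is to commit to the reading the paper's formulas actually use. This is visible in its parity-wise FEO demand constraint, in the ex-ante demand of the feature-based section, and in its closed form $p_1^{\mathrm{FEO}}=(1-\alpha)p_1^{\mathrm{LS}}+\alpha\tfrac{a_1\hat b_0^{\mathrm{LS}}}{a_0\hat b_1^{\mathrm{LS}}}p_0^{\mathrm{LS}}$. There, demand is evaluated with the fixed LS slope: $1+\tfrac{\hat b_g^{\mathrm{LS}}}{a_g}\,p_g^*(\hat b_g)\ge\tilde d$. This is exactly $p_g^*(\hat b_g)\le q_g$, a group-specific price cap. Your Step~2 price argument then gives $p_g^{\mathrm{FEO}}=\min(p_g^{\mathrm{LS}},q_g)=p_g^{\mathrm{EFO}}$ verbatim.

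A related caution for Step~3: ``only the higher-priced group is truncated'' is correct for price fairness. It is not correct for demand fairness when $\underline d,\bar d$ are the smaller and larger LS normalized demands, as in the paper's proof. Since $d_g^{\mathrm{LS}}/a_g=\tfrac12+\tfrac{c\hat b_g^{\mathrm{LS}}}{2a_g}$ and $p_g^{\mathrm{LS}}-\tfrac c2=\tfrac{a_g}{2|\hat b_g^{\mathrm{LS}}|}$, for $c>0$ the lower-priced group has the lower normalized demand. So it is that group's cap that binds. The paper's proof concludes that group~1 binds only because of a sign slip in $\underline d$. Your overcharged-group-1 welfare argument therefore covers the price case. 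For the demand case you must first identify which group's price actually moves.
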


% This result shows that, under \emph{Rawlsian price fairness}, if the unconstrained optimal price derived from 
% $(\hat{b}_0^{\mathrm{LS}}, \hat{b}_1^{\mathrm{LS}})$ exceeds the cap 
% $\tilde{p} := (1 - \alpha)p_1^{\mathrm{LS}} + \alpha p_0^{\mathrm{LS}}$, 
% then the chosen price is \emph{shaded down} to $\tilde{p}$ (see $p_1$ in Proposition~\ref{prop:rawlsian_fairness} and Figure~\ref{fig:rawls_price}). 
% Conversely, when the unconstrained optimal price lies below the cap, 
% i.e. when $\frac{a_g}{-2\hat{b}_g^{\mathrm{LS}}} + \frac{c}{2} \le \tilde{p}$, the solution coincides with the unconstrained optimum (see $p_0$ in Proposition~\ref{prop:rawlsian_fairness} and  and Figure~\ref{fig:rawls_price}).
% Under \emph{Rawlsian demand fairness}, a symmetric logic applies. If a group’s demand falls below a certain threshold, Rawlsian demand fairness seeks to increase that group’s demand, which is achieved by lowering the corresponding price.

%One interesting implication of Proposition~\ref{prop:rawlsian_fairness} is that both profit and total surplus can increase simultaneously under certain conditions. 
%For example, suppose 
The condition $p_0^\ast(\hat b_0^{\mathrm{LS}}, \hat b_1^{\mathrm{LS}})<p_1^\ast(b_0, b_1) <p_1^\ast(\hat b_0^{\mathrm{LS}}, \hat b_1^{\mathrm{LS}})$ in Proposition~\ref{prop:rawlsian_fairness} implies that group~$1$ is overcharged relative to its true optimal price. This case corresponds to the scenario shown in Figure~\ref{fig:prop4}. Under Rawlsian price fairness, increasing $\alpha$ -- which lowers the price cap -- brings the price of group~$1$ closer to its true optimal price, $p_1^*(b_0,b_1)$ (see Figure~\ref{fig:rawls_price}) and can lead to higher profit (see Figure~\ref{fig:rawls_price_outcomes}). Since prices decrease, consumer surplus increases, implying that both the firm and consumers benefit. A similar mechanism operates under Rawlsian demand fairness, where lowering the price for group~$1$ increases demand and yields higher profit and consumer surplus (see Figure~\ref{fig:rawls_demand} and \ref{fig:rawls_demand_outcomes}). This challenges the conventional view that fairness constraints inherently involve a trade-off with efficiency, and instead suggests that, under certain conditions, fairness can improve outcomes for both the firm and consumers.

\begin{figure}[htbp]
\FIGURE{
\begin{minipage}{\textwidth}
\centering
\captionsetup{justification=centering}
\begin{subfigure}{0.24\textwidth}
    \centering
    \includegraphics[width=\textwidth]{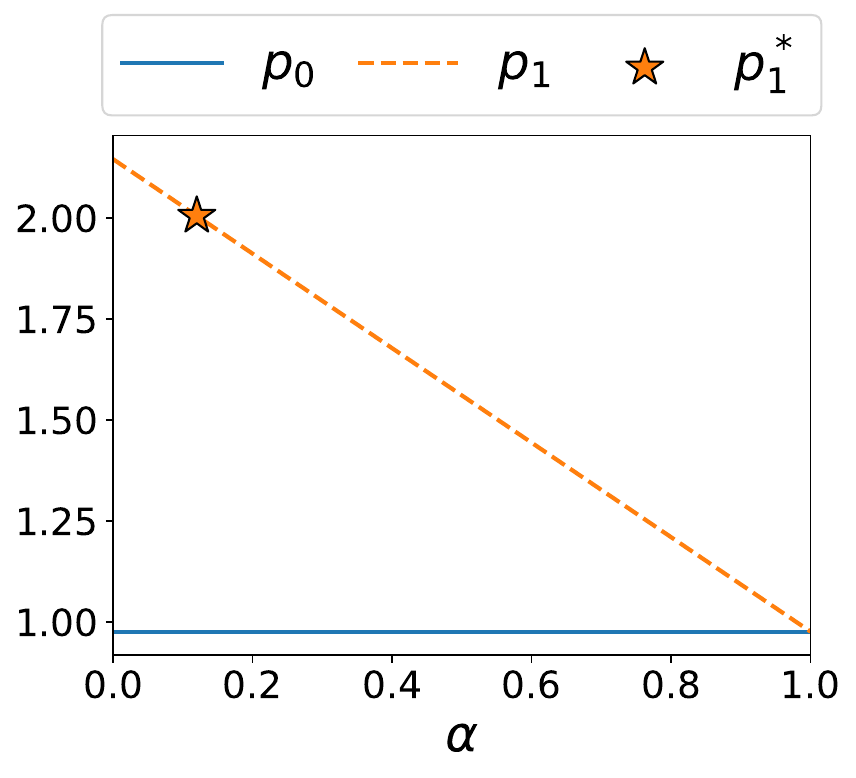}
    \caption{Price}
    \label{fig:rawls_price}
\end{subfigure}
\begin{subfigure}{0.24\textwidth}
    \centering
    \includegraphics[width=\textwidth]{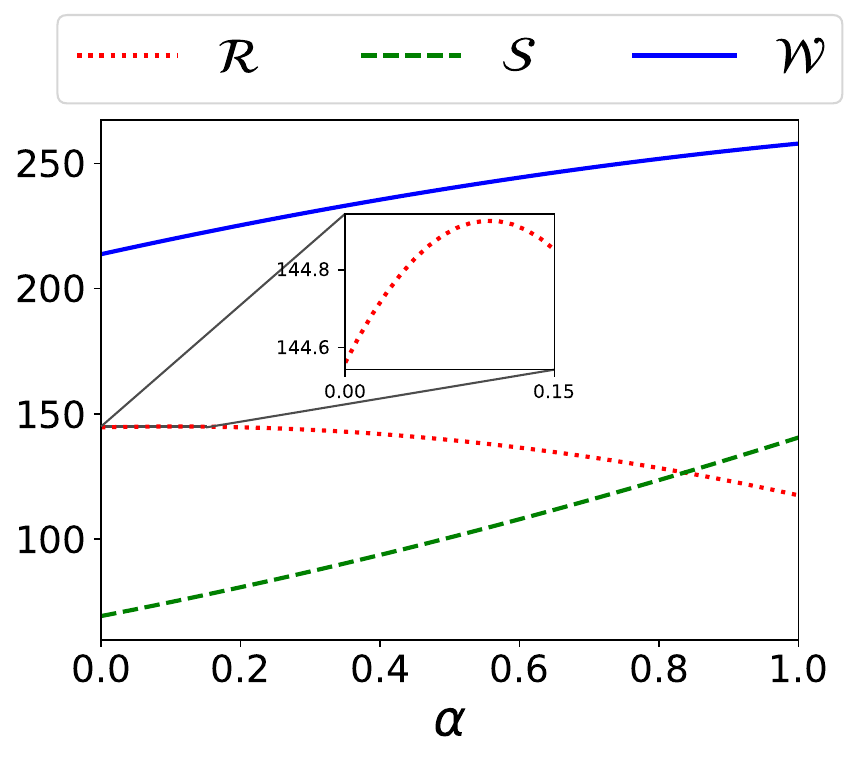}
    \caption{Outcomes}
    \label{fig:rawls_price_outcomes}
\end{subfigure}
\begin{subfigure}{0.24\textwidth}
    \centering
    \includegraphics[width=\textwidth]{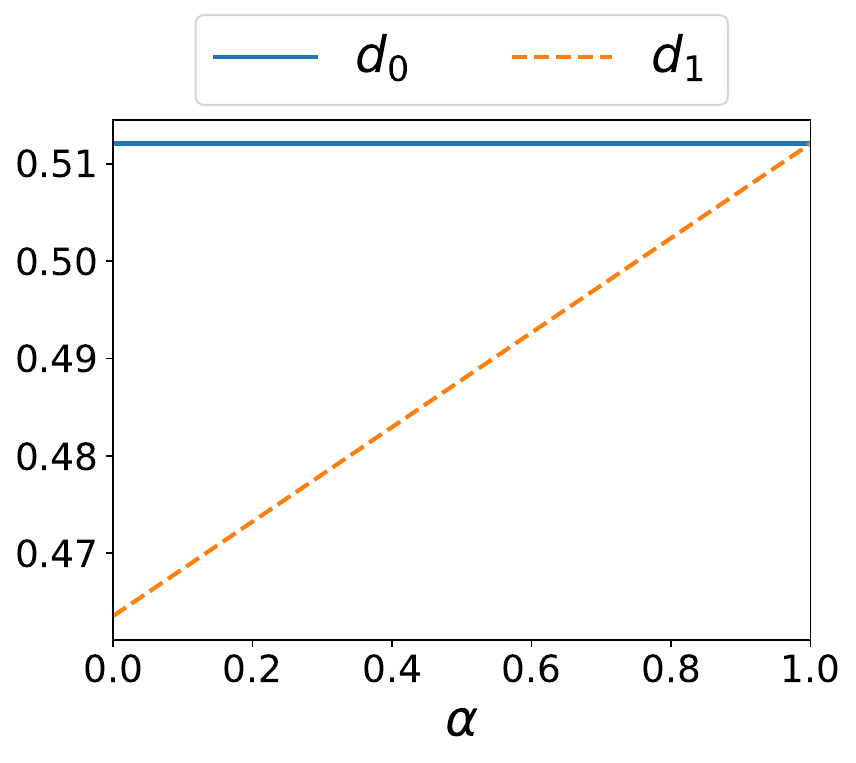}
    \caption{Demand}
    \label{fig:rawls_demand}
\end{subfigure}
\begin{subfigure}{0.24\textwidth}
    \centering
    \includegraphics[width=\textwidth]{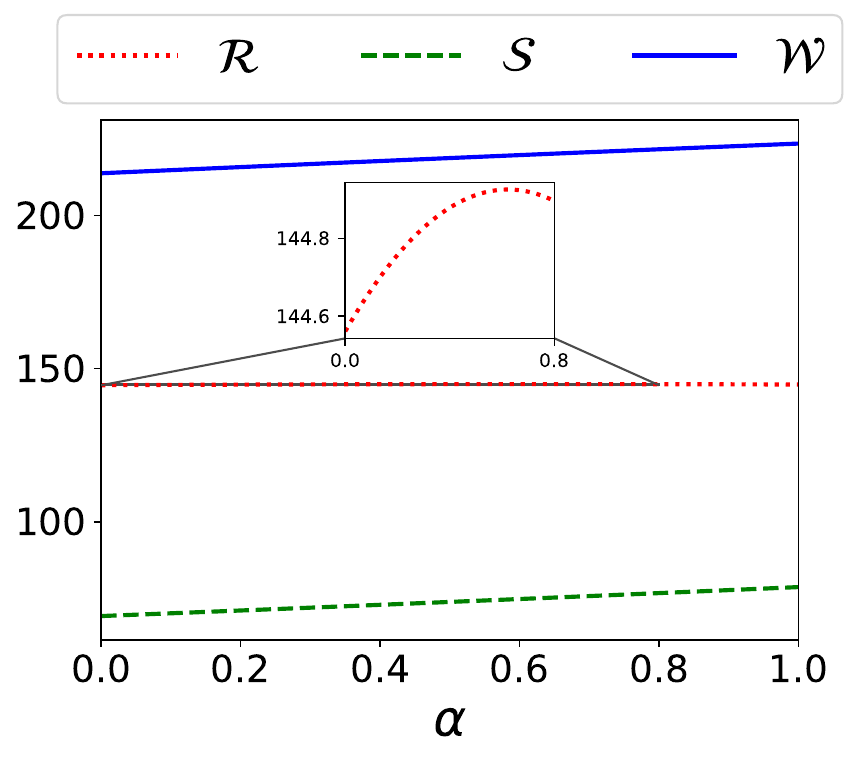}
    \caption{Outcomes}
    \label{fig:rawls_demand_outcomes}
\end{subfigure}
\end{minipage}
}
{Rawlsian Price/Demand Fairness \label{fig:prop4}\vspace{.3cm}}
{Figures~\ref{fig:rawls_price}--\ref{fig:rawls_price_outcomes} (resp. \ref{fig:rawls_demand}--\ref{fig:rawls_demand_outcomes}) present outcomes under Rawlsian price (resp. demand) fairness. 
The parameters are 
$a_0 = 1$, $a_1 = 1$, 
$b_0 = -0.5$, $b_1 = -0.25$, $\hat b^{LS}_0 = -0.5124$, $\hat b^{LS}_1 = -0.2320$, and $c = 0$. Note that $p_1^\ast(b_0,b_1)=2$.}
\end{figure}

\section{Extensions}
\label{sec:extensions}
The analysis in the previous section develops intuition in a stylized setting. We now turn to more general environments. More specifically, we extend the stylized model to include scenarios with unknown market sizes, i.e., unknown $(a_0, a_1)$ (Section~\ref{sec:unknown_market_sizes}), a linear model with features (Section~\ref{sec:linear_model}), and a logistic demand model with features (Section~\ref{sec:logistic_model}).

\subsection{Unknown Market Sizes}
\label{sec:unknown_market_sizes}
In this section, we explore the case where both the market size and price elasticity are unknown in the demand estimation stage. Similar to Assumption \ref{assumption:1}, we also assume that there exist prices above $c$ for the seller to get positive estimated demand with the least square estimator.
\begin{assumption}
    \label{assumption:3}
    The least square estimator satisfy $\hat{a}_g^{\mathrm{LS}}+\hat{b}_g^{\mathrm{LS}}c>0$ for $g\in\{0,1\}$.
\end{assumption}

First, for parity-wise loss fairness, similar to the case with known market size, the optimization problem admits multiple solutions that satisfy the fairness constraint and attain the same objective value, resulting in model multiplicity. Importantly, different solutions can lead to either increases or decreases in consumer surplus.

For parity-wise price and demand fairness, we generalize Proposition \ref{prop3} to Proposition \ref{prop7} which shows that there is also a threshold that determines whether FEO or EFO provides higher consumer surplus and social welfare under a certain fairness level $\alpha$. 

\begin{proposition}
\label{prop7}
Suppose Assumption~\ref{assumption:3} holds. Define
\begin{equation*}
         M_g :=
         \begin{bmatrix}
             1 & \frac{1}{n_g}\sum_{\{i|g^{(i)}=g\}}p^{(i)}\\
             \frac{1}{n_g}\sum_{\{i|g^{(i)}=g\}}p^{(i)} & \frac{1}{n_g}\sum_{\{i|g^{(i)}=g\}}p^{(i)^2}
         \end{bmatrix}
         ,\quad 
         v_g :=
         \begin{bmatrix}
             -\frac{\hat{b}_g^{\mathrm{LS}}}{2}\\[0.5ex]
             \frac{\hat{a}_g^{\mathrm{LS}}}{2}
         \end{bmatrix}
         ,
       \end{equation*}
and
\begin{equation*}
    \tau_p: =\left(\frac{v_0^\top M_0^{-1}v_0}{v_1^\top M_1^{-1}v_1}\right)^{1/3}, \quad \tau_a :=\frac{\hat a_0^{\mathrm{LS}}}{\hat a_1^{\mathrm{LS}}},\quad \tau_b :=\frac{\hat b_0^{\mathrm{LS}}}{\hat b_1^{\mathrm{LS}}}.
\end{equation*}
Then there exists $\xi>0$ such that the following statements hold for all $\alpha\in(0,\xi)$.
\begin{enumerate}
\item Consider parity-wise price fairness.
\begin{enumerate}
    \item If $(\tau_p-\tau_b)(\tau_a-\tau_b)>0$, then $p_0^{\mathrm{FEO}}<p_0^{\mathrm{EFO}}$ and $p_1^{\mathrm{FEO}}<p_1^{\mathrm{EFO}}$. Consequently, $\mathcal S^{\mathrm{FEO}}> \mathcal S^{\mathrm{EFO}}$ and $\mathcal W^{\mathrm{FEO}}> \mathcal W^{\mathrm{EFO}}$. \label{prop5:case1a}
    \item If $(\tau_p-\tau_b)(\tau_a-\tau_b)<0$, then $p_0^{\mathrm{FEO}}>p_0^{\mathrm{EFO}}$ and $p_1^{\mathrm{FEO}}>p_1^{\mathrm{EFO}}$. Consequently, $\mathcal S^{\mathrm{FEO}}< \mathcal S^{\mathrm{EFO}}$ and $\mathcal W^{\mathrm{FEO}}< \mathcal W^{\mathrm{EFO}}$.
    \label{prop5:case1b}
\end{enumerate}
\item Consider parity-wise demand fairness and assume $c>0$.
\begin{enumerate}
    \item If $(\tau_p-\tau_b)(\tau_a-\tau_b)>0$, then $p_0^{\mathrm{FEO}}>p_0^{\mathrm{EFO}}$ and $p_1^{\mathrm{FEO}}>p_1^{\mathrm{EFO}}$. Consequently, $\mathcal S^{\mathrm{FEO}}< \mathcal S^{\mathrm{EFO}}$ and $\mathcal W^{\mathrm{FEO}}< \mathcal W^{\mathrm{EFO}}$. \label{prop5:case2a}
    \item If $(\tau_p-\tau_b)(\tau_a-\tau_b)<0$, then $p_0^{\mathrm{FEO}}<p_0^{\mathrm{EFO}}$ and $p_1^{\mathrm{FEO}}<p_1^{\mathrm{EFO}}$. Consequently, $\mathcal S^{\mathrm{FEO}}> \mathcal S^{\mathrm{EFO}}$ and $\mathcal W^{\mathrm{FEO}}> \mathcal W^{\mathrm{EFO}}$. \label{prop5:case2b}
\end{enumerate}
\end{enumerate}
\end{proposition}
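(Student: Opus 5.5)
We follow the same local perturbation strategy as for Proposition~\ref{prop3}, adapting only the estimation-stage geometry to the case where both $\hat a_g$ and $\hat b_g$ are estimated.

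\textbf{Reduction to one-dimensional problems.} Under Assumption~\ref{assumption:3}, for $\alpha$ small the optimal prices satisfy $p_g^*=-\hat a_g/(2\hat b_g)+c/2$, and the estimated normalized demand is $d_g/\hat a_g=\tfrac12+c\hat b_g/(2\hat a_g)$. Both depend on $(\hat a_g,\hat b_g)$ only through $r_g:=\hat a_g/\hat b_g$. So both price and demand fairness constraints reduce to constraints on $(r_0,r_1)$ of the form $|h(r_0)-h(r_1)|\le(1-\alpha)|h(r_0^{\mathrm{LS}})-h(r_1^{\mathrm{LS}})|$. For price, $h(r)=-r/2$. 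For demand, the normalized demand is $\tfrac12+\tfrac{c}{2r}$, which is the source of the sign flip and of the requirement $c>0$.

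\textbf{EFO and FEO to first order.} For EFO, the profit in group $g$ is maximized on the estimated concave quadratic, and the constraint binds for small $\alpha$. Lagrangian stationarity gives $\hat b_0^{\mathrm{LS}}\,\delta p_0=\hat b_1^{\mathrm{LS}}\,\delta p_1$, since the second derivative of estimated profit is $2\hat b_g$. The first-order price shifts are therefore weighted by $1/\hat b_g$, subject to the constraint's linearization (for demand fairness, $\delta(d_g/\hat a_g)=\hat b_g\delta p_g/\hat a_g$).

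For FEO, the least-squares loss is a quadratic in $\theta_g=(\hat a_g,\hat b_g)$ with Hessian $2M_g$ centered at $\theta_g^{\mathrm{LS}}$. Minimizing $(\theta_g-\theta_g^{\mathrm{LS}})^\top M_g(\theta_g-\theta_g^{\mathrm{LS}})$ subject to a prescribed first-order change $\delta r_g=\nabla r_g^\top\delta\theta_g$ has cost $(\delta r_g)^2/(\nabla r_g^\top M_g^{-1}\nabla r_g)$. Since $\nabla r_g=(1/\hat b_g,-\hat a_g/\hat b_g^2)$, it is proportional to $v_g$ up to the factor $2/\hat b_g^2$ and a sign. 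The effective weight on group $g$ is therefore $w_g\propto \hat b_g^4/(v_g^\top M_g^{-1}v_g)$. Converting $\delta r_g$ to $\delta p_g$ (factor $-1/2$), FEO solves $\min\sum_g w_g(\delta p_g)^2$ subject to the same linearized constraint. Its stationarity condition reads $w_0\delta p_0=w_1\delta p_1$, up to the constraint's coefficients.

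\textbf{Comparison.} The linearized constraint in both cases fixes a linear combination $\kappa_0\delta p_0-\kappa_1\delta p_1=-\alpha\Delta$ of the two price shifts. For price fairness $\kappa_g=1$; for demand fairness $\kappa_g\propto \hat b_g/\hat a_g$. Solving each two-variable problem explicitly gives
\begin{equation*}
\delta p_0\propto \frac{\kappa_0/q_0}{\kappa_0^2/q_0+\kappa_1^2/q_1}\,(-\alpha\Delta),
\end{equation*}
where $q_g$ is the quadratic weight ($-\hat b_g$ for EFO, $w_g$ for FEO).

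Subtracting the EFO and FEO expressions, the sign of $\delta p_0^{\mathrm{FEO}}-\delta p_0^{\mathrm{EFO}}$ (and likewise for group 1) factors into two pieces:
\begin{itemize}
\item one factor comparing $q_0/q_1$ across the two methods, which after taking cube roots reduces to the sign of $\tau_p-\tau_b$;
\item one factor from the sign of $\Delta$, i.e.\ of $p_0^{\mathrm{LS}}-p_1^{\mathrm{LS}}$, which equals the sign of $\tau_a-\tau_b$ because $p_0^{\mathrm{LS}}-p_1^{\mathrm{LS}}\propto \hat a_0/\hat b_0-\hat a_1/\hat b_1$ up to negative factors.
\end{itemize}
Both groups' prices move in the same relative direction. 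Strict first-order differences then extend to all $\alpha\in(0,\xi)$ by continuity.

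Since the true demand is linear and downward sloping, a lower price in each group gives higher $\calS$ and $\calW$, as in Proposition~\ref{prop3}. For demand fairness, the constraint coefficients $\kappa_g$ carry the extra factor $\hat b_g/\hat a_g$, which flips the sign relationship and yields cases 2(a)--(b).

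\textbf{Main obstacle.} The hardest step is the FEO first-order computation. It must be checked that the bilevel constraint binds, that the KKT/implicit-function argument is valid given that the constraint involves absolute values (this holds for small $\alpha$ since $\Delta\neq0$), and that the resulting algebraic ratio cleanly collapses to the cube-root expression $\tau_p^3=v_0^\top M_0^{-1}v_0/v_1^\top M_1^{-1}v_1$ compared with $\tau_b^3$ and weighted by $\tau_a$. I would first verify that this specializes to Proposition~\ref{prop3}: with $\hat a_g$ fixed, $M_g^{-1}$ is replaced by its $(2,2)$ block restricted accordingly, and $v_g^\top M_g^{-1}v_g$ should become $a_g^2/(4\,\overline{p^2_g})$, recovering the earlier $\tau_p$.
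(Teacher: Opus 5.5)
Your proposal is correct and is essentially the paper's argument: both show the fairness constraint binds, linearize at $\alpha=0$ where the multiplier vanishes, and compare $dp_0/d\alpha$ for EFO (weight $-\hat b_g^{\mathrm{LS}}$) against FEO, whose KKT derivative $\nabla_\alpha\theta_g\propto M_g^{-1}\nabla p_g^*$ is exactly your reduced price-space weight $w_g=(\hat b_g^{\mathrm{LS}})^4/(v_g^\top M_g^{-1}v_g)$ (since $\nabla p_g^*=v_g/(\hat b_g^{\mathrm{LS}})^2$). Your monotonicity-in-$q_0/q_1$ step carries the comparison all the way to $\tau_p$ versus $\tau_b$, whereas the paper stops at comparing $q_0^\top M_0^{-1}q_0\hat b_0^{\mathrm{LS}}$ with $q_1^\top M_1^{-1}q_1\hat b_1^{\mathrm{LS}}$; the only things to tidy are a missing minus sign in your stationarity relations (in general $q_0\delta p_0/\kappa_0=-q_1\delta p_1/\kappa_1$, e.g.\ $\hat b_0^{\mathrm{LS}}\delta p_0=-\hat b_1^{\mathrm{LS}}\delta p_1$ for EFO under price fairness), and the demand function in your reduction step, which should be the LS-evaluated normalized demand $1+\tfrac{\hat b_g^{\mathrm{LS}}}{\hat a_g^{\mathrm{LS}}}p_g^*$ (equal to $\tfrac12+\tfrac{c}{2r_g}$ only at the LS point), as your later choice $\kappa_g\propto\hat b_g/\hat a_g$ already assumes.
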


For Rawlsian fairness, we obtain the same result as in Proposition~\ref{prop:rawlsian_fairness}. The key reason is that, in our demand model, each group is governed by an independent parameter pair $(a_g, b_g)$. To illustrate, consider Rawlsian price fairness. As the price cap decreases, only the group whose price exceeds the cap needs to adjust its parameter to satisfy the constraint. Under FEO, this group optimally sets its price exactly equal to the cap, rather than below it, since any further price reduction would strictly increase loss. Similarly, under EFO, only the group with a price above the cap reduces its price, again stopping exactly at the cap to avoid unnecessary profit loss. As a result, the optimal solutions under FEO and EFO coincide under Rawlsian price fairness. An analogous argument applies to Rawlsian demand fairness.

%%%%%%%%%%%%%%%%%%%%%%%%%%%%%%%%%%%%%%%% 
\subsection{Feature-based Linear Demand Models}
\label{sec:linear_model}
In this section, we extend the framework introduced so far to the setting with features, which can be interpreted as the case of personalized pricing. We describe how to solve this problem, discuss the properties that arise in this setting, and examine whether the results observed in the stylized model continue to hold.

We consider a feature vector $\vx^{(i)} = [1, x^{(i)}_1, x^{(i)}_2, \ldots, x^{(i)}_m] \in \mathbb{R}^{m+1}$ that encodes the characteristics of customer~$i\in[n]$ (or the product). The first component of $\vx^{(i)}$ is fixed at~$1$ to capture the intercept. This formulation enables personalized pricing and provides a more accurate characterization of the demand function. Under the linear demand function,
$$d^{\mathrm{LS}}(\vx,p;a,b)=a^\top \vx + (b^\top \vx)p.$$
Note that when there are only two feature vectors, $\vx=[1,0]$ and $\vx=[1,1]$, the model reduces to the stylized setting in Section~\ref{sec:theory}. Here, we interpret the feature as one that may embed information correlated with $g$.

Similar to Section~\ref{sec:theory}, the optimal estimates of the demand parameters $(\hat{a}^{\mathrm{LS}},\hat{b}^{\mathrm{LS}})$ can be obtained by minimizing the following loss function,
% $$\ell^{\mathrm{LS}}(\hat{a},\hat{b})=\sum_{g \in \{0,1\}}\sum_{\{i \mid g^{(i)} = g\}}\dfrac{1}{n_g} 
%   \Bigl( \hat a^{\top} x^{(i)} + \left(\hat b^{\top} x^{(i)}\right)\,p^{(i)} - d^{(i)} \Bigr)^{2}.$$
$$\ell^{\mathrm{LS}}(\hat{a},\hat{b})
=\sum_{g \in \{0,1\}}\sum_{\{i \mid g^{(i)} = g\}}\dfrac{1}{n_g} 
  \left( d^{\mathrm{LS}}\left(\vx^{(i)},p^{(i)};a,b\right) - d^{(i)} \right)^{2}.$$
Given a feature vector $\vx$ and the estimated parameters $(\hat a, \hat b)$, the optimal price $p^*$ is defined as
\begin{equation*}
\begin{aligned}
p^*(\vx; \hat{a}, \hat{b}) := \arg\max_{p \geq 0} \, (p - c) \max\left(0, d(\vx, p; \hat{a}, \hat{b})\right).
\end{aligned}
\end{equation*}
Building on this setup, we introduce the definitions of parity-wise fairness and Rawlsian fairness within this feature-based framework.

\paragraph{Parity-wise Fairness.}
We formulate the $\alpha$-parity-wise price fairness criterion under the FEO framework as follows.
% \begin{equation}
% \begin{aligned}
%     \min_{\hat{a}, \hat{b}\leq0}\quad &\ell^{\mathrm{LS}}(\hat{a},\hat{b})\\
%     \text{s.t.} \quad &\left|\frac{1}{n_0}\sum_{\{i|g^{(i)}=0\}}\frac{\hat{a}^\top x^{(i)}}{2\hat{b}^\top x^{(i)}}-\frac{1}{n_1}\sum_{\{i|g^{(i)}=1\}}\frac{\hat{a}^\top x^{(i)}}{2\hat{b}^\top x^{(i)}}\right|\le (1-\alpha)\Delta_p,
% \end{aligned}
% \label{eq:FEO_price_parity_feature_linear}
% \end{equation}
\begin{equation}
\begin{aligned}
    \min_{\hat{a}, \hat{b}\leq0}\quad &\ell^{\mathrm{LS}}(\hat{a},\hat{b})\\
    \text{s.t.} \quad &\Big|\frac{1}{n_0}\sum_{\{i|g^{(i)}=0\}}p^*(\vx^{(i)};\hat a, \hat b)-\frac{1}{n_1}\sum_{\{i|g^{(i)}=1\}}p^*(\vx^{(i)};\hat a, \hat b)\Big|\le (1-\alpha)\Delta_p,
\end{aligned}
\label{eq:FEO_price_parity_feature_linear}
\end{equation}
which enforces that the disparity in predicted prices between the two groups is bounded. Here, $\Delta_p$ denotes the original average price gap between the two groups before applying the fairness criterion. Furthermore, under parity-wise price fairness, we additionally impose an individual-level price cap, such as
$$p^*(\vx^{(i)};\hat a, \hat b) \le \tilde{p}\quad\forall i\in[n],$$
to prevent the decision maker from setting excessively high prices. This is because constraint \eqref{eq:FEO_price_parity_feature_linear} only restricts the average price disparity across groups, and without an explicit cap, the optimization problem could satisfy the parity constraint by assigning extreme prices to a single individual.

Similarly, the formulation of demand fairness shares the same objective structure as that in~\eqref{eq:FEO_price_parity_feature_linear}.
In demand fairness, we consider normalized ex-ante demand for each data $i$,
\begin{equation}
    d^{\mathrm{EA}}\!\!\left(x^{(i)};\hat a,\hat b\right):=\frac{1}{(\hat{a}^{\mathrm{LS}})^{\top} x^{(i)}} d^{\mathrm{LS}}\!\left(x^{(i)},p^*(\vx^{(i)};\hat a, \hat b);\hat{a}^{\mathrm{LS}},\hat{b}^{\mathrm{LS}}\right).
\label{eq:ex_ante_linear}
\end{equation}
Here, we adopt the least-squares parameters $(\hat a^{\mathrm{LS}}, \hat b^{\mathrm{LS}})$ as the fixed baseline for demand evaluation. Then, the ex-ante demand is evaluated by applying the optimal price $p^*(\vx^{(i)};\hat a, \hat b)$ to this baseline. Then, parity-wise demand fairness replaces the constraint in~\eqref{eq:FEO_price_parity_feature_linear} with
$$\bigg|\,\frac{1}{n_0}\sum_{\{i|g^{(i)}=0\}}d^{\mathrm{EA}}\!\!\left(x^{(i)};\hat a,\hat b\right)-\frac{1}{n_1}\sum_{\{i|g^{(i)}=1\}}d^{\mathrm{EA}}\!\!\left(x^{(i)};\hat a,\hat b\right)\bigg|\le (1-\alpha)\Delta_d,$$
where $\Delta_d$ denotes the initial demand gap between two groups.

On the other hand, the EFO framework operates under the assumption that the estimated demand function, $d^{\text{LS}}(\vx, p; \hat{a}^{\text{LS}}, \hat{b}^{\text{LS}})$, is the ground truth. Given this assumption, it seeks to maximize total profit subject to a set of fairness constraints. 

The $\alpha$-parity-wise price fairness problem can be formulated as
\begin{equation}
\begin{aligned}
    \max_{p_i \ge 0}\quad 
    &\sum_{i=1}^{n} (p_i - c)
    \max\left(0, (\hat{a}^{\mathrm{LS}})^{\top} x^{(i)} 
    + ((\hat{b}^{\mathrm{LS}})^{\top} x^{(i)}) p_i \right) \\
    \text{s.t.}\quad 
    &\bigg|\frac{1}{n_0}\sum_{\{i|g^{(i)}=0\}}p_i-\frac{1}{n_1}\sum_{\{i|g^{(i)}=1\}}p_i\bigg|\le (1-\alpha)\Delta_p.
    %& p_i \le (1-\alpha)\bar{p} + \alpha\underline{p},\quad \forall i \in [n].
\end{aligned}
    \label{eq:EFO_price_parity_feature_linear}
\end{equation}
The demand fairness can be defined analogously by modifying the fairness constraint accordingly.

For parity-wise fairness, the optimization problem is non-convex. Therefore, in our numerical analysis, we employ standard nonlinear optimization solvers to obtain approximate solutions. When the demand function is specified as $d = a^\top x + b_g p$, both parity-wise price and demand fairness for FEO can be efficiently addressed through a one-dimensional parameter search combined with a convex optimization subproblem. The detailed algorithmic procedure is described in Appendix~\ref{appendix:heuristic}.

\paragraph{Rawlsian Fairness.}
Under the FEO framework, the $\alpha$-Rawlsian price fairness formulation adopts the same objective as in \eqref{eq:FEO_price_parity_feature_linear} but replaces the constraint with
\begin{equation}
p^*(\vx^{(i)};\hat a,\hat b) \leq (1-\alpha)\bar{p}+ \alpha\underline{p},~\forall i\in[n],
\label{eq:rawls_price_feature}
\end{equation}
where $\bar{p}$ (resp. $\underline{p}$) represents the upper (resp. lower) bound of price determined by the practitioner. Rawlsian demand fairness ensures that the ex-ante normalized demand exceeds a prescribed minimum level. Mathematically, 
\begin{equation*}
d^{\mathrm{EA}}\!\!\left(x^{(i)};\hat a,\hat b\right) 
\geq \alpha \bar{d} + (1-\alpha)\underline{d},~\forall i\in[n],
\label{eq:rawls_demand_feature}
\end{equation*}
where $\bar{d}$ (resp. $\underline{d}$) is the maximum (resp. minimum) normalized demand.

On the other hand, under the EFO framework, the objective remains the same as in~\eqref{eq:EFO_price_parity_feature_linear}, but the constraint is modified to capture different notions of fairness. Rawlsian price fairness replaces the constraint with
$$p_i \le (1-\alpha)\bar p + \alpha \underline{p}, ~\forall i \in [n].$$
Demand fairness is defined in a similar manner by replacing the constraint with a requirement that the induced demand with respect to $p_i$ exceeds a certain threshold.
%$$\frac{(\hat{a}^{\mathrm{LS}})^{\top} x^{(i)} + ((\hat{b}^{\mathrm{LS}})^{\top} x^{(i)})p_i}{(\hat{a}^{\mathrm{LS}})^{\top} x^{(i)}} \ge \alpha \bar{d} + (1-\alpha)\underline{d},~ \forall i \in [n].$$

One of the key differences between the EFO and FEO frameworks lies in the number of decision variables involved in the optimization problem. In the FEO framework, there are $2m+2$ decision variables -- twice the dimension of the feature space -- whereas in the EFO framework, there are $n$ decision variables corresponding to the number of data points. Consequently, in personalized pricing settings with a large number of consumers (i.e., large $n$), EFO may face an increased computational burden. In contrast, FEO may offer relative scalability advantages since the number of features $m$ is typically much smaller than $n$.

Regarding the nature of the optimization problems, the two frameworks exhibit different convexity properties. In the EFO setting, Rawlsian fairness formulations are non-convex, as the objective in~\eqref{eq:EFO_price_parity_feature_linear} involves a multiplicative interaction between the price $p_i$ and a piecewise-linear demand function, resulting in a non-concave profit function in the decision variables. In the FEO framework, Proposition~\ref{prop:convexity_linear} illustrates that FEO frameworks can be reformulated as convex quadratic programs (QPs). Although the formulation in~\eqref{eq:FEO_price_parity_feature_linear}, under both price and demand fairness, contains a nonlinear constraint, Proposition~\ref{prop:convexity_linear} demonstrates that the problem can nevertheless be rewritten as a convex optimization problem.%In this setting, the number of decision variables directly affects the computational complexity. 
% Note that the problem~\eqref{eq:FEO_price_parity_feature_linear}, under both price and demand fairness, includes a non-linear term in the constraint. 
% Proposition~\ref{prop:convexity_linear} shows that, despite this non-linearity, the problem can be reformulated as a convex optimization problem.

\begin{proposition}
Within the FEO framework, Rawlsian price and demand fairness can be formulated as convex optimization problems under linear demand.
\label{prop:convexity_linear}
\end{proposition}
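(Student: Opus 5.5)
Our plan is to show that, under linear demand, the bilevel constraint in the FEO formulation collapses to constraints that are linear in the parameters $(\hat a,\hat b)$. The objective $\ell^{\mathrm{LS}}(\hat a,\hat b)$ is a sum of squares of terms $\hat a^\top \vx^{(i)} + (\hat b^\top\vx^{(i)})p^{(i)} - d^{(i)}$. Each such term is affine in $(\hat a,\hat b)$ for fixed data, so the objective is a convex quadratic. The sign restriction $\hat b\le 0$ is linear as well. To keep $p^*$ in closed form, we add the natural feasibility restrictions $\hat b^\top\vx^{(i)}<0$ and $\hat a^\top\vx^{(i)} + (\hat b^\top\vx^{(i)})c>0$ for every $i$. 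These are the analogue of Assumption~\ref{assumption:1}, and both are linear (we close them with small margins if needed). Under them, exactly as in Section~\ref{sub:pricing_pipeline},
\begin{equation*}
p^*(\vx^{(i)};\hat a,\hat b) = -\frac{\hat a^\top\vx^{(i)}}{2\,\hat b^\top\vx^{(i)}} + \frac{c}{2}.
\end{equation*}

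For \emph{Rawlsian price fairness}, set $\kappa := (1-\alpha)\bar p+\alpha\underline p$. The constraint $p^*(\vx^{(i)};\hat a,\hat b)\le \kappa$ becomes
\begin{equation*}
-\frac{\hat a^\top \vx^{(i)}}{2\hat b^\top\vx^{(i)}} \le \kappa - \frac{c}{2}.
\end{equation*}
Since $\hat b^\top\vx^{(i)}<0$, we multiply through by $-2\hat b^\top\vx^{(i)}>0$ without flipping the inequality. This gives
\begin{equation*}
\hat a^\top\vx^{(i)} \le -(2\kappa - c)\,\hat b^\top\vx^{(i)},
\end{equation*}
which is a linear inequality in $(\hat a,\hat b)$.

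For \emph{Rawlsian demand fairness}, write $A_i:=(\hat a^{\mathrm{LS}})^\top\vx^{(i)}>0$ and $B_i:=(\hat b^{\mathrm{LS}})^\top\vx^{(i)}<0$. These are fixed constants by \eqref{eq:ex_ante_linear}. The normalized ex-ante demand is then
\begin{equation*}
d^{\mathrm{EA}} = 1 + \frac{B_i}{A_i}\,p^*(\vx^{(i)};\hat a,\hat b),
\end{equation*}
which is affine and decreasing in $p^*$. The constraint $d^{\mathrm{EA}}\ge \alpha\bar d+(1-\alpha)\underline d$ is therefore equivalent to an upper bound $p^*(\vx^{(i)};\hat a,\hat b)\le \kappa_i$, with $\kappa_i := \frac{A_i}{-B_i}\bigl(1-\alpha\bar d-(1-\alpha)\underline d\bigr)$. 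The same multiplication argument turns this into a linear inequality in $(\hat a,\hat b)$.

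Collecting the pieces, each problem minimizes a convex quadratic over a polyhedron. Both are therefore convex QPs.

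The main obstacle is the positivity and feasibility domain. The closed form for $p^*$ and the sign-preserving multiplication both rely on $\hat b^\top\vx^{(i)}<0$ and on an interior (above-cost) optimum. Outside this region $p^*$ is not the ratio formula, for instance when $\hat b^\top \vx^{(i)}\ge 0$ or when demand at cost is nonpositive. We would argue that restricting to this region is without loss, as the paper implicitly does through Assumption~\ref{assumption:1}. Strict inequalities are handled by closure with a small margin $\delta>0$. This keeps the feasible set polyhedral and hence convex.
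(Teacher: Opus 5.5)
Your proof is correct and follows essentially the same route as the paper: substitute the closed form $p^*=-\hat a^\top\vx^{(i)}/(2\hat b^\top\vx^{(i)})+c/2$ and multiply through by the sign-definite denominator to get constraints affine in $(\hat a,\hat b)$. The paper obtains $\hat a^\top\vx^{(i)}+(2\tilde\alpha-c)\,\hat b^\top\vx^{(i)}\le 0$ for price, and for demand it multiplies the constraint through directly instead of first converting it to your per-customer price cap $\kappa_i$. Your explicit restrictions $\hat b^\top\vx^{(i)}<0$ and $\hat a^\top\vx^{(i)}+c\,\hat b^\top\vx^{(i)}>0$ are a worthwhile addition that the paper leaves implicit; note that $\hat b\le 0$ alone does not give $\hat b^\top\vx^{(i)}<0$ when features can be negative. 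These restrictions are what make $p^*$ well defined at all, so they belong to the formulation rather than being ``without loss.'' The $\delta$-margin matters only for attainment of the minimum, not for convexity.
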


% Furthermore, another appealing property of Rawlsian fairness is that the loss value changes in a convex manner with respect to small variations in the fairness level. Proposition~\ref{prop:convexity_wrt_error_linear} formalizes this observation.
% \begin{proposition}
% Denote $l(\alpha)$ as the optimal objective value of FEO with respect to fairness level $\alpha$. 
% % Assume $\mathrm{span}\left(\begin{bmatrix}
% %     x^{(i)}\\
% %     p^{(i)}x^{(i)}
% % \end{bmatrix}\right)_{i\in [n]}=\mathbb{R}^{2(m+1)}$. 
% Assume $\mathrm{span}\{ (x^{(i)}, p^{(i)}x^{(i)})^\top \}_{i \in [n]} = \mathbb{R}^{2(m+1)}$. Then there exists a constant $\delta>0$ such that $\ell(\alpha)$ is a convex function in $\alpha \in \left[0,\delta\right]$, i.e., $\lim_{\alpha\rightarrow 0+}\frac{d^2 \ell(\alpha)}{d\alpha^2}>0$, under both price and demand fairness with linear demand. \label{prop:convexity_wrt_error_linear}
% \end{proposition}
% This result implies that the loss function responds in a smooth and convex manner to small increases in the fairness level. %In other words, one can realize substantial fairness improvements while incurring only a modest sacrifice in estimation loss (or accuracy).

\paragraph{Numerical Analysis} 
We conduct experiments under the linear demand model. Specifically, we consider four types of fairness, given by $\{\text{Rawlsian}, \text{Parity-wise}\} \times \{\text{price}, \text{demand}\}$ fairness. 
We generate $1{,}000$ synthetic samples following the procedure described in 
Appendix~\ref{appendix:synthetic_data}. We vary $\alpha$ from $0$ to $1$ on a grid with step size $0.2$. To validate the performance of our frameworks, we split the dataset into training and test data with an $80$:$20$ ratio, and show the results on test data. 

We consider a linear demand function with parameters $(a_0, a_1, b_0, b_1) = (30, 10, -4, -4)$, i.e., $d = (30 + 10x) - (4 + 4x)p$. 
All parameters are assumed unknown, and the cost is $c = 1$. We impose an individual-level price cap $\tilde{p}=10$ to ensure that fairness constraints are not satisfied by changing the price of a single customer. We solve the constrained optimization problem using the \emph{Sequential Least Squares Quadratic Programming} (SLSQP) algorithm in \emph{SciPy} \citep{2020SciPy-NMeth}.

Figure~\ref{fig:parity_price_linear} presents the price distributions under parity-wise price fairness. As the parity-wise price fairness constraint are tightened in the estimation stage, the demand function being learned becomes less personalized to reduce the difference between the two groups. Consequently, the price distributions of group $0$ and group $1$ converge rapidly for FEO, and when $\alpha=1.0$, all the customers receive the same price. In contrast, imposing parity-wise price fairness constraints in the optimization stage uses the same demand function as the model without fairness constraints, preserving the effectiveness of the features. Table~\ref{tab:measure_parity_price_linear} shows that consumer surplus and social welfare under FEO are higher than those under EFO at the same fairness level, consistent with Corollary~\ref{corollary2}.

\begin{figure}[htbp]
\FIGURE{
\begin{minipage}{\textwidth}
\centering
\captionsetup{justification=centering}
\begin{subfigure}{0.35\textwidth}
    \centering
    \includegraphics[width=\textwidth]{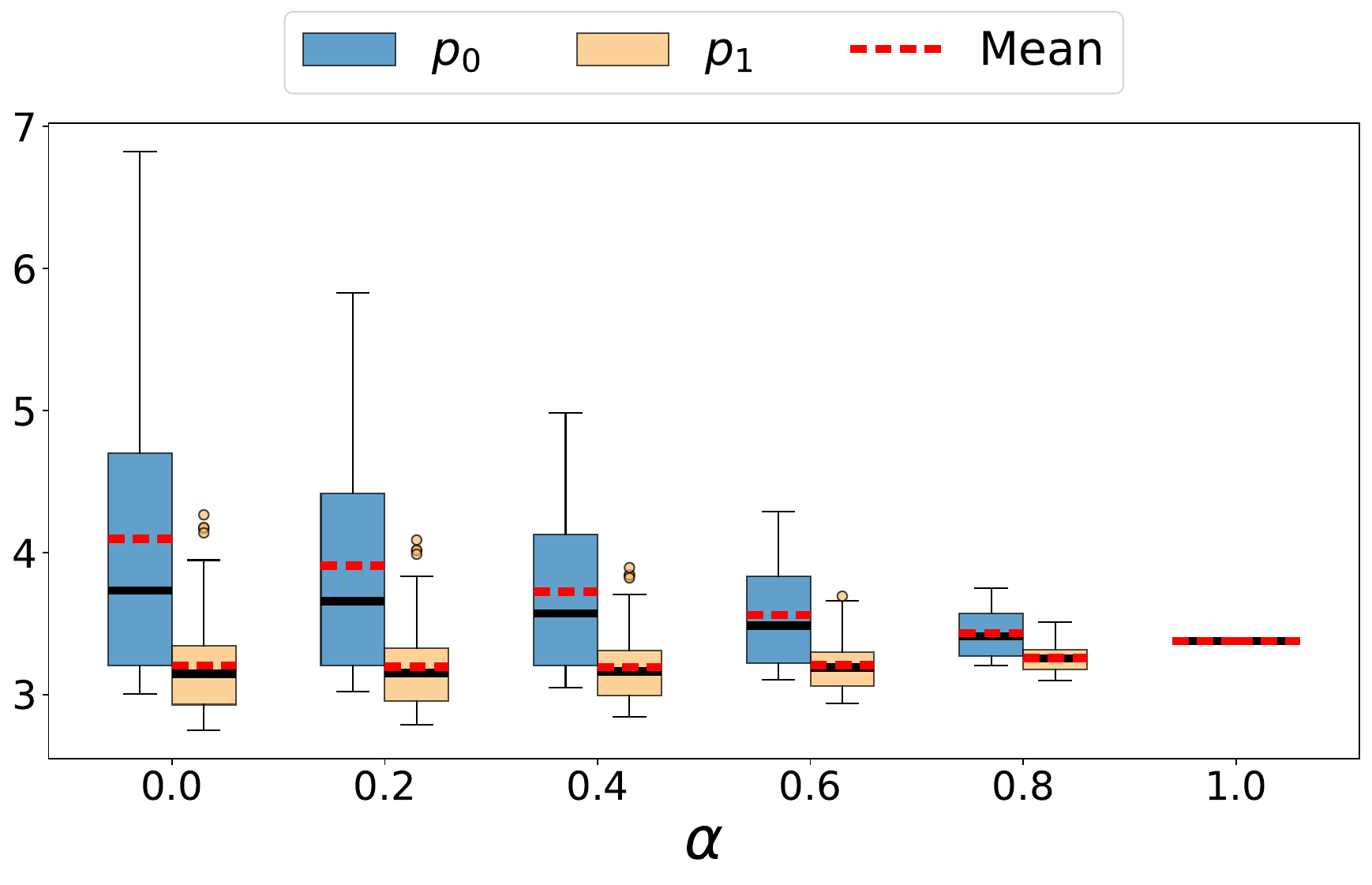}
    \caption{Price (FEO)}
\end{subfigure}
\begin{subfigure}{0.35\textwidth}
    \centering
    \includegraphics[width=\textwidth]{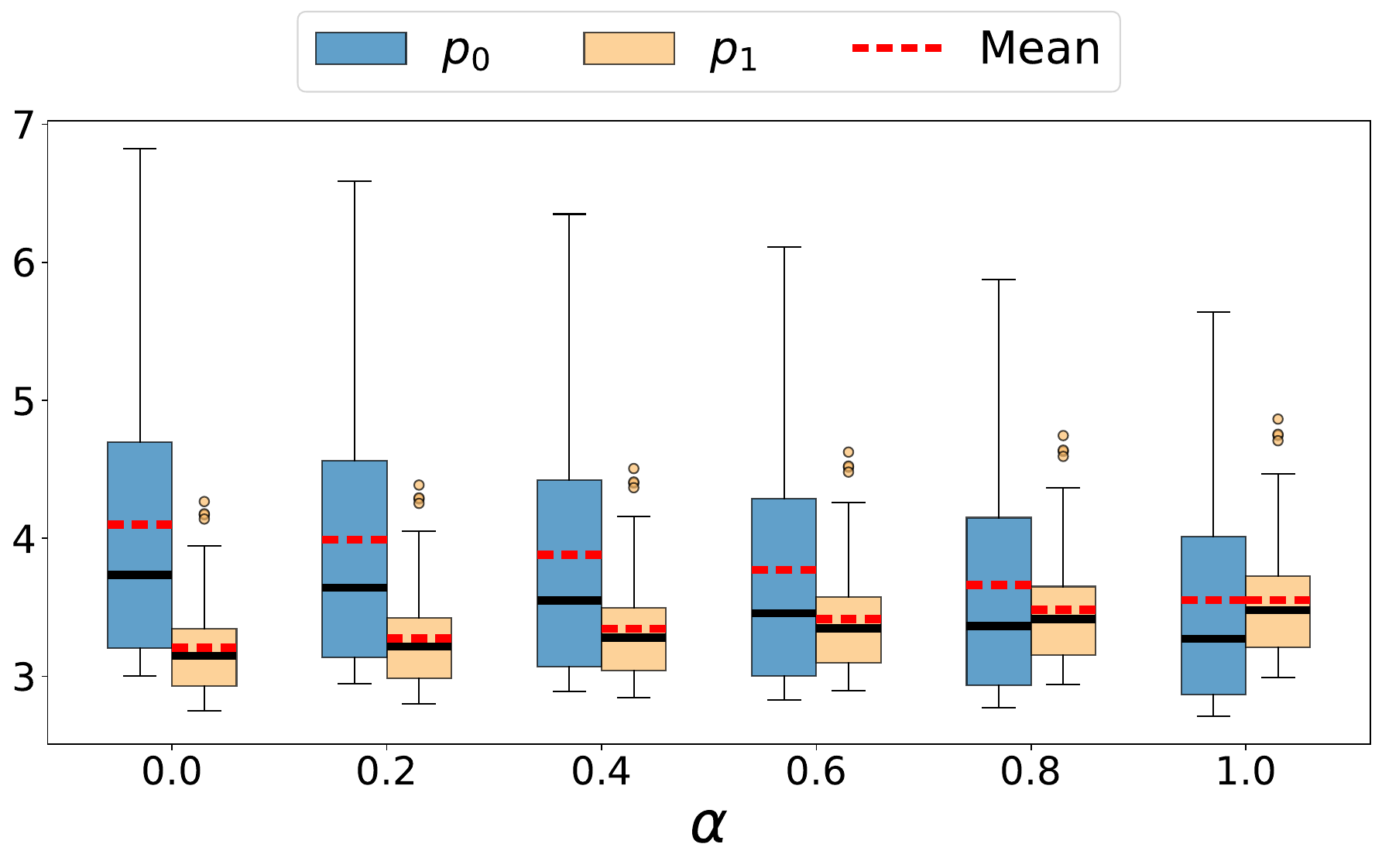}
    \caption{Price (EFO)}
\end{subfigure}
\end{minipage}
}
{Price distributions for FEO and EFO under parity-wise price fairness \label{fig:parity_price_linear}\vspace{1mm}}
{}
\end{figure}

\begin{table}[htbp]
    \centering
    \caption{Normalized performance measures for FEO and EFO under parity-wise price fairness (\%)}
    \label{tab:measure_parity_price_linear}
    \renewcommand{\arraystretch}{1.1}
    \setlength{\tabcolsep}{4pt}
    
    \begin{tabular}{l ccccc c ccccc}
        \toprule
        & \multicolumn{5}{c}{FEO ($\alpha$)} & & \multicolumn{5}{c}{EFO ($\alpha$)} \\
        \cmidrule(lr){2-6} \cmidrule(lr){8-12}
        Measure & 0.2 & 0.4 & 0.6 & 0.8 & 1.0 & & 0.2 & 0.4 & 0.6 & 0.8 & 1.0 \\
        \midrule
        
        $\mathcal{R}(\alpha)/\mathcal{R}(0)$  
        & 99.74  & 98.82  & 97.34 &  95.43  & 93.04 &
        & 99.92 &  99.61 &  99.08 &  98.32  & 97.34 \\ 
        
        $\mathcal{S}(\alpha)/\mathcal{S}(0)$  
        & 106.80 & 113.76 & 119.95 & 123.57 & 122.37 &
        & 101.22 & 102.66 & 104.34 & 106.23 & 108.36 \\ 
        
        $\mathcal{W}(\alpha)/\mathcal{W}(0)$  
        & 102.08 & 103.78 & 104.84 & 104.77 & 102.78 &
        & 100.35 & 100.63 & 100.83 & 100.95 & 101.00 \\ 
        
        \bottomrule
    \end{tabular}
\end{table}

Figure~\ref{fig:parity_demand_linear} presents the demand distributions under parity-wise demand fairness. Here, ex-ante demand is the normalized demand calculated by the estimated demand function using the least square estimator, whereas ex-post demand is calculated by the true demand function. Notice that only ex-ante demand is available to the decision maker. Therefore, the true fairness level is not the same as the fairness level set by the decision maker, and the gap between average ex-post demands of group $0$ and group $1$ does not become $0$ when $\alpha=1.0$. In addition, FEO has a generalization error on test data, resulting in additional gaps than EFO in both ex-ante and ex-post demands. In Table~\ref{tab:measure_parity_demand_linear}, we observe that profit, surplus, and social welfare of FEO are all lower than those of EFO. The opposite results between parity-wise price fairness and parity-wise demand fairness also aligns with Proposition \ref{corollary2} of the non-contextual case.

\begin{figure}[htbp]
\FIGURE{
\begin{minipage}{\textwidth}
\centering
\captionsetup{justification=centering}
\begin{subfigure}{0.35\textwidth}
    \centering
    \includegraphics[width=\textwidth]{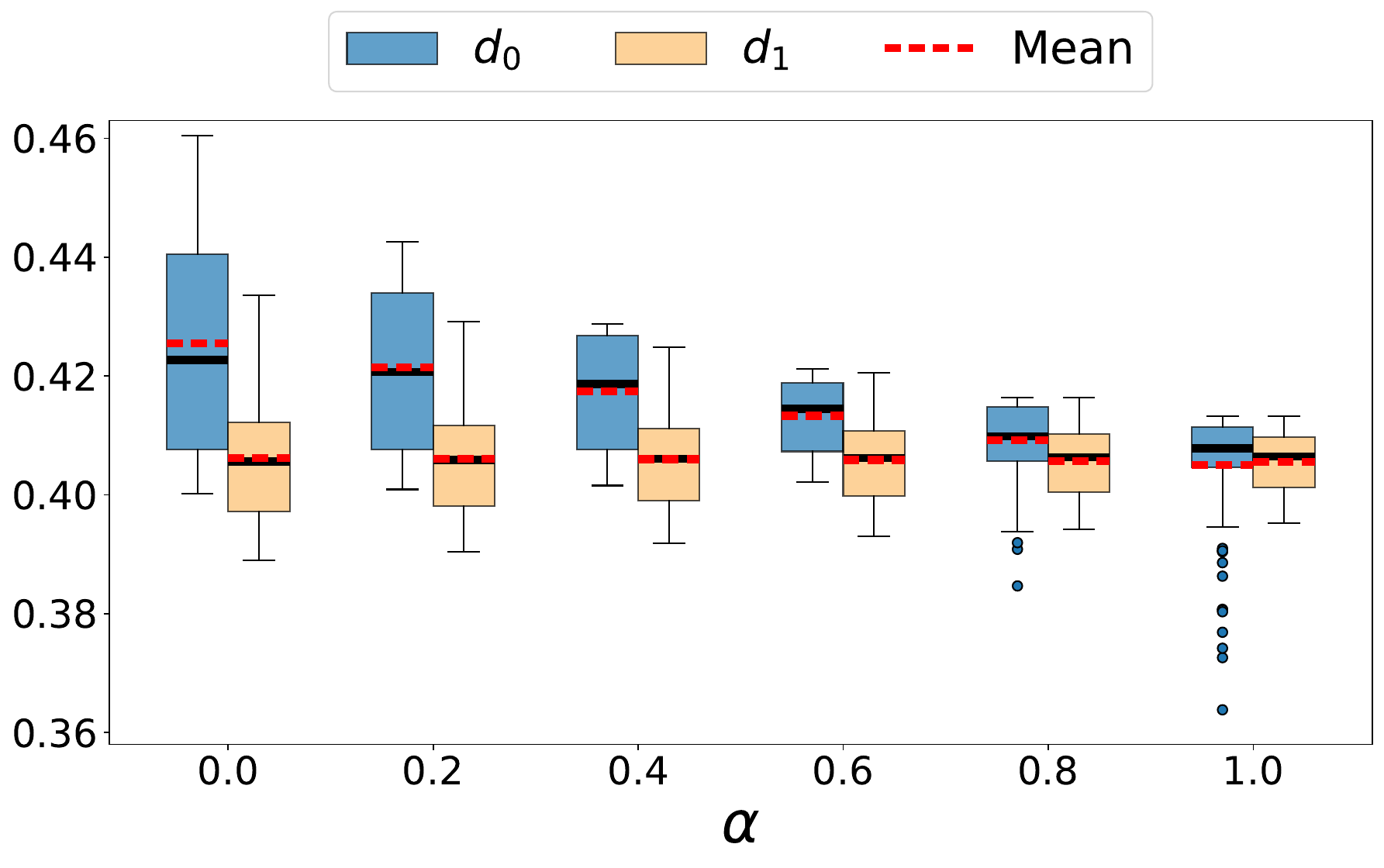}
    \caption{Ex Ante Demand (FEO)}
\end{subfigure}
\begin{subfigure}{0.35\textwidth}
    \centering
    \includegraphics[width=\textwidth]{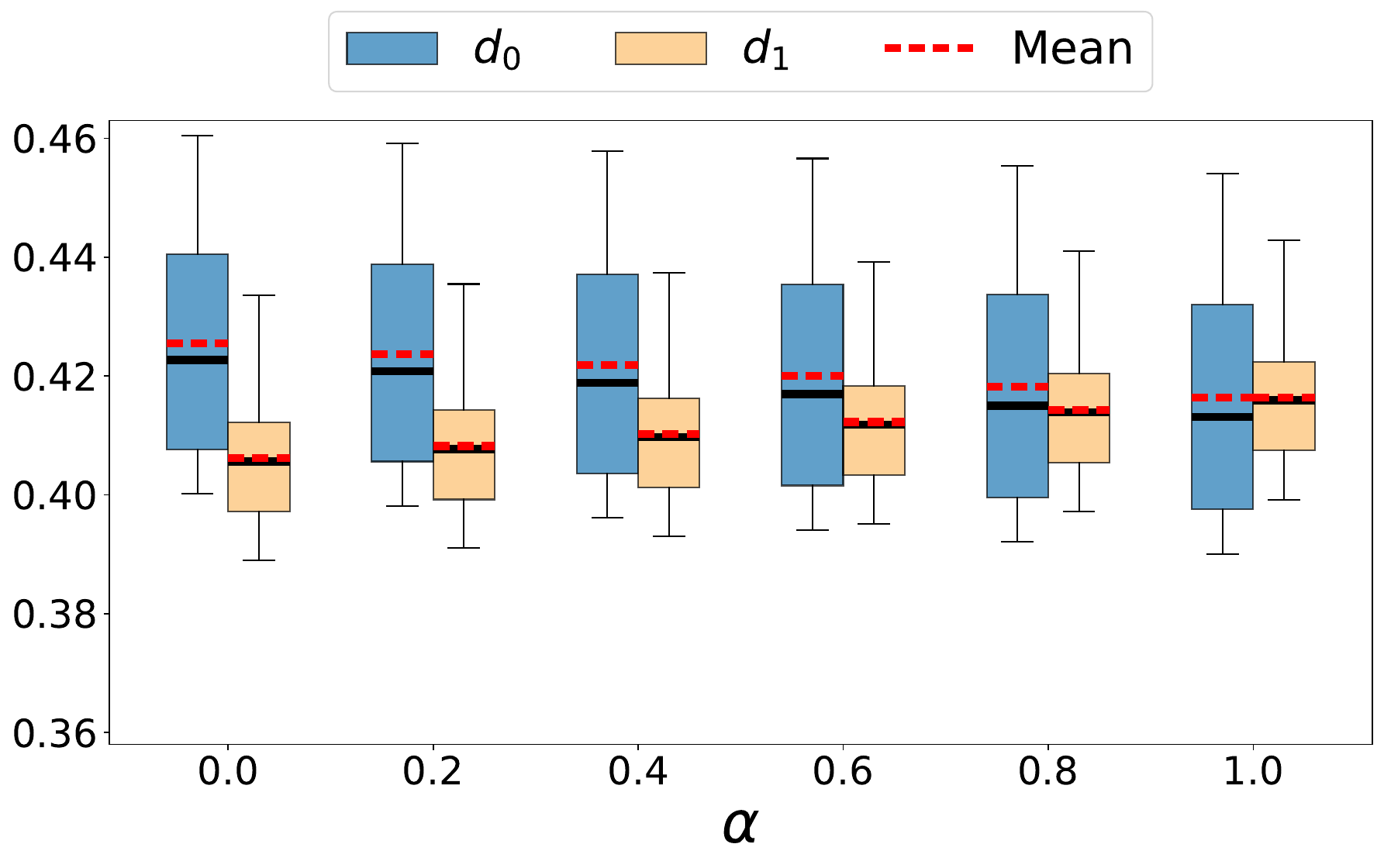}
    \caption{Ex Ante Demand (EFO)}
\end{subfigure}
\begin{subfigure}{0.35\textwidth}
    \centering
    \includegraphics[width=\textwidth]{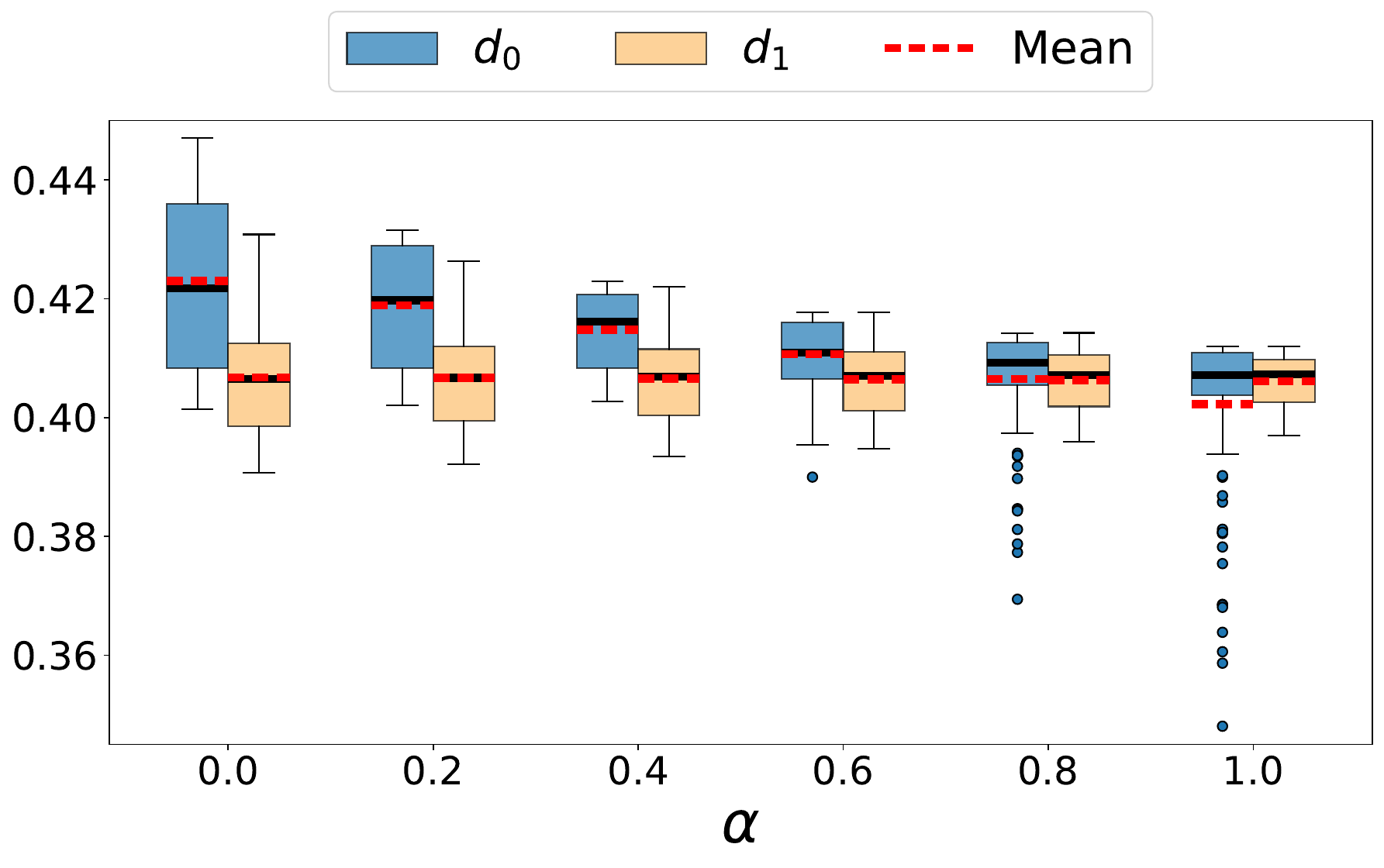}
    \caption{Ex Post Demand (FEO)}
\end{subfigure}
\begin{subfigure}{0.35\textwidth}
    \centering
    \includegraphics[width=\textwidth]{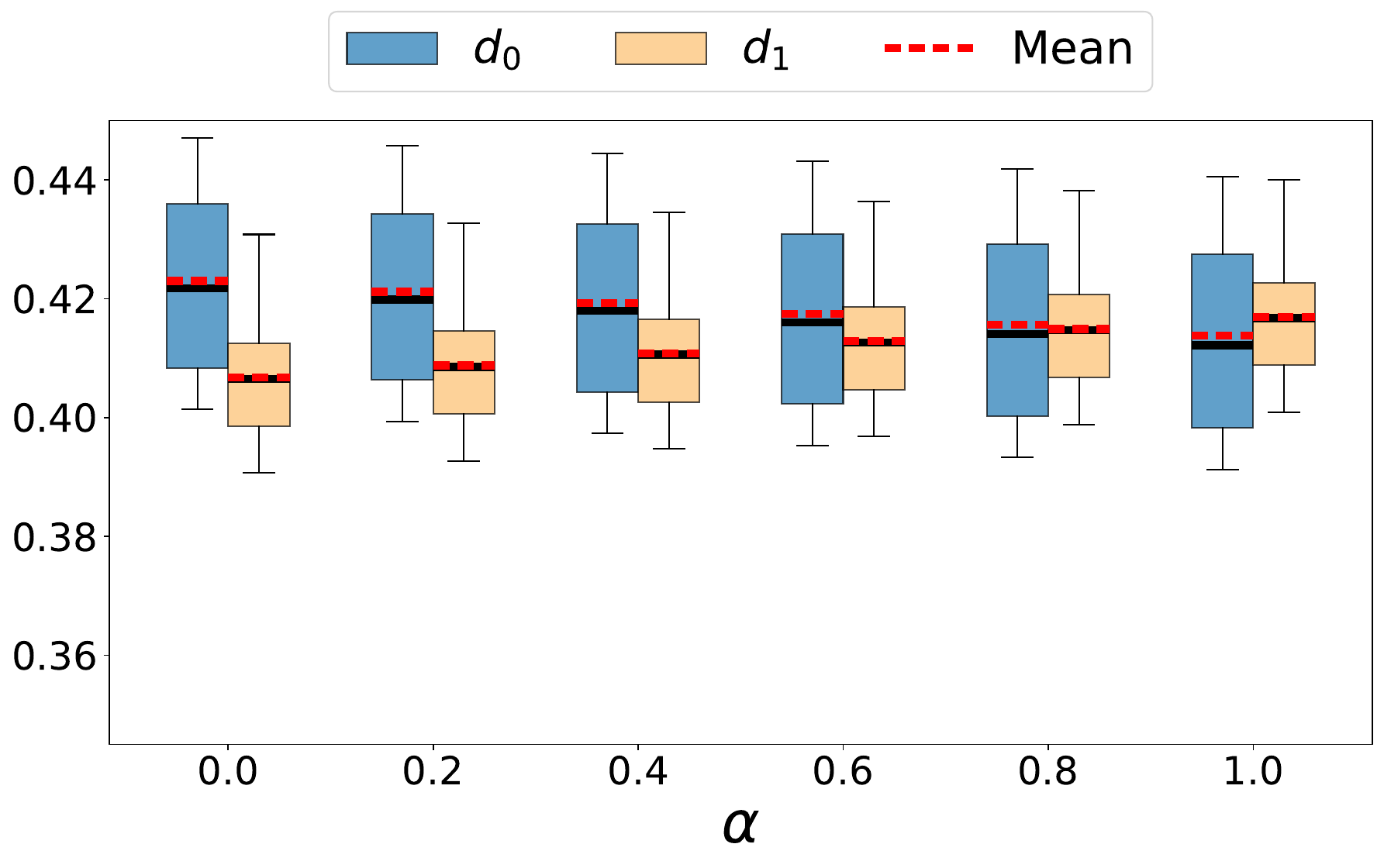}
    \caption{Ex Post Demand (EFO)}
\end{subfigure}
\end{minipage}
}
{Demand distributions for FEO and EFO under parity-wise demand fairness \label{fig:parity_demand_linear}\vspace{2mm}}
{}
\end{figure}

\begin{table}[htbp]
    \centering
    \caption{Normalized performance measures for FEO and EFO under parity-wise demand fairness (\%)}
    \label{tab:measure_parity_demand_linear}
    \renewcommand{\arraystretch}{1.1}
    
    \begin{tabular}{l ccccc c ccccc}
        \toprule
        & \multicolumn{5}{c}{FEO ($\alpha$)} & & \multicolumn{5}{c}{EFO ($\alpha$)} \\
        \cmidrule(lr){2-6} \cmidrule(lr){8-12}
        Measure & 0.2 & 0.4 & 0.6 & 0.8 & 1.0 & & 0.2 & 0.4 & 0.6 & 0.8 & 1.0 \\
        \midrule
        $\mathcal{R}(\alpha)/\mathcal{R}(0)$  
        & 99.96 & 99.89 & 99.79 & 99.65 & 99.48 &
        & 99.99 & 99.99 & 99.97 & 99.95 & 99.93 \\ 
        
        $\mathcal{S}(\alpha)/\mathcal{S}(0)$  
        & 98.72 & 97.45 & 96.19 & 94.96 & 93.74 &
        & 99.99 & 99.99 & 99.97 & 99.95 & 99.93 \\ 
        
        $\mathcal{W}(\alpha)/\mathcal{W}(0)$  
        & 99.55 & 99.08 & 98.60 & 98.09 & 97.57 &
        & 99.99 & 99.98 & 99.97 & 99.96 & 99.94 \\ 
        \bottomrule
    \end{tabular}
\end{table}

Figure~\ref{fig:Rawlsian_price_linear} illustrates the results under Rawlsian price fairness. Unlike Proposition \ref{prop:rawlsian_fairness}, FEO and EFO give different optimal prices under the same $\alpha$. This is because, for feature-based demand models, imposing Rawlsian price fairness constraints in the estimation stage changes the demand function itself, thereby shifting the entire demand distribution. In contrast, imposing Rawlsian price fairness constraints in the optimization stage affects only the subset of the population facing high prices. Table~\ref{tab:measure_rawlsian_price_linear} shows that the consumer surplus and social welfare of FEO are higher than those of EFO as $\alpha$ increases, and eventually become the same at $\alpha=1.0$. 

\begin{figure}[htbp]
\FIGURE{
\begin{minipage}{\textwidth}
\centering
\captionsetup{justification=centering}
\begin{subfigure}{0.35\textwidth}
    \centering
    \includegraphics[width=\textwidth]{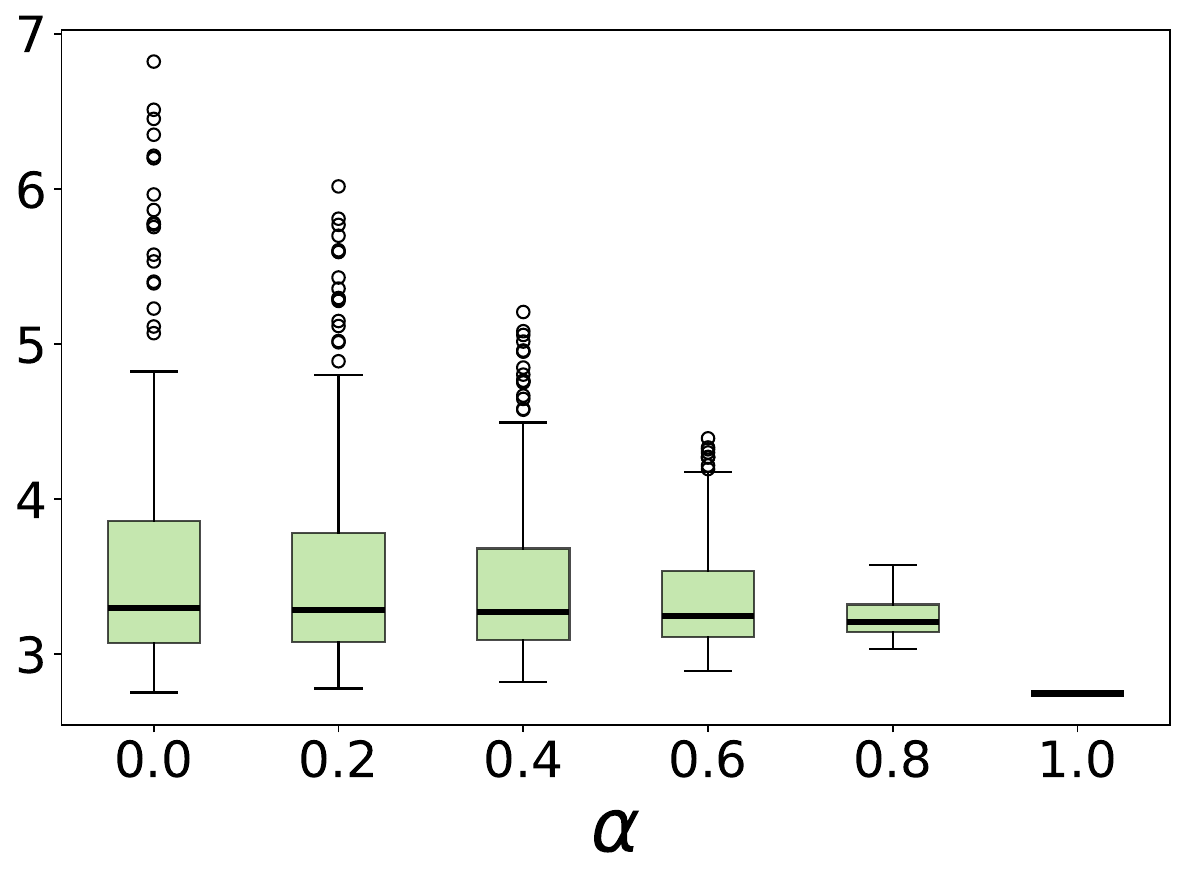}
    \caption{Price under FEO}
    % \label{fig:rawls_price_feo}
\end{subfigure}
\begin{subfigure}{0.35\textwidth}
    \centering
    \includegraphics[width=\textwidth]{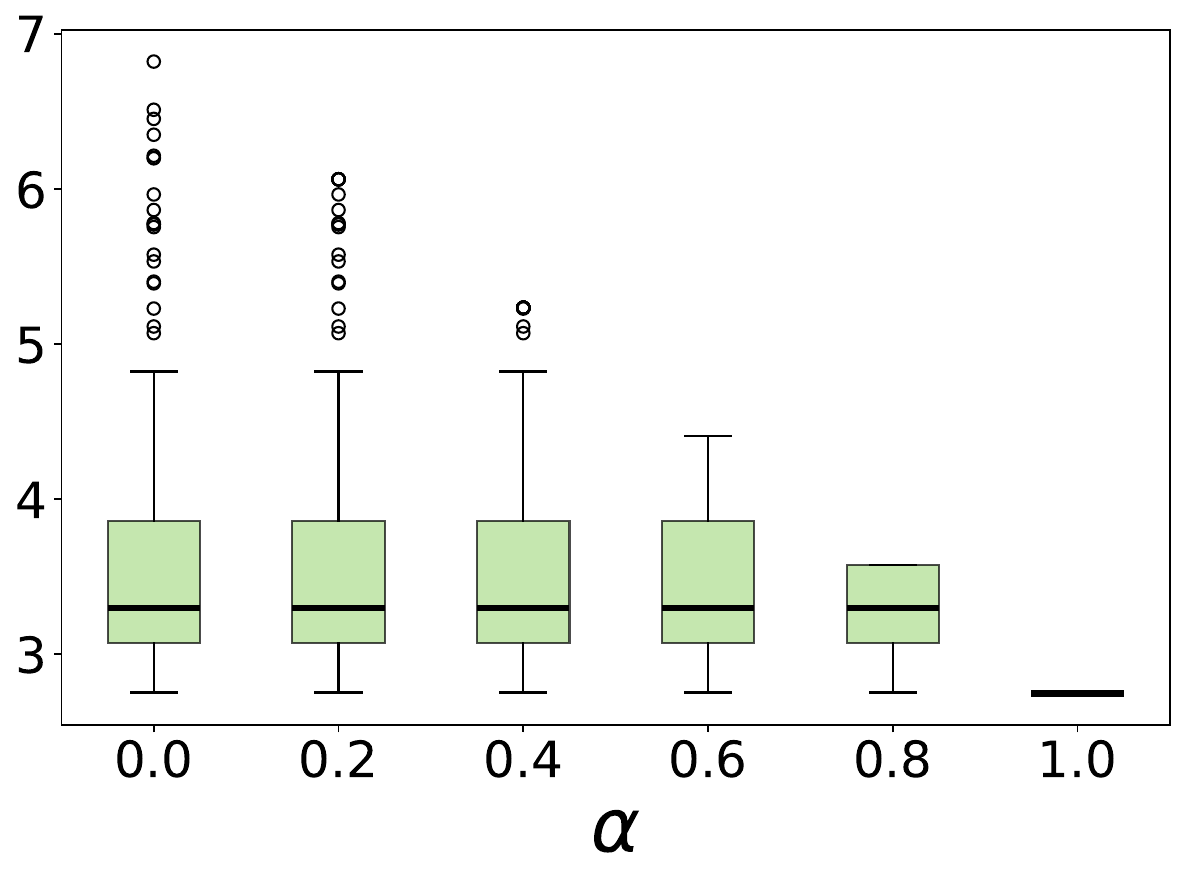}
    \caption{Price under EFO}
    % \label{fig:rawls_price_efo}
\end{subfigure}
\end{minipage}
}
{Price distributions for FEO and EFO under Rawlsian price fairness \label{fig:Rawlsian_price_linear}\vspace{0.5em}}
{}
\end{figure}

\begin{table}[htbp]
    \centering
    \caption{Normalized performance measures for FEO and EFO under Rawlsian price fairness (\%)}
    \label{tab:measure_rawlsian_price_linear}
    \renewcommand{\arraystretch}{1.1}
    % No font size reduction (\small/\footnotesize) as requested
    
    \begin{tabular}{l p{0.2cm} ccccc p{0.4cm} ccccc}
        \toprule
        & & \multicolumn{5}{c}{FEO ($\alpha$)} & & \multicolumn{5}{c}{EFO ($\alpha$)} \\
        \cmidrule(lr){3-7} \cmidrule(lr){9-13}
        Measure & & 0.2 & 0.4 & 0.6 & 0.8 & 1.0 & & 0.2 & 0.4 & 0.6 & 0.8 & 1.0 \\
        \midrule
        $\mathcal{R}(\alpha)/\mathcal{R}(0)$  
        & & 99.85 &  99.13 &  97.58  & 94.62  & 86.35 &
        & 99.99 &  99.72  & 98.69  & 95.78  & 86.35 \\ 
        
        $\mathcal{S}(\alpha)/\mathcal{S}(0)$  
        & & 105.46 & 112.29 & 121.00 & 132.33 & 178.65 &
        & 100.74 & 104.26 & 111.61 & 128.27 & 178.65 \\ 
        
        $\mathcal{W}(\alpha)/\mathcal{W}(0)$  
        & & 101.71  & 103.50 & 105.35 & 107.14 &  116.98 &
        & 100.24 & 101.23 & 102.98 & 106.56 & 116.98 \\ 
        \bottomrule
    \end{tabular}
\end{table}

Figure~\ref{fig:Rawlsian_demand_linear} illustrates the results of Rawlsian demand fairness. In this case, FEO and EFO also lead to different optimal price distributions. In Table~\ref{tab:measure_rawlsian_demand_linear}, consumer surplus and social welfare of FEO are higher than those of EFO. Notably, for both Rawlsian price fairness and Rawlsian demand fairness, the gap between FEO and EFO in terms of profit, surplus and social welfare are relatively small.

\begin{figure}[htbp]
\FIGURE{
\begin{minipage}{\textwidth}
\centering
\captionsetup{justification=centering}
\begin{subfigure}{0.235\textwidth}
    \centering
    \includegraphics[width=\textwidth]{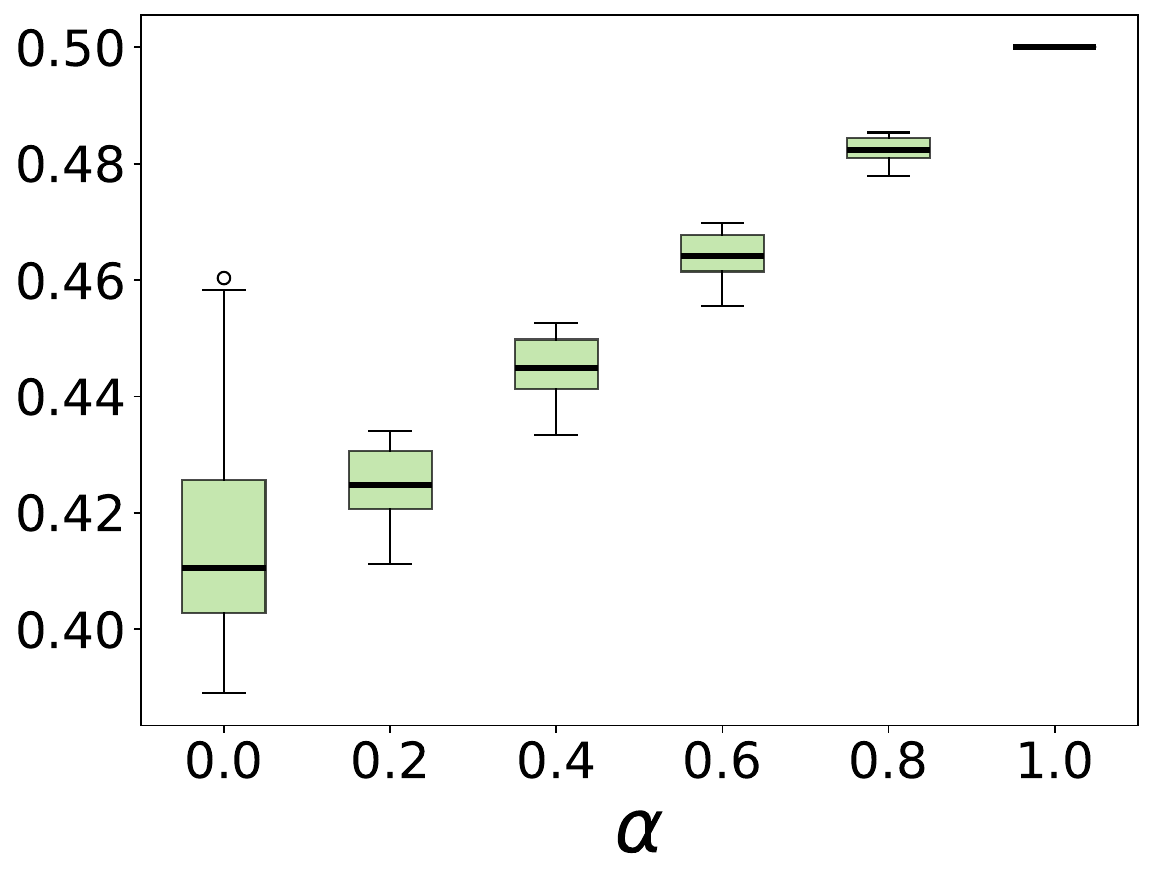}
    \caption{Ex Ante Demand (FEO)}
\end{subfigure}
\begin{subfigure}{0.235\textwidth}
    \centering
    \includegraphics[width=\textwidth]{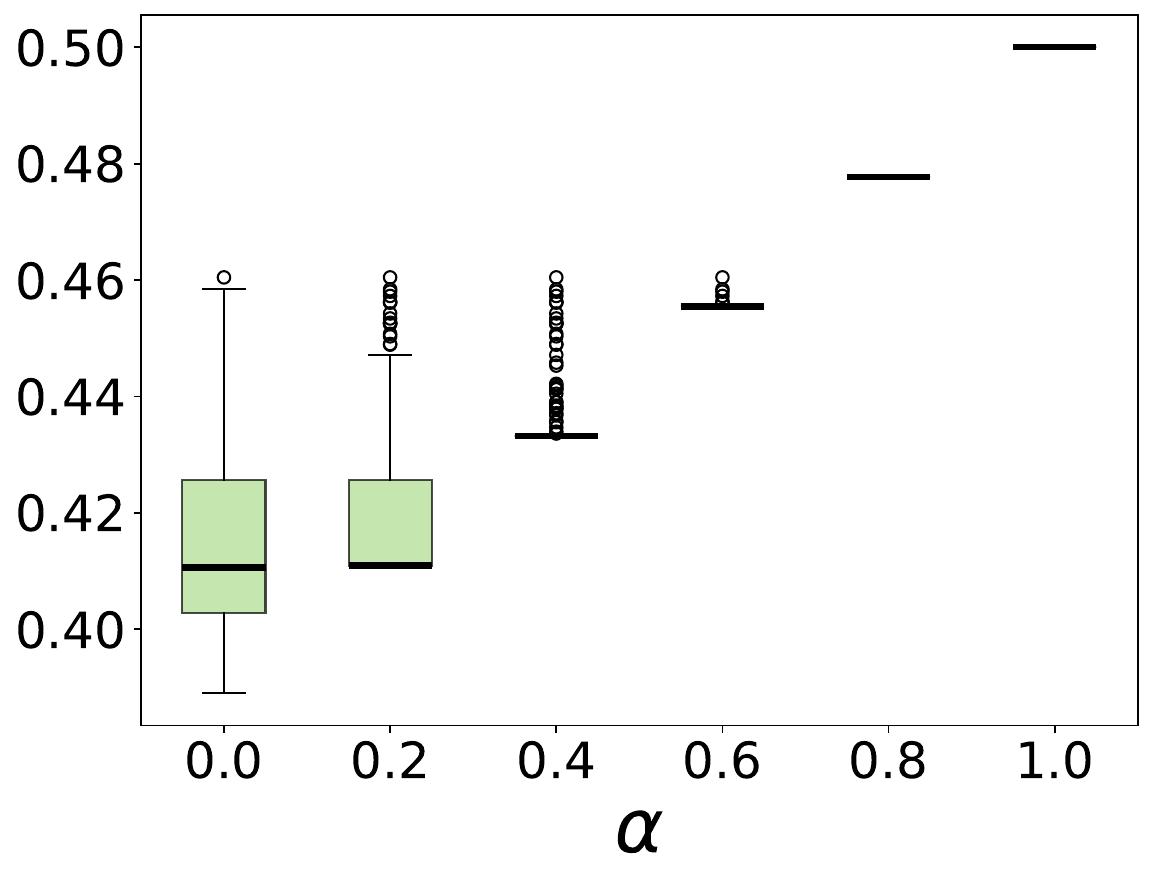}
    \caption{Ex Ante Demand (EFO)}
\end{subfigure}
\begin{subfigure}{0.235\textwidth}
    \centering
    \includegraphics[width=\textwidth]{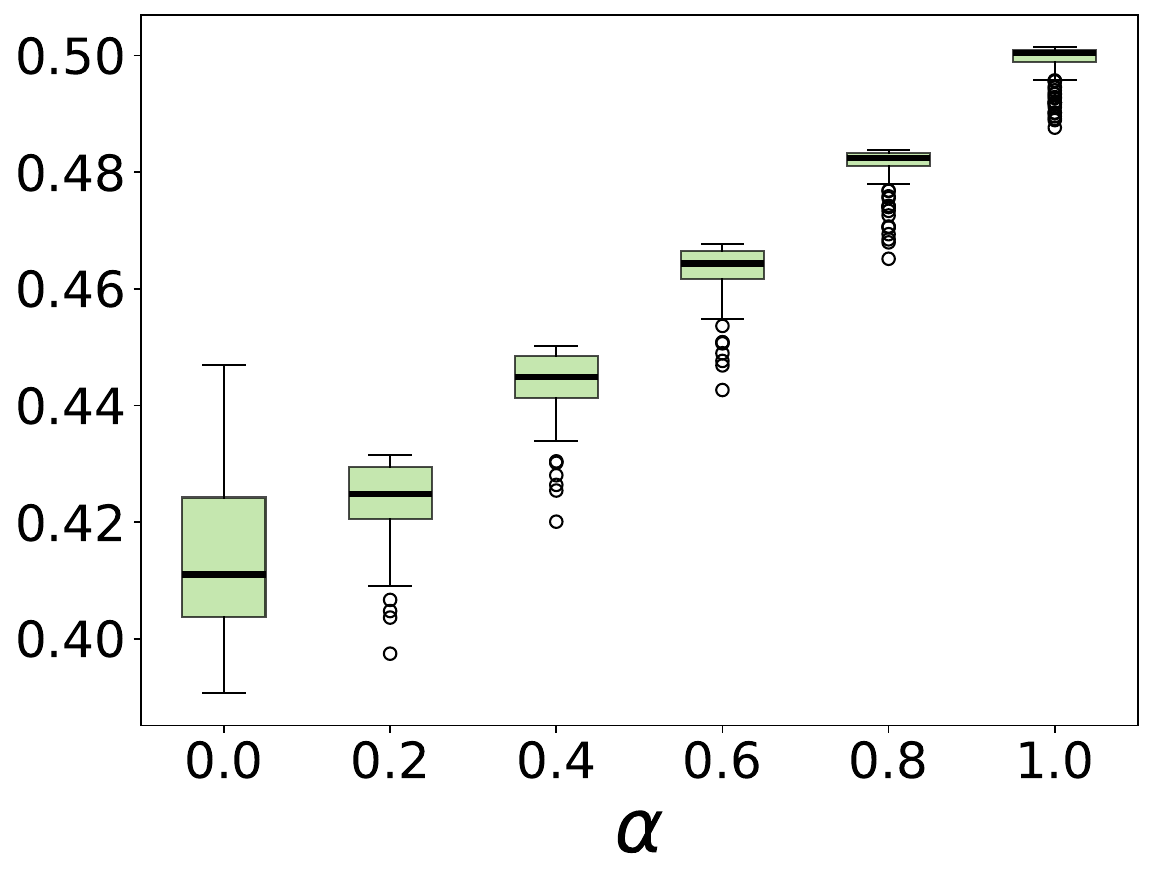}
    \caption{Ex Post Demand (FEO)}
\end{subfigure}
\begin{subfigure}{0.235\textwidth}
    \centering
    \includegraphics[width=\textwidth]{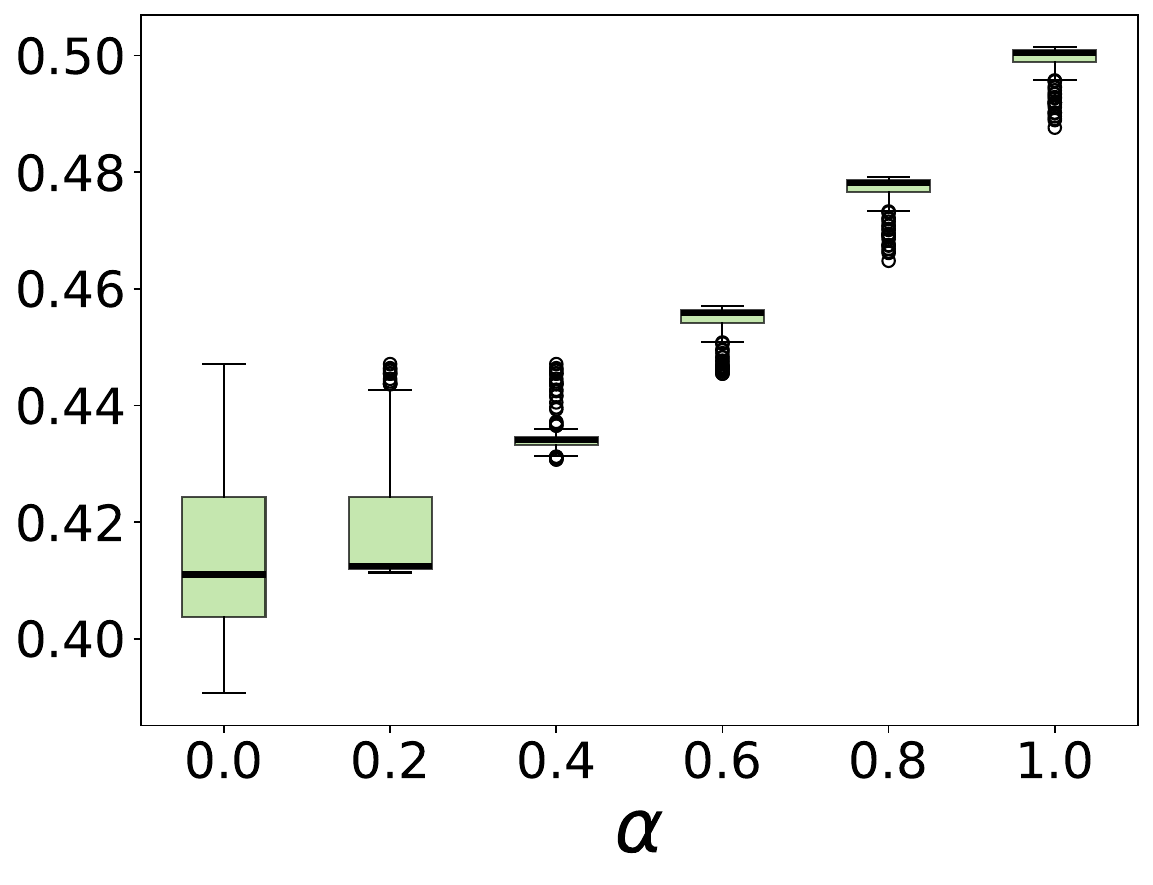}
    \caption{Ex Post Demand (EFO)}
\end{subfigure}
\end{minipage}
}
{Demand distributions for FEO and EFO under Rawlsian demand fairness \label{fig:Rawlsian_demand_linear}\vspace{0.5em}}
{}
\end{figure}

\begin{table}[htbp]
    \centering
    \caption{Normalized performance measures for FEO and EFO under Rawlsian demand fairness (\%)}
    \label{tab:measure_rawlsian_demand_linear}
    \renewcommand{\arraystretch}{1.1}
    \setlength{\tabcolsep}{4pt} % Adjust column spacing for a perfect fit
    
    \begin{tabular}{l ccccc p{0.3cm} ccccc}
    \toprule
    & \multicolumn{5}{c}{FEO ($\alpha$)} & & \multicolumn{5}{c}{EFO ($\alpha$)} \\
    \cmidrule(lr){2-6} \cmidrule(lr){8-12}
    Measure & 0.2 & 0.4 & 0.6 & 0.8 & 1.0 & & 0.2 & 0.4 & 0.6 & 0.8 & 1.0 \\
    \midrule
    
    % Revenue Row
   $\mathcal{R}(\alpha)/\mathcal{R}(0)$  
        & 99.77 &  99.39  & 98.60  & 97.44  & 95.96 &
        & 99.96 &  99.67  & 98.97  & 97.75 &  95.96 \\ 
        
    $\mathcal{S}(\alpha)/\mathcal{S}(0)$  
        & 103.42 & 113.44 & 123.48 & 133.39 & 143.33 &
        & 102.04 & 108.99 & 118.86 & 130.77 & 143.31 \\ 
        
    $\mathcal{W}(\alpha)/\mathcal{W}(0)$  
        & 100.98 & 104.05 & 106.86 & 109.38 & 111.68 &
        & 100.65 & 102.77 & 105.57 & 108.71 & 111.68 \\ 

    \bottomrule
    \end{tabular}
\end{table}

\subsection{Feature-based Logistic Demand Models}
\label{sec:logistic_model}
In this section, we examine an alternative specification of the demand function, namely the logistic model. 
Following the approach in Section~\ref{sec:linear_model}, we formulate the problem, analyze its key properties, 
and investigate whether the results obtained in the stylized model continue to hold in this setting.

We consider the same feature vector 
$\vx^{(i)} = [1, x^{(i)}_1, x^{(i)}_2, \ldots, x^{(i)}_m] \in \mathbb{R}^{m+1}$ 
as in Section~\ref{sec:linear_model}. 
The logistic demand function is defined as
\begin{equation*}
    d^{\mathrm{CE}}(\vx,p;a,b) := \sigma(a^\top \vx + bp),
\end{equation*}
where $\sigma(z) := \frac{1}{1 + \exp(-z)}$ is the sigmoid function. 
$\mathrm{CE}$ denotes the cross-entropy loss, which is equivalent (up to a constant factor) to the deviance of the logistic demand model, given by
\begin{equation*}
\ell^{\mathrm{CE}}(\hat{a}, \hat{b}) 
= - \sum_{g\in\{0,1\}} \sum_{\{ i \mid g^{(i)} = g \}} \frac{1}{n_g}
\Bigl[
    d^{(i)} \log \sigma\!\bigl(\hat{a}^{\top} x^{(i)} + \hat{b} p^{(i)}\bigr)
    + (1 - d^{(i)}) \log\bigl(1 - \sigma(\hat{a}^{\top} x^{(i)} + \hat{b} p^{(i)})\bigr)
\Bigr].
\end{equation*}

Given a feature vector $\vx$ and the estimated parameters $(\hat a, \hat b)$, the optimal price $p^*$ is defined as
\begin{equation*}
\begin{aligned}
p^*(\vx; \hat{a}, \hat{b}) := \arg\max_{p \geq 0} \, (p - c)  d^{\mathrm{CE}}(\vx,p;\hat a,\hat b)=\frac{1 + \mathbf{W}_0\!\bigl(e^{\hat{a}^\top x^{(i)} + \hat{b}c - 1}\bigr)}{-\hat{b}} + c ,
\end{aligned}
\end{equation*}
where $\mathbf{W}_0(\cdot)$ is the Lambert \emph{W} function.

\paragraph{Parity-wise Fairness.} We first discuss how to formulate price and demand fairness under FEO. Following the structure of~\eqref{eq:FEO_price_parity_feature_linear}, we formulate the $\alpha$-parity-wise price fairness criterion as follows.
\begin{equation}
\begin{aligned}
    \min_{\hat{a},\, \hat{b} \le 0} \quad & \ell^{\mathrm{CE}}(\hat{a}, \hat{b}) \\
    \text{s.t.} \quad &\bigg|\frac{1}{n_0}\sum_{\{i|g^{(i)}=0\}}p^*(\vx^{(i)};\hat a, \hat b)-\frac{1}{n_1}\sum_{\{i|g^{(i)}=1\}}p^*(\vx^{(i)};\hat a, \hat b)\bigg|\le (1-\alpha)\Delta_p,\\
    & p^*(\vx^{(i)};\hat a, \hat b)\le \tilde p,\quad \forall i\in[n],
\end{aligned}
\label{eq:FEO_price_parity_feature_logistic}
\end{equation}
where $\Delta_p$ denotes the baseline price gap observed when no fairness constraint is applied (i.e., $\alpha=0$), and $\tilde p$ is the price cap.

Similarly, the formulation of demand fairness follows the same structure as in~\eqref{eq:FEO_price_parity_feature_logistic}, with the fairness constraint replaced by
$$\bigg|\,\frac{1}{n_0}\sum_{\{i|g^{(i)}=0\}}d^{\mathrm{CE}}\left(\vx,p^*(\vx;\hat a,\hat b);\hat a^{\text{CE}},\hat b^{\text{CE}}\right)-\frac{1}{n_1}\sum_{\{i|g^{(i)}=1\}}d^{\mathrm{CE}}\left(\vx,p^*(\vx;\hat a^{\text{CE}},\hat b^{\text{CE}});\hat a,\hat b\right)\bigg|\le (1-\alpha)\Delta_d.$$
Here, $\Delta_d$ denotes the initial demand gap between two groups, and $(\hat a^{\text{CE}}, \hat b^{\text{CE}})$ denotes the optimal solution obtained in the absence of fairness criteria, i.e., $(\hat a^{\text{CE}}, \hat b^{\text{CE}}) := \arg\min_{\hat a, \hat b \le 0} \ell^{\text{CE}}(\hat a, \hat b)$. The term $d^{\text{CE}}(\vx, p^*(\vx;\hat a, \hat b); \hat a^{\text{CE}}, \hat b^{\text{CE}})$ represents the ex-ante demand. Analogous to the linear demand case in \eqref{eq:ex_ante_linear}, this is derived by treating the estimates $(\hat a^{\text{CE}}, \hat b^{\text{CE}})$ as the ground-truth demand parameters. The demand is evaluated based on these fixed parameters, while the price follows the optimal pricing policy $p^*(\vx; \hat a, \hat b)$ dictated by the current decision parameters $(\hat a, \hat b)$.

Under the EFO framework, we assume the estimated parameters $(\hat a^{\text{CE}}, \hat b^{\text{CE}})$ to be the ground truth and maximize total profit based on this fixed model. Specifically, the parity-wise price fairness problem is formulated as follows.
\begin{equation}
\begin{aligned}
    \max_{p_i \ge 0} \quad & \sum_{i\in[n]} (p_i - c)  d^{\mathrm{CE}}(\vx^{(i)},p_i;\hat a^{\text{CE}},\hat b^{\text{CE}}) \\
    \text{s.t.} \quad &\bigg|\frac{1}{n_0}\sum_{\{i|g^{(i)}=0\}}p_i-\frac{1}{n_1}\sum_{\{i|g^{(i)}=1\}}p_i\bigg|\le (1-\alpha)\Delta_p,\\
    & p_i\le \tilde p,\quad \forall i\in[n].
\end{aligned}
\label{eq:EFO_price_parity_feature_logistic}
\end{equation}
Similarly, parity-wise demand fairness shares the same objective function, while the fairness constraint is defined by the difference in the mean predicted demand between the two groups.

\paragraph{Rawlsian Fairness.} First, within the FEO framework, the objective remains the minimization of the loss function, consistent with~\eqref{eq:FEO_price_parity_feature_logistic}. Depending on the fairness criteria employed, the constraints are adapted accordingly. For instance, under Rawlsian price fairness, the estimated parameters $(\hat a, \hat b)$ must ensure that the resulting optimal price for every individual does not exceed a specified threshold, i.e.,
$$p^*(\mathbf{x}^{(i)}; \hat a, \hat b) \le (1 - \alpha)\bar{p} + \alpha \underline{p}, \quad \forall i \in [n].$$
Under Rawlsian demand fairness, the framework ensures that each individual's demand is maintained above a specific threshold, i.e., 
$$d^{\mathrm{CE}}\left(\vx^{(i)},p^*(\mathbf{x}^{(i)};\hat a, \hat b);\hat a^{\text{CE}},\hat b^{\text{CE}}\right) \ge \alpha\bar{d} + (1 - \alpha) \underline{d}, \quad \forall i \in [n].$$

Within the EFO framework, the objective is to maximize the profit as defined in~\eqref{eq:EFO_price_parity_feature_logistic}, with $p_i$ as the decision variable. For instance, Rawlsian price fairness imposes the following constraint while keeping the same objective:
$$p_i \le (1-\alpha)\bar{p} + \alpha \underline{p}, \quad \forall i \in [n].$$
Similarly, Rawlsian demand fairness sets a lower bound for the demand, which is determined by the decision variable $p_i$ and estimates $(\hat a^{\text{CE}}, \hat b^{\text{CE}})$.
% \begin{equation}
% \begin{aligned}
%     \min_{\hat{a},\, \hat{b} \le 0} \quad & \ell^{\mathrm{CE}}(\hat{a}, \hat{b}) \\
%     \text{s.t.} \quad 
%     & \frac{1 + \mathbf{W}_0\!\bigl(e^{\hat{a}^\top x^{(i)} + \hat{b}c - 1}\bigr)}{-\hat{b}} + c 
%     \le (1 - \alpha)\bar{p} + \alpha \underline{p}, 
%     \quad \forall i \in [n],
% \end{aligned}
% \label{eq:FEO_price_rawlsian_feature_logistics}
% \end{equation}
% where $\bar{p}$ (resp.\ $\underline{p}$) denotes the maximum (resp.\ minimum) prices. Note that 
% \[
% \frac{1 + \mathbf{W}_0\!\bigl(e^{\hat{a}^\top x + \hat{b}c - 1}\bigr)}{-\hat{b}} + c
% \]
% is the optimal price that maximizes profit under the demand function $d^{\mathrm{CE}}(x,p;\hat{a},\hat{b})$.

% The formulation of demand fairness shares the same objective structure as that in~\eqref{eq:FEO_price_rawlsian_feature_logistics} with different constraint
% \begin{equation*}
% d^{\mathrm{CE}}\!\left(x^{(i)},\frac{1 + \mathbf{W}_0\!\bigl(e^{\hat{a}^\top x^{(i)} + \hat{b}c - 1}\bigr)}{-\hat{b}} + c;\hat{a}^{\mathrm{CE}},\hat{b}^{\mathrm{CE}}\right)
% \geq \alpha \bar{d} + (1-\alpha)\underline{d},
% \end{equation*}
% where $\bar{d}$ (resp. $\underline{d}$) is the maximum (resp. minimum) demand.% under the unconstrained problem.

Under both price and demand fairness, the FEO problem is not a convex optimization problem. 
However, similar to the linear demand case, Proposition~\ref{prop:convexity_logistic} shows that, 
despite this non-linearity, the problem can be reformulated as a convex optimization problem.

\begin{proposition}
Within the FEO framework, Rawlsian price and demand fairness can be formulated as convex optimization problems under logistic demand.
\label{prop:convexity_logistic}
\end{proposition}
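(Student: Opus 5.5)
The plan is to keep the convex objective and rewrite each Rawlsian constraint as a convex set in $(\hat a,\hat b)$. The cross-entropy loss $\ell^{\mathrm{CE}}(\hat a,\hat b)$ is already convex, since it is a sum of logistic log-losses composed with the affine maps $(\hat a,\hat b)\mapsto \hat a^\top \vx^{(i)}+\hat b p^{(i)}$. The constraint $\hat b\le 0$ is linear. So the only work is on the constraints involving the Lambert-$W$ expression $p^*(\vx^{(i)};\hat a,\hat b)$, which is not convex as written. The idea is to replace the closed form of $p^*$ by an equivalent first-order condition.

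\textbf{Step 1: unimodality of the profit.} For $\hat b<0$ write $z(p)=\hat a^\top\vx+\hat b p$ and $\pi(p)=(p-c)\sigma(z(p))$. Then
\[
\pi'(p)=\sigma(z(p))\bigl[1+(p-c)\hat b\,(1-\sigma(z(p)))\bigr].
\]
The bracket is $1$ at $p=c$. On $p>c$ it is strictly decreasing, because $(p-c)(-\hat b)$ increases and $1-\sigma(z(p))$ increases as $z$ decreases. Hence $\pi$ is unimodal on $[c,\infty)$ with unique maximizer $p^*$. For any cap $P>c$ this gives
\[
p^*\le P \iff \pi'(P)\le 0 \iff (P-c)(-\hat b)\bigl(1-\sigma(\hat a^\top\vx+\hat bP)\bigr)\ge 1 .
\]

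\textbf{Step 2: convexity of the price-cap constraint.} For Rawlsian price fairness set $P=(1-\alpha)\bar p+\alpha\underline p$, assuming $P>c$ (otherwise the constraint is infeasible or trivial, as in Assumption~\ref{assumption:3}). Taking logs, the condition in Step 1 becomes
\[
\log(P-c)+\log(-\hat b)-\log\bigl(1+e^{\hat a^\top\vx^{(i)}+\hat bP}\bigr)\ge 0 .
\]
Here $\log(-\hat b)$ is concave on $\hat b<0$. The term $-\log(1+e^{u})$ is concave in $u$, and $u$ is affine in $(\hat a,\hat b)$. So the left-hand side is concave, and its superlevel set is convex. Intersecting over $i\in[n]$ keeps convexity.

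\textbf{Step 3: reducing demand fairness to price caps.} With $(\hat a^{\mathrm{CE}},\hat b^{\mathrm{CE}})$ fixed and $\hat b^{\mathrm{CE}}<0$, the ex-ante demand $\sigma(\hat a^{\mathrm{CE}\top}\vx^{(i)}+\hat b^{\mathrm{CE}}p)$ is strictly decreasing in $p$. Let $\underline d_\alpha:=\alpha\bar d+(1-\alpha)\underline d\in(0,1)$. Then the Rawlsian demand constraint is equivalent to the individual price cap $p^*(\vx^{(i)};\hat a,\hat b)\le P_i$, where
\[
P_i=\frac{\log\bigl(\underline d_\alpha/(1-\underline d_\alpha)\bigr)-\hat a^{\mathrm{CE}\top}\vx^{(i)}}{\hat b^{\mathrm{CE}}} .
\]
Step 2 then applies with $P_i$ in place of $P$, provided $P_i>c$.

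\textbf{Main obstacle.} The delicate part is the equivalence in Step 1, including the boundary. I must check that $p^*$ is indeed the unique stationary point, including the edge cases where $p^*$ would hit $p\ge 0$ or where $c$ is not below the cap. I also need to handle $\hat b\to 0$: the log-domain forces $\hat b<0$, so I either work on the open domain or note that $(-\hat b)(1-\sigma(\cdot))\ge 1/(P-c)$ already excludes $\hat b=0$. This makes the feasible set a closed convex subset of $\{\hat b<0\}$.
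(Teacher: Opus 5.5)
Your proof is correct, but it takes a different route from the paper's. The paper works directly with the Lambert-$W$ closed form $p^*=c+\bigl(1+\mathbf{W}_0(e^{\hat a^\top\vx^{(i)}+\hat bc-1})\bigr)/(-\hat b)$. Multiplying the cap by $-\hat b>0$ gives $1+\mathbf{W}_0\bigl(\exp(\hat a^\top\vx^{(i)}+\hat bc-1)\bigr)\le (P-c)(-\hat b)$. Convexity then follows from $\frac{d^2}{dx^2}\mathbf{W}_0(e^x)=\mathbf{W}_0(e^x)/(1+\mathbf{W}_0(e^x))^3>0$ and composition with an affine map. Demand fairness is reduced to an individual price cap exactly as in your Step~3.

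You never use the closed form. Unimodality of $(p-c)\sigma(\hat a^\top\vx+\hat bp)$ turns $p^*\le P$ into the sign test $\pi'(P)\le 0$, i.e.\ $1+e^{\hat a^\top\vx^{(i)}+\hat bP}\le (P-c)(-\hat b)$. This describes the same feasible set through a different convex function. Your log step is optional, since this inequality is already ``convex $\le$ affine''.

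Your route buys three things:
\begin{itemize}
\item The constraint uses only standard atoms (exponential or softplus, and log), so it is directly representable for off-the-shelf conic solvers. By contrast, $\mathbf{W}_0\circ\exp$ is not a standard atom.
\item It proves the maximizer is unique and that feasibility forces $-\hat b>1/(P-c)>0$. The paper tacitly assumes $\hat b<0$ when dividing by $-\hat b$, even though the FEO problem allows $\hat b\le 0$.
\item It extends to demand models with no closed-form price, as long as profit is unimodal and the first-order test is convex in the parameters.
\end{itemize}
The paper's route is shorter, given that the closed form is already used in its computations.

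The obstacles you flag are mild:
\begin{itemize}
\item $p^*>c\ge 0$ always, since profit is negative on $[0,c)$, so the bound $p\ge 0$ never binds.
\item The root exists because $(p-c)(-\hat b)(1-\sigma)\to\infty$ as $p\to\infty$.
\item If $P\le c$ the constraint is infeasible rather than trivial. That is still a convex (empty) set. In the paper's setup $P>c$ holds automatically, because $\bar p$ and $\underline p$ are unconstrained optimal prices.
\end{itemize}
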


% Moreover, the property established for the linear demand model -- that under Rawlsian fairness, 
% the loss value changes in a convex manner with respect to small variations in the fairness level 
% (as shown in Proposition~\ref{prop:convexity_wrt_error_linear}) -- also holds under the logistic demand model. 
% Corollary~\ref{prop:convexity_wrt_error_logistic} formalizes this result.

% \begin{corollary}
% Assume that the matrix $[X,\mathbf{p}]$ is full-rank, where $X := [\mathbf{x}^{(1)}, \dots, \mathbf{x}^{(n)}]^\top$ and $\mathbf{p} = [p^{(1)}, \dots, p^{(n)}]^\top$. Let $\ell(\alpha)$ denote the optimal value of the FEO problem with respect to the fairness level $\alpha$. 
% Then, there exists a constant $\delta > 0$ such that $\ell(\alpha)$ is a convex function in $\alpha \in [0, \delta]$, 
% i.e., $\lim_{\alpha \to 0^+} \frac{d^2 \ell(\alpha)}{d\alpha^2} > 0$, 
% under either price or demand fairness with logistic demand.
% \label{prop:convexity_wrt_error_logistic}
% \end{corollary}

%The numerical results under the logistic demand are presented in Appendix \ref{appendix:synthetic_logistic}.

\section{Case Study: Experiments on Real-World Data}
%https://www.openicpsr.org/openicpsr/project/113240/version/V1/view
\label{numerical_results}
We conduct a case study based on survey data about vaccination against tick-borne encephalitis (TBE) in Sweden from \citet{slunge2015willingness}, which includes an assessment of individuals' willingness to receive the TBE vaccine at randomly assigned prices. This dataset is particularly suitable for our purposes because vaccination decisions are influenced not only by willingness-to-pay and socioeconomic status, but also by broader considerations of public health and equity. In such policy contexts, pricing may be guided by profit motives as well as concerns about fairness. 

\subsection{Dataset and Setup}
The dataset contains information on price ($p$), willingness to buy ($d\in\{0,1\}$), and a range of covariates, including age, gender, income, education level, area of residence, knowledge about TBE, health status, trust in vaccine recommendations, and risk-related factors. \citet{slunge2015willingness} indicates that a substantial proportion of low-income individuals at risk of TBE are deterred from vaccination due to the prevailing market price of the vaccine. Accordingly, we define the income indicator $g$, where $g=1$ (resp. $g=0$) denotes low (resp. high) income, with individuals classified as low (resp. high) income if their income is below (resp. above) $20{,}000$~SEK, following \citet{slunge2015willingness}. The dataset has also been used in recent fair pricing studies \citep{kallus2021fairness, xu2022regulatory}.

After excluding entries with missing values and restricting to individuals who have not previously been vaccinated, the dataset comprises $1,151$ observations. As covariates, we include \textit{female}, \textit{age}, \textit{income}, \textit{university}, \textit{urban}, \textit{tbeincidence}, \textit{summerhouserisk}, \textit{outTBEarea}, \textit{worktickrisk}, \textit{knowledge}, \textit{tickbiteever}, \textit{diseaseexperience}, \textit{healthrisktickbite}, and \textit{lowtrustvaccinerec} in the feature vector $\mathbf{x}$, following the set of variables identified as significant in \citet{slunge2015willingness}. All feature vectors are normalized, and prices are scaled to lie within the interval $[0,1]$. We set the (normalized) unit cost to $0.05$ (corresponding to $50$ SEK without normalization), which is set below the minimum observed price ($0.1$), and impose a (normalized) price cap of $2.0$ (corresponding to $2000$ SEK), which is set above the maximum observed price ($1.0$). We adopt an $80/20$ split between the training and test sets, and all reported results are based on the test set.

We use a logistic demand model, $d = \sigma(a^\top x + bp)$, for estimation in our case study. 
To evaluate profit, consumer surplus, and social welfare, it is necessary to recover the true demand at each price conditional on the observed features. For this purpose, we consider two scenarios. In the first, the true demand is logistic, which we refer to as the \emph{well-specified} case (Section~\ref{appendix:well_specified_logistic}). In the second, the true demand follows a linear model, $d = 1+\left(b^\top x\right)p$, which we treat as the \emph{mis-specified} case (Appendix~\ref{appendix:mis_specified_logistic}). We also consider estimation using a linaer demand model, and results for the well-specified and mis-specified cases are reported in Appendix~\ref{appendix:linear_well_specified} and Appendix~\ref{appendix:linear_misspecified}, respectively.

To ensure that the data used for training are generated consistently with the assumed true model, we regenerate demand outcomes based on the corresponding demand function. %(either linear or logistic). 
Lastly, we set $\bar{p}$ (resp. $\underline{p}$) in~\eqref{eq:rawls_price_feature} as the highest (resp. lowest) optimal price obtained on the training set using parameters estimated without fairness constraints. We set $\bar{d}$ as the lowest demand in the training set when charging everybody price $p=c$, and $\underline{d}$ as the lowest demand observed in the training set without fairness constraints.

\subsection{Results}
\label{appendix:well_specified_logistic}
Table \ref{tab:measure_loss_fairness_logistic} illustrates the outcomes under parity-wise loss fairness. When $\alpha\geq 0.6$, the profit, consumer surplus and social welfare are all lower than the no-fairness baseline, implying that parity-wise loss fairness may harm the whole system.

\begin{table}[htbp]
    \centering
    \caption{Normalized performance measures across $\alpha$ under parity-wise loss fairness (\%)}
    \label{tab:measure_loss_fairness_logistic}
    \renewcommand{\arraystretch}{1.1} 

    \begin{tabular}{lccccc}
    \toprule
    & \multicolumn{5}{c}{$\alpha$} \\
    \cmidrule(lr){2-6}
    Measure & 0.2 & 0.4 & 0.6 & 0.8 & 1.0 \\
    \midrule
    $\mathcal{R}(\alpha)/\mathcal{R}(0)$ & 100.11 & 100.01 & 99.86 & 99.77 & 99.64 \\
    $\mathcal{S}(\alpha)/\mathcal{S}(0)$ & 95.04 &  92.04 & 90.99 & 91.52 & 91.90 \\
    $\mathcal{W}(\alpha)/\mathcal{W}(0)$ & 97.84 &  96.44 & 95.89 & 96.08 & 96.17 \\
    \bottomrule
    \end{tabular}
\end{table}

Figure~\ref{fig:logistic_parity_price_case} shows the price distributions under parity-wise price fairness. At $\alpha=1$, the average prices charged to the low-income and high-income groups become identical under EFO, whereas this is not the case under FEO. This discrepancy arises because the model is evaluated on test data with a finite sample size, where the empirical distribution of the test set inevitably deviates from that of the training set. %Notably, FEO results in a narrower price distribution. Unlike EFO, which treats each price as a separate decision variable, FEO optimizes a pricing policy tied to estimated demand, which forces prices to move in tandem. 
Table~\ref{tab:measure_parity_price_logistic} shows that FEO yields higher consumer surplus and social welfare, which implies that Corollary~\ref{corollary2} not only holds for linear demand models but also for logistic demand models. For instance, When we aim to reduce the mean price gap by $40\%$ ($\alpha = 0.4$), allowing an additional $0.65\%$ loss in profit leads FEO to achieve $11.39\%$ higher consumer surplus, whereas EFO does not make improvements in consumer surplus. This suggests that imposing parity-wise price fairness in FEO is more beneficial to customers.%More specifically, sacrificing $0.65\%$ of the profit leads to more than $11\%$ increase in consumer surplus. This suggests that imposing parity-wise price fairness in FEO is more beneficial to customers.

\begin{figure}[htbp]
\FIGURE{
\begin{minipage}{\textwidth}
\centering
\captionsetup{justification=centering}
\begin{subfigure}{0.35\textwidth}
    \centering
    \includegraphics[width=\textwidth]{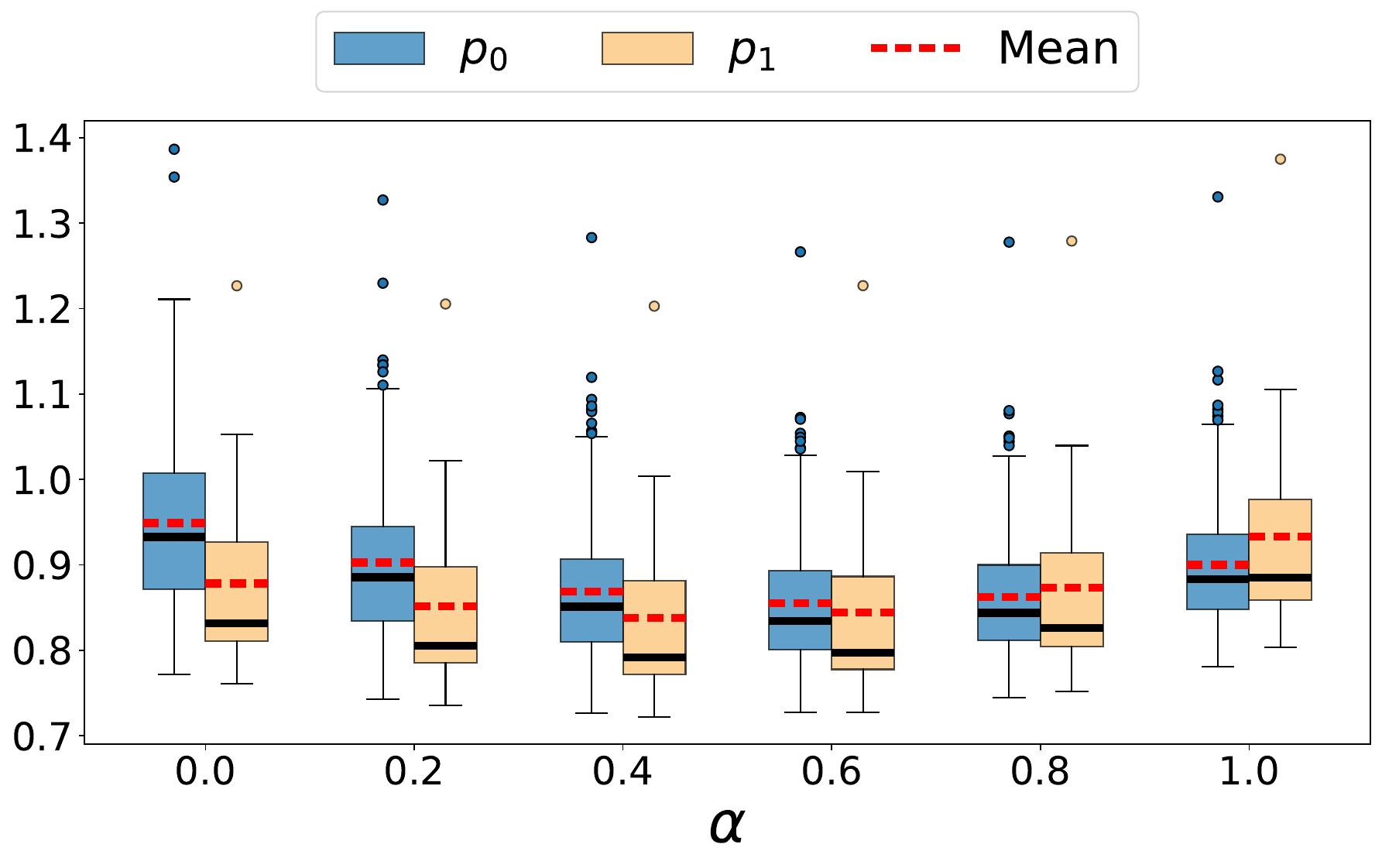}
    \caption{Price (FEO)}
\end{subfigure}
\begin{subfigure}{0.35\textwidth}
    \centering
    \includegraphics[width=\textwidth]{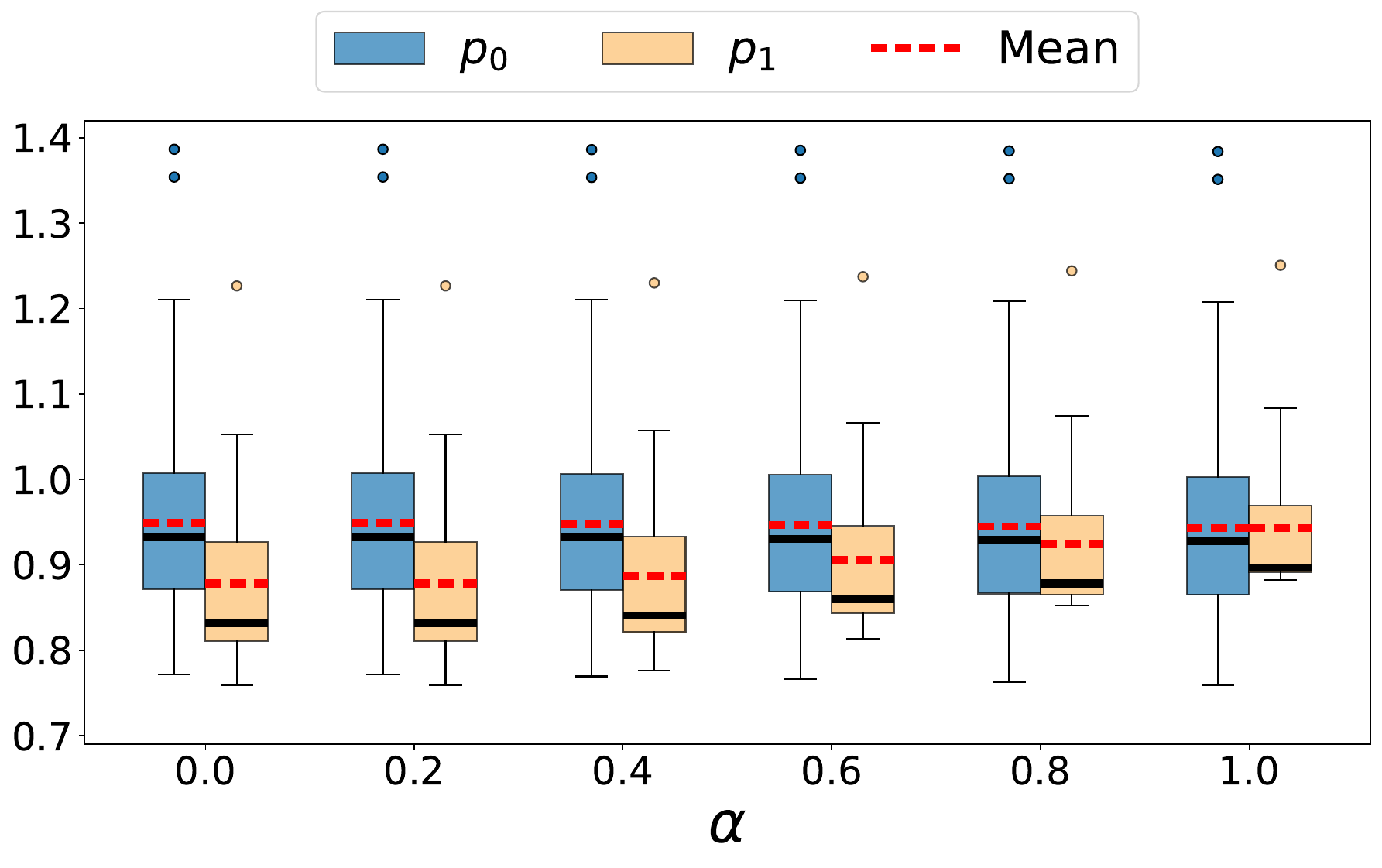}
    \caption{Price (EFO)}
\end{subfigure}
\end{minipage}
}
{Price distributions for FEO and EFO under parity-wise price fairness \label{fig:logistic_parity_price_case}\vspace{1mm}}
{}
\end{figure}

\begin{table}[htbp]
    \centering
    \caption{Normalized performance measures for FEO and EFO under parity-wise price fairness (\%)}
    \label{tab:measure_parity_price_logistic}
    \renewcommand{\arraystretch}{1.1}
    \setlength{\tabcolsep}{4pt}
     
    \begin{tabular}{l ccccc c ccccc}
        \toprule
        & \multicolumn{5}{c}{FEO ($\alpha$)} & & \multicolumn{5}{c}{EFO ($\alpha$)} \\
        \cmidrule(lr){2-6} \cmidrule(lr){8-12}
        Measure & 0.2 & 0.4 & 0.6 & 0.8 & 1.0 & & 0.2 & 0.4 & 0.6 & 0.8 & 1.0 \\
        \midrule
        % Revenue Row
        $\mathcal{R}(\alpha)/\mathcal{R}(0)$  
        & 99.81 &  99.35 &  99.02  & 99.01  & 99.33  & 
        & 100.0 & 100.0 & 99.99 & 99.98 & 99.96  \\ 
        
        % Surplus Row
        $\mathcal{S}(\alpha)/\mathcal{S}(0)$  
        & 106.46 & 111.39 & 113.52 & 112.43 & 106.94 & 
        & 100.0 & 100.0 & 99.99 & 99.97 & 99.95 \\ 
        
        % Welfare Row  
        $\mathcal{W}(\alpha)/\mathcal{W}(0)$  
        & 102.79 & 104.74 & 105.51 & 105.02 & 102.73 & 
        & 100.0 & 100.0 & 99.99 & 99.97 & 99.95 \\ 
        \bottomrule
    \end{tabular}
\end{table}

Figure~\ref{fig:logistic_parity_demand_case} presents the results of parity-wise demand fairness. First, the ex-ante demand under EFO achieves perfect demand fairness, whereas FEO does not. Similar to parity-wise price fairness, this discrepancy arises from differences between the empirical distributions of the training and test data. Due to estimation error, the estimated demand function differs from the true demand model. Nevertheless, the fairness constraints still effectively shift the ex-post demand gap between the two groups in a similar way to the ex-ante demand for both FEO and EFO.

%Due to the estimation error, the estimated demand function differs from the true demand model, but the fairness constraint still effectively shifts the average demand gap between the two groups for both FEO and EFO. 
Table~\ref{tab:measure_parity_demand_logistic} shows that, to achieve the same fairness level $\alpha$, EFO leads to much smaller changes in profit, consumer surplus and social welfare than FEO. Moreover, when $\alpha\leq 0.8$, the profit, consumer surplus and social welfare of EFO are all higher than FEO. This suggests that using EFO to achieve parity-wise demand fairness is a safer option for both customers and the seller.%stable and beneficial option.

\begin{figure}[htbp]
\FIGURE{
\begin{minipage}{\textwidth}
\centering
\captionsetup{justification=centering}
\begin{subfigure}{0.35\textwidth}
    \centering
    \includegraphics[width=\textwidth]{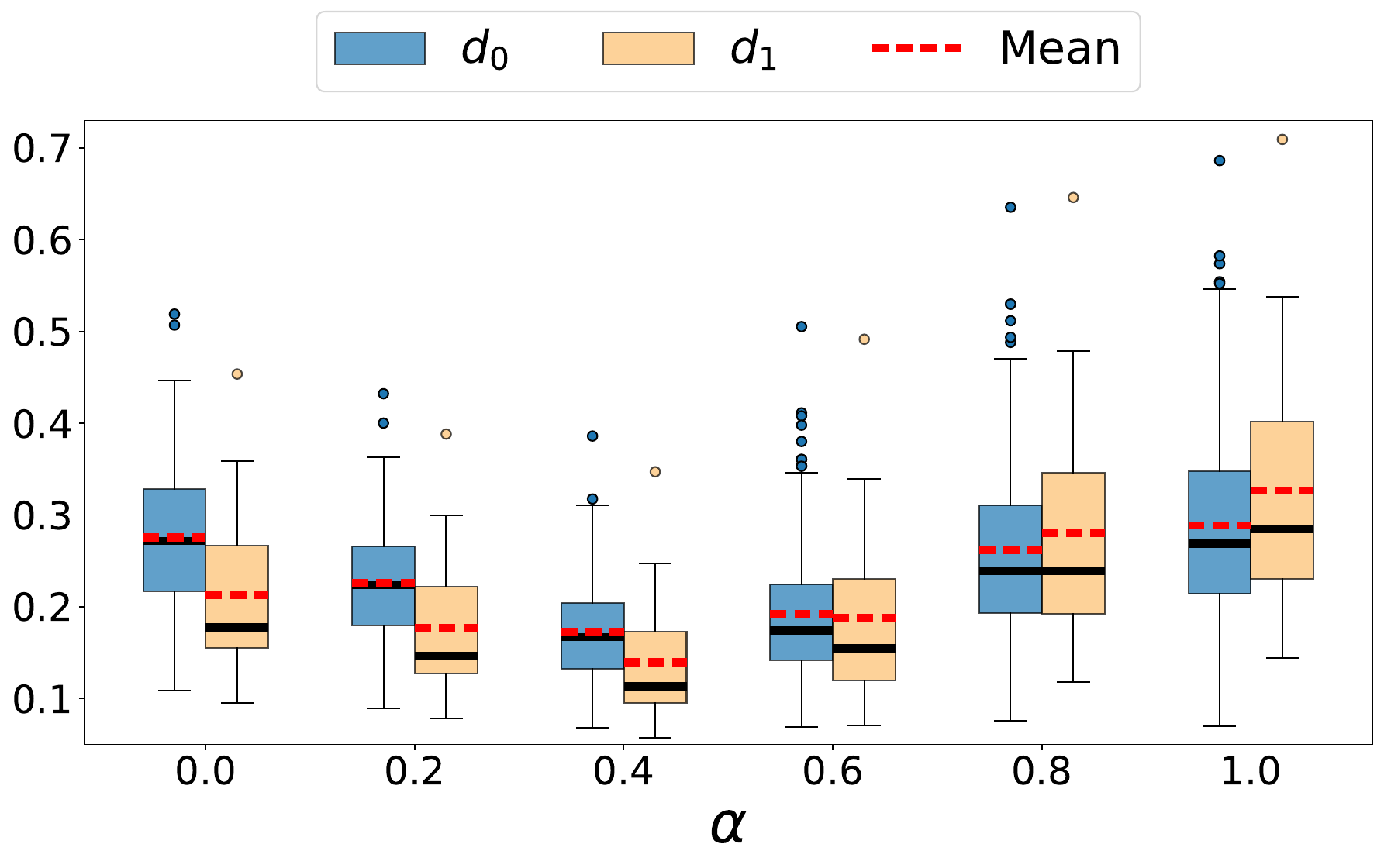}
    \caption{Ex Ante Demand (FEO)}
\end{subfigure}
\begin{subfigure}{0.35\textwidth}
    \centering
    \includegraphics[width=\textwidth]{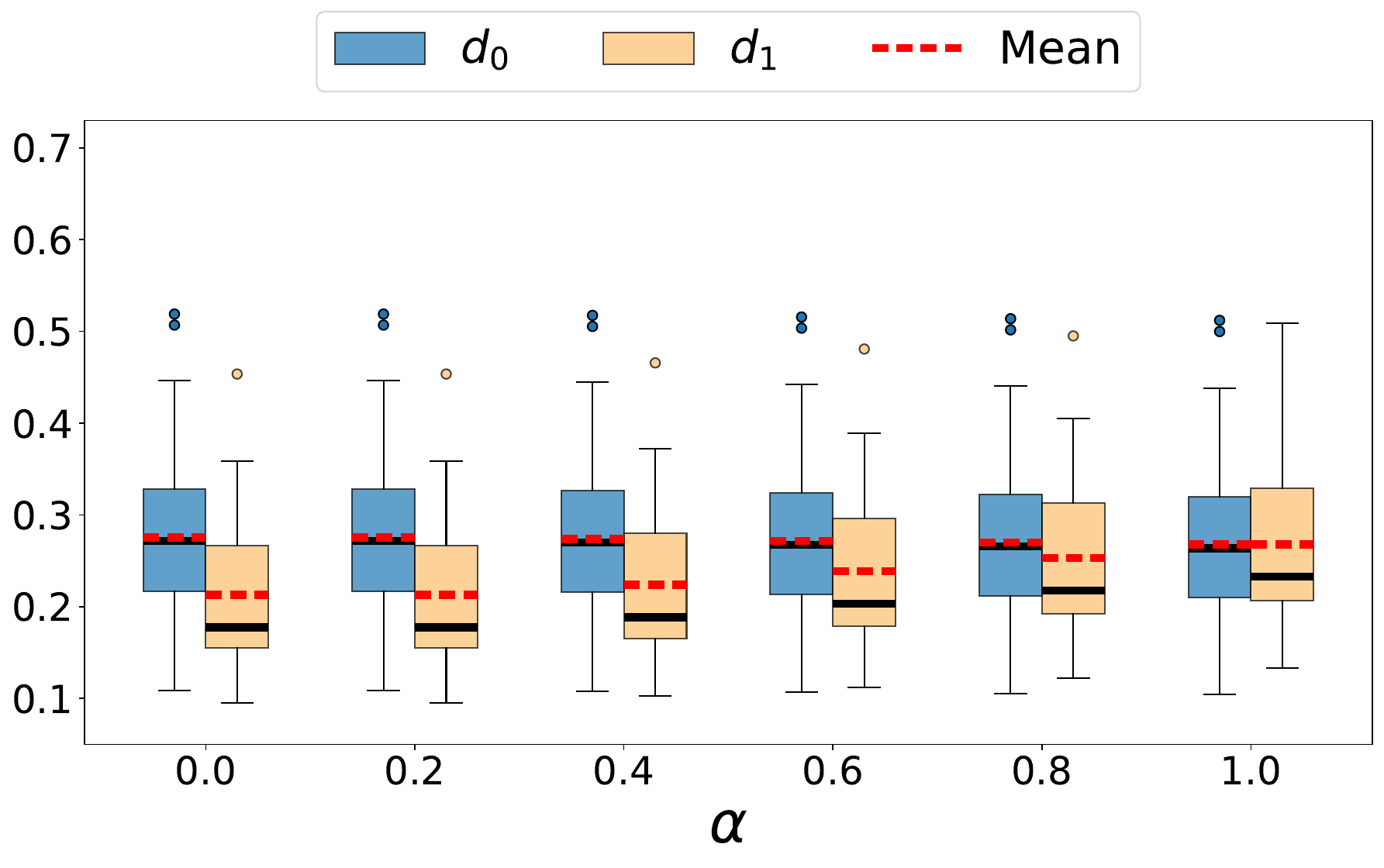}
    \caption{Ex Ante Demand (EFO)}
\end{subfigure}
\begin{subfigure}{0.35\textwidth}
    \centering
    \includegraphics[width=\textwidth]{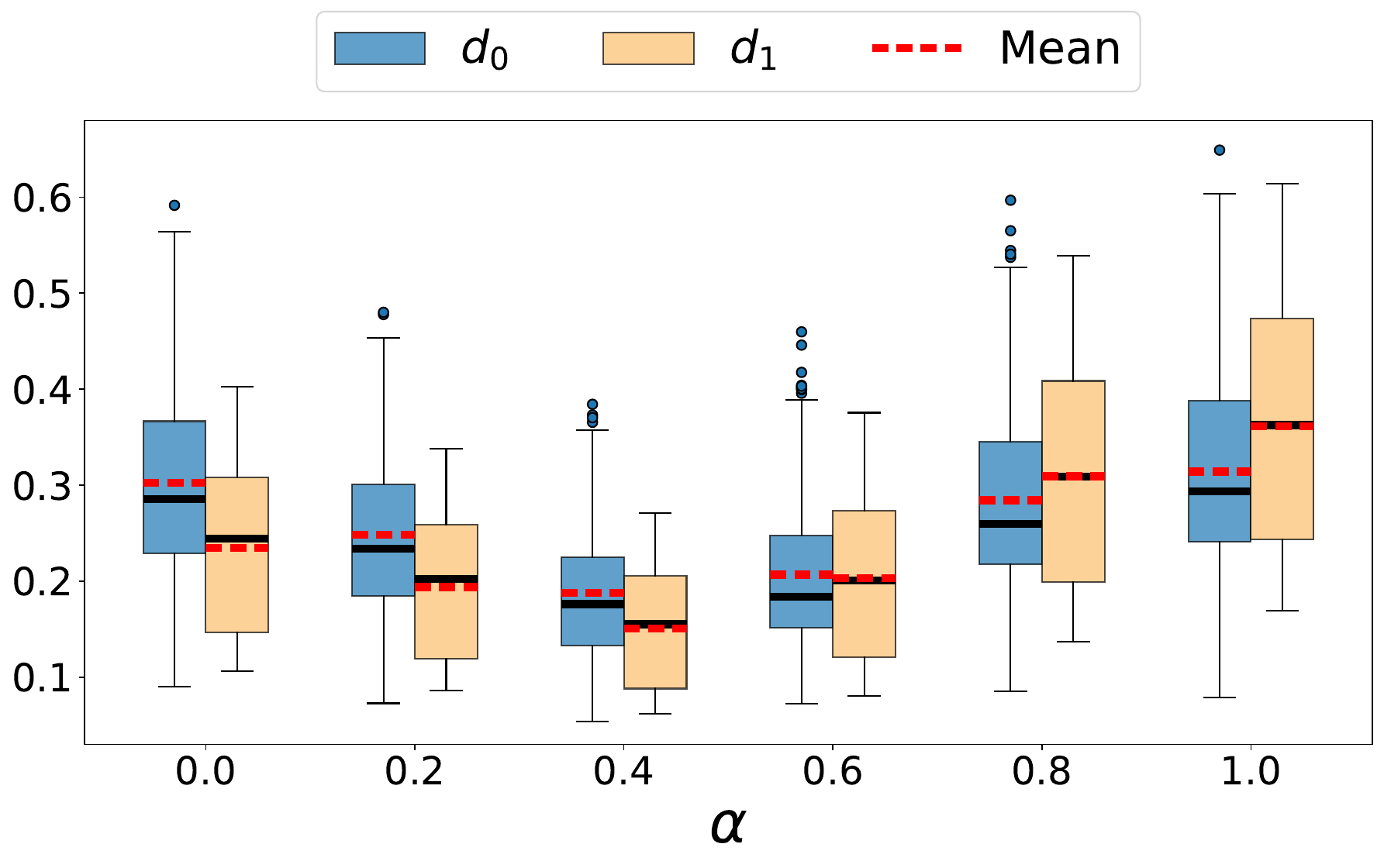}
    \caption{Ex Post Demand (FEO)}
\end{subfigure}
\begin{subfigure}{0.35\textwidth}
    \centering
    \includegraphics[width=\textwidth]{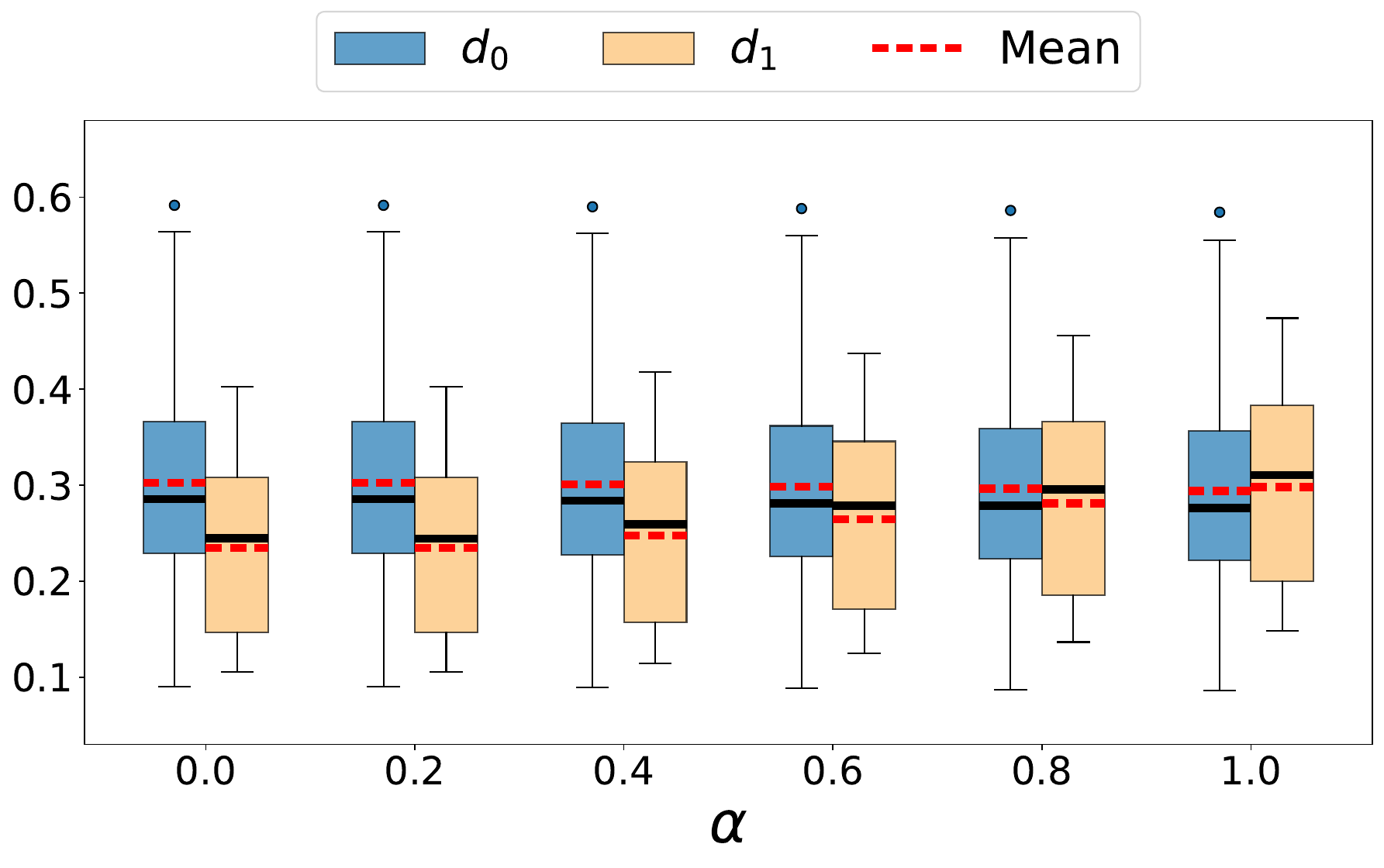}
    \caption{Ex Post Demand (EFO)}
\end{subfigure}
\end{minipage}
}
{Demand distributions for FEO and EFO under parity-wise demand fairness \label{fig:logistic_parity_demand_case}\vspace{2mm}}
{}
\end{figure}

\begin{table}[htbp]
    \centering
    \caption{Normalized performance measures for FEO and EFO under parity-wise demand fairness (\%)}
    \label{tab:measure_parity_demand_logistic}
    \renewcommand{\arraystretch}{1.1}
    
    \begin{tabular}{l ccccc c ccccc}
        \toprule
        & \multicolumn{5}{c}{FEO ($\alpha$)} & & \multicolumn{5}{c}{EFO ($\alpha$)} \\
        \cmidrule(lr){2-6} \cmidrule(lr){8-12}
        Measure & 0.2 & 0.4 & 0.6 & 0.8 & 1.0 & & 0.2 & 0.4 & 0.6 & 0.8 & 1.0 \\
        \midrule
        $\mathcal{R}(\alpha)/\mathcal{R}(0)$ & 97.50 & 89.15 & 91.61 & 95.62 &  93.10 & & 100.0 & 99.99 & 99.92 & 99.79 & 99.60 \\
        
        $\mathcal{S}(\alpha)/\mathcal{S}(0)$ & 78.76 & 57.37 & 65.71 & 97.26 & 111.32  & & 100.0 & 99.90 & 99.81 & 99.77 & 99.76  \\
        
        $\mathcal{W}(\alpha)/\mathcal{W}(0)$ & 89.11 & 74.93 & 80.01 & 96.35 & 101.26  & & 100.0 & 99.95 & 99.87 & 99.78 & 99.67 \\
        \bottomrule
    \end{tabular}
\end{table}

Figure~\ref{fig:logistic_Rawlsian_price_case} illustrates the price distribution under Rawlsian price fairness. For both FEO and EFO, the maximum price decreases as the fairness level $\alpha$ increases, showing that Rawlsian price fairness effectively protects individuals facing excessively high prices. However, the two approaches affect the distribution differently: FEO shifts the entire price distribution downward, whereas EFO only adjusts prices that violate the fairness constraint. %EFO only alters part of the distribution. 
Table~\ref{tab:measure_rawlsian_price_logistic} shows that FEO leads to higher consumer surplus and social welfare than EFO. Notably, at fairness level $\alpha=0.2$, profit, consumer surplus and social welfare for FEO are all higher than the baseline (i.e., $\alpha=0$), showing that a small level of Rawlsian price fairness for FEO can lead to a win-win outcome.

\begin{figure}[htbp]
\FIGURE{
\begin{minipage}{\textwidth}
\centering
\captionsetup{justification=centering}
\begin{subfigure}{0.35\textwidth}
    \centering
    \includegraphics[width=\textwidth]{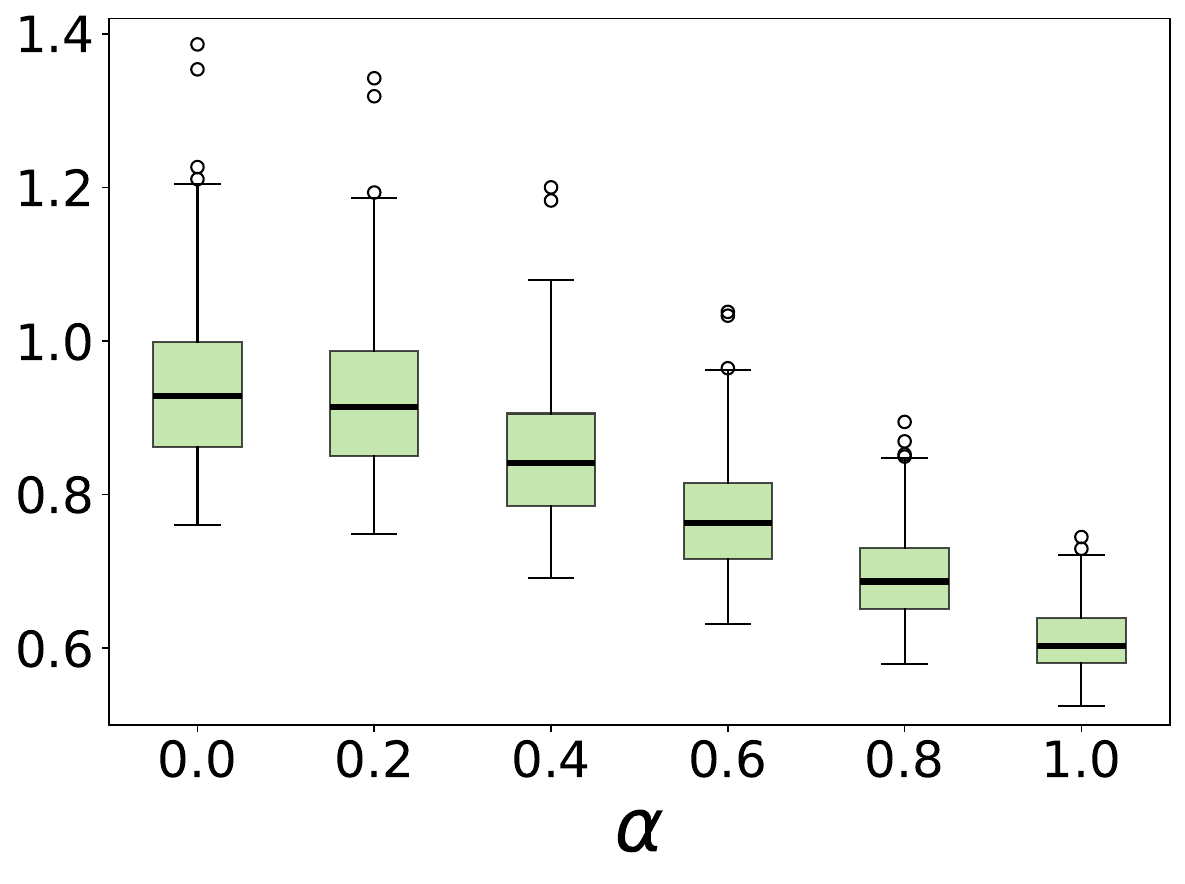}
    \caption{Price under FEO}
    % \label{fig:rawls_price_feo}
\end{subfigure}
\begin{subfigure}{0.35\textwidth}
    \centering
    \includegraphics[width=\textwidth]{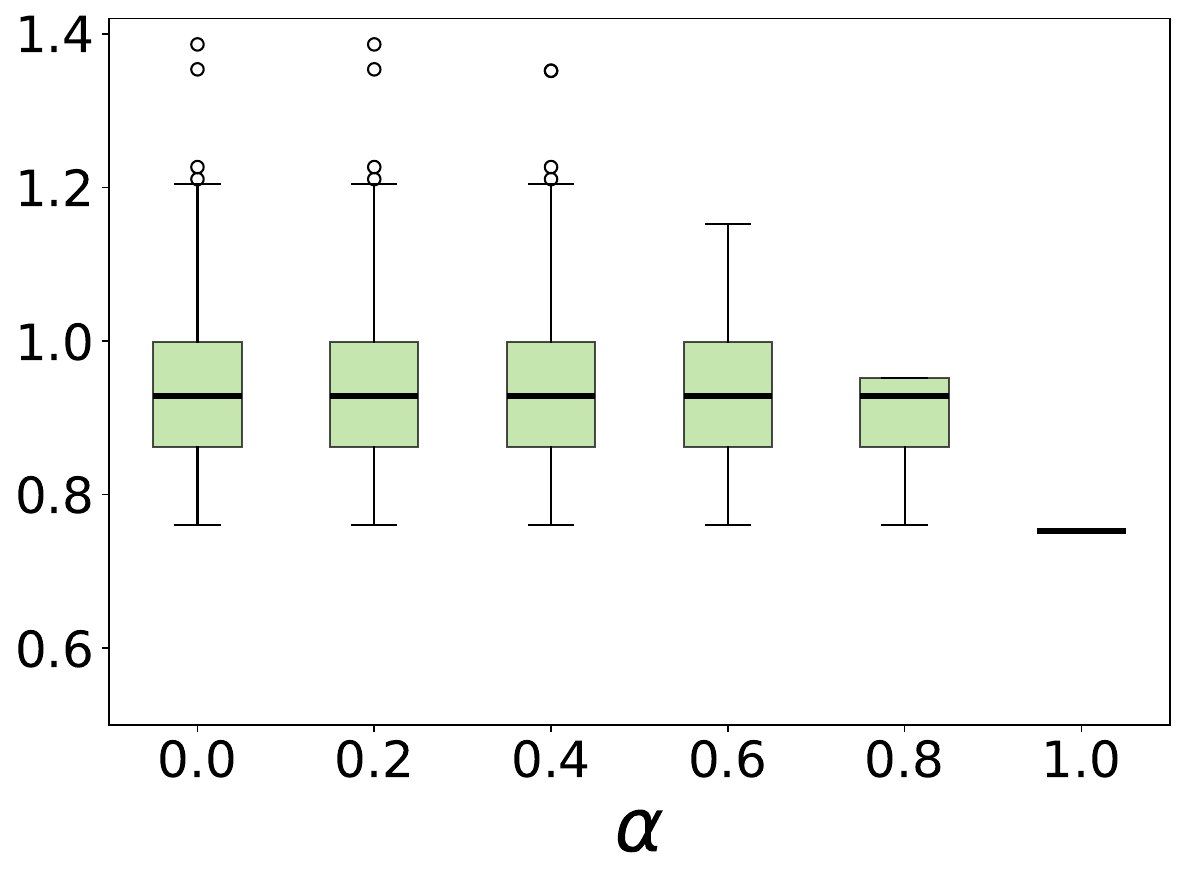}
    \caption{Price under EFO}
    % \label{fig:rawls_price_efo}
\end{subfigure}
\end{minipage}
}
{Price distributions for FEO and EFO under Rawlsian price fairness \label{fig:logistic_Rawlsian_price_case}\vspace{0.5em}}
{}
\end{figure}

Lastly, Figure~\ref{fig:logistic_Rawlsian_demand_case} presents the demand distributions under Rawlsian demand fairness. Similarly, FEO changes the entire demand distribution, whereas EFO only modifies the portion of the population that violates the demand lower bound. %EFO only affects part of the population.
\begin{table}[t]
    \centering
    \caption{Normalized performance measures for FEO and EFO under Rawlsian price fairness (\%)}
    \label{tab:measure_rawlsian_price_logistic}
    \renewcommand{\arraystretch}{1.1}
    % No font size reduction (\small/\footnotesize) as requested
    
    \begin{tabular}{l p{0.2cm} ccccc p{0.4cm} ccccc}
        \toprule
        & & \multicolumn{5}{c}{FEO ($\alpha$)} & & \multicolumn{5}{c}{EFO ($\alpha$)} \\
        \cmidrule(lr){3-7} \cmidrule(lr){9-13}
        Measure & & 0.2 & 0.4 & 0.6 & 0.8 & 1.0 & & 0.2 & 0.4 & 0.6 & 0.8 & 1.0 \\
        \midrule
        $\mathcal{R}(\alpha)/\mathcal{R}(0)$ & & 100.01  & 99.27 &  97.02 &  93.39 &  87.48 & & 100.0 & 100.01 & 100.00  & 99.48  & 94.76\\
        
        $\mathcal{S}(\alpha)/\mathcal{S}(0)$ & & 101.88 & 113.06 & 126.25 & 139.70 & 155.82 & & 100.0 & 100.03 & 100.76 & 107.16 & 132.54\\
        
        $\mathcal{W}(\alpha)/\mathcal{W}(0)$ & & 100.84 & 105.44 & 110.11 & 114.12 & 118.07 & & 100.0 & 100.02 & 100.34 & 102.92 & 111.67\\
        \bottomrule
    \end{tabular}
\end{table}
Table~\ref{tab:measure_rawlsian_demand_logistic} highlights a substantial trade-off: under perfect demand fairness, while FEO delivers a large consumer surplus gain ($113.22\%$ increase), it comes at a significant reduction in profit (a loss of more than $43.45\%$), which may limit its practical feasibility. In contrast, EFO achieves a more modest surplus improvement ($4\%$ increase) with a small impact on profit ($0.91\%$ loss), making it a more implementable policy choice. %shows that FEO leads to more substantial changes in downstream outcomes. When setting $\alpha=1.0$, FEO loses more than $33\%$ of the profit, while the change in profit of EFO is less than $1\%$. This suggests that imposing Rawlsian demand fairness for EFO in practice gives a more predictable solution to the policymaker. 

\begin{figure}[htbp]
\FIGURE{
\begin{minipage}{\textwidth}
\centering
\captionsetup{justification=centering}
\begin{subfigure}{0.235\textwidth}
    \centering
    \includegraphics[width=\textwidth]{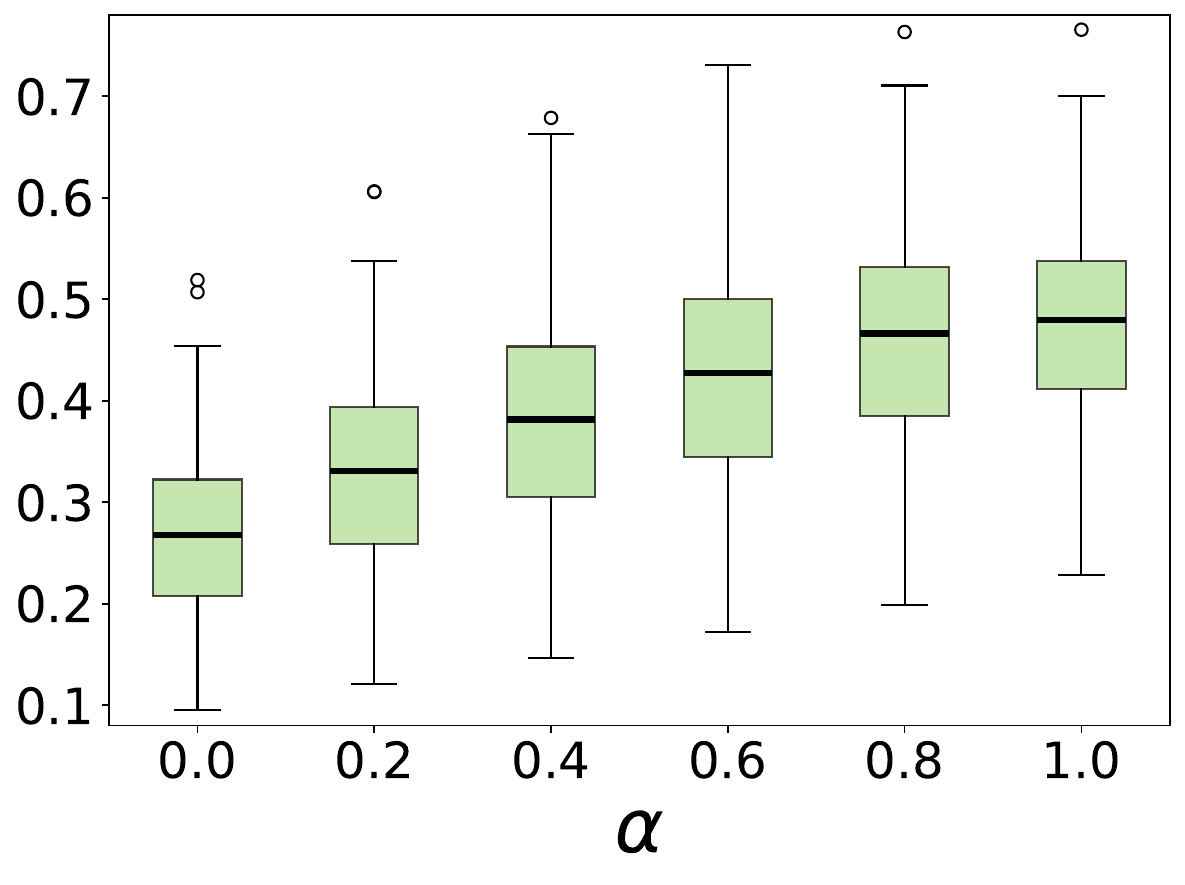}
    \caption{Ex Ante Demand (FEO)}
\end{subfigure}
\begin{subfigure}{0.235\textwidth}
    \centering
    \includegraphics[width=\textwidth]{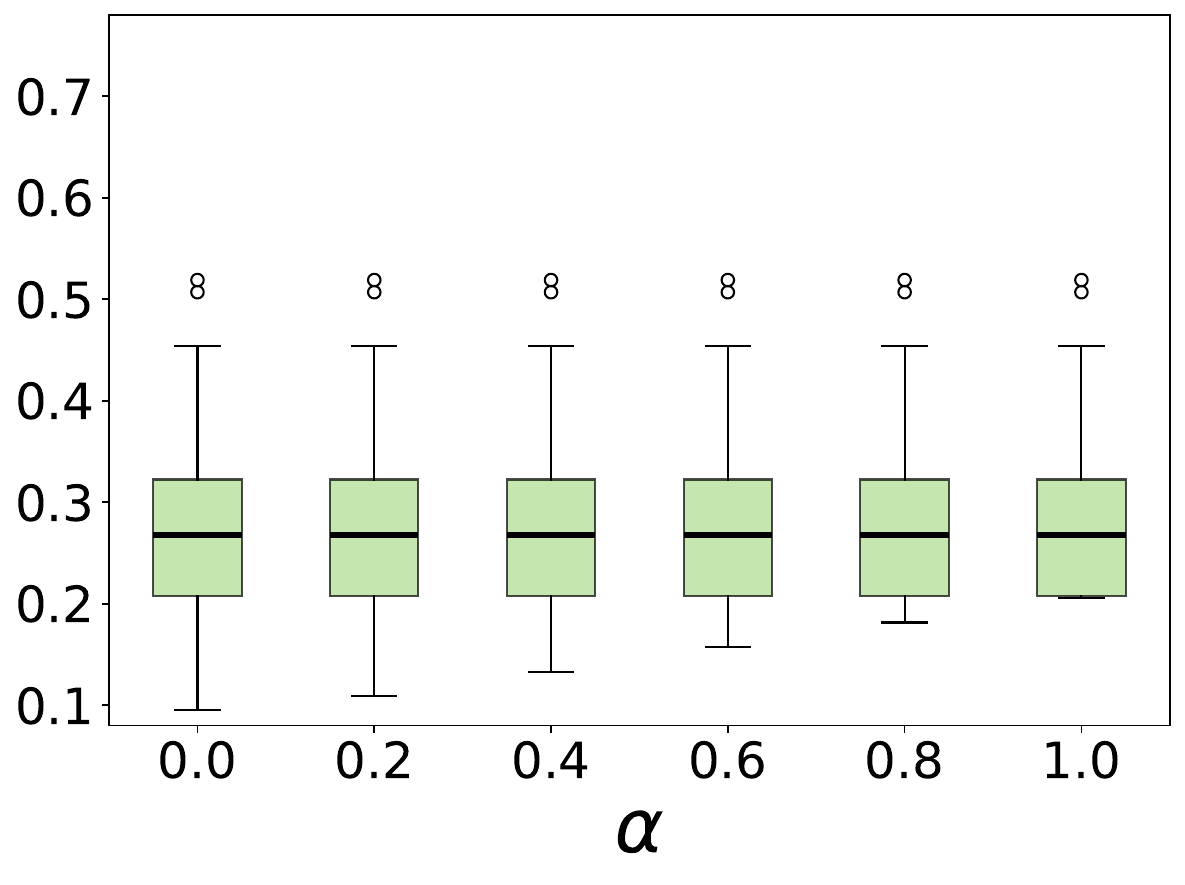}
    \caption{Ex Ante Demand (EFO)}
\end{subfigure}
\begin{subfigure}{0.235\textwidth}
    \centering
    \includegraphics[width=\textwidth]{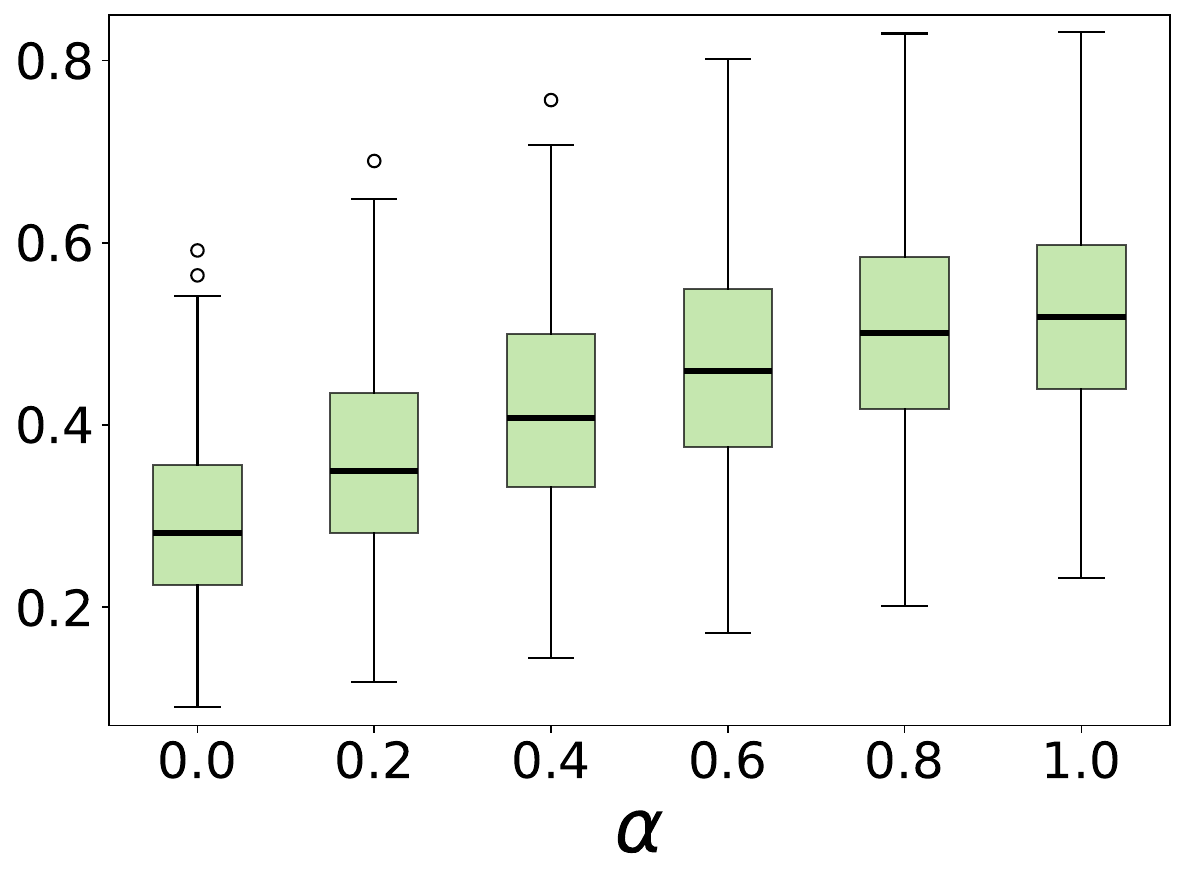}
    \caption{Ex Post Demand (FEO)}
\end{subfigure}
\begin{subfigure}{0.235\textwidth}
    \centering
    \includegraphics[width=\textwidth]{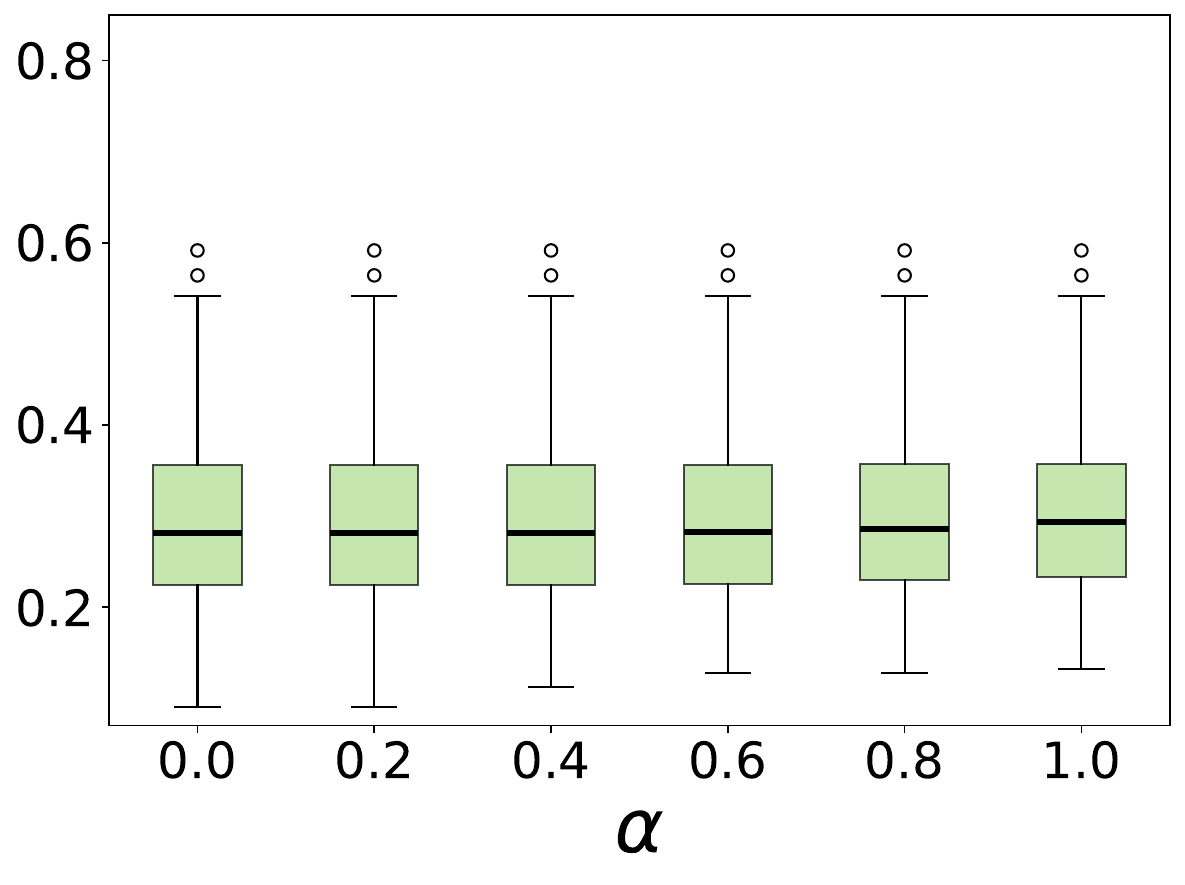}
    \caption{Ex Post Demand (EFO)}
\end{subfigure}
\end{minipage}
}
{Demand distributions for FEO and EFO under Rawlsian demand fairness \label{fig:logistic_Rawlsian_demand_case}\vspace{0.5em}}
{}
\end{figure}

\begin{table}[htbp]
    \centering
    \caption{Normalized performance measures for FEO and EFO under Rawlsian demand fairness (\%)}
    \label{tab:measure_rawlsian_demand_logistic}
    \renewcommand{\arraystretch}{1.1}
    \setlength{\tabcolsep}{4pt} % Adjust column spacing for a perfect fit
    
    \begin{tabular}{l ccccc p{0.3cm} ccccc}
    \toprule
    & \multicolumn{5}{c}{FEO ($\alpha$)} & & \multicolumn{5}{c}{EFO ($\alpha$)} \\
    \cmidrule(lr){2-6} \cmidrule(lr){8-12}
    Measure & 0.2 & 0.4 & 0.6 & 0.8 & 1.0 & & 0.2 & 0.4 & 0.6 & 0.8 & 1.0 \\
    \midrule
    
    % Revenue Row
    $\mathcal{R}(\alpha)/\mathcal{R}(0)$ 
    & 96.24  & 87.18 &  75.39  & 62.97 &  56.55
    & 
    & 99.99  & 99.95  & 99.82  & 99.56 &  99.09 \\

    % Surplus Row
    $\mathcal{S}(\alpha)/\mathcal{S}(0)$ 
    & 129.59 & 156.99 & 181.95 & 203.53 & 213.22
    & 
    & 100.07 & 100.36 & 101.00 & 102.16 & 104.34 \\

    % Welfare Row
    $\mathcal{W}(\alpha)/\mathcal{W}(0)$ 
    & 111.17 & 118.43 & 123.09 & 125.90 & 126.69
    & 
    & 100.03 & 100.14 & 100.35 & 100.73 & 101.44 \\

    \bottomrule
    \end{tabular}
\end{table}

% \textcolor{red}{figure: demand gap decreases, adding ex-ante demand; Table: S,W (FEO) > S, W (EFO). change $\bar{d}$ with $0.5$}

% \textcolor{red}{how to deal with other cases?}

\section{Conclusion}
This paper studies fairness in a data-driven pricing pipeline consisting of a demand estimation stage, where the firm learns a predictive demand model from data, and a price optimization stage, where prices are determined to maximize expected profit based on the estimated demand model. To this end, we develop a comprehensive framework to compare the outcomes that arise when different fairness criteria are enforced at distinct stages of this pipeline.%Our frameworks enable decision makers to implement fairness in such setting.

Focusing on a stylized model with two groups and linear demand, we examine how pricing decisions and societal outcomes are affected by using different fairness criteria and imposing fairness in different stages. %For a loss fairness, we show that it can generate more than one possible solutions, yet have substantially different prices and welfare outcomes. As a result, it can lead to ambiguous pricing decisions and may potentially harm the society. 
Our analysis yields three main insights. First, for loss fairness, which can only be applied in the estimation stage, we show that it can yield multiple feasible solutions that lead to substantially different prices and welfare outcomes. These observations highlight the importance of considering downstream consequences during demand estimation. Second, for parity-wise fairness, we find that when the two groups have similar market sizes and price distributions, where fairness is imposed plays a crucial role. Price fairness at the estimation stage benefits consumers more, while demand fairness at the optimization stage leads to better consumer outcomes. Third, for Rawlsian fairness, imposing fairness in either the estimation or optimization stage leads to identical outcomes. More strikingly, in certain settings, fairness can increase both profit and consumer surplus, yielding a win–win outcome rather than the typical perceived trade-off. %The observations show that considering downstream consequences for demand estimation is an important task. 
%For price and demand fairness, the comparison between Fair-Estimate-then-Optimize (FEO) and Estimate-then-Fair-Optimize (EFO) shows that, under parity-wise perspective, the relative performance of FEO and EFO depends on the data distribution, while under Rawlsian-based perspective, FEO and EFO yield identical outcomes. These results provide guidance on what types of fairness constraints firms should use and where they should be applied. Lastly, we conduct a case study using real data on the willingness-to-pay on vaccination, showing that our frameworks can potentially be used in more general settings.
  % 

To enhance the practical relevance of our findings, we generalize our framework to a more realistic setting that incorporates customer features and considers a logistic demand model. We then evaluate our results under this feature-based pricing using real-world vaccination data. Under parity-wise demand fairness, we find that EFO strictly dominates FEO, yielding higher profit, consumer surplus, and social welfare. Consequently, EFO emerges as the preferred approach. For the other fairness criteria, however, no dominance emerges. Instead, we observe a clear trade-off in which profit declines are accompanied by gains in consumer surplus. Our findings show how a policymaker's regulatory priority -- whether to safeguard profits or maximize consumer surplus -- determines where fairness must be enforced within the algorithmic pricing pipeline.%Our results provide guidance for policymakers deciding which outcomes to regulate and for sellers deciding where fairness should intervene an algorithmic pricing pipeline.}

There are many interesting extensions from our framework for future studies. While our theoretical results are derived under a stylized model, our experiments and case study indicate that the qualitative insights extend to more general demand models. Extending theoretical results to richer demand functions, multiple groups, and alternative market structures would further enhance our understanding of fairness in data-driven pricing systems. In addition, our results highlight the central role that the dataset plays in the performance of fairness-aware demand estimation. A systematic study of the statistical properties of the framework, including sample complexity and generalization error, would help clarify how limitations in data interact with fairness constraints to shape downstream pricing decisions and social outcomes. More broadly, our findings raise important questions about the governance of algorithmic pricing systems. Because fairness interventions can have different effects depending on where they are applied, it is important to examine where accountability and regulatory oversight should reside -- whether in data collection, demand estimation, price optimization, or post-processing. Understanding how auditing and regulation interact with fairness-aware pricing systems remains an important direction for future research.

%Taken together, our findings suggest that designing fair pricing systems requires joint consideration of estimation and optimization, rather than isolated fairness constraints. We believe that our work provides insights for firms to deploy personalized pricing systems and for policymakers to evaluate their societal impacts.

\bibliographystyle{ormsv080}
\bibliography{references}

\newpage
\setcounter{page}{1}
\begin{APPENDIX}{}
\section{Proofs}

\label{appendix_proof}
\begin{proof}{Proof of Proposition \ref{prop:ML_fairness_each}.} For simplicity, denote $\ell_g\big(\hat{b}_g(\alpha)\big)$ by $\ell_g(\alpha)$ under $\alpha$-loss fairness. Without loss of generality, suppose $\ell_1(0)>\ell_0(0)$. Since $\ell_0(0)$ and $\ell_1(0)$ are at their minimum achievable values, they cannot be further reduced. Therefore, to satisfy the fairness constraint with $\alpha$, the only option is to increase $\ell_0(\hat{b}_0)$ while keeping $\ell_1(\hat{b}_1)$ unchanged. Thus, $\ell_1(\alpha) = \ell_1\big(\hat{b}_1\big)$, which implies $b_1(\alpha) = b_1$.  
Moreover,  
\begin{equation}
\ell_0(\alpha) = \ell_1(\alpha) - \big(1-\alpha\big)\big(\ell_1(0) - \ell_0(0)\big) = (1-\alpha)\,\ell_0(0) + \alpha\,\ell_1(0)=(1-\alpha)\sigma_0^2+\alpha\sigma_1^2,
\label{eq1:EL_fairness}
\end{equation}
where the last equality follows from the fact that $\ell_g = \sigma_g^2$ when $n_g \rightarrow \infty$ for any $g \in \{0,1\}$.

Since $n_g\rightarrow\infty$, we can rewrite the loss function as follows,
\begin{equation}
    % \begin{aligned}
    \ell_g(\hat{b}):=\E\left[\Big(a_g+\hat{b}_gp-d\Big)^2\right]
    % &=\E\left[\Big(\left(1-g\right)(b_0-b_0)+g\left(b_1-b_1\right)\cdot p-\left(\epsilon_0+\left(\epsilon_1-\epsilon_0\right)g\right)\Big)^2\right]\\
    =\E\left[\left(\left(\hat{b}_g-b_g\right)p-\epsilon_g\right)^2\right]
    =\left(\hat{b}_g-b_g\right)^2\E\left[p^2|g\right]+\sigma_g^2.
    % \end{aligned}
    \label{eq2:EL_Fairness}
\end{equation}

Then, by \eqref{eq1:EL_fairness} and \eqref{eq2:EL_Fairness},
\begin{equation*}
    \left(b_0\left(\alpha\right)-b_0\right)^2\E\left[p^2|g=0\right]+\sigma_0^2=(1-\alpha)\ell_0+\alpha\ell_1^2=(1-\alpha)\sigma_0^2+\alpha\sigma_1^2.
\end{equation*}
The solution is $b_0(\alpha)=b_0\pm\sqrt{\tfrac{{\sigma_1^2-\sigma_0^2}}{\mathbb{E}\left[p^2|g=0\right]}\alpha}$. Let $b^+_0$ (resp. $b^-_0$) denote $b_0+\sqrt{\tfrac{{\sigma_1^2-\sigma_0^2}}{\mathbb{E}\left[p^2|g=0\right]}\alpha}$ (resp. $b_0-\sqrt{\tfrac{{\sigma_1^2-\sigma_0^2}}{\mathbb{E}\left[p^2|g=0\right]}\alpha}$). Note that when $\alpha > \alpha^{\star}:=\left(b_0\right)^2\tfrac{\E[p^2 \mid g=0]}{\sigma_1^2 - \sigma_0^2}$, $b_0^{+}(\alpha) > 0$, which is infeasible under the constraint $b_g \leq 0$. Therefore, if $\alpha\leq\alpha^{\star}$, there are two optimal solutions, otherwise, only $\big(b_0^{-}(\alpha), b_1\big)$ is the optimal solution.

Given the optimal solution $\big(b_0^{\pm}(\alpha), b_1\big)$, the optimal price of \eqref{eq:profit} is 
\begin{equation*}
p_0^{+}(\alpha) = -\frac{a_0}{2b_0^{+}(\alpha)} + \frac{c}{2}, \quad
p_0^{-}(\alpha) =
\begin{cases}
-\frac{a_0}{2b_0^{-}(\alpha)} + \frac{c}{2}, & \text{if } \alpha \le \alpha^-,\\
c, & \text{otherwise,}
\end{cases}
\quad
p_1(\alpha) = -\frac{a_1}{2b_1} + \frac{c}{2},
\end{equation*}
where $\alpha^-:=\left(1+\frac{a_0}{b_0c}\right)^2\!\alpha^\star$ is the point at which $p_0^{-}(\alpha) = c$.

Differentiating the prices with respect to $\alpha$, for $\alpha \le \alpha^-$,
\begin{equation*}
\frac{d p_0^{\pm}(\alpha)}{d\alpha}
= \frac{a_0}{2\left(b_0^{\pm}(\alpha)\right)^2} \frac{d b_0^{\pm}(\alpha)}{d\alpha}
= \pm \frac{a_0}{4\left(b_0^{\pm}(\alpha)\right)^2}
\sqrt{\frac{\sigma_1^2 - \sigma_0^2}{\mathbb{E}[p^2 \mid g = 0]}} \cdot \frac{1}{\sqrt{\alpha}}.
\end{equation*}
Otherwise, $\frac{d p_0^{+}(\alpha)}{d\alpha}$ remains as above, while $\frac{d p_0^{-}(\alpha)}{d\alpha} = 0$.
Finally, $\frac{d p_1(\alpha)}{d\alpha} = 0$, as it does not depend on $\alpha$.

Denote consumer surplus (respectively, profit and social welfare) under $(p_0^{\pm}(\alpha), p_1(\alpha))$ as $\calS^{\pm}$ (respectively, $\calR^{\pm}$ and $\calW^{\pm}$). Then, the derivative of consumer surplus with respect to $\alpha$ is as follows.
\begin{equation}
    \begin{aligned}
    %\frac{d\calR^{\pm}}{d\alpha}&=\left(a_0+b_0p^{\pm}_0(\alpha)-cb_0\right)\frac{dp^{\pm}_0(\alpha)}{d\alpha}\mathbb{I}\{a_0+b_0p^{\pm}_0(\alpha)\geq0\},\\
    \frac{d\calS^{\pm}}{d\alpha}&=-\max\left(0,a_0+b_0p^{\pm}_0(\alpha)\right)\frac{dp^{\pm}_0(\alpha)}{d\alpha}.
    \end{aligned}
    \label{eq:surplus}
\end{equation}
Note that $\frac{d p_1(\alpha)}{d\alpha} = 0$, so this term does not appear in $\frac{d \mathcal{S}^{\pm}}{d\alpha}$.
Substituting $\frac{d p_0^{\pm}(\alpha)}{d\alpha}$ into~\eqref{eq:surplus}, we obtain the following.
\begin{equation*}
\begin{aligned}
\frac{d \mathcal{S}^{-}}{d\alpha} 
&> 0 ~ \text{for } 0 \le \alpha \le \min\left\{\alpha^-, \alpha^\star\right\}\quad\text{and}\quad
\frac{d \mathcal{S}^{-}}{d\alpha} 
= 0 ~ \text{for } \min\left\{\alpha^-, \alpha^\star\right\} < \alpha \le 1.\\[1em]
\frac{d \mathcal{S}^{+}}{d\alpha} 
&< 0 ~ \text{for } 0 \le \alpha \le \min\left\{\alpha^+,\alpha^\star\right\}\quad\text{and}\quad
\frac{d \mathcal{S}^{+}}{d\alpha} 
= 0 ~ \text{for }  \min\left\{\alpha^+,\alpha^\star\right\} < \alpha \le \min(1,\alpha^\star),
\end{aligned}
\end{equation*}
where $\alpha^+:=\left(\frac{a_0+b_0c}{2a_0 + b_0c}\right)^2{\alpha^\star}^2$ is the point at which $a_0 + b_0p_0^+(\alpha)=0$. $\frac{d \mathcal{S}^{+}}{d\alpha}$ is not defined for $\alpha > \alpha^\star$, since $b_0^{+}(\alpha)$ does not exist.

The derivative of profit with respect to $\alpha$ is,
\begin{equation*}
    \begin{aligned}
    \frac{d\calR^{\pm}}{d\alpha}=\left(a_0+2b_0p^{\pm}_0(\alpha)-b_0c\right)\frac{dp^{\pm}_0(\alpha)}{d\alpha}\mathbb{I}\left(p^{\pm}_0(\alpha)\leq-\frac{a_0}{b_0}\right).
    \end{aligned}
    % \label{eq:profit_derivative}
\end{equation*}
Note that the optimal price with respect to $b_0(0)$ is 
$\max\left(-\tfrac{a_0}{2b_0}+\tfrac{c}{2},c\right)$, which maximizes profit. 
Therefore, for any $b_0^{\pm}(\alpha)$ with $\alpha>0$, the corresponding profit is strictly lower. 
If we compare the ratio of the changes in profit for $0\le\alpha<\min\left\{\alpha^+,\alpha^-,\alpha^\star\right\}$, we obtain
\begin{equation*}
    \frac{\frac{d\calR^{+}(\alpha)}{d\alpha}}{\frac{d\calR^{-}(\alpha)}{d\alpha}}=\frac{-\frac{a_0}{2b_0}+\frac{c}{2}-p_0^+(\alpha)}{-\frac{a_0}{2b_0}+\frac{c}{2}-p_0^-(\alpha)}\frac{\frac{dp_0^{+}(\alpha)}{d\alpha}}{\frac{dp_0^{-}(\alpha)}{d\alpha}}=\left(\frac{b^-_0(\alpha)}{b^+_0(\alpha)}\right)^3>1.
\end{equation*}
Therefore, $\frac{d\calR^{+}(\alpha)}{d\alpha}<\frac{d\calR^{-}(\alpha)}{d\alpha}$.

Lastly, for social welfare, it is clear that $\calW^+$ is decreasing up to $\min\{\alpha^-,\alpha^\star\}$, since both profit and consumer surplus are decreasing with $\alpha$ (note that social welfare is the sum of these two components). For $\min\{\alpha^+,\alpha^\star\}<\alpha<\alpha^\star$, it is $0$. In the case of $\calW^-$, where $\alpha<\min\{\alpha^-,\alpha^\star\}$,
\begin{equation*}
    \begin{aligned}
    \frac{d\calW^{-}}{d\alpha}=-b_0\left(-p^{-}_0(\alpha)+c\right)\frac{dp^{-}_0(\alpha)}{d\alpha}\geq 0.
    \end{aligned}
    % \label{eq:welfare}
\end{equation*}
Otherwise, $\tfrac{d\calW^-}{d\alpha}=0$. \hfill \Halmos
\end{proof}
\medskip

\begin{proof}{Proof of Corollary \ref{prop:ML_fairness_E}.}
% Consider $\alpha \le \alpha^\star$, for which there are two optimal solutions, $b_0^+(\alpha)$ and $b_0^-(\alpha)$. From Proposition \ref{prop:ML_fairness_each}, $\frac{\calS^{+}(\alpha)}{d\alpha}=0$ and $\frac{d\calW^+(\alpha)}{d\alpha}=0$ for $\alpha\geq \alpha^+$. Therefore, in the case of $\alpha^+<\alpha^-<\alpha^*$, for $\alpha^+\le\alpha<\alpha^-$, the expectation of consumer surplus and social welfare are
% \begin{equation*}
% \E\left[\frac{d\calS(\alpha)}{d\alpha}\right]=\frac{1}{2}\frac{d\calS^{-}(\alpha)}{d\alpha}\geq0\quad\text{and}\quad \E\left[\frac{d\calW(\alpha)}{d\alpha}\right]=\frac{1}{2}\frac{d\calW^{-}(\alpha)}{d\alpha}\geq0.
% \end{equation*}
According to the proof of Proposition~\ref{prop:ML_fairness_each}, for $\alpha< \min\{\alpha^+,\alpha^-,\alpha^\star\}$, we know that $\tfrac{d\calS^{-}}{d\alpha}>0$ and $\tfrac{d\calS^{+}}{d\alpha}<0$. Then,
\begin{equation*}
\begin{aligned}
\E\left[\frac{d\calS(\alpha)}{d\alpha}\right]&=\frac{1}{2}\frac{d\calS^{-}(\alpha)}{d\alpha}+\frac{1}{2}\frac{d\calS^{+}(\alpha)}{d\alpha}=\frac{1}{2}\frac{d\calS^{-}(\alpha)}{d\alpha}\left(1+\frac{\frac{d\calS^{+}(\alpha)}{d\alpha}}{\frac{d\calS^{-}(\alpha)}{d\alpha}}\right)\\
&=\frac{1}{2}\underbrace{\frac{d\calS^{-}(\alpha)}{d\alpha}}_{>0}\left(\underbrace{1-\left(\frac{a_0+b_0p_0^+(\alpha)}{a_0+b_0p_0^-(\alpha)}\right)\left(\frac{b_0^-(\alpha)}{b_0^+(\alpha)}\right)^2}_{(*)}\right).
\end{aligned}
\end{equation*}
Therefore, $\E\left[\frac{d\calS(\alpha)}{d\alpha}\right]<0$, if $(*)$ is less than $0$.
\begin{equation*}
\begin{aligned}
1-&\left(\frac{a_0+b_0p_0^+(\alpha)}{a_0+b_0p_0^-(\alpha)}\right)\left(\frac{b^{-}_{0}(\alpha)}{b^{+}_{0}(\alpha)}\right)^2<0 \Leftrightarrow  \left(a_0+b_0p^-(\alpha)\right) (b^{+}_{0}(\alpha))^2<\left(a_0+b_0p^+(\alpha)\right) (b^{-}_{0}(\alpha))^2\\
% &\Leftrightarrow \left(a_0+b_0p^-(\alpha)\right)b^{+}_{0}(\alpha)>\left(a_0+b_0p^+(\alpha)\right)b^{-}_{0}(\alpha)\\
% &\Leftrightarrow a_0\left(b_0^+(\alpha)^2-b_0^-(\alpha)^2\right)>b_0\left(-\frac{a_0}{2b_0^+(\alpha)}b_0^-(\alpha)+\frac{c}{2}b_0^-(\alpha)+\frac{a_0}{2b_0^-(\alpha)}b_0^+(\alpha)-\frac{c}{2}b_0^+(\alpha)\right)\\
&\Leftrightarrow a_0\left(b_0^+(\alpha)^2-b_0^-(\alpha)^2\right)<-b_0\frac{c}{2}\left(b_0^+(\alpha)^2-b_0^-(\alpha)^2\right)+\frac{b_0a_0}{2}\frac{{b^+_0(\alpha)}^3-{b^-_0(\alpha)}^3}{b_0^+(\alpha)b_0^-(\alpha)}\\
&\Leftrightarrow \left(a_0 + b_0\frac{c}{2}\right)\left(b_0^+(\alpha)^2-b_0^-(\alpha)^2\right)<\frac{b_0a_0}{2}\frac{{b^+_0(\alpha)}^3-{b^-_0(\alpha)}^3}{b_0^+(\alpha)b_0^-(\alpha)}\\
&\Leftrightarrow \left(a_0 + b_0\frac{c}{2}\right)\left(b_0^+(\alpha)+b_0^-(\alpha)\right)<\frac{b_0a_0}{2}\frac{{b^+_0(\alpha)}^2+b^+_0(\alpha)b^-_0(\alpha)+{b^-_0(\alpha)}^2}{b_0^+(\alpha)b_0^-(\alpha)}\\
&\Leftrightarrow \left(a_0 + b_0\frac{c}{2}\right)2b_0<\frac{b_0a_0}{2}\frac{3b_0^2-K\alpha}{b_0^2-K\alpha}\\
&\Leftrightarrow \frac{4a_0 + 2b_0c}{a_0}>\frac{3b_0^2-K\alpha}{b_0^2-K\alpha}\\
&\Leftrightarrow \frac{a_0b_0^2+2b_0^3c}{(3a_0+2b_0c)K}>\alpha\\
&\Leftrightarrow \frac{a_0+2b_0c}{3a_0+2b_0c}\alpha^\star=:\bar\alpha>\alpha.
\end{aligned}
\end{equation*}
Here, $K:=\tfrac{{\sigma_1^2-\sigma_0^2}}{\mathbb{E}\left[p^2|g=0\right]}$, which means that $b_0^{\pm}(\alpha)=b_0\pm\sqrt{K\alpha}$ and $\alpha^\star = b_0^2/K$. Given that $\frac{d\calS^+}{d\alpha}<0$, $\frac{d\calS^-}{d\alpha}>0$, and $\E\left[\tfrac{d\calS}{d\alpha}\right] < 0$, we obtain the inequality $-\frac{d\calS^+}{d\alpha} > \frac{d\calS^-}{d\alpha}$. 

In Proposition~\ref{prop:ML_fairness_each}, $\calR^+$ decreases sharply. 
Therefore, combining the facts that 
$-\frac{d\calS^+}{d\alpha} > \frac{d\calS^-}{d\alpha} > 0$ 
and 
$\frac{d\calR^+}{d\alpha} < \frac{d\calR^-}{d\alpha} < 0$, 
we conclude that $-\frac{d\calW^+}{d\alpha} > \frac{d\calW^-}{d\alpha} > 0$, 
which implies that $\mathbb{E}\left[\tfrac{d\calW}{d\alpha}\right] < 0$.\hfill\halmos
\end{proof}
\medskip

\begin{proof}{Proof of Proposition \ref{prop3}.}

Without loss of generality, we can assume $p_0^{\mathrm{LS}}<p_1^{\mathrm{LS}}$ because group $0$ and group $1$ are symmetric. Then we have $\tau_a<\tau_b$. Therefore, we only have to compare $\tau_p$ and $\tau_b$.

We first prove the results of each case for parity-wise price fairness. In order to compare FEO and EFO, we first calculate the first order derivatives of the optimal prices with respect to $\alpha$.

Under parity-wise price fairness, FEO is written as \eqref{FEO_price_pw}:
\begin{equation}
    \label{FEO_price_pw}
    \begin{aligned}
        \min_{\hat{b}_0,\hat{b}_1}&\quad\frac{1}{n_0}\sum_{i:g^{(i)}=0}\left(a_0+\hat{b}_0 p^{(i)}-d^{(i)}\right)^2+\frac{1}{n_1}\sum_{i:g^{(i)}=1}\left(a_1+\hat{b}_1 p^{(i)}-d^{(i)}\right)^2\\
        \text{s.t.}&\quad \left| p_1^*\left(\hat{b}_0, \hat{b}_1\right)-p_0^*\left(\hat{b}_0, \hat{b}_1\right)\right| \leq \left(1-\alpha\right)\left| p_1^*\left(\hat{b}_0^{\mathrm{LS}}, \hat{b}_1^{\mathrm{LS}}\right)-p_0^*\left(\hat{b}_0^{\mathrm{LS}}, \hat{b}_1^{\mathrm{LS}}\right)\right|.
    \end{aligned}
\end{equation}
The objective function can be written as \eqref{trans}
\begin{equation}
    \label{trans}
    \begin{aligned}
        &\frac{1}{n_0}\sum_{i:g^{(i)}=0}\left(a_0+\hat{b}_0 p^{(i)}-d^{(i)}\right)^2+\frac{1}{n_1}\sum_{i:g^{(i)}=1}\left(a_1+\hat{b}_1 p^{(i)}-d^{(i)}\right)^2\\
        =&A\left(\frac{1}{p_0^*\left(\hat{b}_0,\hat{b}_1\right)-\frac{c}{2}}-\frac{1}{p_0^*\left(\hat{b}_0^{\mathrm{LS}},\hat{b}_1^{\mathrm{LS}}\right)-\frac{c}{2}}\right)^2+B\left(\frac{1}{p_1^*\left(\hat{b}_0,\hat{b}_1\right)-\frac{c}{2}}-\frac{1}{p_1^*\left(\hat{b}_0^{\mathrm{LS}},\hat{b}_1^{\mathrm{LS}}\right)-\frac{c}{2}}\right)^2+C,
    \end{aligned}
\end{equation}
where $A=a_0^2\left(\frac{1}{n_0}\sum_{i:g^{(i)}=0}p^{(i)^2}\right)$, $B=a_1^2\left(\frac{1}{n_1}\sum_{i:g^{(i)}=1}p^{(i)^2}\right)$, $C=\frac{1}{n_0}\sum_{i:g^{(i)}=0}\left(a_0+\hat{b}_0^{\mathrm{LS}} p^{(i)}-d^{(i)}\right)^2+\frac{1}{n_1}\sum_{i:g^{(i)}=1}\left(a_1+\hat{b}_1^{\mathrm{LS}} p^{(i)}-d^{(i)}\right)^2$ are constants. So the problem \eqref{FEO_price_pw} is equivalent to
\begin{equation}
    \begin{aligned}
        \min_{\hat{b}_0,\hat{b}_1}&\quad A\left(\frac{1}{p_0^*\left(\hat{b}_0,\hat{b}_1\right)-\frac{c}{2}}-\frac{1}{p_0^*\left(\hat{b}_0^{\mathrm{LS}},\hat{b}_1^{\mathrm{LS}}\right)-\frac{c}{2}}\right)^2+B\left(\frac{1}{p_1^*\left(\hat{b}_0,\hat{b}_1\right)-\frac{c}{2}}-\frac{1}{p_1^*\left(\hat{b}_0^{\mathrm{LS}},\hat{b}_1^{\mathrm{LS}}\right)-\frac{c}{2}}\right)^2\\
        \text{s.t.}&\quad \left| p_1^*\left(\hat{b}_0, \hat{b}_1\right)-p_0^*\left(\hat{b}_0, \hat{b}_1\right)\right| \leq \left(1-\alpha\right)\left| p_1^*\left(\hat{b}_0^{\mathrm{LS}}, \hat{b}_1^{\mathrm{LS}}\right)-p_0^*\left(\hat{b}_0^{\mathrm{LS}}, \hat{b}_1^{\mathrm{LS}}\right)\right|.
    \end{aligned}
    \label{FEO_price_pw_2}
\end{equation}
We can change the decision variables in \eqref{FEO_price_pw_2} to $p_0, p_1$ instead of $\hat{b}_0, \hat{b}_1$ to get
\begin{equation}
    \label{FEO_price_pw_3}
    \begin{aligned}
        \min_{p_0, p_1}&\quad A\left(\frac{1}{p_0-\frac{c}{2}}-\frac{1}{p_0^*\left(\hat{b}_0^{\mathrm{LS}},\hat{b}_1^{\mathrm{LS}}\right)-\frac{c}{2}}\right)^2+B\left(\frac{1}{p_1-\frac{c}{2}}-\frac{1}{p_1^*\left(\hat{b}_0^{\mathrm{LS}},\hat{b}_1^{\mathrm{LS}}\right)-\frac{c}{2}}\right)^2\\
        \text{s.t.}&\quad \left| p_1-p_0 \right| \leq \left(1-\alpha\right)\left| p_1^*\left(\hat{b}_0^{\mathrm{LS}}, \hat{b}_1^{\mathrm{LS}}\right)-p_0^*\left(\hat{b}_0^{\mathrm{LS}}, \hat{b}_1^{\mathrm{LS}}\right)\right|.
    \end{aligned}
\end{equation}

Denote $p_0^{FEO}, p_1^{FEO}$ as the optimal solution to \eqref{FEO_price_pw_3}. First, we show that
\begin{equation}
    \label{FEO_price_pw_constraint}
    p_1^{FEO}-p_0^{FEO}=(1-\alpha)\left(p_1^*\left(\hat{b}_0^{\mathrm{LS}}, \hat{b}_1^{\mathrm{LS}}\right)-p_0^*\left(\hat{b}_0^{\mathrm{LS}}, \hat{b}_1^{\mathrm{LS}}\right)\right).
\end{equation}
Suppose $p_1^{FEO}-p_0^{FEO}<(1-\alpha)\left(p_1^*\left(\hat{b}_0^{\mathrm{LS}}, \hat{b}_1^{\mathrm{LS}}\right)-p_0^*\left(\hat{b}_0^{\mathrm{LS}}, \hat{b}_1^{\mathrm{LS}}\right)\right)$, then either $p_1^{FEO}<p_1^*\left(\hat{b}_0^{\mathrm{LS}}, \hat{b}_1^{\mathrm{LS}}\right)$ or $p_0^{FEO}>p_0^*\left(\hat{b}_0^{\mathrm{LS}}, \hat{b}_1^{\mathrm{LS}}\right)$. Without loss of generality, suppose $p_1^{FEO}<p_1^*\left(\hat{b}_0^{\mathrm{LS}}, \hat{b}_1^{\mathrm{LS}}\right)$. Then $\exists \xi_1>0$, such that $p_1^{FEO}+\xi_1\leq p_1^*\left(\hat{b}_0^{\mathrm{LS}}, \hat{b}_1^{\mathrm{LS}}\right)$ and $p_1^{FEO}+\xi_1-p_0^{FEO}\leq (1-\alpha)\left(p_1^*\left(\hat{b}_0^{\mathrm{LS}}, \hat{b}_1^{\mathrm{LS}}\right)-p_0^*\left(\hat{b}_0^{\mathrm{LS}}, \hat{b}_1^{\mathrm{LS}}\right)\right)$. This implies that $(p_0^{FEO},p_1^{FEO}+\xi_1)$ is a feasible solution to \eqref{FEO_price_pw_3}.
\begin{align*}
    &A\left(\frac{1}{p_0^{FEO}-\frac{c}{2}}-\frac{1}{p_0^*\left(\hat{b}_0^{\mathrm{LS}}, \hat{b}_1^{\mathrm{LS}}\right)-\frac{c}{2}}\right)^2+B\left(\frac{1}{p_1^{FEO}-\frac{c}{2}}-\frac{1}{p_1^*\left(\hat{b}_0^{\mathrm{LS}}, \hat{b}_1^{\mathrm{LS}}\right)-\frac{c}{2}}\right)^2\\
    &>A\left(\frac{1}{p_0^{FEO}-\frac{c}{2}}-\frac{1}{p_0^*\left(\hat{b}_0^{\mathrm{LS}}, \hat{b}_1^{\mathrm{LS}}\right)-\frac{c}{2}}\right)^2+B\left(\frac{1}{p_1^{FEO}+\xi_1-\frac{c}{2}}-\frac{1}{p_1^*\left(\hat{b}_0^{\mathrm{LS}}, \hat{b}_1^{\mathrm{LS}}\right)-\frac{c}{2}}\right)^2,
\end{align*}
which contradicts the definition of $(p_0^{FEO},p_1^{FEO})$.

Therefore, $p_1^{FEO}-p_0^{FEO}=(1-\alpha)\left(p_1^*\left(\hat{b}_0^{\mathrm{LS}}, \hat{b}_1^{\mathrm{LS}}\right)-p_0^*\left(\hat{b}_0^{\mathrm{LS}}, \hat{b}_1^{\mathrm{LS}}\right)\right)$. So $p_0^{FEO}$ is the optimal solution to the unconstrained optimization problem
\begin{align*}
    \min_{p_0}\quad &A\left(\frac{1}{p_0-\frac{c}{2}}-\frac{1}{p_0^*\left(\hat{b}_0^{\mathrm{LS}}, \hat{b}_1^{\mathrm{LS}}\right)-\frac{c}{2}}\right)^2\\
    &+B\left(\frac{1}{p_0+(1-\alpha)\left(p_1^*\left(\hat{b}_0^{\mathrm{LS}}, \hat{b}_1^{\mathrm{LS}}\right)-p_0^*\left(\hat{b}_0^{\mathrm{LS}}, \hat{b}_1^{\mathrm{LS}}\right)\right)-\frac{c}{2}}-\frac{1}{p_1^*\left(\hat{b}_0^{\mathrm{LS}}, \hat{b}_1^{\mathrm{LS}}\right)-\frac{c}{2}}\right)^2.
\end{align*}
If we apply the first-order condition to $p_0^{FEO}$, we have
% \begin{align*}
%     &2A\left(\frac{1}{p_0^{FEO}-\frac{c}{2}}-\frac{1}{p_0^*\left(\hat{b}_0^{\mathrm{LS}}, \hat{b}_1^{\mathrm{LS}}\right)-\frac{c}{2}}\right)\left(-\frac{1}{\left(p_0^{FEO}-\frac{c}{2}\right)^2}\right)+\\
%     &2B\left(\frac{1}{p_0^{FEO}+(1-\alpha)\left(p_1^*\left(\hat{b}_0^{\mathrm{LS}}, \hat{b}_1^{\mathrm{LS}}\right)-p_0^*\left(\hat{b}_0^{\mathrm{LS}}, \hat{b}_1^{\mathrm{LS}}\right)\right)-\frac{c}{2}}-\frac{1}{p_1^*\left(\hat{b}_0^{\mathrm{LS}}, \hat{b}_1^{\mathrm{LS}}\right)-\frac{c}{2}}\right)\cdot\\
%     &\left(-\frac{1}{\left(p_0^{FEO}+(1-\alpha)\left(p_1^*\left(\hat{b}_0^{\mathrm{LS}}, \hat{b}_1^{\mathrm{LS}}\right)-p_0^*\left(\hat{b}_0^{\mathrm{LS}}, \hat{b}_1^{\mathrm{LS}}\right)\right)-\frac{c}{2}\right)^2}\right)=0,
% \end{align*}
% which is equivalent to
\begin{equation*}
    A\left(\frac{1}{p_0^{FEO}-\frac{c}{2}}-\frac{1}{p_0^*\left(\hat{b}_0^{\mathrm{LS}}, \hat{b}_1^{\mathrm{LS}}\right)-\frac{c}{2}}\right)\frac{1}{\left(p_0^{FEO}-\frac{c}{2}\right)^2}+B\left(\frac{1}{p_1^{FEO}-\frac{c}{2}}-\frac{1}{p_1^*\left(\hat{b}_0^{\mathrm{LS}}, \hat{b}_1^{\mathrm{LS}}\right)-\frac{c}{2}}\right)\frac{1}{\left(p_1^{FEO}-\frac{c}{2}\right)^2}=0.
\end{equation*}
Since $p_g^{FEO}-\frac{c}{2}\neq0$ and $p_g^*\left(\hat{b}_0^{\mathrm{LS}}, \hat{b}_1^{\mathrm{LS}}\right)-\frac{c}{2}\neq0$,
\begin{equation}
    \label{paritywise_price_firstorder}
    \begin{aligned}
        &A\left(p_0^*\left(\hat{b}_0^{\mathrm{LS}}, \hat{b}_1^{\mathrm{LS}}\right)-p_0^{FEO}\right)\left(p_1^{FEO}-\frac{c}{2}\right)^3\left(p_1^*\left(\hat{b}_0^{\mathrm{LS}}, \hat{b}_1^{\mathrm{LS}}\right)-\frac{c}{2}\right)\\
        &+B\left(p_1^*\left(\hat{b}_0^{\mathrm{LS}}, \hat{b}_1^{\mathrm{LS}}\right)-p_1^{FEO}\right)\left(p_0^{FEO}-\frac{c}{2}\right)^3\left(p_0^*\left(\hat{b}_0^{\mathrm{LS}}, \hat{b}_1^{\mathrm{LS}}\right)-\frac{c}{2}\right)=0.
    \end{aligned}
\end{equation}
    
Differentiate both sides with respect to $\alpha$, and set $\alpha=0$, we have
\begin{equation}
    \label{dp1}
    A\left(p_1^*\left(\hat{b}_0^{\mathrm{LS}}, \hat{b}_1^{\mathrm{LS}}\right)-\frac{c}{2}\right)^4\left(\frac{dp_0^{FEO}(0)}{d\alpha}\right)+B\left(p_0^*\left(\hat{b}_0^{\mathrm{LS}}, \hat{b}_1^{\mathrm{LS}}\right)-\frac{c}{2}\right)^4\left(\frac{dp_1^{FEO}(0)}{d\alpha}\right)=0.
\end{equation}
From \eqref{FEO_price_pw_constraint} we have
\begin{equation}
    \label{dp2}
    \frac{dp_0^{FEO}(0)}{d\alpha}-\frac{dp_1^{FEO}(0)}{d\alpha}=-\left(p_1^*\left(\hat{b}_0^{\mathrm{LS}}, \hat{b}_1^{\mathrm{LS}}\right)-p_0^*\left(\hat{b}_0^{\mathrm{LS}}, \hat{b}_1^{\mathrm{LS}}\right)\right).
\end{equation}
Combining \eqref{dp1} and \eqref{dp2} we have
\begin{equation*}
    \frac{dp_0^{FEO}(0)}{d\alpha}=\frac{B\left(p_0^*\left(\hat{b}_0^{\mathrm{LS}}, \hat{b}_1^{\mathrm{LS}}\right)-\frac{c}{2}\right)^4}{A\left(p_1^*\left(\hat{b}_0^{\mathrm{LS}}, \hat{b}_1^{\mathrm{LS}}\right)-\frac{c}{2}\right)^4+B\left(p_0^*\left(\hat{b}_0^{\mathrm{LS}}, \hat{b}_1^{\mathrm{LS}}\right)-\frac{c}{2}\right)^4}\left(p_1^*\left(\hat{b}_0^{\mathrm{LS}}, \hat{b}_1^{\mathrm{LS}}\right)-p_0^*\left(\hat{b}_0^{\mathrm{LS}}, \hat{b}_1^{\mathrm{LS}}\right)\right).
\end{equation*}

On the other hand, EFO is written as \eqref{EFO_price_pw}:
\begin{equation}
    \label{EFO_price_pw}
    \begin{aligned}
        \max_{p_0,p_1}&\quad \left(p_0-c\right)\max\left(\hat{b}_0^{\mathrm{LS}} p_0+a_0, 0\right)+\left(p_1-c\right)\max\left(\hat{b}_1^{\mathrm{LS}} p_1+a_1, 0\right)\\
        \text{s.t.}&\quad \left| p_1-p_0\right| \leq \left(1-\alpha\right)\left| p_1^*\left(\hat{b}_0^{\mathrm{LS}}, \hat{b}_1^{\mathrm{LS}}\right)-p_0^*\left(\hat{b}_0^{\mathrm{LS}}, \hat{b}_1^{\mathrm{LS}}\right)\right|.
    \end{aligned}
\end{equation}

Since the estimated demand with no fairness $\hat{b}_0^{\mathrm{LS}} p_0^*\left(\hat{b}_0^{\mathrm{LS}}, \hat{b}_1^{\mathrm{LS}}\right)+a_0, \hat{b}_1^{\mathrm{LS}}p_1^*\left(\hat{b}_0^{\mathrm{LS}}, \hat{b}_1^{\mathrm{LS}}\right)+a_1>0$, there exists $\xi_2>0$, such that when $0\leq \alpha\leq \xi_2$, $\hat{b}_0^{\mathrm{LS}} p_0(\alpha)+a_0, \hat{b}_1^{\mathrm{LS}}p_1(\alpha)+a_1>0$. Therefore, when $0\leq \alpha\leq \xi_2$, \eqref{EFO_price_pw} is equivalent to
\begin{equation}
    \label{EFO_price_pw_2}
    \begin{aligned}
        \max_{p_0,p_1}&\quad \left(p_0-c\right)\left(\hat{b}_0^{\mathrm{LS}} p_0+a_0\right)+\left(p_1-c\right)\left(\hat{b}_1^{\mathrm{LS}} p_1+a_1\right)\\
        \text{s.t.}&\quad \left| p_1-p_0\right| \leq \left(1-\alpha\right)\left| p_1^*\left(\hat{b}_0^{\mathrm{LS}}, \hat{b}_1^{\mathrm{LS}}\right)-p_0^*\left(\hat{b}_0^{\mathrm{LS}}, \hat{b}_1^{\mathrm{LS}}\right)\right|.
    \end{aligned}
\end{equation}

Denote $p_0^{EFO}, p_1^{EFO}$ as the optimal solution to \eqref{EFO_price_pw_2}. We can also prove that
\begin{equation*}
    \label{EFO_price_pw_constraint}
    p_1^{EFO}-p_0^{EFO}=(1-\alpha)\left(p_1^*\left(\hat{b}_0^{\mathrm{LS}}, \hat{b}_1^{\mathrm{LS}}\right)-p_0^*\left(\hat{b}_0^{\mathrm{LS}}, \hat{b}_1^{\mathrm{LS}}\right)\right).
\end{equation*}
Suppose $p_1^{EFO}-p_0^{EFO}<(1-\alpha)\left(p_1^*\left(\hat{b}_0^{\mathrm{LS}}, \hat{b}_1^{\mathrm{LS}}\right)-p_0^*\left(\hat{b}_0^{\mathrm{LS}}, \hat{b}_1^{\mathrm{LS}}\right)\right)$, then either $p_1^{EFO}<p_1^*\left(\hat{b}_0^{\mathrm{LS}}, \hat{b}_1^{\mathrm{LS}}\right)$ or $p_0^{EFO}>p_0^*\left(\hat{b}_0^{\mathrm{LS}}, \hat{b}_1^{\mathrm{LS}}\right)$. Without loss of generality, suppose $p_1^{EFO}<p_1^*\left(\hat{b}_0^{\mathrm{LS}}, \hat{b}_1^{\mathrm{LS}}\right)$. Then $\exists \xi_3>0$, such that $p_1^{EFO}+\xi_3\leq p_1^*\left(\hat{b}_0^{\mathrm{LS}}, \hat{b}_1^{\mathrm{LS}}\right)$ and $p_1^{EFO}+\xi_3-p_0^{EFO}\leq (1-\alpha)\left(p_1^*\left(\hat{b}_0^{\mathrm{LS}}, \hat{b}_1^{\mathrm{LS}}\right)-p_0^*\left(\hat{b}_0^{\mathrm{LS}}, \hat{b}_1^{\mathrm{LS}}\right)\right)$. This implies that $(p_0^{EFO},p_1^{EFO}+\xi_3)$ is a feasible solution to \eqref{EFO_price_pw_2}.

Since $p_1^*\left(\hat{b}_0^{\mathrm{LS}}, \hat{b}_1^{\mathrm{LS}}\right)$ is the maximum point of quadratic function $\left(p_1-c\right)\left(\hat{b}_1^{\mathrm{LS}}p_1+a_1\right)$, and $p_1^{EFO}<p_1^{EFO}+\xi_3<p_1^*\left(\hat{b}_0^{\mathrm{LS}}, \hat{b}_1^{\mathrm{LS}}\right)$, 
\begin{align*}
    &\left(p_0^{EFO}-c\right)\left(\hat{b}_0^{\mathrm{LS}} p_0^{EFO}+a_0\right)+\left(p_1^{EFO}+\xi_3-c\right)\left(\hat{b}_1^{\mathrm{LS}}\left( p_1^{EFO}+\xi_3\right)+a_1\right)\\
    &>\left(p_0^{EFO}-c\right)\left(\hat{b}_0^{\mathrm{LS}} p_0^{EFO}+a_0\right)+\left(p_1^{EFO}-c\right)\left(\hat{b}_1^{\mathrm{LS}} p_1^{EFO}+a_1\right),
\end{align*}
which contradicts to the definition of $(p_0^{EFO},p_1^{EFO})$.

Therefore, $p_1^{EFO}-p_0^{EFO}=(1-\alpha)\left(p_1^*\left(\hat{b}_0^{\mathrm{LS}}, \hat{b}_1^{\mathrm{LS}}\right)-p_0^*\left(\hat{b}_0^{\mathrm{LS}}, \hat{b}_1^{\mathrm{LS}}\right)\right)$. So $p_0^{EFO}$ is the optimal solution to the unconstrained optimization problem
\begin{align*}
    \max_{p_0} \quad &\left(p_0-c\right)\left(\hat{b}_0^{\mathrm{LS}} p_0+a_0\right)+\\
    &\left(p_0+(1-\alpha)\left(p_1^*\left(\hat{b}_0^{\mathrm{LS}}, \hat{b}_1^{\mathrm{LS}}\right)-p_0^*\left(\hat{b}_0^{\mathrm{LS}}, \hat{b}_1^{\mathrm{LS}}\right)\right)-c\right)\cdot\\
    &\left(\hat{b}_1^{\mathrm{LS}} \left(p_0+(1-\alpha)\left(p_1^*\left(\hat{b}_0^{\mathrm{LS}}, \hat{b}_1^{\mathrm{LS}}\right)-p_0^*\left(\hat{b}_0^{\mathrm{LS}}, \hat{b}_1^{\mathrm{LS}}\right)\right)\right)+a_1\right).
\end{align*}
The objective function is a quadratic function, so we can get the closed form solution of $p_0^{EFO}$ and $p_1^{EFO}$ respectively,
\begin{align*}
    &p_0^{EFO}=\left(1-\alpha\right)p_0^*\left(\hat{b}_0^{\mathrm{LS}}, \hat{b}_1^{\mathrm{LS}}\right)+\alpha\left(-\frac{a_0+a_1}{2\left(\hat{b}_0^{\mathrm{LS}}+\hat{b}_1^{\mathrm{LS}}\right)}+\frac{c}{2}\right),\\
    &p_1^{EFO}=\left(1-\alpha\right)p_1^*\left(\hat{b}_0^{\mathrm{LS}}, \hat{b}_1^{\mathrm{LS}}\right)+\alpha\left(-\frac{a_0+a_1}{2\left(\hat{b}_0^{\mathrm{LS}}+\hat{b}_1^{\mathrm{LS}}\right)}+\frac{c}{2}\right).
\end{align*}

Taking derivative for both sides and set $\alpha=0$, we have
\begin{equation*}
    \frac{dp_0^{EFO}(0)}{d\alpha}=\frac{\hat{b}_1^{\mathrm{LS}}}{\hat{b}_0^{\mathrm{LS}}+\hat{b}_1^{\mathrm{LS}}}\left(p_1^*\left(\hat{b}_0^{\mathrm{LS}}, \hat{b}_1^{\mathrm{LS}}\right)-p_0^*\left(\hat{b}_0^{\mathrm{LS}}, \hat{b}_1^{\mathrm{LS}}\right)\right).
\end{equation*}

Next, we show that $\tau_b>\tau_p$ implies $\frac{dp_0^{FEO}(0)}{d\alpha}<\frac{dp_0^{EFO}(0)}{d\alpha}$.
\begin{align*}
    &~\frac{dp_0^{FEO}(0)}{d\alpha}<\frac{dp_0^{EFO}(0)}{d\alpha},\\
    \Leftrightarrow&~\left(\frac{B\left(p_0^*\left(\hat{b}_0^{\mathrm{LS}}, \hat{b}_1^{\mathrm{LS}}\right)-\frac{c}{2}\right)^4}{A\left(p_1^*\left(\hat{b}_0^{\mathrm{LS}}, \hat{b}_1^{\mathrm{LS}}\right)-\frac{c}{2}\right)^4+B\left(p_0^*\left(\hat{b}_0^{\mathrm{LS}}, \hat{b}_1^{\mathrm{LS}}\right)-\frac{c}{2}\right)^4}-\frac{\hat{b}_1^{\mathrm{LS}}}{\hat{b}_0^{\mathrm{LS}}+\hat{b}_1^{\mathrm{LS}}}\right)\left(p_1^*\left(\hat{b}_0^{\mathrm{LS}}, \hat{b}_1^{\mathrm{LS}}\right)-p_0^*\left(\hat{b}_0^{\mathrm{LS}}, \hat{b}_1^{\mathrm{LS}}\right)\right)<0,\\
    \Leftrightarrow&~ \frac{B\left(p_0^*\left(\hat{b}_0^{\mathrm{LS}}, \hat{b}_1^{\mathrm{LS}}\right)-\frac{c}{2}\right)^4}{A\left(p_1^*\left(\hat{b}_0^{\mathrm{LS}}, \hat{b}_1^{\mathrm{LS}}\right)-\frac{c}{2}\right)^4+B\left(p_0^*\left(\hat{b}_0^{\mathrm{LS}}, \hat{b}_1^{\mathrm{LS}}\right)-\frac{c}{2}\right)^4}<\frac{\hat{b}_1^{\mathrm{LS}}}{\hat{b}_0^{\mathrm{LS}}+\hat{b}_1^{\mathrm{LS}}},\\
    \Leftrightarrow&~\frac{A\left(p_1^*\left(\hat{b}_0^{\mathrm{LS}}, \hat{b}_1^{\mathrm{LS}}\right)-\frac{c}{2}\right)^4}{B\left(p_0^*\left(\hat{b}_0^{\mathrm{LS}}, \hat{b}_1^{\mathrm{LS}}\right)-\frac{c}{2}\right)^4}>\frac{\hat{b}_0^{\mathrm{LS}}}{\hat{b}_1^{\mathrm{LS}}},\\
    \Leftrightarrow&~\frac{\left(p_1^*\left(\hat{b}_0^{\mathrm{LS}}, \hat{b}_1^{\mathrm{LS}}\right)-\frac{c}{2}\right)^4}{\left(p_0^*\left(\hat{b}_0^{\mathrm{LS}}, \hat{b}_1^{\mathrm{LS}}\right)-\frac{c}{2}\right)^4}>\frac{a_1\frac{1}{n_1}\sum_{i:g^{(i)}=1}p^{(i)^2}\left(p_1^*\left(\hat{b}_0^{\mathrm{LS}}, \hat{b}_1^{\mathrm{LS}}\right)-\frac{c}{2}\right)}{a_0\frac{1}{n_0}\sum_{i:g^{(i)}=0}p^{(i)^2}\left(p_0^*\left(\hat{b}_0^{\mathrm{LS}}, \hat{b}_1^{\mathrm{LS}}\right)-\frac{c}{2}\right)},\\
    \Leftrightarrow&~\frac{\left(p_1^*\left(\hat{b}_0^{\mathrm{LS}}, \hat{b}_1^{\mathrm{LS}}\right)-\frac{c}{2}\right)}{\left(p_0^*\left(\hat{b}_0^{\mathrm{LS}}, \hat{b}_1^{\mathrm{LS}}\right)-\frac{c}{2}\right)}>\left(\frac{a_1\frac{1}{n_1}\sum_{i:g^{(i)}=1}p^{(i)^2}}{a_0\frac{1}{n_0}\sum_{i:g^{(i)}=0}p^{(i)^2}}\right)^{\frac{1}{3}}\\
    \Leftrightarrow&~\frac{\hat{b}_0^{\mathrm{LS}}}{\hat{b}_1^{\mathrm{LS}}}> \tau_p.
\end{align*}
Therefore, there exists $\xi>0$, when the level of price fairness constraints $\alpha$ satisfies $0<\alpha<\xi$, $p_0^{FEO}<p_0^{EFO}$. This also implies $p_1^{FEO}<p_1^{EFO}$.

Notice that $\calS$ and $\calW$ are decreasing with respect to $p_0$ and $p_1$. Therefore, $\calS^{FEO}>\calS^{EFO}$, $\calW^{FEO}>\calW^{EFO}$. 

Similarly, we can get that
\begin{equation*}
    \frac{dp_0^{FEO}(0)}{d\alpha}>\frac{dp_0^{EFO}(0)}{d\alpha}\Leftrightarrow \tau_b<\tau_p.
\end{equation*}
Therefore, there exists $\xi>0$, when the level of price fairness constraints $\alpha$ satisfies $0<\alpha<\xi$, $p_0^{FEO}>p_0^{EFO}$. This also implies $p_1^{FEO}>p_1^{EFO}$, $\calS^{FEO}<\calS^{EFO}$, and $\calW^{FEO}<\calW^{EFO}$.

%%%%%%%%%%%%%%%%%%%%%%%%%%%%%%%%%%%%%%%%%%%%%%%
If $\tau_b=\tau_p$, $ \frac{dp_0^{FEO}(0)}{d\alpha}=\frac{dp_0^{EFO}(0)}{d\alpha}$. In this case we prove that $\frac{d^2p_0^{FEO}(0)}{d\alpha^2}>0$ and $\frac{d^2p_0^{EFO}(0)}{d\alpha^2}=0$. For FEO, take the second-order derivative with respect to $\alpha$ for equation \eqref{paritywise_price_firstorder} and set $\alpha=0$, we have
\begin{equation*}
    \begin{aligned}
        &A\left(p_1^*\left(\hat{b}_0^{\mathrm{LS}},\hat{b}_1^{\mathrm{LS}}\right)-\frac{c}{2}\right)^4\frac{d^2p_0^{FEO}\left(0\right)}{d\alpha^2}+6A\left(p_1^*\left(\hat{b}_0^{\mathrm{LS}},\hat{b}_1^{\mathrm{LS}}\right)-\frac{c}{2}\right)^3\frac{dp_0^{FEO}\left(0\right)}{d\alpha}\frac{dp_1^{FEO}\left(0\right)}{d\alpha}\\
        &+B\left(p_0^*\left(\hat{b}_0^{\mathrm{LS}},\hat{b}_1^{\mathrm{LS}}\right)-\frac{c}{2}\right)^4\frac{d^2p_1^{FEO}\left(0\right)}{d\alpha^2}+6B\left(p_0^*\left(\hat{b}_0^{\mathrm{LS}},\hat{b}_1^{\mathrm{LS}}\right)-\frac{c}{2}\right)^3\frac{dp_0^{FEO}\left(0\right)}{d\alpha}\frac{dp_1^{FEO}\left(0\right)}{d\alpha}=0.
    \end{aligned}
\end{equation*}
From equation \eqref{FEO_price_pw_constraint} we have $\frac{d^2p_0^{FEO}\left(0\right)}{d\alpha^2}=\frac{d^2p_1^{FEO}\left(0\right)}{d\alpha^2}$. Notice that $\frac{dp_0^{FEO}\left(0\right)}{d\alpha}>0$ and $\frac{dp_1^{FEO}\left(0\right)}{d\alpha}<0$, so we have
\begin{equation*}
    \begin{aligned}
        &A\left(p_1^*\left(\hat{b}_0^{\mathrm{LS}},\hat{b}_1^{\mathrm{LS}}\right)-\frac{c}{2}\right)^4\frac{d^2p_0^{FEO}\left(0\right)}{d\alpha^2}+B\left(p_0^*\left(\hat{b}_0^{\mathrm{LS}},\hat{b}_1^{\mathrm{LS}}\right)-\frac{c}{2}\right)^4\frac{d^2p_0^{FEO}\left(0\right)}{d\alpha^2}\\
        =&-6A\left(p_1^*\left(\hat{b}_0^{\mathrm{LS}},\hat{b}_1^{\mathrm{LS}}\right)-\frac{c}{2}\right)^3\frac{dp_0^{FEO}\left(0\right)}{d\alpha}\frac{dp_1^{FEO}\left(0\right)}{d\alpha}-6B\left(p_0^*\left(\hat{b}_0^{\mathrm{LS}},\hat{b}_1^{\mathrm{LS}}\right)-\frac{c}{2}\right)^3\frac{dp_0^{FEO}\left(0\right)}{d\alpha}\frac{dp_1^{FEO}\left(0\right)}{d\alpha}
        >0.
    \end{aligned}
\end{equation*}
Therefore, $\frac{d^2p_0^{FEO}(0)}{d\alpha^2}>0$. On the other hand, since $p_0^{EFO}=\left(1-\alpha\right)p_0^*\left(\hat{b}_0^{\mathrm{LS}}, \hat{b}_1^{\mathrm{LS}}\right)+\alpha\left(-\frac{a_0+a_1}{2\left(\hat{b}_0^{\mathrm{LS}}+\hat{b}_1^{\mathrm{LS}}\right)}+\frac{c}{2}\right)$ for $0\leq \alpha \leq \xi_2$, we have $\frac{d^2p_0^{EFO}(0)}{d\alpha^2}=0$. Then we have $\frac{d^2p_0^{FEO}(0)}{d\alpha^2}>\frac{d^2p_0^{EFO}(0)}{d\alpha^2}$. By comparing the second-order Taylor expansion of $p_0^{FEO}$ and $p_1^{EFO}$, there exists $\xi>0$, when the level of price fairness constraints $\alpha$ satisfies $0<\alpha<\xi$, $p_0^{FEO}>p_0^{EFO}$. This also implies $p_1^{FEO}>p_1^{EFO}$, $\calS^{FEO}<\calS^{EFO}$, and $\calW^{FEO}<\calW^{EFO}$.

Next, we prove the corresponding results for parity-wise demand fairness. Similar to parity-wise price fairness, we first calculate the first order derivatives of the optimal prices with respect to $\alpha$.

For parity-wise demand fairness, FEO and EFO are written as \eqref{FEO_demand_pw} and \eqref{EFO_demand_pw}:
\begin{equation}
    \label{FEO_demand_pw}
    \begin{aligned}
        \min_{\hat{b}_0,\hat{b}_1}&\quad\frac{1}{n_0}\sum_{i:g^{(i)}=0}\left(a_0+\hat{b}_0 p^{(i)}-d^{(i)}\right)^2+\frac{1}{n_1}\sum_{i:g^{(i)}=1}\left(a_1+\hat{b}_1 p^{(i)}-d^{(i)}\right)^2\\
        \text{s.t.}&\quad \left| \frac{1}{a_1}\left( a_1+ \hat{b}_1^{\mathrm{LS}} p_1^*\left(\hat{b}_0, \hat{b}_1\right)\right)-\frac{1}{a_0}\left( a_0+b_0^{\mathrm{LS}} p_0^*\left(\hat{b}_0, \hat{b}_1\right)\right)\right|
        \\&\qquad\leq \left(1-\alpha\right)\left| \frac{1}{a_1}\left( a_1+ \hat{b}_1^{\mathrm{LS}} p_1^*\left(\hat{b}_0^{\mathrm{LS}}, \hat{b}_1^{\mathrm{LS}}\right)\right)-\frac{1}{a_0}\left( a_0+b_0^{\mathrm{LS}} p_0^*\left(\hat{b}_0^{\mathrm{LS}}, \hat{b}_1^{\mathrm{LS}}\right)\right)\right|.
    \end{aligned}
\end{equation}
and
\begin{equation}
    \label{EFO_demand_pw}
    \begin{aligned}
        \max_{p_0,p_1}&\quad \left(p_0-c\right)\max\left(\hat{b}_0^{\mathrm{LS}} p_0+a_0, 0\right)+\left(p_1-c\right)\max\left(\hat{b}_1^{\mathrm{LS}} p_1+a_1, 0\right)\\
        \text{s.t.}&\quad \left| \frac{1}{a_1}\left( a_1+\hat{b}_1^{\mathrm{LS}}p_1\right)-\frac{1}{a_0}\left( a_0+\hat{b}_0^{\mathrm{LS}}p_0\right)\right| \\&\qquad \leq \left(1-\alpha\right)\left| \frac{1}{a_1}\left( a_1+\hat{b}_1^{\mathrm{LS}}p_1^*\left(\hat{b}_0^{\mathrm{LS}}, \hat{b}_1^{\mathrm{LS}}\right)\right)-\frac{1}{a_0}\left( a_0+\hat{b}_0^{\mathrm{LS}}p_0^*\left(\hat{b}_0^{\mathrm{LS}}, \hat{b}_1^{\mathrm{LS}}\right)\right)\right|.
    \end{aligned}
\end{equation}

For FEO, using \eqref{trans} and changing the decision variables in \eqref{FEO_demand_pw} to $p_0, p_1$ instead of $\hat{b}_0, \hat{b}_1$, we can get the equivalent problem
\begin{equation}
    \label{FEO_demand_pw_2}
    \begin{aligned}
        \min_{p_0, p_1}&\quad A\left(\frac{1}{p_0-\frac{c}{2}}-\frac{1}{p_0^*\left(\hat{b}_0^{\mathrm{LS}},\hat{b}_1^{\mathrm{LS}}\right)-\frac{c}{2}}\right)^2+B\left(\frac{1}{p_1-\frac{c}{2}}-\frac{1}{p_1^*\left(\hat{b}_0^{\mathrm{LS}},\hat{b}_1^{\mathrm{LS}}\right)-\frac{c}{2}}\right)^2\\
        \text{s.t.}&\quad \left| \frac{\hat{b}_1^{\mathrm{LS}}}{a_1}p_1-\frac{\hat{b}_0^{\mathrm{LS}}}{a_0}p_0 \right| \leq \left(1-\alpha\right)\left| \frac{\hat{b}_1^{\mathrm{LS}}}{a_1}p_1^*\left(\hat{b}_0^{\mathrm{LS}}, \hat{b}_1^{\mathrm{LS}}\right)-\frac{\hat{b}_0^{\mathrm{LS}}}{a_0}p_0^*\left(\hat{b}_0^{\mathrm{LS}}, \hat{b}_1^{\mathrm{LS}}\right)\right|.
    \end{aligned}
\end{equation}

Similar to price fairness, denote $p_0^{FEO}, p_1^{FEO}$ as the optimal solution to \eqref{FEO_demand_pw_2}. We can prove that
\begin{equation}
    \label{FEO_demand_pw_constraint}
    \frac{\hat{b}_1^{\mathrm{LS}}}{a_1}p_1^{FEO}-\frac{\hat{b}_0^{\mathrm{LS}}}{a_0}p_0^{FEO}  = \left(1-\alpha\right)\left( \frac{\hat{b}_1^{\mathrm{LS}}}{a_1}p_1^*\left(\hat{b}_0^{\mathrm{LS}}, \hat{b}_1^{\mathrm{LS}}\right)-\frac{\hat{b}_0^{\mathrm{LS}}}{a_0}p_0^*\left(\hat{b}_0^{\mathrm{LS}}, \hat{b}_1^{\mathrm{LS}}\right)\right).
\end{equation}
Suppose $\frac{\hat{b}_1^{\mathrm{LS}}}{a_1}p_1^{FEO}-\frac{\hat{b}_0^{\mathrm{LS}}}{a_0}p_0^{FEO} < \left(1-\alpha\right)\left( \frac{\hat{b}_1^{\mathrm{LS}}}{a_1}p_1^*\left(\hat{b}_0^{\mathrm{LS}}, \hat{b}_1^{\mathrm{LS}}\right)-\frac{\hat{b}_0^{\mathrm{LS}}}{a_0}p_0^*\left(\hat{b}_0^{\mathrm{LS}}, \hat{b}_1^{\mathrm{LS}}\right)\right)$, then either $p_1^{FEO}>p_1^*\left(\hat{b}_0^{\mathrm{LS}}, \hat{b}_1^{\mathrm{LS}}\right)$ or $p_0^{FEO}<p_0^*\left(\hat{b}_0^{\mathrm{LS}}, \hat{b}_1^{\mathrm{LS}}\right)$. Without loss of generality, suppose $p_1^{FEO}>p_1^*\left(\hat{b}_0^{\mathrm{LS}}, \hat{b}_1^{\mathrm{LS}}\right)$. Then $\exists \xi_4>0$, such that $p_1^{EFO}-\xi_4\geq p_1^*\left(\hat{b}_0^{\mathrm{LS}}, \hat{b}_1^{\mathrm{LS}}\right)$ and $\frac{\hat{b}_1^{\mathrm{LS}}}{a_1}\left(p_1^{FEO}-\xi_4\right)-\frac{\hat{b}_0^{\mathrm{LS}}}{a_0}p_0^{FEO} < \left(1-\alpha\right)\left( \frac{\hat{b}_1^{\mathrm{LS}}}{a_1}p_1^*\left(\hat{b}_0^{\mathrm{LS}}, \hat{b}_1^{\mathrm{LS}}\right)-\frac{\hat{b}_0^{\mathrm{LS}}}{a_0}p_0^*\left(\hat{b}_0^{\mathrm{LS}}, \hat{b}_1^{\mathrm{LS}}\right)\right)$. This implies that $(p_0^{FEO},p_1^{FEO}-\xi_4)$ is a feasible solution to \eqref{FEO_demand_pw_2}.

However,
\begin{align*}
    &A\left(\frac{1}{p_0^{FEO}-\frac{c}{2}}-\frac{1}{p_0^*\left(\hat{b}_0^{\mathrm{LS}},\hat{b}_1^{\mathrm{LS}}\right)-\frac{c}{2}}\right)^2+B\left(\frac{1}{p_1^{FEO}-\frac{c}{2}}-\frac{1}{p_1^*\left(\hat{b}_0^{\mathrm{LS}},\hat{b}_1^{\mathrm{LS}}\right)-\frac{c}{2}}\right)^2\\
    &\qquad > A\left(\frac{1}{p_0^{FEO}-\frac{c}{2}}-\frac{1}{p_0^*\left(\hat{b}_0^{\mathrm{LS}},\hat{b}_1^{\mathrm{LS}}\right)-\frac{c}{2}}\right)^2+B\left(\frac{1}{p_1^{FEO}-\xi_4-\frac{c}{2}}-\frac{1}{p_1^*\left(\hat{b}_0^{\mathrm{LS}},\hat{b}_1^{\mathrm{LS}}\right)-\frac{c}{2}}\right)^2,
\end{align*}
which contradicts to the definition of $(p_0^{FEO}, p_1^{FEO})$.

Therefore, $\frac{\hat{b}_1^{\mathrm{LS}}}{a_1}p_1^{FEO}-\frac{\hat{b}_0^{\mathrm{LS}}}{a_0}p_0^{FEO}  = \left(1-\alpha\right)\left( \frac{\hat{b}_1^{\mathrm{LS}}}{a_1}p_1^*\left(\hat{b}_0^{\mathrm{LS}}, \hat{b}_1^{\mathrm{LS}}\right)-\frac{\hat{b}_0^{\mathrm{LS}}}{a_0}p_0^*\left(\hat{b}_0^{\mathrm{LS}}, \hat{b}_1^{\mathrm{LS}}\right)\right)$. So $p_0^{FEO}$ is the optimal solution to the unconstrained optimization problem
\begin{align*}
    \min_{p_0}\quad &A\left(\frac{1}{p_0-\frac{c}{2}}-\frac{1}{p_0^*\left(\hat{b}_0^{\mathrm{LS}},\hat{b}_1^{\mathrm{LS}}\right)-\frac{c}{2}}\right)^2\\&+B\left(\frac{1}{\frac{a_1}{\hat{b}_1^{\mathrm{LS}}}\left(\frac{\hat{b}_0^{\mathrm{LS}}}{a_0}p_0^{FEO}+\left(1-\alpha\right)\left( \frac{\hat{b}_1^{\mathrm{LS}}}{a_1}p_1^*\left(\hat{b}_0^{\mathrm{LS}}, \hat{b}_1^{\mathrm{LS}}\right)-\frac{\hat{b}_0^{\mathrm{LS}}}{a_0}p_0^*\left(\hat{b}_0^{\mathrm{LS}}, \hat{b}_1^{\mathrm{LS}}\right)\right)\right)-\frac{c}{2}}-\frac{1}{p_1^*\left(\hat{b}_0^{\mathrm{LS}},\hat{b}_1^{\mathrm{LS}}\right)-\frac{c}{2}}\right)^2.
\end{align*}

Apply the first-order condition to $p_0^{FEO}$, we have
\begin{align*}
    &2A\left(\frac{1}{p_0^{FEO}-\frac{c}{2}}-\frac{1}{p_0^*\left(\hat{b}_0^{\mathrm{LS}}, \hat{b}_1^{\mathrm{LS}}\right)-\frac{c}{2}}\right)\left(-\frac{1}{\left(p_0^{FEO}-\frac{c}{2}\right)^2}\right)+\\
    &2B\left(\frac{1}{\frac{a_1}{\hat{b}_1^{\mathrm{LS}}}\left(\frac{\hat{b}_0^{\mathrm{LS}}}{a_0}p_0^{FEO}+\left(1-\alpha\right)\left( \frac{\hat{b}_1^{\mathrm{LS}}}{a_1}p_1^*\left(\hat{b}_0^{\mathrm{LS}}, \hat{b}_1^{\mathrm{LS}}\right)-\frac{\hat{b}_0^{\mathrm{LS}}}{a_0}p_0^*\left(\hat{b}_0^{\mathrm{LS}}, \hat{b}_1^{\mathrm{LS}}\right)\right)\right)-\frac{c}{2}}-\frac{1}{p_1^*\left(\hat{b}_0^{\mathrm{LS}},\hat{b}_1^{\mathrm{LS}}\right)-\frac{c}{2}}\right)\cdot\\
    &\left(-\frac{1}{\left(\frac{a_1}{\hat{b}_1^{\mathrm{LS}}}\left(\frac{\hat{b}_0^{\mathrm{LS}}}{a_0}p_0^{FEO}+\left(1-\alpha\right)\left( \frac{\hat{b}_1^{\mathrm{LS}}}{a_1}p_1^*\left(\hat{b}_0^{\mathrm{LS}}, \hat{b}_1^{\mathrm{LS}}\right)-\frac{\hat{b}_0^{\mathrm{LS}}}{a_0}p_0^*\left(\hat{b}_0^{\mathrm{LS}}, \hat{b}_1^{\mathrm{LS}}\right)\right)\right)-\frac{c}{2}\right)^2}\right)\frac{a_1\hat{b}_0^{\mathrm{LS}}}{a_0\hat{b}_1^{\mathrm{LS}}}=0,
\end{align*}
which is equivalent to
\begin{equation*}
    A\left(\frac{1}{p_0^{FEO}-\frac{c}{2}}-\frac{1}{p_0^*\left(\hat{b}_0^{\mathrm{LS}}, \hat{b}_1^{\mathrm{LS}}\right)-\frac{c}{2}}\right)\frac{1}{\left(p_0^{FEO}-\frac{c}{2}\right)^2}+B\left(\frac{1}{p_1^{FEO}-\frac{c}{2}}-\frac{1}{p_1^*\left(\hat{b}_0^{\mathrm{LS}}, \hat{b}_1^{\mathrm{LS}}\right)-\frac{c}{2}}\right)\frac{1}{\left(p_1^{FEO}-\frac{c}{2}\right)^2}\frac{a_1\hat{b}_0^{\mathrm{LS}}}{a_0\hat{b}_1^{\mathrm{LS}}}=0.
\end{equation*}
since $p_g^{FEO}-\frac{c}{2}\neq0$ and $p_g^*\left(\hat{b}_0^{\mathrm{LS}}, \hat{b}_1^{\mathrm{LS}}\right)-\frac{c}{2}\neq0$,
\begin{align}
\label{paritywise_demand_firstorder}
    &A\left(p_0^*\left(\hat{b}_0^{\mathrm{LS}}, \hat{b}_1^{\mathrm{LS}}\right)-p_0^{FEO}\right)\left(p_1^{FEO}-\frac{c}{2}\right)^3
    +B\left(p_1^*\left(\hat{b}_0^{\mathrm{LS}}, \hat{b}_1^{\mathrm{LS}}\right)-p_1^{FEO}\right)\left(p_0^{FEO}-\frac{c}{2}\right)^3=0.
\end{align}
Differentiate both sides with respect to $\alpha$, and set $\alpha=0$, we have
\begin{equation}
    \label{dp3}
    A\left(p_1^*\left(\hat{b}_0^{\mathrm{LS}}, \hat{b}_1^{\mathrm{LS}}\right)-\frac{c}{2}\right)^3\left(\frac{dp_0^{FEO}(0)}{d\alpha}\right)+B\left(p_0^*\left(\hat{b}_0^{\mathrm{LS}}, \hat{b}_1^{\mathrm{LS}}\right)-\frac{c}{2}\right)^3\left(\frac{dp_1^{FEO}(0)}{d\alpha}\right)=0.
\end{equation}
%%%%%%%%%%%%%%%%%%%%%%%%%%%%%%%%%%%%%%%%%%%%%%
From \eqref{FEO_demand_pw_constraint} we have
\begin{equation}
    \label{dp4}
    \frac{\hat{b}_1^{\mathrm{LS}}}{a_1}\frac{dp_1^{FEO}\left(0\right)}{d\alpha}-\frac{\hat{b}_0^{\mathrm{LS}}}{a_0}\frac{dp_0^{FEO}\left(0\right)}{d\alpha}  = -\left( \frac{\hat{b}_1^{\mathrm{LS}}}{a_1}p_1^*\left(\hat{b}_0^{\mathrm{LS}}, \hat{b}_1^{\mathrm{LS}}\right)-\frac{\hat{b}_0^{\mathrm{LS}}}{a_0}p_0^*\left(\hat{b}_0^{\mathrm{LS}}, \hat{b}_1^{\mathrm{LS}}\right)\right),
\end{equation}
so combining \eqref{dp3} and \eqref{dp4} we have
\begin{equation*}
    \frac{dp_0^{FEO}\left(0\right)}{d\alpha}=\frac{B\left(p_0^*\left(\hat{b}_0^{\mathrm{LS}}, \hat{b}_1^{\mathrm{LS}}\right)-\frac{c}{2}\right)^2}{A\left(p_1^*\left(\hat{b}_0^{\mathrm{LS}}, \hat{b}_1^{\mathrm{LS}}\right)-\frac{c}{2}\right)^2+B\left(p_0^*\left(\hat{b}_0^{\mathrm{LS}}, \hat{b}_1^{\mathrm{LS}}\right)-\frac{c}{2}\right)^2}\frac{\left(p_0^*\left(\hat{b}_0^{\mathrm{LS}}, \hat{b}_1^{\mathrm{LS}}\right)-p_1^*\left(\hat{b}_0^{\mathrm{LS}}, \hat{b}_1^{\mathrm{LS}}\right)\right)}{p_1^*\left(\hat{b}_0^{\mathrm{LS}}, \hat{b}_1^{\mathrm{LS}}\right)-\frac{c}{2}}\frac{c}{2}
\end{equation*}
for the FEO problem.

For the EFO problem, since the estimated demand with no fairness $\hat{b}_0^{\mathrm{LS}} p_0^*\left(\hat{b}_0^{\mathrm{LS}}, \hat{b}_1^{\mathrm{LS}}\right)+a_0, \hat{b}_1^{\mathrm{LS}}p_1^*\left(\hat{b}_0^{\mathrm{LS}}, \hat{b}_1^{\mathrm{LS}}\right)+a_1>0$, there exists $\xi_2>0$, such that when $0\leq \alpha\leq \xi_5$, $\hat{b}_0^{\mathrm{LS}} p_0(\alpha)+a_0, \hat{b}_1^{\mathrm{LS}}p_1(\alpha)+a_1>0$. Therefore, when $0\leq \alpha\leq \xi_5$, \eqref{EFO_price_pw} is equivalent to
\begin{equation}
    \label{EFO_demand_pw_2}
    \begin{aligned}
        \max_{p_0,p_1}&\quad \left(p_0-c\right)\left(\hat{b}_0^{\mathrm{LS}} p_0+a_0\right)+\left(p_1-c\right)\left(\hat{b}_1^{\mathrm{LS}} p_1+a_1\right)\\
        \text{s.t.}&\quad \left| \frac{1}{a_1}\left( a_1+\hat{b}_1^{\mathrm{LS}}p_1\right)-\frac{1}{a_0}\left( a_0+\hat{b}_0^{\mathrm{LS}}p_0\right)\right| \\&\qquad \leq \left(1-\alpha\right)\left| \frac{1}{a_1}\left( a_1+\hat{b}_1^{\mathrm{LS}}p_1^*\left(\hat{b}_0^{\mathrm{LS}}, \hat{b}_1^{\mathrm{LS}}\right)\right)-\frac{1}{a_0}\left( a_0+\hat{b}_0^{\mathrm{LS}}p_0^*\left(\hat{b}_0^{\mathrm{LS}}, \hat{b}_1^{\mathrm{LS}}\right)\right)\right|.
    \end{aligned}
\end{equation}

Denote $p_0^{EFO}, p_1^{EFO}$ as the optimal solution to \eqref{EFO_demand_pw_2}. We can also prove that
\begin{equation*}
    \label{EFO_demand_pw_constraint}
    \frac{\hat{b}_1^{\mathrm{LS}}}{a_1}p_1^{EFO}-\frac{\hat{b}_0^{\mathrm{LS}}}{a_0}p_0^{EFO}  = \left(1-\alpha\right)\left( \frac{\hat{b}_1^{\mathrm{LS}}}{a_1}p_1^*\left(\hat{b}_0^{\mathrm{LS}}, \hat{b}_1^{\mathrm{LS}}\right)-\frac{\hat{b}_0^{\mathrm{LS}}}{a_0}p_0^*\left(\hat{b}_0^{\mathrm{LS}}, \hat{b}_1^{\mathrm{LS}}\right)\right).
\end{equation*}
Suppose $\frac{\hat{b}_1^{\mathrm{LS}}}{a_1}p_1^{EFO}-\frac{\hat{b}_0^{\mathrm{LS}}}{a_0}p_0^{EFO} < \left(1-\alpha\right)\left( \frac{\hat{b}_1^{\mathrm{LS}}}{a_1}p_1^*\left(\hat{b}_0^{\mathrm{LS}}, \hat{b}_1^{\mathrm{LS}}\right)-\frac{\hat{b}_0^{\mathrm{LS}}}{a_0}p_0^*\left(\hat{b}_0^{\mathrm{LS}}, \hat{b}_1^{\mathrm{LS}}\right)\right)$, then either $p_1^{EFO}>p_1^*\left(\hat{b}_0^{\mathrm{LS}}, \hat{b}_1^{\mathrm{LS}}\right)$ or $p_0^{EFO}<p_0^*\left(\hat{b}_0^{\mathrm{LS}}, \hat{b}_1^{\mathrm{LS}}\right)$. Without loss of generality, suppose $p_1^{EFO}>p_1^*\left(\hat{b}_0^{\mathrm{LS}}, \hat{b}_1^{\mathrm{LS}}\right)$. Then $\exists \xi_6>0$, such that $p_1^{EFO}-\xi_6\geq p_1^*\left(\hat{b}_0^{\mathrm{LS}}, \hat{b}_1^{\mathrm{LS}}\right)$ and $\frac{\hat{b}_1^{\mathrm{LS}}}{a_1}\left(p_1^{EFO}-\xi_6\right)-\frac{\hat{b}_0^{\mathrm{LS}}}{a_0}p_0^{EFO} < \left(1-\alpha\right)\left( \frac{\hat{b}_1^{\mathrm{LS}}}{a_1}p_1^*\left(\hat{b}_0^{\mathrm{LS}}, \hat{b}_1^{\mathrm{LS}}\right)-\frac{\hat{b}_0^{\mathrm{LS}}}{a_0}p_0^*\left(\hat{b}_0^{\mathrm{LS}}, \hat{b}_1^{\mathrm{LS}}\right)\right)$. This implies that $(p_0^{EFO},p_1^{EFO}-\xi_6)$ is a feasible solution to \eqref{EFO_price_pw_2}.

Since $p_1^*\left(\hat{b}_0^{\mathrm{LS}}, \hat{b}_1^{\mathrm{LS}}\right)$ is the maximum point of quadratic function $\left(p_1-c\right)\left(\hat{b}_1^{\mathrm{LS}}p_1+a_1\right)$, and $p_1^{EFO}>p_1^{EFO}-\xi_6>p_1^*\left(\hat{b}_0^{\mathrm{LS}}, \hat{b}_1^{\mathrm{LS}}\right)$, 
\begin{align*}
    &\left(p_0^{EFO}-c\right)\left(\hat{b}_0^{\mathrm{LS}} p_0^{EFO}+a_0\right)+\left(p_1^{EFO}-\xi_6-c\right)\left(\hat{b}_1^{\mathrm{LS}}\left( p_1^{EFO}-\xi_6\right)+a_1\right)\\
    &>\left(p_0^{EFO}-c\right)\left(\hat{b}_0^{\mathrm{LS}} p_0^{EFO}+a_0\right)+\left(p_1^{EFO}-c\right)\left(\hat{b}_1^{\mathrm{LS}} p_1^{EFO}+a_1\right),
\end{align*}
which contradicts to the definition of $(p_0^{EFO},p_1^{EFO})$.

Therefore, $\frac{\hat{b}_1^{\mathrm{LS}}}{a_1}p_1^{EFO}-\frac{\hat{b}_0^{\mathrm{LS}}}{a_0}p_0^{EFO} = \left(1-\alpha\right)\left( \frac{\hat{b}_1^{\mathrm{LS}}}{a_1}p_1^*\left(\hat{b}_0^{\mathrm{LS}}, \hat{b}_1^{\mathrm{LS}}\right)-\frac{\hat{b}_0^{\mathrm{LS}}}{a_0}p_0^*\left(\hat{b}_0^{\mathrm{LS}}, \hat{b}_1^{\mathrm{LS}}\right)\right)$. So $p_0^{EFO}$ is the optimal solution to the unconstrained optimization problem
\begin{align*}
    \max_{p_0}\quad &\left(p_0-c\right)\left(\hat{b}_0^{\mathrm{LS}}p_0+a_0\right)+\\
    &\left(\frac{a_1}{\hat{b}_1^{\mathrm{LS}}}\left(\frac{\hat{b}_0^{\mathrm{LS}}}{a_0}p_0^{FEO}+\left(1-\alpha\right)\left( \frac{\hat{b}_1^{\mathrm{LS}}}{a_1}p_1^*\left(\hat{b}_0^{\mathrm{LS}}, \hat{b}_1^{\mathrm{LS}}\right)-\frac{\hat{b}_0^{\mathrm{LS}}}{a_0}p_0^*\left(\hat{b}_0^{\mathrm{LS}}, \hat{b}_1^{\mathrm{LS}}\right)\right)\right)-c\right)\cdot\\
    &\left(\hat{b}_1^{\mathrm{LS}}\left(\frac{a_1}{\hat{b}_1^{\mathrm{LS}}}\left(\frac{\hat{b}_0^{\mathrm{LS}}}{a_0}p_0^{FEO}+\left(1-\alpha\right)\left( \frac{\hat{b}_1^{\mathrm{LS}}}{a_1}p_1^*\left(\hat{b}_0^{\mathrm{LS}}, \hat{b}_1^{\mathrm{LS}}\right)-\frac{\hat{b}_0^{\mathrm{LS}}}{a_0}p_0^*\left(\hat{b}_0^{\mathrm{LS}}, \hat{b}_1^{\mathrm{LS}}\right)\right)\right)\right)+a_1\right).
\end{align*}
The objective function is a quadratic function with respect to $p_0$, so we can get the closed form solution of $p_0^{EFO}$ and $p_1^{EFO}$ respectively,
\begin{align*}
    p_0^{EFO}=\left(1-\alpha\right)p_0^*\left(\hat{b}_0^{\mathrm{LS}}, \hat{b}_1^{\mathrm{LS}}\right)+\alpha \frac{a_0}{\hat{b}_0^{\mathrm{LS}}}\left(-\frac{1}{2}+\frac{c}{2}\frac{a_0+a_1}{\frac{a_0^2}{\hat{b}_0^{\mathrm{LS}}}+\frac{a_1^2}{\hat{b}_1^{\mathrm{LS}}}}\right),\\
    p_1^{EFO}=\left(1-\alpha\right)p_1^*\left(\hat{b}_0^{\mathrm{LS}}, \hat{b}_1^{\mathrm{LS}}\right)+\alpha \frac{a_1}{\hat{b}_1^{\mathrm{LS}}}\left(-\frac{1}{2}+\frac{c}{2}\frac{a_0+a_1}{\frac{a_0^2}{\hat{b}_0^{\mathrm{LS}}}+\frac{a_1^2}{\hat{b}_1^{\mathrm{LS}}}}\right).
\end{align*}

Taking the derivative of both sides, we have
\begin{equation*}
    \frac{dp_0^{EFO}}{d\alpha}=\frac{c}{2}\frac{\frac{a_0a_1}{\hat{b}_0^{\mathrm{LS}}}-\frac{a_1^2}{\hat{b}_1^{\mathrm{LS}}}}{\frac{a_0^2}{\hat{b}_0^{\mathrm{LS}}}+\frac{a_1^2}{\hat{b}_1^{\mathrm{LS}}}}.
\end{equation*}

Next, we show that $\tau_b>\tau_p$ implies $\frac{dp_0^{FEO}(0)}{d\alpha}>\frac{dp_0^{EFO}(0)}{d\alpha}$.
\begin{align*}
    &\frac{dp_0^{FEO}\left(0\right)}{d\alpha}=\frac{B\left(p_0^*\left(\hat{b}_0^{\mathrm{LS}}, \hat{b}_1^{\mathrm{LS}}\right)-\frac{c}{2}\right)^2}{A\left(p_1^*\left(\hat{b}_0^{\mathrm{LS}}, \hat{b}_1^{\mathrm{LS}}\right)-\frac{c}{2}\right)^2+B\left(p_0^*\left(\hat{b}_0^{\mathrm{LS}}, \hat{b}_1^{\mathrm{LS}}\right)-\frac{c}{2}\right)^2}\frac{\left(p_0^*\left(\hat{b}_0^{\mathrm{LS}}, \hat{b}_1^{\mathrm{LS}}\right)-p_1^*\left(\hat{b}_0^{\mathrm{LS}}, \hat{b}_1^{\mathrm{LS}}\right)\right)}{p_1^*\left(\hat{b}_0^{\mathrm{LS}}, \hat{b}_1^{\mathrm{LS}}\right)-\frac{c}{2}}\frac{c}{2}>\frac{c}{2}\frac{\frac{a_0a_1}{\hat{b}_0^{\mathrm{LS}}}-\frac{a_1^2}{\hat{b}_1^{\mathrm{LS}}}}{\frac{a_0^2}{\hat{b}_0^{\mathrm{LS}}}+\frac{a_1^2}{\hat{b}_1^{\mathrm{LS}}}}\\
    %\Leftrightarrow&\frac{dp_0^{FEO}\left(0\right)}{d\alpha}=\frac{B\left(p_0^*\left(\hat{b}_0^{\mathrm{LS}}, \hat{b}_1^{\mathrm{LS}}\right)-\frac{c}{2}\right)^2}{A\left(p_1^*\left(\hat{b}_0^{\mathrm{LS}}, \hat{b}_1^{\mathrm{LS}}\right)-\frac{c}{2}\right)^2+B\left(p_0^*\left(\hat{b}_0^{\mathrm{LS}}, \hat{b}_1^{\mathrm{LS}}\right)-\frac{c}{2}\right)^2}\frac{\left(p_0^*\left(\hat{b}_0^{\mathrm{LS}}, \hat{b}_1^{\mathrm{LS}}\right)-p_1^*\left(\hat{b}_0^{\mathrm{LS}}, \hat{b}_1^{\mathrm{LS}}\right)\right)}{p_1^*\left(\hat{b}_0^{\mathrm{LS}}, \hat{b}_1^{\mathrm{LS}}\right)-\frac{c}{2}}\\&\qquad>\frac{a_1\left(p_0^*\left(\hat{b}_0^{\mathrm{LS}}, \hat{b}_1^{\mathrm{LS}}\right)-p_1^*\left(\hat{b}_0^{\mathrm{LS}}, \hat{b}_1^{\mathrm{LS}}\right)\right)}{a_0\left(p_0^*\left(\hat{b}_0^{\mathrm{LS}}, \hat{b}_1^{\mathrm{LS}}\right)-\frac{c}{2}\right)+a_1\left(p_1^*\left(\hat{b}_0^{\mathrm{LS}}, \hat{b}_1^{\mathrm{LS}}\right)-\frac{c}{2}\right)}\\
    \Leftrightarrow&\frac{B\left(p_0^*\left(\hat{b}_0^{\mathrm{LS}}, \hat{b}_1^{\mathrm{LS}}\right)-\frac{c}{2}\right)^2}{A\left(p_1^*\left(\hat{b}_0^{\mathrm{LS}}, \hat{b}_1^{\mathrm{LS}}\right)-\frac{c}{2}\right)^2+B\left(p_0^*\left(\hat{b}_0^{\mathrm{LS}}, \hat{b}_1^{\mathrm{LS}}\right)-\frac{c}{2}\right)^2}<\frac{a_1\left(p_1^*\left(\hat{b}_0^{\mathrm{LS}}, \hat{b}_1^{\mathrm{LS}}\right)-\frac{c}{2}\right)}{a_0\left(p_0^*\left(\hat{b}_0^{\mathrm{LS}}, \hat{b}_1^{\mathrm{LS}}\right)-\frac{c}{2}\right)+a_1\left(p_1^*\left(\hat{b}_0^{\mathrm{LS}}, \hat{b}_1^{\mathrm{LS}}\right)-\frac{c}{2}\right)}\\
    \Leftrightarrow&\frac{A\left(p_1^*\left(\hat{b}_0^{\mathrm{LS}}, \hat{b}_1^{\mathrm{LS}}\right)-\frac{c}{2}\right)^2}{B\left(p_0^*\left(\hat{b}_0^{\mathrm{LS}}, \hat{b}_1^{\mathrm{LS}}\right)-\frac{c}{2}\right)^2}>\frac{a_0\left(p_0^*\left(\hat{b}_0^{\mathrm{LS}}, \hat{b}_1^{\mathrm{LS}}\right)-\frac{c}{2}\right)}{a_1\left(p_1^*\left(\hat{b}_0^{\mathrm{LS}}, \hat{b}_1^{\mathrm{LS}}\right)-\frac{c}{2}\right)}\\
    \Leftrightarrow&\frac{\left(p_1^*\left(\hat{b}_0^{\mathrm{LS}}, \hat{b}_1^{\mathrm{LS}}\right)-\frac{c}{2}\right)}{\left(p_0^*\left(\hat{b}_0^{\mathrm{LS}}, \hat{b}_1^{\mathrm{LS}}\right)-\frac{c}{2}\right)}>\left(\frac{a_1\frac{1}{n_1}\sum_{i:g^{(i)}=1}p^{(i)^2}}{a_0\frac{1}{n_0}\sum_{i:g^{(i)}=0}p^{(i)^2}}\right)^{\frac{1}{3}}\\
    \Leftrightarrow&\frac{\hat{b}_0^{\mathrm{LS}}}{\hat{b_1^{\mathrm{LS}}}}> \tau_p.
\end{align*}
Therefore, there exists $\xi>0$, when the level of price fairness constraints $\alpha$ satisfies $0<\alpha<\xi$, $p_0^{FEO}>p_0^{EFO}$. This also implies $p_1^{FEO}>p_1^{EFO}$.

Notice that $\calS$ and $\calR$ are decreasing with respect to $p_0$ and $p_1$. Therefore, $\calS^{FEO}<\calS^{EFO}$, $\calW^{FEO}<\calW^{EFO}$. 

Similarly, we have
\begin{equation*}
    \frac{dp_0^{FEO}(0)}{d\alpha}<\frac{dp_0^{EFO}(0)}{d\alpha}\Leftrightarrow \tau_b<\tau_p
\end{equation*}
Therefore, there exists $\xi>0$, when the level of price fairness constraints $\alpha$ satisfies $0<\alpha<\xi$, $p_0^{FEO}<p_0^{EFO}$. This also implies $p_1^{FEO}<p_1^{EFO}$, $\calS^{FEO}>\calS^{EFO}$, and $\calW^{FEO}>\calW^{EFO}$.

If $\tau_b=\tau_p$, $ \frac{dp_0^{FEO}(0)}{d\alpha}=\frac{dp_0^{EFO}(0)}{d\alpha}$. Similar to parity-wise price fairness, we take the second-order derivative of $\alpha$ in equation \eqref{paritywise_demand_firstorder} and set $\alpha=0$ to get
\begin{equation*}
    \begin{aligned}
        &A\left(p_1^*\left(\hat{b}_0^{\mathrm{LS}},\hat{b}_1^{\mathrm{LS}}\right)-\frac{c}{2}\right)^3\frac{d^2p_0^{FEO}\left(0\right)}{d\alpha^2}+6A\left(p_1^*\left(\hat{b}_0^{\mathrm{LS}},\hat{b}_1^{\mathrm{LS}}\right)-\frac{c}{2}\right)^2\frac{dp_0^{FEO}\left(0\right)}{d\alpha}\frac{dp_1^{FEO}\left(0\right)}{d\alpha}\\
        &+B\left(p_0^*\left(\hat{b}_0^{\mathrm{LS}},\hat{b}_1^{\mathrm{LS}}\right)-\frac{c}{2}\right)^3\frac{d^2p_1^{FEO}\left(0\right)}{d\alpha^2}+6B\left(p_0^*\left(\hat{b}_0^{\mathrm{LS}},\hat{b}_1^{\mathrm{LS}}\right)-\frac{c}{2}\right)^2\frac{dp_0^{FEO}\left(0\right)}{d\alpha}\frac{dp_1^{FEO}\left(0\right)}{d\alpha}=0.
    \end{aligned}
\end{equation*}

From equation \eqref{FEO_demand_pw_constraint} we have $\frac{\hat{b}_0^{\mathrm{LS}}}{a_0}\frac{d^2p_0^{FEO}\left(0\right)}{d\alpha^2}=\frac{\hat{b}_1^{\mathrm{LS}}}{a_1}\frac{d^2p_1^{FEO}\left(0\right)}{d\alpha^2}$. Notice that $\frac{dp_0^{FEO}\left(0\right)}{d\alpha}<0$ and $\frac{dp_1^{FEO}\left(0\right)}{d\alpha}>0$, so we have
\begin{equation*}
    \begin{aligned}
        &A\left(p_1^*\left(\hat{b}_0^{\mathrm{LS}},\hat{b}_1^{\mathrm{LS}}\right)-\frac{c}{2}\right)^3\frac{d^2p_0^{FEO}\left(0\right)}{d\alpha^2}+B\left(p_0^*\left(\hat{b}_0^{\mathrm{LS}},\hat{b}_1^{\mathrm{LS}}\right)-\frac{c}{2}\right)^3\frac{d^2p_0^{FEO}\left(0\right)}{d\alpha^2}\\
        =&-6A\left(p_1^*\left(\hat{b}_0^{\mathrm{LS}},\hat{b}_1^{\mathrm{LS}}\right)-\frac{c}{2}\right)^2\frac{dp_0^{FEO}\left(0\right)}{d\alpha}\frac{dp_1^{FEO}\left(0\right)}{d\alpha}-6B\left(p_0^*\left(\hat{b}_0^{\mathrm{LS}},\hat{b}_1^{\mathrm{LS}}\right)-\frac{c}{2}\right)^2\frac{dp_0^{FEO}\left(0\right)}{d\alpha}\frac{dp_1^{FEO}\left(0\right)}{d\alpha}
        >0.
    \end{aligned}
\end{equation*}
Therefore, $\frac{d^2p_0^{FEO}}{d\alpha^2}>0$. On the other hand $\frac{d^2p_0^{EFO}}{d\alpha^2}=0$. So $\frac{d^2p_0^{FEO}}{d\alpha^2}>\frac{d^2p_0^{EFO}}{d\alpha^2}$. By comparing the second-order Taylor expansion of $p_0^{FEO}$ and $p_1^{EFO}$, there exists $\xi>0$, when the level of price fairness constraints $\alpha$ satisfies $0<\alpha<\xi$, $p_0^{FEO}>p_0^{EFO}$. This also implies $p_1^{FEO}>p_1^{EFO}$, $\calS^{FEO}<\calS^{EFO}$, and $\calW^{FEO}<\calW^{EFO}$.\hfill{\Halmos}
\end{proof}
\medskip

\begin{proof}{Proof of Proposition \ref{prop4}.}
Without loss of generality, suppose $p_0^{\mathrm{LS}}<p_1^{\mathrm{LS}}$. Then $\tau_b>\tau_a$.
\begin{equation*}
    \calR(p_0,p_1) = \sum_{g \in \{0,1\}} n_g \left(p_g - c\right) \max\left(0,a_g+b_g p_g\right),
\end{equation*}
There exists $\xi_3>0$, such that when $0<\alpha<\xi_3$, $a_0+b_0p_0, a_1+b_1p_1>0$.
\begin{equation*}
    \begin{aligned}
            \calR\left(p_0^{FEO},p_1^{FEO}\right)-\calR\left(p_0^{EFO},p_1^{EFO}\right)=&\sum_{g \in \{0,1\}} \left(p_g^{FEO} - c\right)\left(a_g+b_g p_g^{FEO}\right)-\left(p_g^{EFO} - c\right)\left(a_g+b_g p_g^{EFO}\right)\\
            =&\sum_{g \in \{0,1\}}b_g\left(p_g^{FEO}-\left(-\frac{a_g}{2b_g}+\frac{c}{2}\right)\right)^2-b_g\left(p_g^{EFO}-\left(-\frac{a_g}{2b_g}+\frac{c}{2}\right)\right)^2\\
            =&\sum_{g \in \{0,1\}}b_g\left(p_g^{FEO}-p_g^{EFO}\right)\left(p_g^{FEO}+p_g^{EFO}-\left(-\frac{a_g}{b_g}+c\right)\right).
    \end{aligned}
\end{equation*}
\begin{equation*}
    \begin{aligned}
        \frac{d\left(\calR\left(p_0^{FEO},p_1^{FEO}\right)-\calR\left(p_0^{EFO},p_1^{EFO}\right)\right)}{d\alpha}=\sum_{g \in \{0,1\}}b_g\left(\frac{dp_g^{FEO}}{d\alpha}-\frac{dp_g^{EFO}}{d\alpha}\right)\left(p_g^{FEO}+p_g^{EFO}-\left(-\frac{a_g}{b_g}+c\right)\right)\\
        +b_g\left(p_g^{FEO}-p_g^{EFO}\right)\left(\frac{dp_g^{FEO}}{d\alpha}+\frac{dp_g^{EFO}}{d\alpha}\right).
    \end{aligned}
\end{equation*}
When $\alpha=0$,
\begin{equation*}
    \begin{aligned}
        \frac{d\left(\calR\left(p_0^{FEO},p_1^{FEO}\right)-\calR\left(p_0^{EFO},p_1^{EFO}\right)\right)}{d\alpha}\vrule_{\alpha=0}=&\sum_{g \in \{0,1\}}b_g\left(\frac{dp_g^{FEO}}{d\alpha}\vrule_{\alpha=0}-\frac{dp_g^{EFO}}{d\alpha}\vrule_{\alpha=0}\right)\left(2p_g^{\mathrm{LS}}-\left(-\frac{a_g}{b_g}+c\right)\right)\\
        =&\left(\frac{dp_0^{FEO}}{d\alpha}\vrule_{\alpha=0}-\frac{dp_0^{EFO}}{d\alpha}\vrule_{\alpha=0}\right)\left(a_0\left(1-\frac{b_0}{\hat{b}_0^{\mathrm{LS}}}\right)+a_1\left(1-\frac{b_1}{\hat{b}_1^{\mathrm{LS}}}\right)\right)
    \end{aligned}
\end{equation*}
For parity-wise price fairness, by Proposition \ref{prop3},
\begin{equation*}
    \frac{dp_0^{FEO}}{d\alpha}\vrule_{\alpha=0}-\frac{dp_0^{EFO}}{d\alpha}\vrule_{\alpha=0}<0 \Leftrightarrow \tau_b>\tau_p,
\end{equation*}  
therefore, if $\tau_b>\tau_p$ and
\begin{equation*}
    a_0\left(1-\frac{b_0}{\hat{b}_0^{\mathrm{LS}}}\right)+a_1\left(1-\frac{b_1}{\hat{b}_1^{\mathrm{LS}}}\right)<0,
\end{equation*}
we have
\begin{equation*}
    \frac{d\left(\calR\left(p_0^{FEO},p_1^{FEO}\right)-\calR\left(p_0^{EFO},p_1^{EFO}\right)\right)}{d\alpha}\vrule_{\alpha=0}>0,
\end{equation*}
then there exists $\xi>0$, when the fairness level $\alpha$ satisfies $0<\alpha<\xi$, $\calR^{FEO}>\calR^{EFO}$.
Similarly, if $\tau_b<\tau_p$ and
\begin{equation*}
    a_0\left(1-\frac{b_0}{\hat{b}_0^{\mathrm{LS}}}\right)+a_1\left(1-\frac{b_1}{\hat{b}_1^{\mathrm{LS}}}\right)>0,
\end{equation*}
we have
\begin{equation*}
    \frac{d\left(\calR\left(p_0^{FEO},p_1^{FEO}\right)-\calR\left(p_0^{EFO},p_1^{EFO}\right)\right)}{d\alpha}\vrule_{\alpha=0}>0.
\end{equation*}

For parity-wise demand fairness, \begin{equation*}
    \frac{dp_0^{FEO}}{d\alpha}\vrule_{\alpha=0}-\frac{dp_0^{EFO}}{d\alpha}\vrule_{\alpha=0}<0 \Leftrightarrow \tau_b<\tau_p,
\end{equation*}  
therefore, if $\tau_b>\tau_p$ and
\begin{equation*}
    a_0\left(1-\frac{b_0}{\hat{b}_0^{\mathrm{LS}}}\right)+a_1\left(1-\frac{b_1}{\hat{b}_1^{\mathrm{LS}}}\right)>0,
\end{equation*}
we have
\begin{equation*}
    \frac{d\left(\calR\left(p_0^{FEO},p_1^{FEO}\right)-\calR\left(p_0^{EFO},p_1^{EFO}\right)\right)}{d\alpha}\vrule_{\alpha=0}>0,
\end{equation*}
then there exists $\xi>0$, when the fairness level $\alpha$ satisfies $0<\alpha<\xi$, $\calR^{FEO}>\calR^{EFO}$.
Similarly, if $\tau_b<\tau_p$ and
\begin{equation*}
    a_0\left(1-\frac{b_0}{\hat{b}_0^{\mathrm{LS}}}\right)+a_1\left(1-\frac{b_1}{\hat{b}_1^{\mathrm{LS}}}\right)<0,
\end{equation*}
we have 
\begin{equation*}
    \frac{d\left(\calR\left(p_0^{FEO},p_1^{FEO}\right)-\calR\left(p_0^{EFO},p_1^{EFO}\right)\right)}{d\alpha}\vrule_{\alpha=0}>0.
\end{equation*}

Next we consider the case where $\tau_b=\tau_p$. Then,
\begin{equation*}
    \frac{d\left(\calR\left(p_0^{FEO},p_1^{FEO}\right)-\calR\left(p_0^{EFO},p_1^{EFO}\right)\right)}{d\alpha}\vrule_{\alpha=0}=0.
\end{equation*}
Taking the second order derivative of $\alpha$ and set $\alpha=0$, we have
\begin{equation*}
    \begin{aligned}
        \frac{d^2\left(\calR\left(p_0^{FEO},p_1^{FEO}\right)-\calR\left(p_0^{EFO},p_1^{EFO}\right)\right)}{d\alpha^2}\vrule_{\alpha=0}=&
        \left(\frac{d^2p_0^{FEO}}{d\alpha^2}\vrule_{\alpha=0}-\frac{d^2p_0^{EFO}}{d\alpha^2}\vrule_{\alpha=0}\right)\left(a_0\left(1-\frac{b_0}{\hat{b}_0^{\mathrm{LS}}}\right)+a_1\left(1-\frac{b_1}{\hat{b}_1^{\mathrm{LS}}}\right)\right)\\
        =&\frac{d^2p_0^{FEO}}{d\alpha^2}\vrule_{\alpha=0}\left(a_0\left(1-\frac{b_0}{\hat{b}_0^{\mathrm{LS}}}\right)+a_1\left(1-\frac{b_1}{\hat{b}_1^{\mathrm{LS}}}\right)\right).
    \end{aligned}
\end{equation*}
Notice that $\frac{d^2p_0^{FEO}}{d\alpha^2}\vrule_{\alpha=0}>0$, therefore, when
\begin{equation*}
    a_0\left(1-\frac{b_0}{\hat{b}_0^{\mathrm{LS}}}\right)+a_1\left(1-\frac{b_1}{\hat{b}_1^{\mathrm{LS}}}\right)>0,
\end{equation*}
we have
\begin{equation*}
    \frac{d\left(\calR\left(p_0^{FEO},p_1^{FEO}\right)-\calR\left(p_0^{EFO},p_1^{EFO}\right)\right)}{d\alpha}\vrule_{\alpha=0}>0,
\end{equation*}
then there exists $\xi>0$, when the fairness level $\alpha$ satisfies $0<\alpha<\xi$, $\calR^{FEO}>\calR^{EFO}$.

In all these cases above, if the sign of $a_0\Big(1-\tfrac{b_0}{\hat{b}_0^{\mathrm{LS}}}\Big)+a_1\Big(1-\tfrac{b_1}{\hat{b}_1^{\mathrm{LS}}}\Big)$ changes, the sign of
\begin{equation*}
    \frac{d\left(\calR\left(p_0^{FEO},p_1^{FEO}\right)-\calR\left(p_0^{EFO},p_1^{EFO}\right)\right)}{d\alpha}\vrule_{\alpha=0}
\end{equation*}
also changes. Therefore, we are able to characterize the cases where $\mathcal R^{\mathrm{FEO}}<\mathcal R^{\mathrm{EFO}}$ accordingly.
\hfill\Halmos
\end{proof}
\medskip

\begin{proof}{Proof of Proposition \ref{prop:rawlsian_fairness}.}
Without loss of generality, suppose $p_0^{\mathrm{LS}}<p_1^{\mathrm{LS}}$. For Rawlsian price fairness, FEO is written as \eqref{FEO_price_Rawlsian}
\begin{equation}
    \label{FEO_price_Rawlsian}
    \begin{aligned}
        \min_{\hat{b}_0,\hat{b}_1}&\quad\frac{1}{n_0}\sum_{i:g^{(i)}=0}\left(a_0+\hat{b}_0 p^{(i)}-d^{(i)}\right)^2+\frac{1}{n_1}\sum_{i:g^{(i)}=1}\left(a_1+\hat{b}_1 p^{(i)}-d^{(i)}\right)^2\\
        \text{s.t.}&\quad  p_g^*\left(\hat{b}_0, \hat{b}_1\right)\leq (1-\alpha)\bar{p}+\alpha \underline{p},\quad g=0,1.
    \end{aligned}
\end{equation}
Since $p_0^{\mathrm{LS}}<p_1^{\mathrm{LS}}$, $\bar{p}=p_1^{\mathrm{LS}}$, $\underline{p}=p_0^{\mathrm{LS}}$. Notice that $p_g^{\mathrm{LS}}=-\frac{a_g}{2\hat{b}_g^{\mathrm{LS}}}+\frac{c}{2}$, so problem \eqref{FEO_price_Rawlsian} can be divided into two subproblems
\begin{equation}
    \label{FEO_price_Rawlsian_sub}
    \begin{aligned}
        \min_{\hat{b}_g}&\quad\frac{1}{n_g}\sum_{i:g^{(i)}=g}\left(a_g+\hat{b}_g p^{(i)}-d^{(i)}\right)^2\\
        \text{s.t.}&\quad  -\frac{a_g}{2\hat{b}_g^{\mathrm{LS}}}+\frac{c}{2}\leq (1-\alpha)\bar{p}+\alpha \underline{p}.
    \end{aligned}
\end{equation}
for $g\in\{0,1\}$.

For group $0$, since $p_0^{\mathrm{LS}}=\underline{p}\leq(1-\alpha)\bar{p}+\alpha\underline{p}$, the optimal solution for \eqref{FEO_price_Rawlsian_sub} is still $\hat{b}_0^{\mathrm{LS}}$. For group $1$, \eqref{FEO_price_Rawlsian_sub} can be written as a convex optimization problem
\begin{equation*}
    \begin{aligned}
        \min_{\hat{b}_1}&\quad\frac{1}{n_1}\sum_{i:g^{(i)}=1}\left(a_1+\hat{b}_1 p^{(i)}-d^{(i)}\right)^2\\
        \text{s.t.}&\quad  -\frac{a_1}{2}\geq \left((1-\alpha)\bar{p}+\alpha \underline{p}-\frac{c}{2}\right)\hat{b}_1^{\mathrm{LS}}.
    \end{aligned}
\end{equation*}
with a single variable $\hat{b}_1$. We can get the closed form solution $\hat{b}_1^{\mathrm{FEO}}=-\frac{a_1}{2\left((1-\alpha)\bar{p}+\alpha \underline{p}-\frac{c}{2}\right)}$. Therefore, $p_0^*\left(\hat{b}_0^{\mathrm{FEO}},\hat{b}_1^{\mathrm{FEO}}\right)=p_0^{\mathrm{LS}}$ and $p_1^*\left(\hat{b}_0^{\mathrm{FEO}},\hat{b}_1^{\mathrm{FEO}}\right)=(1-\alpha)p_1^{\mathrm{LS}}+\alpha p_0^{\mathrm{LS}}$.

On the other hand, EFO can be written as \eqref{EFO_price_Rawlsian}
\begin{equation}
    \label{EFO_price_Rawlsian}
    \begin{aligned}
        \max_{p_0,p_1}&\quad \left(p_0-c\right)\max\left(\hat{b}_0^{\mathrm{LS}} p_0+a_0, 0\right)+\left(p_1-c\right)\max\left(\hat{b}_1^{\mathrm{LS}} p_1+a_1, 0\right)\\
        \text{s.t.}&\quad p_g\leq (1-\alpha)\bar{p}+\alpha \underline{p},\quad \forall g\in\{0,1\}.
    \end{aligned}
\end{equation}
which can also be divided into two subproblems
\begin{equation}
    \label{EFO_price_Rawlsian_sub}
    \begin{aligned}
        \max_{p_g}&\quad \left(p_g-c\right)\max\left(\hat{b}_g^{\mathrm{LS}} p_g+a_g, 0\right)\\
        \text{s.t.}&\quad p_g\leq (1-\alpha)\bar{p}+\alpha \underline{p},\quad\forall g\in\{0,1\}.
    \end{aligned}
\end{equation}

For group $0$, since $p_0^{\mathrm{LS}}=\underline{p}\leq(1-\alpha)\bar{p}+\alpha\underline{p}$, the optimal solution for \eqref{FEO_price_Rawlsian_sub} is still $\hat{b}_0^{\mathrm{LS}}$. For group $1$, with the Rawlsian price fairness constraints, $p_1\leq \bar{p}$, so $\hat{b}_1^{\mathrm{LS}} p_1+a_1\geq \hat{b}_1^{\mathrm{LS}} \bar{p}+a_1=\hat{b}_1^{\mathrm{LS}} p_1^{\mathrm{LS}}+a_1>0$. Then \eqref{EFO_price_Rawlsian_sub} is maximizing a quadratic function with boundary constraints, from which we can get the closed form optimal solution $p_1^{\mathrm{EFO}}=p_1^{\mathrm{LS}}$.

Therefore, we have
\begin{equation*}
    p_0^*\left(\hat{b}_0^{\mathrm{FEO}},\hat{b}_1^{\mathrm{FEO}}\right)=p_0^{\mathrm{EFO}}=p_0^{\mathrm{LS}},\quad p_1^*\left(\hat{b}_0^{\mathrm{FEO}},\hat{b}_1^{\mathrm{FEO}}\right)=p_1^{\mathrm{EFO}}=(1-\alpha)p_1^{\mathrm{LS}}+\alpha p_0^{\mathrm{LS}}.
\end{equation*}

For Rawlsian demand fairness, FEO is written as \eqref{FEO_demand_Rawlsian}
\begin{equation}
    \label{FEO_demand_Rawlsian}
    \begin{aligned}
        \min_{\hat{b}_0,\hat{b}_1}&\quad\frac{1}{n_0}\sum_{i:g^{(i)}=0}\left(a_0+\hat{b}_0 p^{(i)}-d^{(i)}\right)^2+\frac{1}{n_1}\sum_{i:g^{(i)}=1}\left(a_1+\hat{b}_1 p^{(i)}-d^{(i)}\right)^2\\
        \text{s.t.}&\quad  \frac{1}{a_g}\left(a_g+\hat{b}_gp_g^*\left(\hat{b}_0, \hat{b}_1\right)\right)\geq (1-\alpha)\underline{d}+\alpha \bar{d},\quad \forall g\in\{0,1\},
    \end{aligned}
\end{equation}
where 
$$\underline{d}=\frac{a_1+\hat b_1^{\mathrm{LS}}p_1^{\mathrm{LS}}}{a_1}=1+\frac{\hat b_1^{\mathrm{LS}}}{a_1}p_1^{\mathrm{LS}}=\frac{1}{2}-\frac{c\hat b_1^{\mathrm{LS}}}{2a_1}\quad\text{and}\quad\bar d = \frac{a_0+\hat b_0^{\mathrm{LS}}p_0^{\mathrm{LS}}}{a_0}=1+\frac{\hat b_0^{\mathrm{LS}}}{a_0}p_0^{\mathrm{LS}}=\frac{1}{2}-\frac{c\hat b_0^{\mathrm{LS}}}{2a_0}.$$
Since $-\frac{a_0}{2\hat b_0^{\mathrm{LS}}} + \frac{c}{2}=p_1^{\mathrm{LS}}>p_1^{\mathrm{LS}}=-\frac{a_1}{2\hat b_1^{\mathrm{LS}}} + \frac{c}{2}$, $\bar d > \underline{d}$.
\eqref{FEO_demand_Rawlsian} can be divided into two subproblems
\begin{equation*}
    \label{FEO_demand_Rawlsian_sub}
    \begin{aligned}
        \min_{\hat{b}_g}&\quad\frac{1}{n_g}\sum_{i:g^{(i)}=g}\left(a_g+\hat{b}_g p^{(i)}-d^{(i)}\right)^2\\
        \text{s.t.}&\quad  1+\frac{\hat{b}_g}{a_g}p_g^*\left(\hat{b}_0, \hat{b}_1\right)\geq (1-\alpha)\underline{d}+\alpha \bar{d},\quad\forall g\in\{0,1\}.
    \end{aligned}
\end{equation*}
Only group~$1$ is binding. Therefore, $p_0^{\mathrm{FEO}}=p_0^{\mathrm{LS}}$ and
$$p_1^{\mathrm{FEO}}=(1-\alpha)\underline{d} + \alpha \bar d -1 = (1-\alpha) p_1^{\mathrm{LS}}+ \alpha \frac{a_1}{a_0}\frac{\hat b_0^{\mathrm{LS}}}{\hat b_1^{\mathrm{LS}}}p_0^{\mathrm{LS}}.$$
% $\frac{1}{a_g}\left(a_g+\hat{b}_gp_g^*\left(\hat{b}_0, \hat{b}_1\right)\right)=\frac{1}{2}+\frac{c\hat{b}_g}{2}$, which is linear with respect to $\hat{b}_g$, so we are able to get the closed form solution $\hat{b}_0^{\mathrm{FEO}}=\frac{2\left((1-\alpha)\underline{d}+\alpha \bar{d}\right)-1}{c}$, $\hat{b}_1^{\mathrm{FEO}}=\hat{b}_1^{\mathrm{LS}}$. Therefore, $p_0^*\left(\hat{b}_0^{  \mathrm{FEO}},\hat{b}_1^{\mathrm{FEO}}\right)=(1-\alpha)p_0^{\mathrm{LS}}+\alpha\frac{a_0\hat{b}_1^{\mathrm{LS}}}{a_1\hat{b}_0^{\mathrm{LS}}}p_1^{\mathrm{LS}}$ and $p_1^*\left(\hat{b}_0^{\mathrm{FEO}},\hat{b}_1^{\mathrm{FEO}}\right)=p_1^{\mathrm{LS}}$.

On the other hand, EFO can be written as \eqref{EFO_demand_Rawlsian}
\begin{equation}
    \label{EFO_demand_Rawlsian}
    \begin{aligned}
        \max_{p_0,p_1}&\quad \left(p_0-c\right)\max\left(\hat{b}_0^{\mathrm{LS}} p_0+a_0, 0\right)+\left(p_1-c\right)\max\left(\hat{b}_1^{\mathrm{LS}} p_1+a_1, 0\right)\\
        \text{s.t.}&\frac{1}{a_g}\left(a_g+\hat{b}_gp_g\right)\geq (1-\alpha)\underline{d}+\alpha \bar{d},\quad\forall g\in\{0,1\},
    \end{aligned}
\end{equation}
which can also be divided into two subproblems
\begin{equation*}
    \label{EFO_demand_Rawlsian_sub}
    \begin{aligned}
        \max_{p_g}&\quad \left(p_g-c\right)\max\left(\hat{b}_g^{\mathrm{LS}} p_g+a_g, 0\right)\\
        \text{s.t.}&\quad \frac{1}{a_g}\left(a_g+\hat{b}_gp_g\right)\geq (1-\alpha)\underline{d}+\alpha \bar{d},\quad\forall g\in\{0,1\},
    \end{aligned}
\end{equation*}
which are also maximizing quadratic functions with linear constraints. Therefore, same as price fairness, the closed form solution satisfies 
\begin{equation*}
        p_0^*\left(\hat{b}_0^{  \mathrm{FEO}},\hat{b}_1^{\mathrm{FEO}}\right)=p_0^{\mathrm{EFO}}=p_0^{\mathrm{LS}},\quad p_1^*\left(\hat{b}_0^{\mathrm{FEO}},\hat{b}_1^{\mathrm{FEO}}\right)=(1-\alpha)p_1^{\mathrm{LS}}+\alpha\frac{a_1\hat{b}_0^{\mathrm{LS}}}{a_0\hat{b}_1^{\mathrm{LS}}}p_0^{\mathrm{LS}}.
    \end{equation*}

Lastly, in the case where $p_0^\ast(\hat b_0^{\mathrm{LS}}, \hat b_1^{\mathrm{LS}}) < p_1^\ast(b_0, b_1) < p_1^\ast(\hat b_0^{\mathrm{LS}}, \hat b_1^{\mathrm{LS}})$, an increase in $\alpha$ from zero causes the prices under both FEO and EFO to decrease until they converge to the true optimal price, $p_1^\ast(b_0, b_1)$. Within this range of $\alpha$, the decrease in $p_1$ directly enhances total surplus. Furthermore, as $p_1$ approaches the true optimal price $p_1^\ast(b_0, b_1)$, the firm's profit also increases. Consequently, social welfare, the sum of profit and surplus, increases as well.
\hfill{\Halmos}
\end{proof}
\medskip

\begin{proof}{Proof of Proposition~\ref{prop7}}
Without loss of generality, suppose $p_0^{\mathrm{LS}}<p_1^{\mathrm{LS}}$. Then we have $\tau_b>\tau_a$. Denote $q_g=\frac{1}{\hat{b}_g^{\mathrm{LS}}}v_g$.

For parity-wise \textbf{price fairness}, FEO is written as \eqref{eq:prop7:FEO_price}
\begin{equation}
    \label{eq:prop7:FEO_price}
    \begin{aligned}
        \min_{\hat{a}_0,\hat{b}_0,\hat{a}_1,\hat{b}_1}&\quad\frac{1}{n_0}\sum_{i:g^{(i)}=0}\left(\hat{a}_0+\hat{b}_0 p^{(i)}-d^{(i)}\right)^2+\frac{1}{n_1}\sum_{i:g^{(i)}=1}\left(\hat{a}_1+\hat{b}_1 p^{(i)}-d^{(i)}\right)^2\\
        \text{s.t.}&\quad \left| p_1^*\left(\hat{a}_1, \hat{b}_1\right)-p_0^*\left(\hat{a}_0,\hat{b}_0\right)\right| \leq \left(1-\alpha\right)\left| p_1^{\mathrm{LS}} - p_0^{\mathrm{LS}}\right|,
    \end{aligned}
\end{equation}
where $p_g^*\left(\hat{a}_g, \hat{b}_g\right)=-\frac{\hat{a}_g}{2\hat{b}_g}+\frac{c}{2}$.
Denote $\theta_g =
\begin{bmatrix}
\hat{a}_g \\
\hat{b}_g
\end{bmatrix}$, we can rewrite problem \eqref{eq:prop7:FEO_price} as \eqref{eq:prop7:FEO_price_2}:  
\begin{equation}
    \label{eq:prop7:FEO_price_2}
    \begin{aligned}
        \min_{\theta_0,\theta_1}&\quad \left(\theta_0-\theta_0^{\mathrm{LS}}\right)^TM_0\left(\theta_0-\theta_0^{\mathrm{LS}}\right)+\left(\theta_1-\theta_1^{\mathrm{LS}}\right)^TM_1\left(\theta_1-\theta_1^{\mathrm{LS}}\right)\\
        \text{s.t.}&\quad \left| p_1^*\left(\theta_1\right)-p_0^*\left(\theta_0\right)\right| \leq \left(1-\alpha\right)\left| p_1^{\mathrm{LS}} - p_0^{\mathrm{LS}}\right|.
    \end{aligned}
\end{equation}
First, we can show that the optimal solution of problem \eqref{eq:prop7:FEO_price_2} $(\theta_0^{\mathrm{FEO}},\theta_1^{\mathrm{FEO}})$ satisfy
\begin{equation}
    \label{eq:prop7:constraint_1}
    p_1^*\left(\theta_1^{\mathrm{FEO}}\right)-p_0^*\left(\theta_0^{\mathrm{FEO}}\right) =\left(1-\alpha\right)\left( p_1^{\mathrm{LS}} - p_0^{\mathrm{LS}}\right).
\end{equation}
If $p_1^*\left(\theta_1\right)-p_0^*\left(\theta_0\right)<\left(1-\alpha\right)\left( p_1^{\mathrm{LS}} - p_0^{\mathrm{LS}}\right)$, either $p_1^*\left(\theta_1\right)<p_1^{\mathrm{LS}}$ or $p_0^*\left(\theta_0\right)>p_0^{\mathrm{LS}}$. Without loss of generality, suppose $p_0^*\left(\theta_0\right)>p_0^{\mathrm{LS}}$. since $p_0^*(\theta)$ is continuous with respect to $\theta$, there exists $\xi_1\in(0,1]$, such that $p_1^*\left(\theta_1\right)-p_0^*\left(\xi_1\theta_0+\left(1-\xi_1\right)\theta_0^{\mathrm{LS}}\right)<\left(1-\alpha\right)\left( p_1^{\mathrm{LS}} - p_0^{\mathrm{LS}}\right)$. Therefore, $(\xi_1\theta_0^{\mathrm{FEO}}+\left(1-\xi_1\right)\theta_0^{\mathrm{LS}},\theta_1^{\mathrm{FEO}})$ is a feasible solution to \eqref{eq:prop7:FEO_price_2} with strictly smaller objective value, which contradicts to the definition of $(\theta_0^{\mathrm{FEO}},\theta_1^{\mathrm{FEO}})$.

Therefore, the optimal solution of problem \eqref{eq:prop7:FEO_price_2} is equivalent to the optimal solution of
\begin{equation}
    \label{eq:prop7:FEO_price_3}
    \begin{aligned}
        \min_{\theta_0,\theta_1}&\quad \left(\theta_0-\theta_0^{\mathrm{LS}}\right)^TM_0\left(\theta_0-\theta_0^{\mathrm{LS}}\right)+\left(\theta_1-\theta_1^{\mathrm{LS}}\right)^TM_1\left(\theta_1-\theta_1^{\mathrm{LS}}\right)\\
        \text{s.t.}&\quad p_1^*\left(\theta_1\right)-p_0^*\left(\theta_0\right) = \left(1-\alpha\right)\left( p_1^{\mathrm{LS}} - p_0^{\mathrm{LS}}\right).
    \end{aligned}
\end{equation}

The KKT conditions of problem \eqref{eq:prop7:FEO_price_3} with Lagrangian $\mathcal{L}(\theta_0,\theta_1, \lambda)=\left(\theta_0-\theta_0^{\mathrm{LS}}\right)^TM_0\left(\theta_0-\theta_0^{\mathrm{LS}}\right)+\left(\theta_1-\theta_1^{\mathrm{LS}}\right)^TM_1\left(\theta_1-\theta_1^{\mathrm{LS}}\right)+\lambda\left(p_1^*\left(\theta_1\right)-p_0^*\left(\theta_0\right) - \left(1-\alpha\right)\left( p_1^{\mathrm{LS}} - p_0^{\mathrm{LS}}\right)\right)$ are
\begin{subequations}
\begin{align}
    \frac{\partial\mathcal{L}}{\partial \theta_0}=2M_0\left(\theta_0-\theta_0^{\mathrm{LS}}\right)-\lambda \begin{bmatrix}
\frac{1}{2\hat{b}_0} \\
-\frac{\hat{a}_0}{2\hat{b}_0^2}
\end{bmatrix}=0,\label{eq:prop7:KKT1}\\
\frac{\partial\mathcal{L}}{\partial \theta_1}=2M_1\left(\theta_1-\theta_1^{\mathrm{LS}}\right)+\lambda \begin{bmatrix}
\frac{1}{2\hat{b}_1} \\
-\frac{\hat{a}_1}{2\hat{b}_1^2}
\end{bmatrix}=0,\label{eq:prop7:KKT2}\\
\lambda\left(p_1^*\left(\theta_1\right)-p_0^*\left(\theta_0\right) - \left(1-\alpha\right)\left( p_1^{\mathrm{LS}} - p_0^{\mathrm{LS}}\right)\right)=0.\nonumber%\label{eq:prop7:KKT3}
\end{align}
\end{subequations}

Differentiate \eqref{eq:prop7:KKT1} and \eqref{eq:prop7:KKT2} with respect to $\alpha$, we have
\begin{align*}
    2M_0\nabla_\alpha\theta_0-\frac{d\lambda}{d\alpha}\begin{bmatrix}
\frac{1}{2\hat{b}_0} \\
-\frac{\hat{a}_0}{2\hat{b}_0^2}
\end{bmatrix}-\lambda\nabla_\alpha\left(\begin{bmatrix}
\frac{1}{2\hat{b}_0} \\
-\frac{\hat{a}_0}{2\hat{b}_0^2}
\end{bmatrix}\right)=0, \\
2M_1\nabla_\alpha\theta_1+\frac{d\lambda}{d\alpha}\begin{bmatrix}
\frac{1}{2\hat{b}_1} \\
-\frac{\hat{a}_1}{2\hat{b}_1^2}
\end{bmatrix}+\lambda\nabla_\alpha\left(\begin{bmatrix}
\frac{1}{2\hat{b}_1} \\
-\frac{\hat{a}_1}{2\hat{b}_1^2}
\end{bmatrix}\right)=0.
\end{align*}

When $\alpha=0$, we have $\lambda=0$. Therefore,
\begin{align*}
    2M_0\nabla_\alpha\theta_0|_{\alpha=0}-q_0\left.\frac{d\lambda}{d\alpha}\right|_{\alpha=0}=0, \\
2M_1\nabla_\alpha\theta_1|_{\alpha=0}+q_1\left.\frac{d\lambda}{d\alpha}\right|_{\alpha=0}=0.
\end{align*}
since $M_0$, $M_1$ are invertible, we have
\begin{align*}
    \nabla_\alpha\theta_0|_{\alpha=0}=\frac{1}{2}M_0^{-1}q_0\left.\frac{d\lambda}{d\alpha}\right|_{\alpha=0}, \\
\nabla_\alpha\theta_1|_{\alpha=0}=-\frac{1}{2}M_1^{-1}q_1\left.\frac{d\lambda}{d\alpha}\right|_{\alpha=0}.
\end{align*}

On the other hand, differentiate both sides of \eqref{eq:prop7:constraint_1} and set $\alpha=0$, we have
\begin{equation*}
    q_1^T\nabla_\alpha\theta_1|_{\alpha=0}-q_0^T\nabla_\alpha\theta_0|_{\alpha=0}=-(p_1^{\mathrm{LS}}-p_0^{\mathrm{LS}}).
\end{equation*}
Replace $\nabla_\alpha\theta_0|_{\alpha=0}$ and $\nabla_\alpha\theta_1|_{\alpha=0}$ by $\frac{1}{2}M_0^{-1}q_0\left.\frac{d\lambda}{d\alpha}\right|_{\alpha=0}$ and $-\frac{1}{2}M_1^{-1}q_1\left.\frac{d\lambda}{d\alpha}\right|_{\alpha=0}$, we have
\begin{equation*}
    \frac{1}{2}(q_1^TM_1^{-1}q_1+q_0^TM_0^{-1}q_0)\left.\frac{d\lambda}{d\alpha}\right|_{\alpha=0}=p_1^{\mathrm{LS}}-p_0^{\mathrm{LS}}.
\end{equation*}
Therefore,
\begin{equation*}
    \left.\frac{d\lambda}{d\alpha}\right|_{\alpha=0}=\frac{2\left(p_1^{\mathrm{LS}}-p_0^{\mathrm{LS}}\right)}{q_1^TM_1^{-1}q_1+q_0^TM_0^{-1}q_0},
\end{equation*}
and
\begin{equation*}
    \begin{aligned}
        \left.\frac{d p_0^*\left(\theta_0^{\mathrm{FEO}}\right)}{d \alpha}\right|_{\alpha=0}=q_0^T\left.\nabla_\alpha\theta_0\right|_{\alpha=0}=\frac{q_0^TM_0^{-1}q_0}{q_1^TM_1^{-1}q_1+q_0^TM_0^{-1}q_0}\left(p_1^{\mathrm{LS}}-p_0^{\mathrm{LS}}\right).
    \end{aligned}
\end{equation*}

On the other hand,
\begin{equation*}
    \left.\frac{dp_0^{\mathrm{EFO}}}{d\alpha}\right|_{\alpha=0}=\frac{\hat{b}_1^{\mathrm{LS}}}{\hat{b}_0^{\mathrm{LS}}+\hat{b}_1^{\mathrm{LS}}}\left(p_1^{\mathrm{LS}}-p_0^{\mathrm{LS}}\right),
\end{equation*}
so
\begin{align*}
    \left.\frac{d p_0^*\left(\theta_0^{\mathrm{FEO}}\right)}{d \alpha}\right|_{\alpha=0}<\left.\frac{dp_0^{\mathrm{EFO}}}{d\alpha}\right|_{\alpha=0} &\Leftrightarrow \frac{q_0^TM_0^{-1}q_0}{q_1^TM_1^{-1}q_1+q_0^TM_0^{-1}q_0}\left(p_1^{\mathrm{LS}}-p_0^{\mathrm{LS}}\right)<\frac{\hat{b}_1^{\mathrm{LS}}}{\hat{b}_0^{\mathrm{LS}}+\hat{b}_1^{\mathrm{LS}}}\left(p_1^{\mathrm{LS}}-p_0^{\mathrm{LS}}\right)\\
    &\Leftrightarrow q_0^TM_0^{-1}q_0\hat{b}_0^{\mathrm{LS}}>q_1^TM_1^{-1}q_1\hat{b}_1^{\mathrm{LS}}.
\end{align*}

Therefore, if $q_0^TM_0^{-1}q_0\hat{b}_0^{\mathrm{LS}}>q_1^TM_1^{-1}q_1\hat{b}_1^{\mathrm{LS}}$, there exists $\xi>0$ such that when $0<\alpha<\xi$, FEO leads to lower prices than EFO and hence FEO gives higher consumer surplus and social welfare. Similarly, if $q_0^TM_0^{-1}q_0\hat{b}_0^{\mathrm{LS}}<q_1^TM_1^{-1}q_1\hat{b}_1^{\mathrm{LS}}$, $ \frac{d p_0^*\left(\theta_0^{\mathrm{FEO}}\right)}{d \alpha}|_{\alpha=0}>\frac{dp_0^{\mathrm{EFO}}}{d\alpha}|_{\alpha=0}$, then FEO leads to higher prices than EFO and gives lower consumer surplus and social welfare.

For parity-wise \textbf{demand fairness}, FEO is written as \eqref{eq:prop7:FEO_demand}
\begin{equation}
    \label{eq:prop7:FEO_demand}
    \begin{aligned}
        \min_{\hat{a}_0,\hat{b}_0,\hat{a}_1\hat{b}_1}&\quad\frac{1}{n_0}\sum_{i:g^{(i)}=0}\left(\hat{a}_0+\hat{b}_0 p^{(i)}-d^{(i)}\right)^2+\frac{1}{n_1}\sum_{i:g^{(i)}=1}\left(\hat{a}_1+\hat{b}_1 p^{(i)}-d^{(i)}\right)^2\\
        \text{s.t.}&\quad \left| \frac{1}{\hat{a}_1^{\mathrm{LS}}}\left(\hat{a}_1^{\mathrm{LS}}+\hat{b}_1^{\mathrm{LS}}p_1^*\left(\hat{a}_1, \hat{b}_1\right)\right)-\frac{1}{\hat{a}_0^{\mathrm{LS}}}\left(\hat{a}_1^{\mathrm{LS}}+\hat{b}_0^{\mathrm{LS}}p_0^*\left(\hat{a}_0,\hat{b}_0\right)\right)\right| \\
        &\quad \leq \left(1-\alpha\right)\left| \frac{1}{\hat{a}_1^{\mathrm{LS}}}\left(\hat{a}_1^{\mathrm{LS}}+\hat{b}_1^{\mathrm{LS}}p_1^{\mathrm{LS}}\right)-\frac{1}{\hat{a}_0^{\mathrm{LS}}}\left(\hat{a}_1^{\mathrm{LS}}+\hat{b}_0^{\mathrm{LS}}p_0^{\mathrm{LS}}\right)\right|,
    \end{aligned}
\end{equation}
which can be rewritten as \eqref{eq:prop7:FEO_demand_2}:  
\begin{equation}
    \label{eq:prop7:FEO_demand_2}
    \begin{aligned}
        \min_{\theta_0,\theta_1}&\quad \left(\theta_0-\theta_0^{\mathrm{LS}}\right)^TM_0\left(\theta_0-\theta_0^{\mathrm{LS}}\right)+\left(\theta_1-\theta_1^{\mathrm{LS}}\right)^TM_1\left(\theta_1-\theta_1^{\mathrm{LS}}\right)\\
        \text{s.t.}&\quad \left| \frac{\hat{b}_1^{\mathrm{LS}}}{\hat{a}_1^{\mathrm{LS}}}p_1^*\left(\theta_1\right)-\frac{\hat{b}_0^{\mathrm{LS}}}{\hat{a}_0^{\mathrm{LS}}}p_0^*\left(\theta_0\right)\right| \leq \left(1-\alpha\right)\left| \frac{\hat{b}_1^{\mathrm{LS}}}{\hat{a}_1^{\mathrm{LS}}}p_1^{\mathrm{LS}} - \frac{\hat{b}_0^{\mathrm{LS}}}{\hat{a}_0^{\mathrm{LS}}}p_0^{\mathrm{LS}}\right|.
    \end{aligned}
\end{equation}

Similar to the case of parity-wise price fairness, we can show that the optimal solution of problem \eqref{eq:prop7:FEO_demand_2} $\left(\theta_0^{\mathrm{FEO}},\theta_1^{\mathrm{FEO}}\right)$ satisfy
\begin{equation}
    \label{eq:prop7:constraint_2}
    \quad  \frac{\hat{b}_1^{\mathrm{LS}}}{\hat{a}_1^{\mathrm{LS}}}p_1^*\left(\theta_1\right)-\frac{\hat{b}_0^{\mathrm{LS}}}{\hat{a}_0^{\mathrm{LS}}}p_0^*\left(\theta_0\right) = \left(1-\alpha\right)\left(\frac{\hat{b}_1^{\mathrm{LS}}}{\hat{a}_1^{\mathrm{LS}}}p_1^{\mathrm{LS}} - \frac{\hat{b}_0^{\mathrm{LS}}}{\hat{a}_0^{\mathrm{LS}}}p_0^{\mathrm{LS}}\right).
\end{equation}
If $\frac{\hat{b}_1^{\mathrm{LS}}}{\hat{a}_1^{\mathrm{LS}}}p_1^*\left(\theta_1\right)-\frac{\hat{b}_0^{\mathrm{LS}}}{\hat{a}_0^{\mathrm{LS}}}p_0^*\left(\theta_0\right) < \left(1-\alpha\right)\left(\frac{\hat{b}_1^{\mathrm{LS}}}{\hat{a}_1^{\mathrm{LS}}}p_1^{\mathrm{LS}} - \frac{\hat{b}_0^{\mathrm{LS}}}{\hat{a}_0^{\mathrm{LS}}}p_0^{\mathrm{LS}}\right)$,  either $p_1^*\left(\theta_1\right)>p_1^{\mathrm{LS}}$ or $p_0^*\left(\theta_0\right)<p_0^{\mathrm{LS}}$. Without loss of generality, suppose $p_0^*\left(\theta_0\right)<p_0^{\mathrm{LS}}$. since $p_0^*(\theta)$ is continuous with respect to theta, there exists $\xi_1\in(0,1]$, such that $\frac{\hat{b}_1^{\mathrm{LS}}}{\hat{a}_1^{\mathrm{LS}}}p_1^*\left(\theta_1\right)-\frac{\hat{b}_0^{\mathrm{LS}}}{\hat{a}_0^{\mathrm{LS}}}p_0^*\left(\xi_1\theta_0^{\mathrm{FEO}}+\left(1-\xi_1\right)\theta_0^{\mathrm{LS}}\right) < \left(1-\alpha\right)\left(\frac{\hat{b}_1^{\mathrm{LS}}}{\hat{a}_1^{\mathrm{LS}}}p_1^{\mathrm{LS}} - \frac{\hat{b}_0^{\mathrm{LS}}}{\hat{a}_0^{\mathrm{LS}}}p_0^{\mathrm{LS}}\right)$. Therefore, $(\xi_1\theta_0^{\mathrm{FEO}}+\left(1-\xi_1\right)\theta_0^{\mathrm{LS}},\theta_1^{\mathrm{FEO}})$ is a feasible solution to \eqref{eq:prop7:FEO_price_2} with strictly smaller objective value, which contradicts to the definition of $(\theta_0^{\mathrm{FEO}},\theta_1^{\mathrm{FEO}})$.

Then the optimal solution to problem \eqref{eq:prop7:FEO_demand_2} is the same as the optimal solution to
\begin{equation}
    \label{eq:prop7:FEO_demand_3}
    \begin{aligned}
        \min_{\theta_0,\theta_1}&\quad \left(\theta_0-\theta_0^{\mathrm{LS}}\right)^TM_0\left(\theta_0-\theta_0^{\mathrm{LS}}\right)+\left(\theta_1-\theta_1^{\mathrm{LS}}\right)^TM_1\left(\theta_1- \theta_1^{\mathrm{LS}}\right)\\
        \text{s.t.}&\quad \frac{\hat{b}_1^{\mathrm{LS}}}{\hat{a}_1^{\mathrm{LS}}}p_1^*\left(\theta_1\right)-\frac{\hat{b}_0^{\mathrm{LS}}}{\hat{a}_0^{\mathrm{LS}}}p_0^*\left(\theta_0\right)= \left(1-\alpha\right)\left( \frac{\hat{b}_1^{\mathrm{LS}}}{\hat{a}_1^{\mathrm{LS}}}p_1^{\mathrm{LS}} - \frac{\hat{b}_0^{\mathrm{LS}}}{\hat{a}_0^{\mathrm{LS}}}p_0^{\mathrm{LS}}\right).
    \end{aligned}
\end{equation}

The KKT conditions of problem \eqref{eq:prop7:FEO_demand_3} with Lagrangian $\mathcal{L}(\theta_0,\theta_1,\lambda)=\left(\theta_0-\theta_0^{\mathrm{LS}}\right)^TM_0\left(\theta_0-\theta_0^{\mathrm{LS}}\right)+\left(\theta_1-\theta_1^{\mathrm{LS}}\right)^TM_1\left(\theta_1- \theta_1^{\mathrm{LS}}\right)+\lambda\left(\frac{\hat{b}_1^{\mathrm{LS}}}{\hat{a}_1^{\mathrm{LS}}}p_1^*\left(\theta_1\right)-\frac{\hat{b}_0^{\mathrm{LS}}}{\hat{a}_0^{\mathrm{LS}}}p_0^*\left(\theta_0\right)- \left(1-\alpha\right)\left( \frac{\hat{b}_1^{\mathrm{LS}}}{\hat{a}_1^{\mathrm{LS}}}p_1^{\mathrm{LS}} - \frac{\hat{b}_0^{\mathrm{LS}}}{\hat{a}_0^{\mathrm{LS}}}p_0^{\mathrm{LS}}\right)\right)$ are
\begin{subequations}
\begin{align}
    \frac{\partial\mathcal{L}}{\partial \theta_0}=2M_0\left(\theta_0-\theta_0^{\mathrm{LS}}\right)-\lambda \frac{\hat{b}_0^{\mathrm{LS}}}{\hat{a}_0^{\mathrm{LS}}}\begin{bmatrix}
\frac{1}{2\hat{b}_0} \\
-\frac{\hat{a}_0}{2\hat{b}_0^2}
\end{bmatrix}=0,\label{eq:prop7:KKT4}\\
\frac{\partial\mathcal{L}}{\partial \theta_1}=2M_1\left(\theta_1-\theta_1^{\mathrm{LS}}\right)+\lambda\frac{\hat{b}_1^{\mathrm{LS}}}{\hat{a}_1^{\mathrm{LS}}} \begin{bmatrix}
\frac{1}{2\hat{b}_1} \\
-\frac{\hat{a}_1}{2\hat{b}_1^2}
\end{bmatrix}=0,\label{eq:prop7:KKT5}\\
\lambda\left(\frac{\hat{b}_1^{\mathrm{LS}}}{\hat{a}_1^{\mathrm{LS}}}p_1^*\left(\theta_1\right)-\frac{\hat{b}_0^{\mathrm{LS}}}{\hat{a}_0^{\mathrm{LS}}}p_0^*\left(\theta_0\right)- \left(1-\alpha\right)\left( \frac{\hat{b}_1^{\mathrm{LS}}}{\hat{a}_1^{\mathrm{LS}}}p_1^{\mathrm{LS}} - \frac{\hat{b}_0^{\mathrm{LS}}}{\hat{a}_0^{\mathrm{LS}}}p_0^{\mathrm{LS}}\right)\right)=0.\nonumber%\label{eq:prop7:KKT6}
\end{align}
\end{subequations}
Differentiate \eqref{eq:prop7:KKT4} and \eqref{eq:prop7:KKT5} with respect to $\alpha$, we have
\begin{align*}
    2M_0\nabla_\alpha\theta_0-\frac{d\lambda}{d\alpha}\frac{\hat{b}_0^{\mathrm{LS}}}{\hat{a}_0^{\mathrm{LS}}}\begin{bmatrix}
\frac{1}{2\hat{b}_0} \\
-\frac{\hat{a}_0}{2\hat{b}_0^2}
\end{bmatrix}-\lambda\frac{\hat{b}_0^{\mathrm{LS}}}{\hat{a}_0^{\mathrm{LS}}}\nabla_\alpha\left(\begin{bmatrix}
\frac{1}{2\hat{b}_0} \\
-\frac{\hat{a}_0}{2\hat{b}_0^2}
\end{bmatrix}\right)=0, \\
2M_1\nabla_\alpha\theta_1+\frac{d\lambda}{d\alpha}\frac{\hat{b}_1^{\mathrm{LS}}}{\hat{a}_1^{\mathrm{LS}}}\begin{bmatrix}
\frac{1}{2\hat{b}_1} \\
-\frac{\hat{a}_1}{2\hat{b}_1^2}
\end{bmatrix}+\lambda\frac{\hat{b}_1^{\mathrm{LS}}}{\hat{a}_1^{\mathrm{LS}}}\nabla_\alpha\left(\begin{bmatrix}
\frac{1}{2\hat{b}_1} \\
-\frac{\hat{a}_1}{2\hat{b}_1^2}
\end{bmatrix}\right)=0.
\end{align*}
When $\alpha=0$, we have $\lambda=0$. Therefore,
\begin{align*}
    \nabla_\alpha\theta_0|_{\alpha=0}=\frac{1}{2}\frac{\hat{b}_0^{\mathrm{LS}}}{\hat{a}_0^{\mathrm{LS}}}M_0^{-1}q_0\left.\frac{d\lambda}{d\alpha}\right|_{\alpha=0}, \\
\nabla_\alpha\theta_1|_{\alpha=0}=-\frac{1}{2}\frac{\hat{b}_1^{\mathrm{LS}}}{\hat{a}_1^{\mathrm{LS}}}M_1^{-1}q_1\left.\frac{d\lambda}{d\alpha}\right|_{\alpha=0}.
\end{align*}

On the other hand, differentiate both sides of \eqref{eq:prop7:constraint_2} with respect to $\alpha$ and set $\alpha=0$, we have
\begin{equation*}
    \frac{\hat{b}_1^{\mathrm{LS}}}{\hat{a}_1^{\mathrm{LS}}}q_1^T\nabla_\alpha\theta_1|_{\alpha=0}-\frac{\hat{b}_0^{\mathrm{LS}}}{\hat{a}_0^{\mathrm{LS}}}q_0^T\nabla_\alpha\theta_0|_{\alpha=0}=-\left(\frac{\hat{b}_1^{\mathrm{LS}}}{\hat{a}_1^{\mathrm{LS}}}p_1^{\mathrm{LS}}-\frac{\hat{b}_0^{\mathrm{LS}}}{\hat{a}_0^{\mathrm{LS}}}p_0^{\mathrm{LS}}\right).
\end{equation*}
Replace $\nabla_\alpha\theta_0|_{\alpha=0}$ and $\nabla_\alpha\theta_1|_{\alpha=0}$ by $\frac{1}{2}\frac{\hat{b}_0^{\mathrm{LS}}}{\hat{a}_0^{\mathrm{LS}}}M_0^{-1}q_0\left.\frac{d\lambda}{d\alpha}\right|_{\alpha=0}$ and $-\frac{1}{2}\frac{\hat{b}_1^{\mathrm{LS}}}{\hat{a}_1^{\mathrm{LS}}}M_1^{-1}q_1\left.\frac{d\lambda}{d\alpha}\right|_{\alpha=0}$, we have
\begin{equation*}
    \frac{1}{2}\left(\left(\frac{\hat{b}_1^{\mathrm{LS}}}{\hat{a}_1^{\mathrm{LS}}}\right)^2q_1^TM_1^Tq_1+\left(\frac{\hat{b}_0^{\mathrm{LS}}}{\hat{a}_0^{\mathrm{LS}}}\right)^2q_0^TM_0q_0\right)\left.\frac{d\lambda}{d\alpha}\right|_{\alpha=0}=\frac{\hat{b}_1^{\mathrm{LS}}}{\hat{a}_1^{\mathrm{LS}}}p_1^{\mathrm{LS}}-\frac{\hat{b}_0^{\mathrm{LS}}}{\hat{a}_0^{\mathrm{LS}}}p_0^{\mathrm{LS}}.
\end{equation*}
Hence,
\begin{equation*}
    \left.\frac{d\lambda}{d\alpha}\right|_{\alpha=0}=\frac{2\left(\frac{\hat{b}_1^{\mathrm{LS}}}{\hat{a}_1^{\mathrm{LS}}}p_1^{\mathrm{LS}}-\frac{\hat{b}_0^{\mathrm{LS}}}{\hat{a}_0^{\mathrm{LS}}}p_0^{\mathrm{LS}}\right)}{\left(\frac{\hat{b}_1^{\mathrm{LS}}}{\hat{a}_1^{\mathrm{LS}}}\right)^2q_1^TM_1^Tq_1+\left(\frac{\hat{b}_0^{\mathrm{LS}}}{\hat{a}_0^{\mathrm{LS}}}\right)^2q_0^TM_0q_0},
\end{equation*}
and
\begin{equation*}
    \begin{aligned}
        \left.\frac{d p_0^*\left(\theta_0^{\mathrm{FEO}}\right)}{d \alpha}\right|_{\alpha=0}=q_0^T\left.\nabla_\alpha\theta_0\right|_{\alpha=0}
        =&\frac{\frac{\hat{b}_0^{\mathrm{LS}}}{\hat{a}_0^{\mathrm{LS}}}q_0^TM_0q_0}{\left(\frac{\hat{b}_1^{\mathrm{LS}}}{\hat{a}_1^{\mathrm{LS}}}\right)^2q_1^TM_1^Tq_1+\left(\frac{\hat{b}_0^{\mathrm{LS}}}{\hat{a}_0^{\mathrm{LS}}}\right)^2q_0^TM_0q_0}\left(\frac{\hat{b}_1^{\mathrm{LS}}}{\hat{a}_1^{\mathrm{LS}}}p_1^{\mathrm{LS}}-\frac{\hat{b}_0^{\mathrm{LS}}}{\hat{a}_0^{\mathrm{LS}}}p_0^{\mathrm{LS}}\right)\\
        =&\frac{\frac{\hat{b}_0^{\mathrm{LS}}}{\hat{a}_0^{\mathrm{LS}}}q_0^TM_0q_0}{\left(\frac{\hat{b}_1^{\mathrm{LS}}}{\hat{a}_1^{\mathrm{LS}}}\right)^2q_1^TM_1^Tq_1+\left(\frac{\hat{b}_0^{\mathrm{LS}}}{\hat{a}_0^{\mathrm{LS}}}\right)^2q_0^TM_0q_0}\frac{c}{2}\left(\frac{\hat{b}_1^{\mathrm{LS}}}{\hat{a}_1^{\mathrm{LS}}}-\frac{\hat{b}_0^{\mathrm{LS}}}{\hat{a}_0^{\mathrm{LS}}}\right).
    \end{aligned}
\end{equation*}

On the other hand, for EFO with parity-wise demand fairness constraints,
\begin{equation*}
    \left.\frac{dp_0^{\mathrm{EFO}}}{d\alpha}\right|_{\alpha=0}=\frac{c}{2}\frac{\frac{\hat{a}_0^{\mathrm{LS}}\hat{a}_1^\mathrm{LS}}{\hat{b}_0^{\mathrm{LS}}}-\frac{\hat{a}_1^{\mathrm{LS}^2}}{\hat{b}_1^{\mathrm{LS}}}}{\frac{\hat{a}_0^{\mathrm{LS}^2}}{\hat{b}_0^{\mathrm{LS}}}+\frac{\hat{a}_1^{\mathrm{LS}^2}}{\hat{b}_1^{\mathrm{LS}}}},
\end{equation*}
so
\begin{align*}
    \left.\frac{d p_0^*\left(\theta_0^{\mathrm{FEO}}\right)}{d \alpha}\right|_{\alpha=0}<\left.\frac{dp_0^{\mathrm{EFO}}}{d\alpha}\right|_{\alpha=0} &\Leftrightarrow \frac{\frac{\hat{b}_0^{\mathrm{LS}}}{\hat{a}_0^{\mathrm{LS}}}q_0^TM_0q_0}{\left(\frac{\hat{b}_1^{\mathrm{LS}}}{\hat{a}_1^{\mathrm{LS}}}\right)^2q_1^TM_1^Tq_1+\left(\frac{\hat{b}_0^{\mathrm{LS}}}{\hat{a}_0^{\mathrm{LS}}}\right)^2q_0^TM_0q_0}\frac{c}{2}\left(\frac{\hat{b}_1^{\mathrm{LS}}}{\hat{a}_1^{\mathrm{LS}}}-\frac{\hat{b}_0^{\mathrm{LS}}}{\hat{a}_0^{\mathrm{LS}}}\right)<\frac{c}{2}\frac{\frac{\hat{a}_0^{\mathrm{LS}}\hat{a}_1^\mathrm{LS}}{\hat{b}_0^{\mathrm{LS}}}-\frac{\hat{a}_1^{\mathrm{LS}^2}}{\hat{b}_1^{\mathrm{LS}}}}{\frac{\hat{a}_0^{\mathrm{LS}^2}}{\hat{b}_0^{\mathrm{LS}}}+\frac{\hat{a}_1^{\mathrm{LS}^2}}{\hat{b}_1^{\mathrm{LS}}}}\\
    &\Leftrightarrow q_0^TM_0^{-1}q_0\hat{b}_0^{\mathrm{LS}}<q_1^TM_1^{-1}q_1\hat{b}_1^{\mathrm{LS}}.
\end{align*}

Therefore, if $q_0^TM_0^{-1}q_0\hat{b}_0^{\mathrm{LS}}<q_1^TM_1^{-1}q_1\hat{b}_1^{\mathrm{LS}}$, there exists $\xi>0$ such that when $0<\alpha<\xi$, FEO leads to lower prices than EFO and hence FEO gives higher consumer surplus and social welfare. Similarly, if $q_0^TM_0^{-1}q_0\hat{b}_0^{\mathrm{LS}}>q_1^TM_1^{-1}q_1\hat{b}_1^{\mathrm{LS}}$, $\frac{d p_0^*\left(\theta_0^{\mathrm{FEO}}\right)}{d \alpha}|_{\alpha=0}>\frac{dp_0^{\mathrm{EFO}}}{d\alpha}|_{\alpha=0}$, then FEO leads to higher prices than EFO and gives lower consumer surplus and social welfare. \hfill \Halmos
\end{proof}
\medskip

\begin{proof}{Proof of Proposition~\ref{prop:convexity_linear}.}
To prove that \eqref{eq:FEO_price_parity_feature_linear} can be solved as a convex optimization problem under the linear demand model, it suffices to transform the constraints into standard form, since the loss, least square error, in the objective is already convex.

For \textbf{price fairness} constraints, we have
\begin{equation*}
    -\frac{\hat a^\top x^{(i)}}{2\hat{b}^\top x^{(i)}}+\frac{c}{2} \leq \tilde{\alpha},
\end{equation*}
where $\tilde{\alpha}$ denotes $(1-\alpha) \bar{p}+\alpha\underline{p}$. This can be rewritten as 
\begin{equation*}
    \hat a^\top x^{(i)} +(2\tilde{\alpha}-c)\left(\hat{b}^\top x^{(i)}\right)\leq 0.
\end{equation*}
The left-hand side is affine in $\hat{a}$ and $\hat{b}$, and therefore this is a convex constraint. Consequently, the overall problem can be written as a convex optimization problem.

For \textbf{demand fairness} constraints, we have
\begin{equation*}
    \left(\hat a^{\mathrm{LS}}\right)^\top x^{(i)} + \hat b^{\mathrm{LS}}\left(-\frac{\hat a^\top x^{(i)}}{2\hat{b}^\top x^{(i)}}+\frac{c}{2}\right) \geq \tilde{\alpha},
\end{equation*}
which is equivalent to,
\begin{equation*}
    \left(2 \left(\hat a^{\mathrm{LS}}\right)^\top x^{(i)}+c\hat{b}^{\mathrm{LS}}-2\tilde{\alpha}\right)\left(\hat{b}^\top x^{(i)}\right) - \hat b^{\mathrm{LS}} \left(\hat a^\top x^{(i)}\right)\leq 0,
\end{equation*}
where the left-hand side is affine in $\hat{a}$ and $\hat{b}$. Therefore, the overall problem is formulated as a convex optimization problem.\hfill \Halmos
\end{proof}
\medskip
\begin{proof}{Proof of Proposition~\ref{prop:convexity_logistic}.}
Similar to the proof of Proposition~\ref{prop:convexity_linear}, the proof of Proposition~\ref{prop:convexity_logistic} relies on reformulating the constraints in standard form. Since the cross-entropy loss in the objective is already convex, it suffices to show that the constraints can be expressed linearly.

For \textbf{price fairness} constraints, we have
\begin{equation}
    1+\mathbf{W}_0\left(\exp\left(\hat{a}^\top \mathbf{x}^{(i)}+\hat{b}c-1\right)\right) + \tilde{\alpha} b \leq 0,
    \label{constraint:pf_log}
\end{equation}
where $\mathbf{W}_0(\cdot)$ denotes the principal branch of the Lambert--W function and $\tilde{\alpha}=(1-\alpha)\bar{p}+\alpha\underline{p}-c$.
More specifically, the (multi-valued) Lambert--W function $\mathbf{W}(z)$ is defined as the solution 
to $\mathbf{W}(z)e^{\mathbf{W}(z)}=z$. 
When $z>0$, the function admits a unique real solution $\mathbf{W}_0(z)>0$. 
In our case, since $z=\exp\left(a^\top \mathbf{x}_i+\hat{b}c-1\right)>0$, 
we focus on the principal branch $\mathbf{W}_0$.

In addition, by implicit differentiation, we have
\begin{equation*}
    z(1+\mathbf{W}_0(z))\frac{d\mathbf{W}_0(z)}{d z}=\mathbf{W}_0(z).
\end{equation*}
As a consequence, this gets the following formula for the derivative of $\mathbf{W}_0(z)$, $\frac{d\mathbf{W}_0(z)}{d z}=\frac{\mathbf{W}_0(z)}{z\left(1+\mathbf{W}_0\left(z\right)\right)}$. Then,
\begin{align*}
    \frac{d\mathbf{W}_0(e^x)}{dx}&= e^x\frac{d\mathbf{W}_0(e^x)}{de^x}=\frac{\mathbf{W}_0(e^x)}{1+\mathbf{W}_0(e^x)},\\
    \frac{d^2\mathbf{W}_0(e^x)}{dx^2}&=\frac{1}{\left(1+\mathbf{W}_0\left(e^x\right)\right)^2}\left(\mathbf{W}_0(e^x)-\frac{\mathbf{W}_0(e^x)^2}{1+\mathbf{W}_0(e^x)}\right)>0.
\end{align*}
Therefore, $\mathbf{W}_0(e^x)$ is a convex function. 
Since $\hat{a}^\top \mathbf{x}^{(i)}+\hat{b}-1$ is affine in $(\hat{a},\hat{b})$, 
it follows from the composition rule that 
$\mathbf{W}_0\!\left(\exp\!\left(\hat{a}^\top \mathbf{x}^{(i)}+\hat{b}c-1\right)\right)$ 
is also convex. Furthermore, since $1+\tilde{\alpha} \hat{b}$ is affine (and hence convex), 
the sum of two convex functions is convex. 
Therefore, the left-hand side of \eqref{constraint:pf_log} is convex.

In conclusion, the Rawlsian price fairness constraint can be expressed in convex form, 
and thus the overall problem can be reformulated as a convex optimization problem.

Similarly, we only need to verify that the \textbf{demand fairness} constraints can be cast into standard convex form. In particular, the demand fairness constraints take the form
\begin{equation*}
    \frac{1}{1+\exp\left(-\left((\hat{a}^{CE})^\top x^{(i)}+\hat{b}^{\mathrm{CE}}p^{\mathrm{CE}*}(x^{(i)};\hat{a},\hat{b})\right)\right)}\geq \tilde{\alpha} \Leftrightarrow \exp\left(-\left((\hat{a}^{CE})^\top x^{(i)}+\hat{b}^{\mathrm{CE}}p^*\right)\right) \leq \frac{1-\tilde{\alpha}}{\tilde{\alpha}},
\end{equation*}
where $\tilde{\alpha}=(1-\alpha)\bar{d}+\alpha\underline{d}$. Take the logarithm of both sides, we have
\begin{equation*}
     -(\hat{a}^{CE})^\top x^{(i)}-\hat{b}^{\mathrm{CE}} \left(\frac{1+\mathbf{W}_0\left(\exp\left(\hat{a}^\top x^{(i)}+\hat{b} c-1\right)\right)}{-\hat b}+c\right)\leq \ln\left(\frac{1-\tilde{\alpha}}{\tilde{\alpha}}\right),
\end{equation*}
which is equivalent to
\begin{equation*}
     1 + \mathbf{W}_0\left(\exp\left(\hat{a}^\top x^{(i)}+\hat{b} c-1\right)\right)  
     -\frac{1}{\hat{b}^{\mathrm{CE}}}\left((\hat{a}^{\mathrm{CE}})^\top x^{(i)}+\ln\left(\frac{1-\tilde{\alpha}}{\tilde{\alpha}}\right)\right)\hat{b}-c\hat{b} \leq 0.
\end{equation*}
From price fairness, we know that $\mathbf{W}_0\!\left(\exp\left(\hat{a}^\top x^{(i)}+\hat{b} c-1\right)\right)$ 
is a convex function. The remaining term, 
$1-\frac{1}{\hat{b}^{\mathrm{CE}}}\Big((\hat{a}^{\mathrm{CE}})^\top x^{(i)}+\ln\!\left(\tfrac{1-\tilde{\alpha}}{\tilde{\alpha}}\right)\Big)\hat{b}$
is affine in $\hat{b}$. Since the sum of a convex function and an affine function is convex, the left-hand side of the constraint is convex. In conclusion, the optimization from with Rawlsian access fairness can be reformulated as a convex optimization problem.\hfill \Halmos
\end{proof}

\section{Numerical Results}
\label{appendix:numerical}
\subsection{Synthetic Data Generation}
\label{appendix:synthetic_data}
To evaluate our framework in a controlled environment, we construct a synthetic dataset in which both the correlation between features and group labels, as well as the demand function, can be explicitly parameterized.

We consider a binary group label, i.e., $g \in \{0,1\}$, drawn independently from a Bernoulli distribution with parameter $\theta_g$. 
For the numerical experiments, we set $m=1$ and, for simplicity, denote $\vx_{i1}$ as $x$ and $\va$ (resp. $\vb$) as $(a_0,a_1)$ (resp. $(b_0,b_1)$). 
We then generate a continuous feature $x$ that is correlated with the group label $g$ at a pre-specified point-biserial correlation coefficient $\rho \in [0,1]$. 
Formally, define the standardized group variable
\begin{equation*}
    z = \frac{g-\theta_g}{\sqrt{\theta_g (1-\theta_g)}}.
    \label{eq:theta_g}
\end{equation*}
Then we construct
\begin{equation*}
    x = \sigma_X \big( \rho\, z + \sqrt{1-\rho^2}\,\varepsilon \big)+\kappa,
\end{equation*}
where $\varepsilon$ is independent of $g$, with $\E[\varepsilon]=0$ and $Var(\varepsilon)=1$. The constant $\kappa$ is a location parameter that can be chosen freely to shift the support of $x$ (e.g., to ensure $x>0$). By construction, $Var(x) = \sigma_X^2$ and $Corr(x,g) = \rho$ (Refer to Lemma~\ref{lemma:corr} provided at the end of this section.). For instance, if we set $\rho=1$, $\sigma_X=\sqrt{\theta_g(1-\theta_g)}$, and $\kappa=\theta_g$, then $x=g$. 

This simulation design allows us to flexibly control (1) the marginal distribution of the group label, (2) the correlation between features and group identity, and (3) the sensitivity of valuations to observed characteristics. In this paper, we set $\rho = 0.5$, which represents a practically relevant case: features are typically correlated with demographic attributes that are sometimes legally protected (e.g., gender), but the correlation is neither perfect nor zero.%In particular, by varying $\rho$, we can study environments ranging from perfectly segregated groups ($\rho=1$) to statistically indistinguishable ones ($\rho=0$). %Price and demand generation depends on the form of the true demand model.

\begin{lemma}
Let $g \in \{0,1\}$ be drawn from a Bernoulli distribution with parameter $\theta_g$, and define $z = \tfrac{g-\theta_g}{\sqrt{\theta_g(1-\theta_g)}}$. Let $\varepsilon$ is independent of $g$, with $\E[\varepsilon]=0$ and $\operatorname{Var}(\varepsilon)=1$, and define $x = \sigma_X \big( \rho z + \sqrt{1-\rho^2}\,\varepsilon \big)+\kappa$. Then $\operatorname{Var}(x) = \sigma_X^2$ and $\operatorname{Corr}(x,g) = \rho$.
\label{lemma:corr}
\end{lemma}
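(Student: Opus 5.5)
We compute the first two moments of $x$ directly, using only linearity of expectation, bilinearity of covariance, and the independence of $\varepsilon$ and $g$. The additive constant $\kappa$ affects neither the variance nor any covariance, so we drop it at the outset.

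\textbf{Step 1: moments of $z$.} Since $g$ is Bernoulli with parameter $\theta_g$, we have $\E[g]=\theta_g$ and $\operatorname{Var}(g)=\theta_g(1-\theta_g)$. Hence $z$ is the standardization of $g$: $\E[z]=0$ and $\operatorname{Var}(z)=1$. We assume $\theta_g\in(0,1)$ and $\rho\in[0,1]$ so that $z$ and $\sqrt{1-\rho^2}$ are well defined.

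\textbf{Step 2: variance of $x$.} Because $\varepsilon$ is independent of $g$, it is independent of $z$, so $\operatorname{Cov}(z,\varepsilon)=0$. Therefore
\begin{equation*}
\operatorname{Var}(x)=\sigma_X^2\big(\rho^2\operatorname{Var}(z)+(1-\rho^2)\operatorname{Var}(\varepsilon)\big)=\sigma_X^2\big(\rho^2+1-\rho^2\big)=\sigma_X^2 .
\end{equation*}

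\textbf{Step 3: covariance with $g$.} Using bilinearity and $\operatorname{Cov}(\varepsilon,g)=0$,
\begin{equation*}
\operatorname{Cov}(x,g)=\sigma_X\rho\operatorname{Cov}(z,g)=\sigma_X\rho\,\frac{\operatorname{Var}(g)}{\sqrt{\theta_g(1-\theta_g)}}=\sigma_X\rho\sqrt{\theta_g(1-\theta_g)} .
\end{equation*}

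\textbf{Step 4: correlation.} Dividing by $\sqrt{\operatorname{Var}(x)\operatorname{Var}(g)}=\sigma_X\sqrt{\theta_g(1-\theta_g)}$ gives $\operatorname{Corr}(x,g)=\rho$, which completes the argument.

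There is no real obstacle here. The only care needed is the positivity implicit in the formulas: $\sigma_X>0$ is required for the correlation to be defined, and if $\sigma_X<0$ the sign of the correlation would flip. We therefore state $\sigma_X>0$ and $\theta_g\in(0,1)$ explicitly as the implicit assumptions of the lemma.
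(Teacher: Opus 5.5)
Your proposal is correct and follows essentially the same route as the paper: both compute the variance using the independence of $\varepsilon$ and $g$, and then compute a covariance. The only cosmetic difference is that you compute $\operatorname{Cov}(x,g)$ directly, whereas the paper computes $\operatorname{Corr}(x,z)=\rho$ and transfers it to $g$ because $z$ is a positive affine transformation of $g$; the assumptions $\sigma_X>0$ and $\theta_g\in(0,1)$ that you state explicitly are used implicitly in the paper's proof as well.
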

\begin{proof}{Proof.}
Since $g \sim \text{Bernoulli}(\theta_g)$, we have $\E[g] = \theta_g$ and $\operatorname{Var}(g) = \theta_g(1-\theta_g)$. 
Therefore, $\operatorname{Var}(z) = 1$.  
By independence of $z$ and $\varepsilon$, 
$$\operatorname{Var}(x) = \sigma_X^2\big(\rho^2\operatorname{Var}(z) + (1-\rho^2)\operatorname{Var}(\varepsilon)\big) 
= \sigma_X^2(\rho^2 + 1-\rho^2) = \sigma_X^2.$$
%Next, 
%$$Cov(x,z) = \sigma_X\E\left[\left(\rho z + \sqrt{1-\rho^2}\varepsilon+\kappa\right)z\right] 
%= \sigma_X \rho \E[z^2] = \sigma_X \rho \E[z^2]=\sigma_X \rho\left(Var(z)-\E[z]^2\right) =\sigma_X \rho.$$
Next, since $\mathbb{E}[z] = 0$ and $\mathbb{E}[\varepsilon z] = 0$, we have:
\begin{align*}
\operatorname{Cov}(x,z) &= \mathbb{E}\left[\left(\sigma_X(\rho z + \sqrt{1-\rho^2}\varepsilon) + \kappa\right)z\right] - \mathbb{E}[x]\mathbb{E}[z] \\
&= \sigma_X \rho \mathbb{E}[z^2] + \sigma_X \sqrt{1-\rho^2}\mathbb{E}[\varepsilon z] + \kappa \mathbb{E}[z] \\
&= \sigma_X \rho \mathbb{E}[z^2] = \sigma_X \rho \left(\operatorname{Var}(z) + \mathbb{E}[z]^2\right) = \sigma_X \rho.
\end{align*}
Thus, 
$$\operatorname{Corr}(x,z) = \frac{\operatorname{Cov}(x,z)}{\sqrt{\operatorname{Var}(x)}\sqrt{\operatorname{Var}(z)}} 
= \frac{\sigma_X \rho}{\sigma_X \cdot 1} = \rho.$$
Since $z$ is a standardized linear transformation of $g$, we also have $Corr(x,g)=\rho$.\hfill\Halmos
\end{proof}
\subsection{Heuristic Solution for Parity-wise Fairness}
\label{appendix:heuristic}
If the demand function is $d = a^\top x_i + b_g p_i$, then parity-wise price and demand fairness can be solved by a one-dimensional parameter search with convex optimization subproblems. The original problem is
\begin{equation*}
\begin{aligned}
    \ell(w,\alpha) := \min_{\hat{a}, \, \hat{b}_0 \leq 0, \, \hat{b}_1 \leq 0} \quad & 
        \sum_{g \in \{0,1\}} \frac{1}{n_g} \sum_{\{i \mid g_i = g\}} 
        \big(\hat a^\top x_i + \hat b_g p_i \big)^2 \\
    \text{s.t.} \quad &
        \left|\frac{1}{n_1}\sum_{\{i \mid g_i=1\}} \frac{\hat a^\top x_i}{(-2\hat b_1)}
        - \frac{1}{n_0}\sum_{\{i \mid g_i=0\}} \frac{\hat a^\top x_i}{(-2\hat b_0)}\right|
        \leq (1-\alpha)\Delta.
\end{aligned}
\end{equation*}

The terms $\tfrac{\hat a^\top x_i}{2\hat b_0}$ and $\tfrac{\hat a^\top x_i}{2\hat b_1}$ induce non-convexity.  
We can rewrite the constraint as
\begin{equation*}
\begin{aligned}
    &\left|\frac{1}{n_1}\sum_{\{i \mid g_i=1\}} \frac{\hat a^\top x_i}{(-2\hat b_1)}
    - \frac{1}{n_0}\sum_{\{i \mid g_i=0\}} \frac{\hat a^\top x_i}{(-2\hat b_0)}\right|
    \leq (1-\alpha)\Delta \\
    \Leftrightarrow \; &
    \frac{\hat a^{\top}\bar{x}_1}{(-2\hat b_1)} - \frac{\hat a^{\top}\bar{x}_0}{(-2\hat b_0)}
    \leq (1-\alpha)\Delta 
    \quad \text{and} \quad
    \frac{\hat a^{\top}\bar{x}_0}{(-2\hat b_0)} - \frac{\hat a^{\top}\bar{x}_1}{(-2\hat b_1)}
    \leq (1-\alpha)\Delta \\
    \Leftrightarrow \; &
    \hat a^{\top}\bar{x}_1 - (-2\hat b_1)\frac{\hat a^{\top}\bar{x}_0}{(-2\hat b_0)}
    \leq (1-\alpha)\Delta(-2\hat b_1), \\
    &(-2\hat b_1)\frac{\hat a^{\top}\bar{x}_0}{(-2\hat b_0)} - \hat a^{\top}\bar{x}_1
    \leq (1-\alpha)\Delta(-2\hat b_1),
\end{aligned}
\end{equation*}
where $\bar{x}_g = \tfrac{1}{n_g}\sum_{\{i \mid g_i=g\}} x_i$.  
Now, introduce the parameter $w = \frac{\hat a^{\top}\bar{x}_0}{(-2\hat b_0)}$. Then the subproblems take the form
\begin{equation}
\begin{aligned}
    \ell(w,\alpha) := \min_{\hat{a}, \, \hat{b}_0 \leq 0, \, \hat{b}_1 \leq 0} \quad & 
        \sum_{g \in \{0,1\}} \frac{1}{n_g} \sum_{\{i \mid g_i=g\}}
        \big(\hat a^\top x_i + \hat b_g p_i \big)^2 \\
    \text{s.t.} \quad & \hat a^{\top}\bar{x}_1 - (-2\hat b_1)w \leq (1-\alpha)\Delta(-2\hat b_1), \\
    & (-2\hat b_1)w - \hat a^{\top}\bar{x}_1 \leq (1-\alpha)\Delta(-2\hat b_1), \\
    & \hat a^{\top}\bar{x}_0 \leq w(-2\hat b_0), \\
    & \hat a^{\top}\bar{x}_0 \geq w(-2\hat b_0).
\end{aligned}
\label{sub_problem}
\end{equation}

This problem is convex for each fixed $w$ and $\alpha$. Since we replace 
$\tfrac{\hat a^{\top}\bar{x}_0}{(-2\hat b_0)}$ (which corresponds to the optimal price minus $\tfrac{c}{2}$) by $w$, the feasible range of $w$ lies between $\tfrac{c}{2}$ and the price cap minus $\tfrac{c}{2}$.

\begin{algorithm}[ht]
\caption{Grid Search over $w$ with Convex Optimization Subproblems}
\begin{algorithmic}[1]
\REQUIRE Step size $\delta$, fairness parameter $\alpha$
\STATE Initialize $\ell^* \gets +\infty$
\FOR{each $w \in \left[\tfrac{c}{2}, \tfrac{c}{2}+\delta, \dots, p_{\text{cap}} - \tfrac{c}{2}\right]$}
    \STATE Solve \eqref{sub_problem}%$\min_{\hat{a},\hat{b}_0,\hat{b}_1} \ell(w,\alpha)$
    \IF{$\ell(w,\alpha) < \ell^*$}
        \STATE $\ell^* \gets \ell(w,\alpha)$
        %\STATE $w^* \gets w$
        \STATE $(\hat{a},\hat{b}_0,\hat{b}_1) \gets \left(\hat{a}(w,\alpha),\hat{b}_0(w,\alpha),\hat{b}_1(w,\alpha)\right)$
    \ENDIF
\ENDFOR
\RETURN{$\hat{a},\hat{b}_0,\hat{b}_1$}
\end{algorithmic}
\end{algorithm}

\section{Case Study}
\label{appendix:case_study}

In Appendix~\ref{appendix:linear_well_specified} and Appendix~\ref{appendix:linear_misspecified}, we consider estimation using a linear demand model, when the true demand model is well-specified and mis-specified. Similar to Section \ref{appendix:well_specified_logistic}, we set a normalized unit cost to $0.05$ (corresponding to $50$ SEK without normalization), and a normalized price cap to $1.2$ (corresponding to $1200$ SEK without normalization). For Rawlsian fairness, we set $\bar{p}$ (resp. $\underline{p}$) in~\eqref{eq:rawls_price_feature} as the highest (resp. lowest) optimal price obtained on the training set using parameters estimated without fairness constraints. We set $\bar{d}$ as $0.5$ and $\underline{d}$ as the lowest demand observed in the training set without fairness constraints.

\subsection{Linear Demand Estimation with Linear True Demand}
\label{appendix:linear_well_specified}
We examine a linear demand setting under the assumption that the true demand model is also linear. First, Table~\ref{tab:measure_loss_fairness} illustrates the variations in profit, consumer surplus, and social welfare as $\alpha$ increases. We observe that when $\alpha=0.4$ and $0.6$, the profit, consumer surplus, and social welfare are all lower than those achieved without fairness constraints, suggesting that imposing parity-wise loss fairness may lead to negative outcomes.

\begin{table}[htbp]
    \centering
    \caption{Normalized performance measures across $\alpha$ under parity-wise loss fairness (\%)}
    \label{tab:measure_loss_fairness}
    \renewcommand{\arraystretch}{1.1} 

    \begin{tabular}{lccccc}
    \toprule
    & \multicolumn{5}{c}{$\alpha$} \\
    \cmidrule(lr){2-6}
    Measure & 0.2 & 0.4 & 0.6 & 0.8 & 1.0 \\
    \midrule
    $\mathcal{R}(\alpha)/\mathcal{R}(0)$ & 100.19 & 99.68 & 99.45 & 99.31 & 99.20 \\
    $\mathcal{S}(\alpha)/\mathcal{S}(0)$ & 98.00  & 98.48 & 99.38 & 100.29 & 101.15 \\
    $\mathcal{W}(\alpha)/\mathcal{W}(0)$ & 99.41  & 99.26 & 99.43 & 99.66 & 99.90 \\
    \bottomrule
    \end{tabular}
\end{table}

%\textcolor{red}{dot above the red line (1) the plot: the range of ditb. gets to be large. (2) 0.4-0.6, every value $<$ 100 (3) non-monotonic r/s/w (3) plot 0.0, 0.2, 0.4, 0.6, 0.8, 1.0; Figure 8(b) legend p $\rightarrow$ d, Note: explain the legend? For each table/figure, we should have punch line(s)... For instance, in this case, it is beter to use FEO than EFO.}

Figure~\ref{fig:linear_parity_price_case} illustrates the outcomes under parity-wise price fairness. Similar to Section~\ref{appendix:well_specified_logistic}, when $\alpha=1$, the average prices of two groups become the same only for EFO due to the generalization error of FEO. Consistent with the implication of Corollary~\ref{corollary2}, Table~\ref{tab:measure_parity_price} shows that FEO yields higher consumer surplus and social welfare. More specifically, accepting $1.13\%$ greater profit loss under FEO unlocks an $8.34\%$ higher gain in consumer surplus than EFO under perfect fairness (i.e., $\alpha=1$). Thus, using FEO under parity-wise price fairness will be more beneficial to customers.

\begin{figure}[htbp]
\FIGURE{
\begin{minipage}{\textwidth}
\centering
\captionsetup{justification=centering}
\begin{subfigure}{0.35\textwidth}
    \centering
    \includegraphics[width=\textwidth]{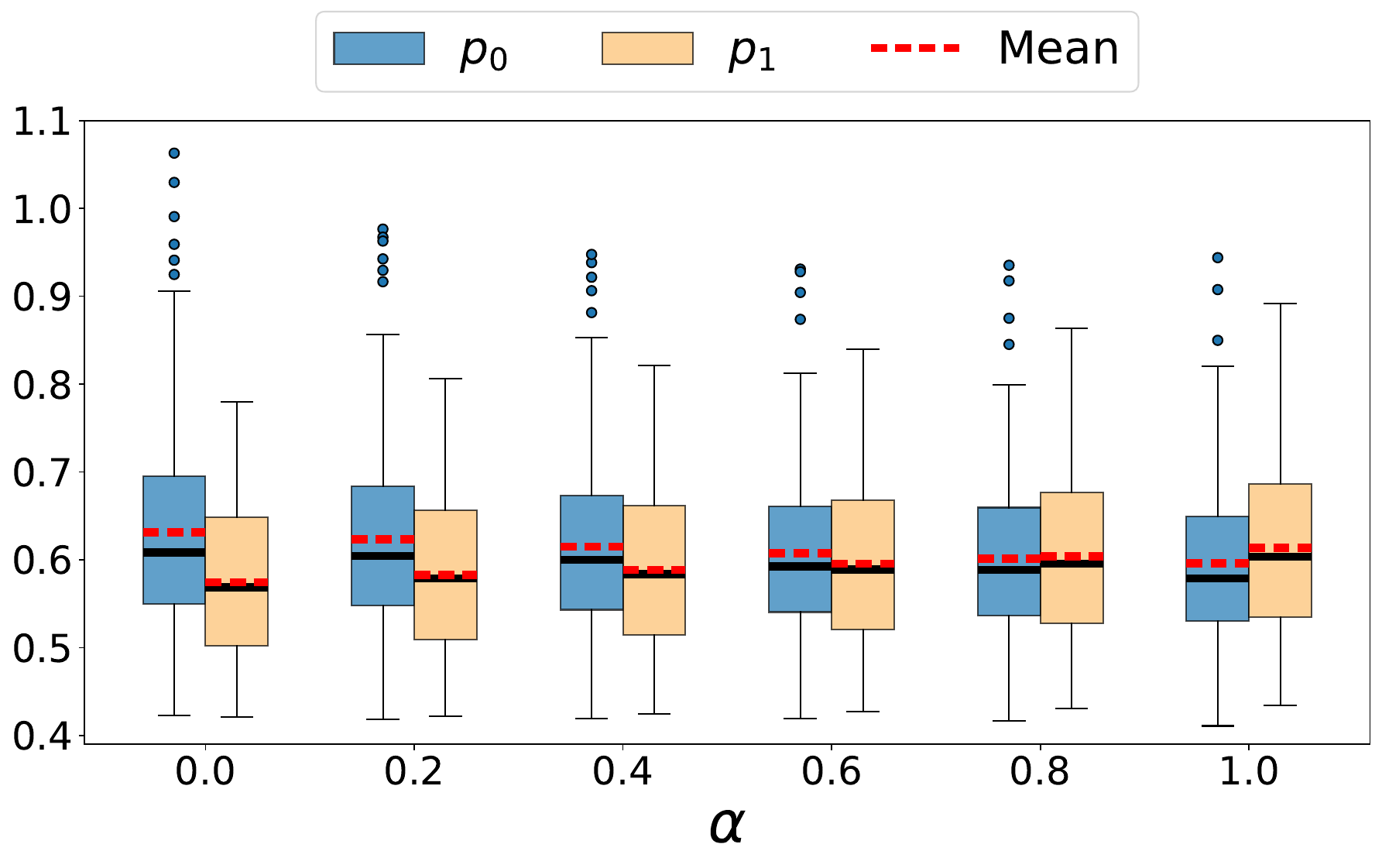}
    \caption{Price (FEO)}
\end{subfigure}
\begin{subfigure}{0.35\textwidth}
    \centering
    \includegraphics[width=\textwidth]{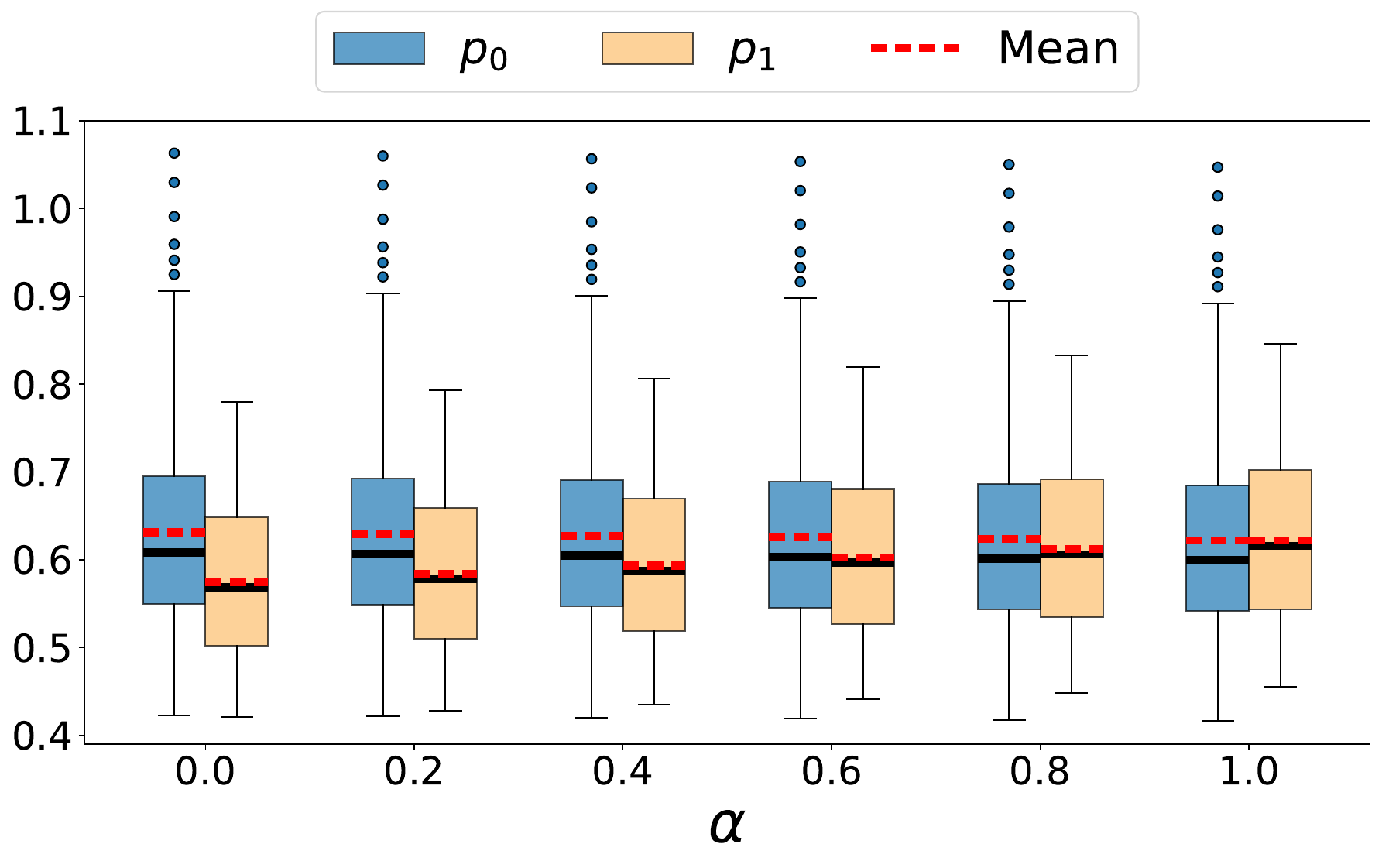}
    \caption{Price (EFO)}
\end{subfigure}
\end{minipage}
}
{Price distributions for FEO and EFO under parity-wise price fairness \label{fig:linear_parity_price_case}\vspace{1mm}}
{}
\end{figure}

\begin{table}[htbp]
    \centering
    \caption{Normalized performance measures for FEO and EFO under parity-wise price fairness (\%)}
    \label{tab:measure_parity_price}
    \renewcommand{\arraystretch}{1.1}
    \setlength{\tabcolsep}{4pt}
    
    \begin{tabular}{l ccccc c ccccc}
        \toprule
        & \multicolumn{5}{c}{FEO ($\alpha$)} & & \multicolumn{5}{c}{EFO ($\alpha$)} \\
        \cmidrule(lr){2-6} \cmidrule(lr){8-12}
        Measure & 0.2 & 0.4 & 0.6 & 0.8 & 1.0 & & 0.2 & 0.4 & 0.6 & 0.8 & 1.0 \\
        \midrule
        % Revenue Row
        $\mathcal{R}(\alpha)/\mathcal{R}(0)$  
        & 99.73  & 99.53  & 99.26  & 98.94  & 98.58  & 
        & 99.96  & 99.92  & 99.86  & 99.79  & 99.71  \\ 
        
        % Surplus Row
        $\mathcal{S}(\alpha)/\mathcal{S}(0)$  
        & 101.97 & 104.14 & 106.02 & 107.59 & 108.83 & 
        & 100.08 & 100.17 & 100.28 & 100.39 & 100.51 \\ 
        
        % Welfare Row
        $\mathcal{W}(\alpha)/\mathcal{W}(0)$  
        & 100.53 & 101.17 & 101.66 & 102.01 & 102.22 & 
        & 100.01 & 100.01 & 100.01 & 100.00 & 100.00 \\ 
        \bottomrule
    \end{tabular}
\end{table}

%\textcolor{red}{partiy-wise price fairness: (1) EFO at alpha=1, same mean of price -- we need to explain why means at alpha=1 under FEO are differ (2) maybe we can talk about the distribution of price (FEO -- range goes down, EFO -- keep the range (or distribution)) ; Table: FEO gives better surplus/welfare (if we }

Figure~\ref{fig:linear_parity_demand_case} presents the results of parity-wise demand fairness. Similar to Section~\ref{appendix:well_specified_logistic}, due to the estimation error, the estimated demand function differs from the true demand model. As a result, there exists a discrepancy between the ex-ante and ex-post demand under both EFO and FEO, and a small ex-ante demand gap may not accurately reflect the true demand gap, which is different from logistic demand estimation. Consequently, the ex-post demand does not change significantly for both FEO and EFO, and Table~\ref{tab:measure_parity_demand} shows that imposing parity-wise demand fairness on ex-ante demand does not lead to substantial changes in profit, consumer surplus, or social welfare. This suggests that implementing parity-wise demand fairness in practice may be challenging.
\begin{figure}[htbp]
\FIGURE{
\begin{minipage}{\textwidth}
\centering
\captionsetup{justification=centering}
\begin{subfigure}{0.35\textwidth}
    \centering
    \includegraphics[width=\textwidth]{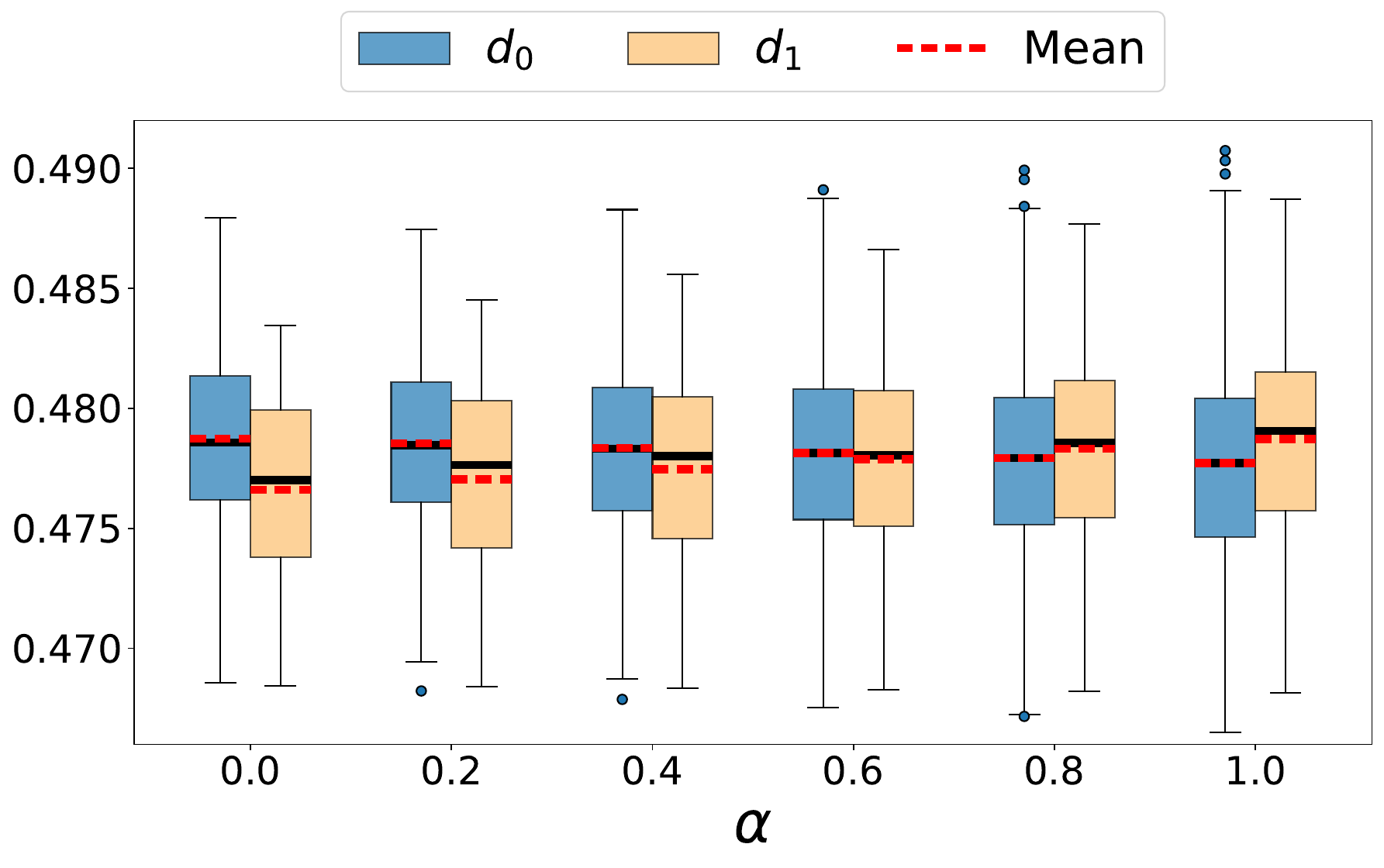}
    \caption{Ex Ante Demand (FEO)}
\end{subfigure}
\begin{subfigure}{0.35\textwidth}
    \centering
    \includegraphics[width=\textwidth]{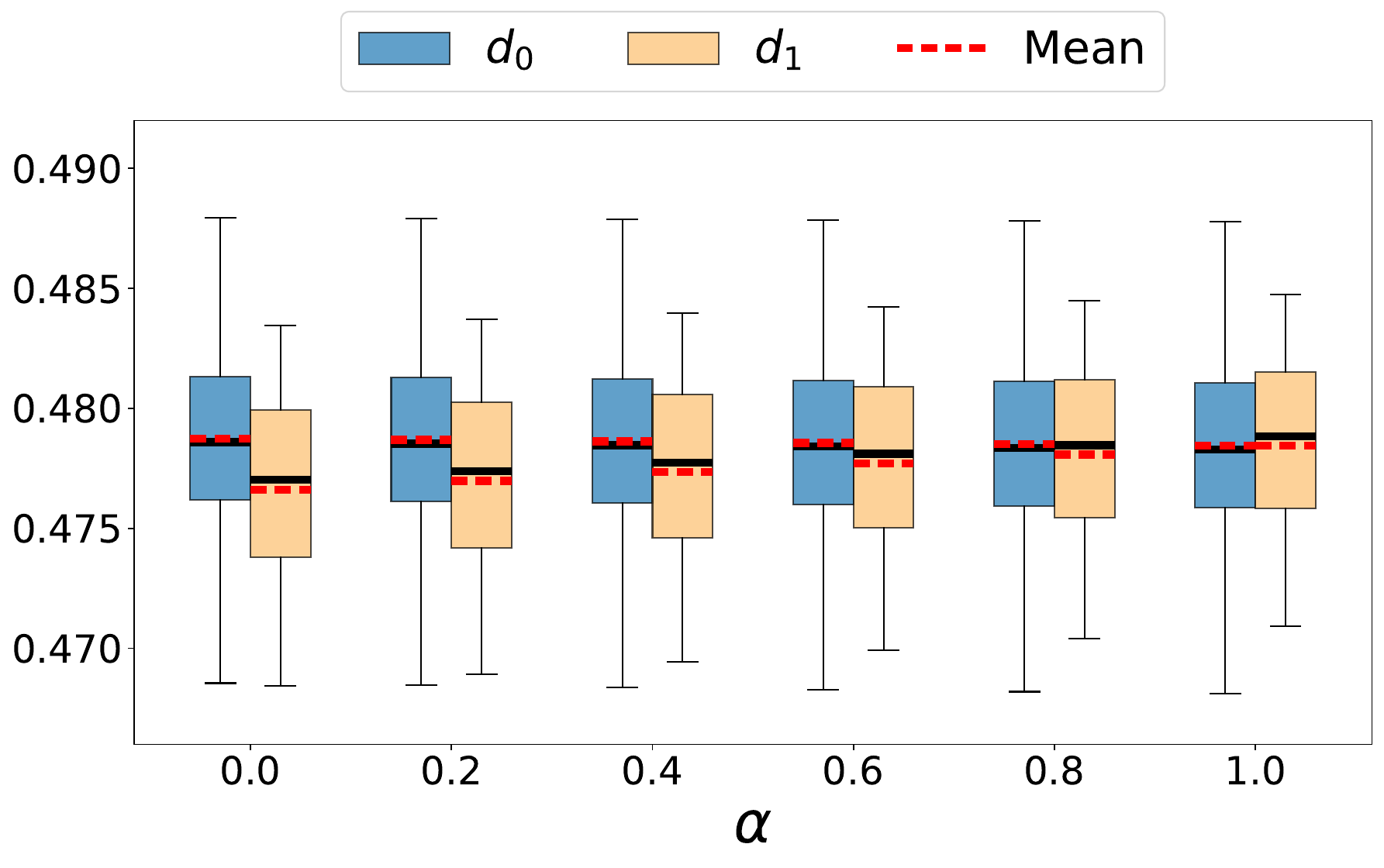}
    \caption{Ex Ante Demand (EFO)}
\end{subfigure}
\begin{subfigure}{0.35\textwidth}
    \centering
    \includegraphics[width=\textwidth]{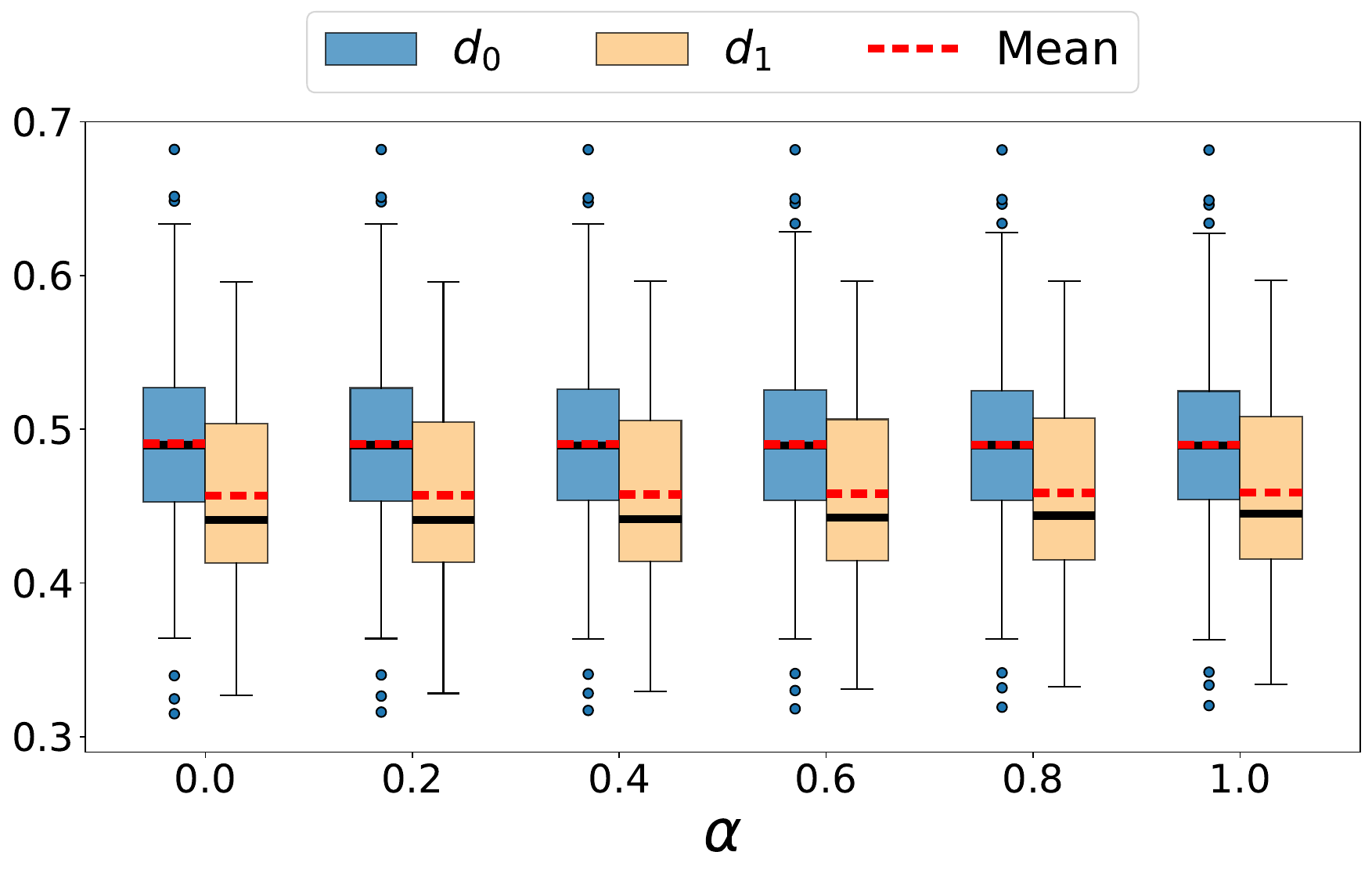}
    \caption{Ex Post Demand (FEO)}
\end{subfigure}
\begin{subfigure}{0.35\textwidth}
    \centering
    \includegraphics[width=\textwidth]{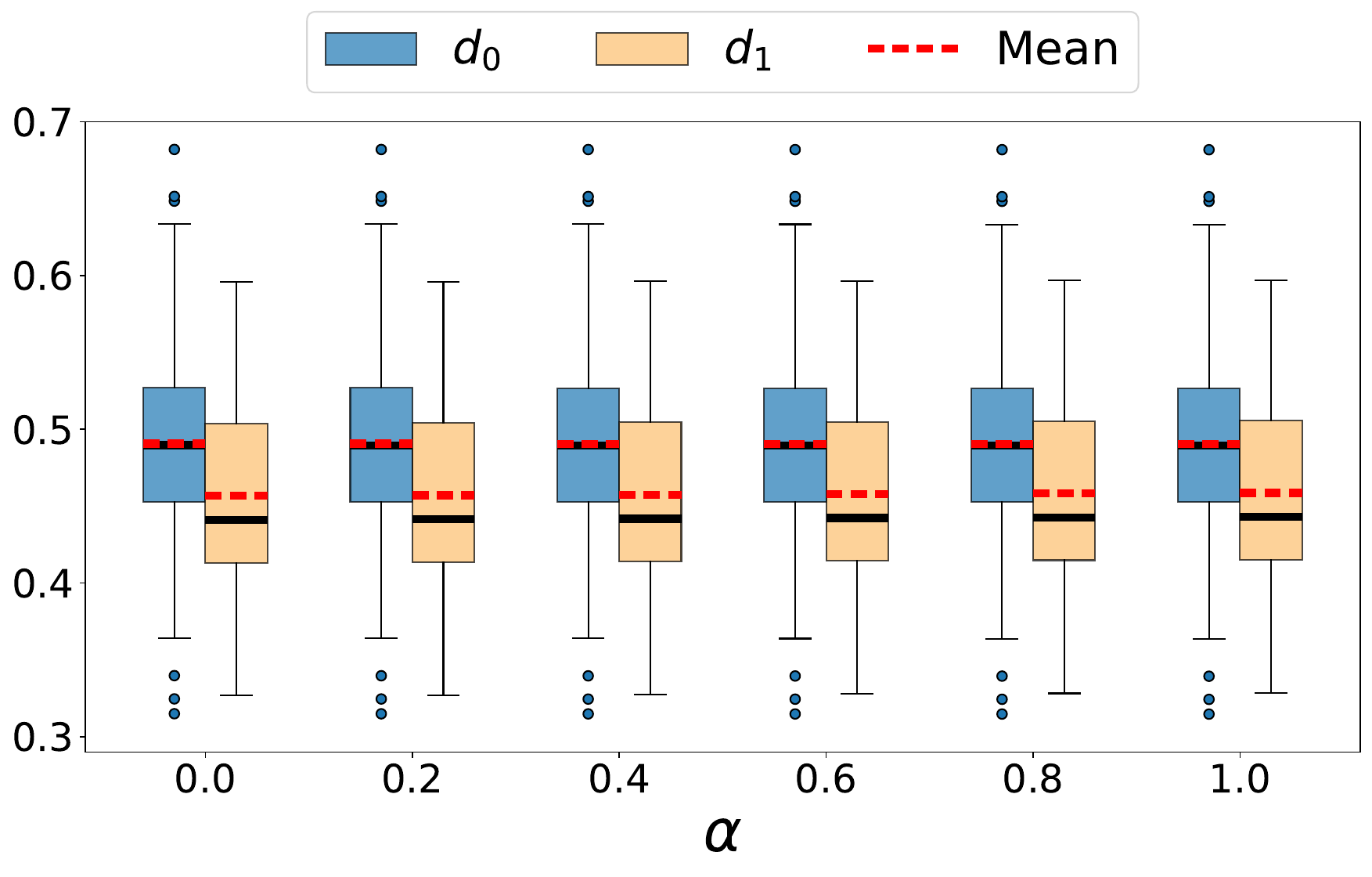}
    \caption{Ex Post Demand (EFO)}
\end{subfigure}
\end{minipage}
}
{Demand distributions for FEO and EFO under parity-wise demand fairness \label{fig:linear_parity_demand_case}\vspace{2mm}}
{}
\end{figure}

\begin{table}[htbp]
    \centering
    \caption{Normalized performance measures for FEO and EFO under parity-wise demand fairness (\%)}
    \label{tab:measure_parity_demand}
    \renewcommand{\arraystretch}{1.1}
    
    \begin{tabular}{l ccccc c ccccc}
        \toprule
        & \multicolumn{5}{c}{FEO ($\alpha$)} & & \multicolumn{5}{c}{EFO ($\alpha$)} \\
        \cmidrule(lr){2-6} \cmidrule(lr){8-12}
        Measure & 0.2 & 0.4 & 0.6 & 0.8 & 1.0 & & 0.2 & 0.4 & 0.6 & 0.8 & 1.0 \\
        \midrule
        $\mathcal{R}(\alpha)/\mathcal{R}(0)$ & 100.01 & 100.02 & 100.04 & 100.05 & 100.06 & & 100.0 & 100.0 & 100.0 & 100.01 & 100.01 \\
        
        $\mathcal{S}(\alpha)/\mathcal{S}(0)$ & 99.95  & 99.89  & 99.84  & 99.78  & 99.73  & & 100.0 & 100.0 & 100.0 & 99.99  & 99.99  \\
        
        $\mathcal{W}(\alpha)/\mathcal{W}(0)$ & 99.99  & 99.98  & 99.97  & 99.95  & 99.94  & & 100.0 & 100.0 & 100.0 & 100.00 & 100.00 \\
        \bottomrule
    \end{tabular}
\end{table}

%\textcolor{red}{dot above the red line; different color (not gray, blue, orange), figures (a) \& (b) mean at $\alpha=1$, figure (c) and (d) small changes; Table: small changes because the demand gap is already small at $\alpha=0$. Similar tendency for linear demand estimator, but not for logistic demand estimator.}

Figure~\ref{fig:linear_Rawlsian_price_case} illustrates the price distribution under Rawlsian price fairness. In both FEO and EFO settings, the maximum price declines monotonically with $\alpha$, effectively curbing extreme price points. In Table~\ref{tab:measure_rawlsian_price}, while the profit difference between the two remains marginal, FEO consistently delivers better consumer surplus and social welfare, particularly up to $\alpha=0.6$, the threshold before the distribution collapses into a few discrete points (See Figure~\ref{fig:linear_Rawlsian_price_case}). This performance gap stems from FEO’s ability to reshape the entire price distribution, whereas EFO merely imposes a cap on high-end prices. Therefore, FEO is an effective path for firms prioritizing consumer surplus with a slight profit trade-off. This is especially true at moderate $\alpha$ levels, where the approach curbs extreme pricing while retaining the benefits of personalization. Similar to Section~\ref{appendix:well_specified_logistic}, when we impose Rawlsian price fairness for FEO at fairness level $\alpha=0.2$, profit, consumer surplus and social welfare are all higher than the baseline (i.e., $\alpha=0$), showing that a small level of Rawlsian price fairness for FEO can lead to a win-win outcome.
%\textcolor{red}{Rawlsian Price Fairness: not gray? the span of distribution of FEO goes down; at $0.2$, FEO gives reulsts $>100$ when alpha is small, FEO gives better consumer surplus/welfare compared to EFO. Rawlsian fairness are harder constraints -- rawlsian fairness (several constraints); parity-wise (one constraint)}

\begin{figure}[htbp]
\FIGURE{
\begin{minipage}{\textwidth}
\centering
\captionsetup{justification=centering}
\begin{subfigure}{0.35\textwidth}
    \centering
    \includegraphics[width=\textwidth]{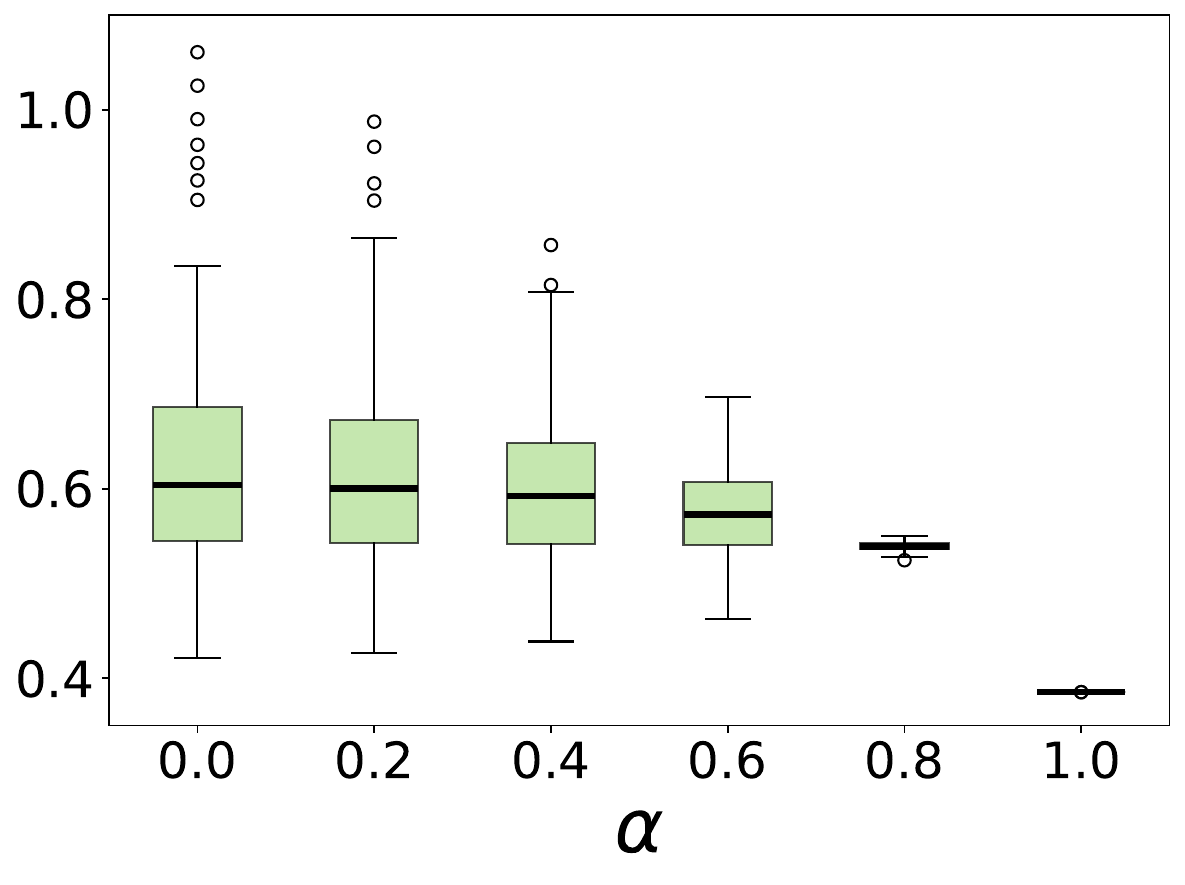}
    \caption{Price under FEO}
    % \label{fig:rawls_price_feo}
\end{subfigure}
\begin{subfigure}{0.35\textwidth}
    \centering
    \includegraphics[width=\textwidth]{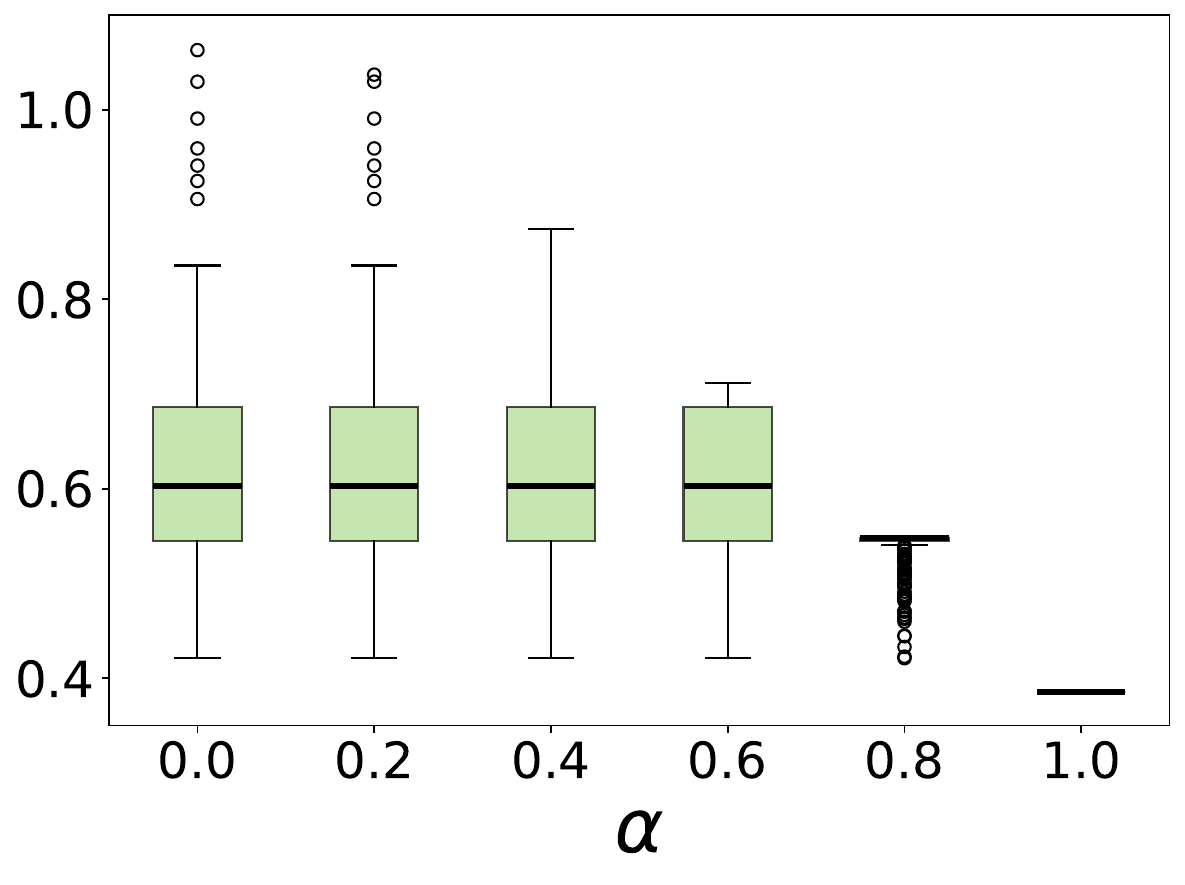}
    \caption{Price under EFO}
    % \label{fig:rawls_price_efo}
\end{subfigure}
\end{minipage}
}
{Price distributions for FEO and EFO under Rawlsian price fairness \label{fig:linear_Rawlsian_price_case}\vspace{0.5em}}
{}
\end{figure}

\begin{table}[htbp]
    \centering
    \caption{Normalized performance measures for FEO and EFO under Rawlsian price fairness (\%)}
    \label{tab:measure_rawlsian_price}
    \renewcommand{\arraystretch}{1.1}
    % No font size reduction (\small/\footnotesize) as requested
    
    \begin{tabular}{l p{0.2cm} ccccc p{0.4cm} ccccc}
        \toprule
        & & \multicolumn{5}{c}{FEO ($\alpha$)} & & \multicolumn{5}{c}{EFO ($\alpha$)} \\
        \cmidrule(lr){3-7} \cmidrule(lr){9-13}
        Measure & & 0.2 & 0.4 & 0.6 & 0.8 & 1.0 & & 0.2 & 0.4 & 0.6 & 0.8 & 1.0 \\
        \midrule
        $\mathcal{R}(\alpha)/\mathcal{R}(0)$ & & 100.01 & 99.70 & 98.61 & 95.61 & 81.14 & & 99.99  & 99.93 & 99.61 & 96.01 & 81.13\\
        
        $\mathcal{S}(\alpha)/\mathcal{S}(0)$ & & 102.70 & 108.27 & 116.41 & 130.20 & 191.43 & & 100.05 & 101.02 & 105.55 & 130.61 & 191.57\\
        
        $\mathcal{W}(\alpha)/\mathcal{W}(0)$ & & 100.97 & 102.74 & 104.94 & 107.91 & 120.35 & & 100.01 & 100.32 & 101.72 & 108.31 & 120.38\\
        \bottomrule
    \end{tabular}
\end{table}

Lastly, Figure~\ref{fig:linear_Rawlsian_demand_case} illustrates the resulting demand distribution under Rawlsian fairness. As $\alpha$ increases, the minimum ex-ante and ex-post demand rises under both FEO and EFO, ensuring that no individual faces a demand level below a certain threshold. Moreover, the minimum ex-post demand also increases for both FEO and EFO, showing that imposing Rawlsian demand fairness indeed benefits individuals with low demand.   Table~\ref{tab:measure_rawlsian_demand} shows that profit, consumer surplus, and social welfare are closely aligned between the two settings. For instance, at $\alpha = 1$, both settings yield similar trade-offs: FEO (resp. EFO) incurs a marginal profit loss of $0.53\%$ (resp. $0.31\%$) to achieve a relatively substantial increase in consumer surplus of $9.95\%$ (resp. $8.10\%$).
\begin{figure}[htbp]
\FIGURE{
\begin{minipage}{\textwidth}
\centering
\captionsetup{justification=centering}
\begin{subfigure}{0.235\textwidth}
    \centering
    \includegraphics[width=\textwidth]{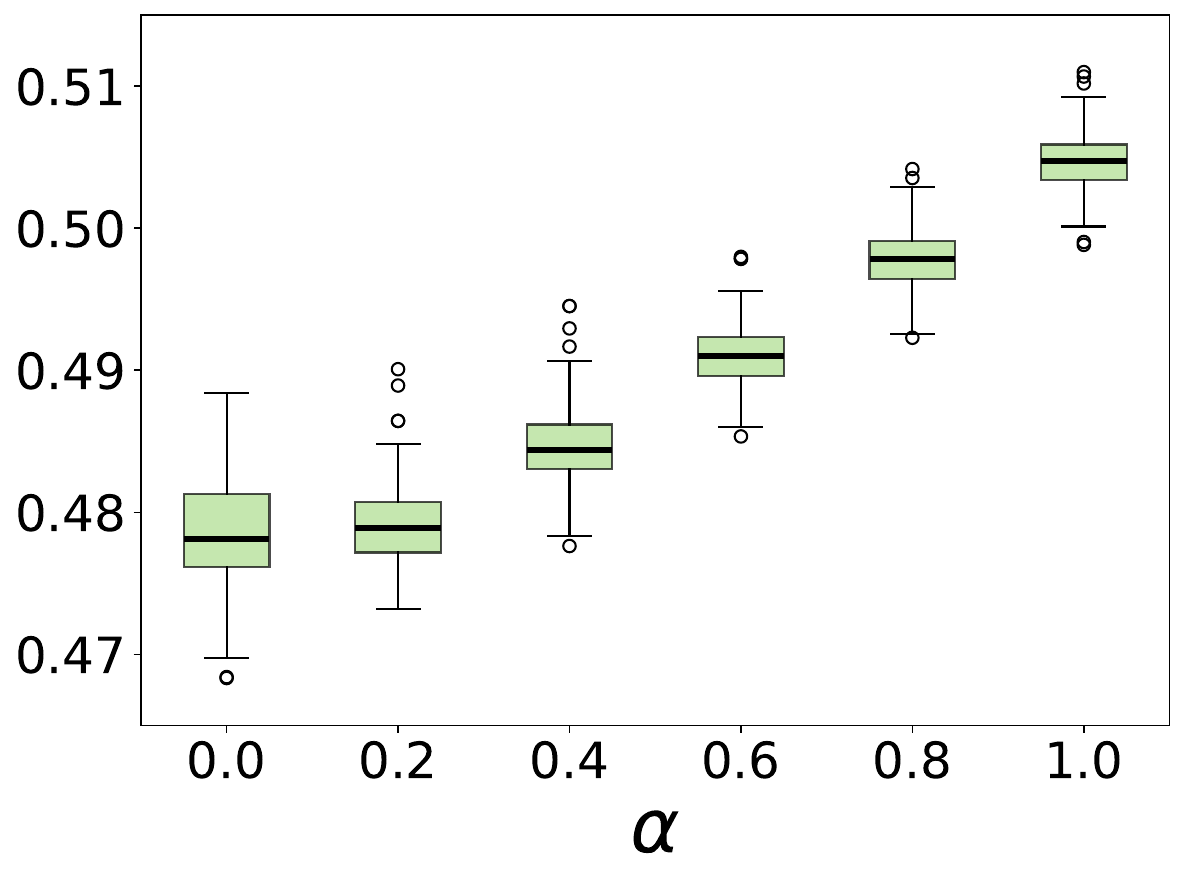}
    \caption{Ex Ante Demand (FEO)}
\end{subfigure}
\begin{subfigure}{0.235\textwidth}
    \centering
    \includegraphics[width=\textwidth]{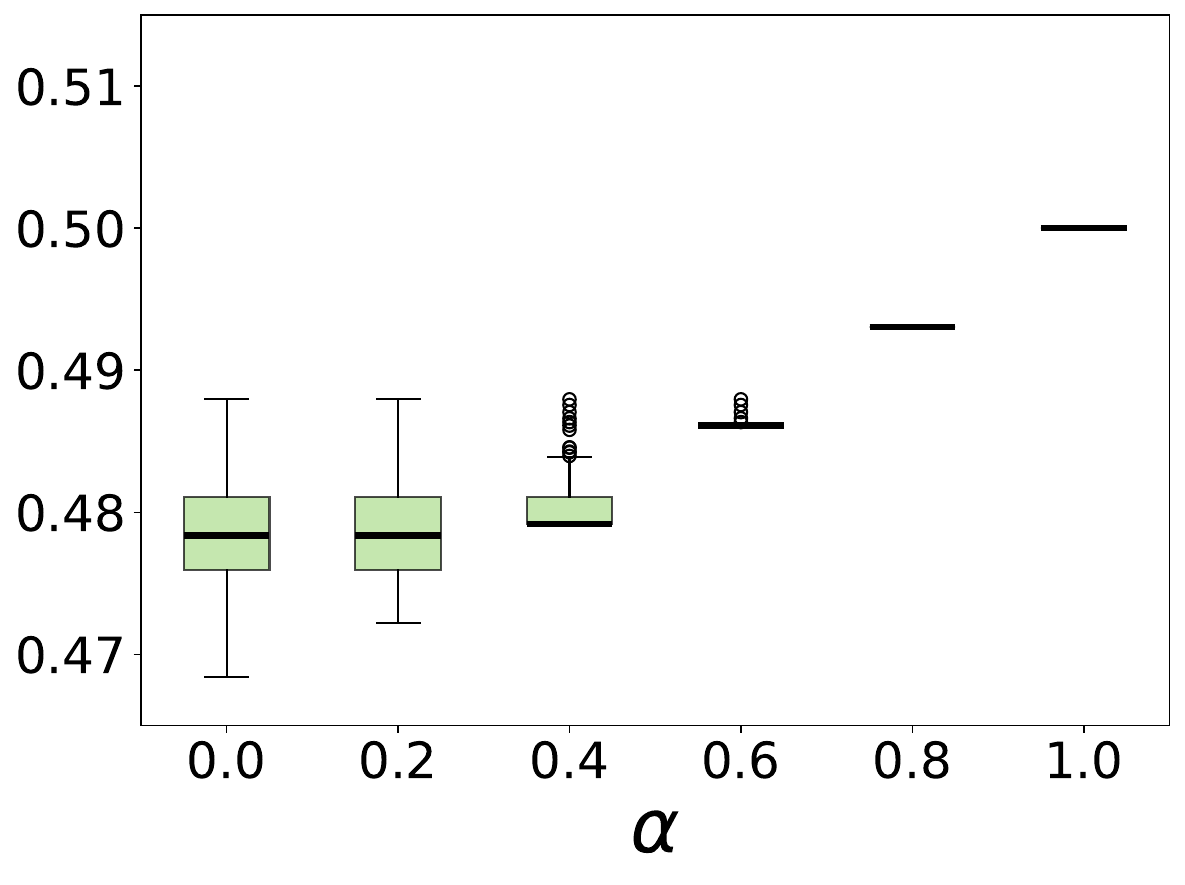}
    \caption{Ex Ante Demand (EFO)}
\end{subfigure}
\begin{subfigure}{0.235\textwidth}
    \centering
    \includegraphics[width=\textwidth]{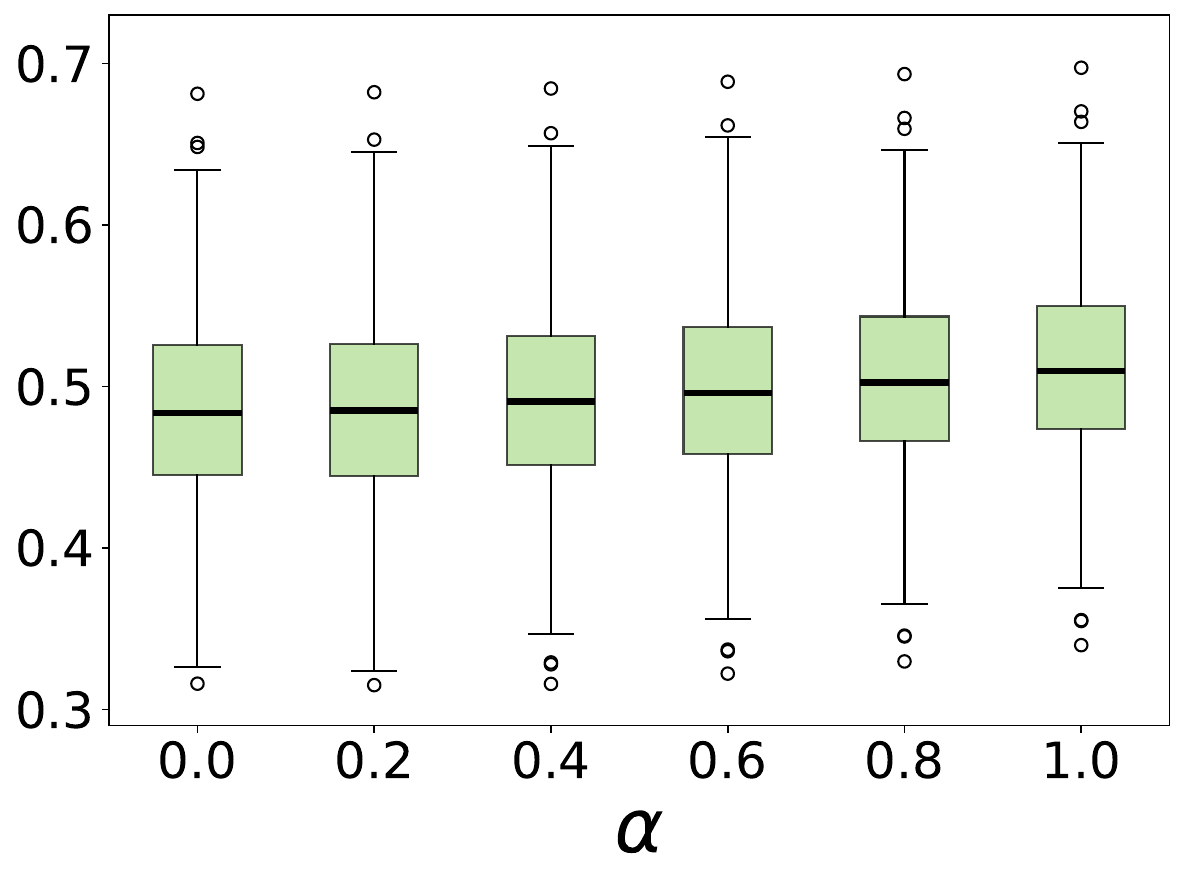}
    \caption{Ex Post Demand (FEO)}
\end{subfigure}
\begin{subfigure}{0.235\textwidth}
    \centering
    \includegraphics[width=\textwidth]{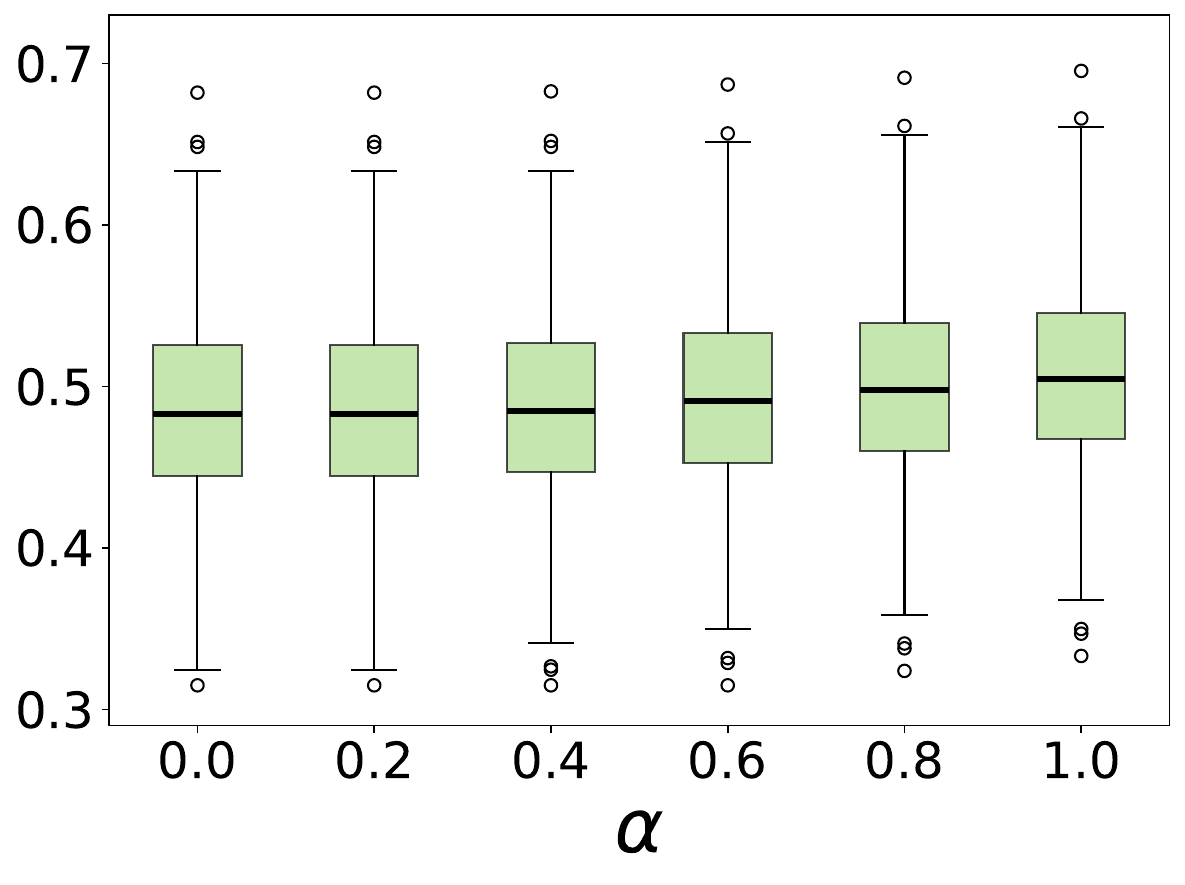}
    \caption{Ex Post Demand (EFO)}
\end{subfigure}
\end{minipage}
}
{Demand distributions for FEO and EFO under Rawlsian demand fairness \label{fig:linear_Rawlsian_demand_case}\vspace{0.5em}}
{}
\end{figure}

\begin{table}[htbp]
    \centering
    \caption{Normalized performance measures for FEO and EFO under Rawlsian demand fairness (\%)}
    \label{tab:measure_rawlsian_demand}
    \renewcommand{\arraystretch}{1.1}
    \setlength{\tabcolsep}{4pt} % Adjust column spacing for a perfect fit
    
    \begin{tabular}{l ccccc p{0.3cm} ccccc}
    \toprule
    & \multicolumn{5}{c}{FEO ($\alpha$)} & & \multicolumn{5}{c}{EFO ($\alpha$)} \\
    \cmidrule(lr){2-6} \cmidrule(lr){8-12}
    Measure & 0.2 & 0.4 & 0.6 & 0.8 & 1.0 & & 0.2 & 0.4 & 0.6 & 0.8 & 1.0 \\
    \midrule
    
    % Revenue Row
    $\mathcal{R}(\alpha)/\mathcal{R}(0)$ 
    & 99.97 & 99.94 & 99.85 & 99.73 & 99.57
    & 
    & 100.00 & 99.99 & 99.92 & 99.82 & 99.69 \\

    % Surplus Row
    $\mathcal{S}(\alpha)/\mathcal{S}(0)$ 
    & 100.09 & 102.13 & 104.60 & 107.22 & 109.95
    & 
    & 100.02 & 100.58 & 102.71 & 105.38 & 108.10 \\

    % Welfare Row
    $\mathcal{W}(\alpha)/\mathcal{W}(0)$ 
    & 100.01 & 100.72 & 101.54 & 102.39 & 103.26
    & 
    & 100.01 & 100.20 & 100.91 & 101.80 & 102.68 \\

    \bottomrule
    \end{tabular}
\end{table}

% \textcolor{red}{figure: demand gap decreases, adding ex-ante demand; Table: S,W (FEO) > S, W (EFO). change $\bar{d}$ with $0.5$}

% \textcolor{red}{how to deal with other cases?}

\subsection{Linear Demand Estimation with Logistic True Demand} %H
\label{appendix:linear_misspecified}

This case considers a linear demand setting while assuming that the true underlying demand model follows a logistic form. Under parity-wise loss fairness, Table~\ref{tab:measure_loss_fairness_1} indicates that firm profit increases, whereas both consumer surplus and social welfare decrease. These results suggest that, in this setting, the imposition of loss fairness may lead to outcomes that are detrimental to consumers, thereby deviating from the normative objective typically associated with fairness considerations. Moreover, comparison with Table~\ref{tab:measure_loss_fairness} highlights that the effects of loss fairness on performance measures are setting-dependent and cannot be characterized uniformly across different model specifications.

\begin{table}[htbp]
    \centering
    \caption{Normalized performance measures across $\alpha$ under parity-wise loss fairness (\%)}
    \label{tab:measure_loss_fairness_1}
    \renewcommand{\arraystretch}{1.1} 

    \begin{tabular}{lccccc}
    \toprule
    & \multicolumn{5}{c}{$\alpha$} \\
    \cmidrule(lr){2-6}
    Measure & 0.2 & 0.4 & 0.6 & 0.8 & 1.0 \\
    \midrule
    $\hat{\mathcal{R}}(\alpha)/\hat{\mathcal{R}}(0)$ & 100.65 & 100.99 & 101.24 & 101.45 & 101.65 \\
    $\hat{\mathcal{S}}(\alpha)/\hat{\mathcal{S}}(0)$ & 99.16  & 98.69  & 98.35  & 98.07  & 97.76 \\
    $\hat{\mathcal{W}}(\alpha)/\hat{\mathcal{W}}(0)$ & 99.78  & 99.64  & 99.55  & 99.47  & 99.37 \\
    \bottomrule
    \end{tabular}
\end{table}

Figure~\ref{fig:linear_parity_price_case_mis} and Table~\ref{tab:measure_parity_price_1} report the results under parity-wise price fairness. Although the true demand model is logistic, the qualitative insights remain consistent with those obtained under linear demand. In particular, FEO achieves higher consumer surplus and social welfare than EFO. Furthermore, under perfect fairness (i.e., $\alpha = 1$), accepting an additional $3.08\%$ reduction in profit under FEO results in a $3.97\%$ greater increase in consumer surplus relative to EFO. These findings suggest that, under parity-wise price fairness, FEO provides more favorable outcomes for consumers.

\begin{figure}[htbp]
\FIGURE{
\begin{minipage}{\textwidth}
\centering
\captionsetup{justification=centering}
\begin{subfigure}{0.35\textwidth}
    \centering
    \includegraphics[width=\textwidth]{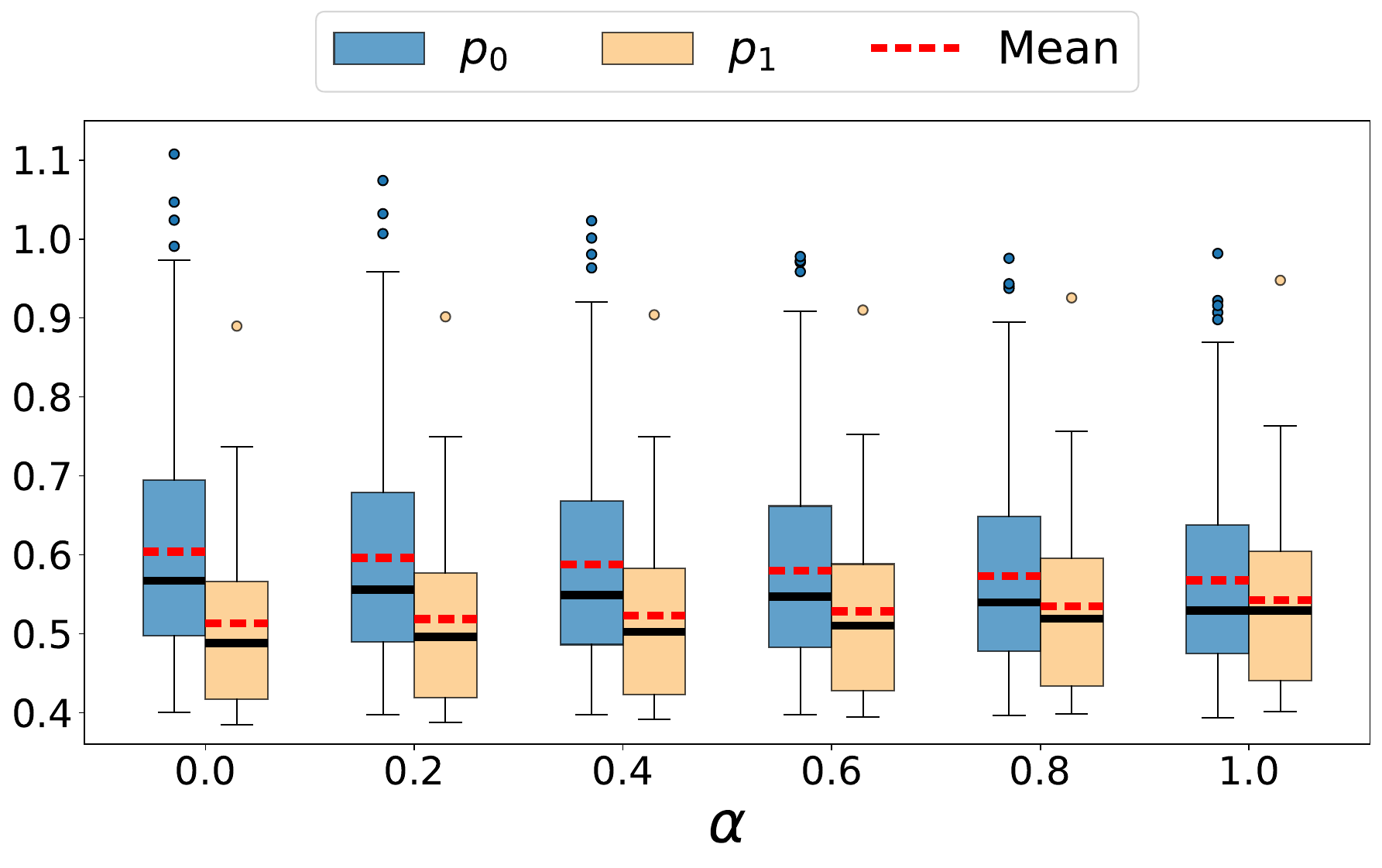}
    \caption{Price (FEO)}
\end{subfigure}
\begin{subfigure}{0.35\textwidth}
    \centering
    \includegraphics[width=\textwidth]{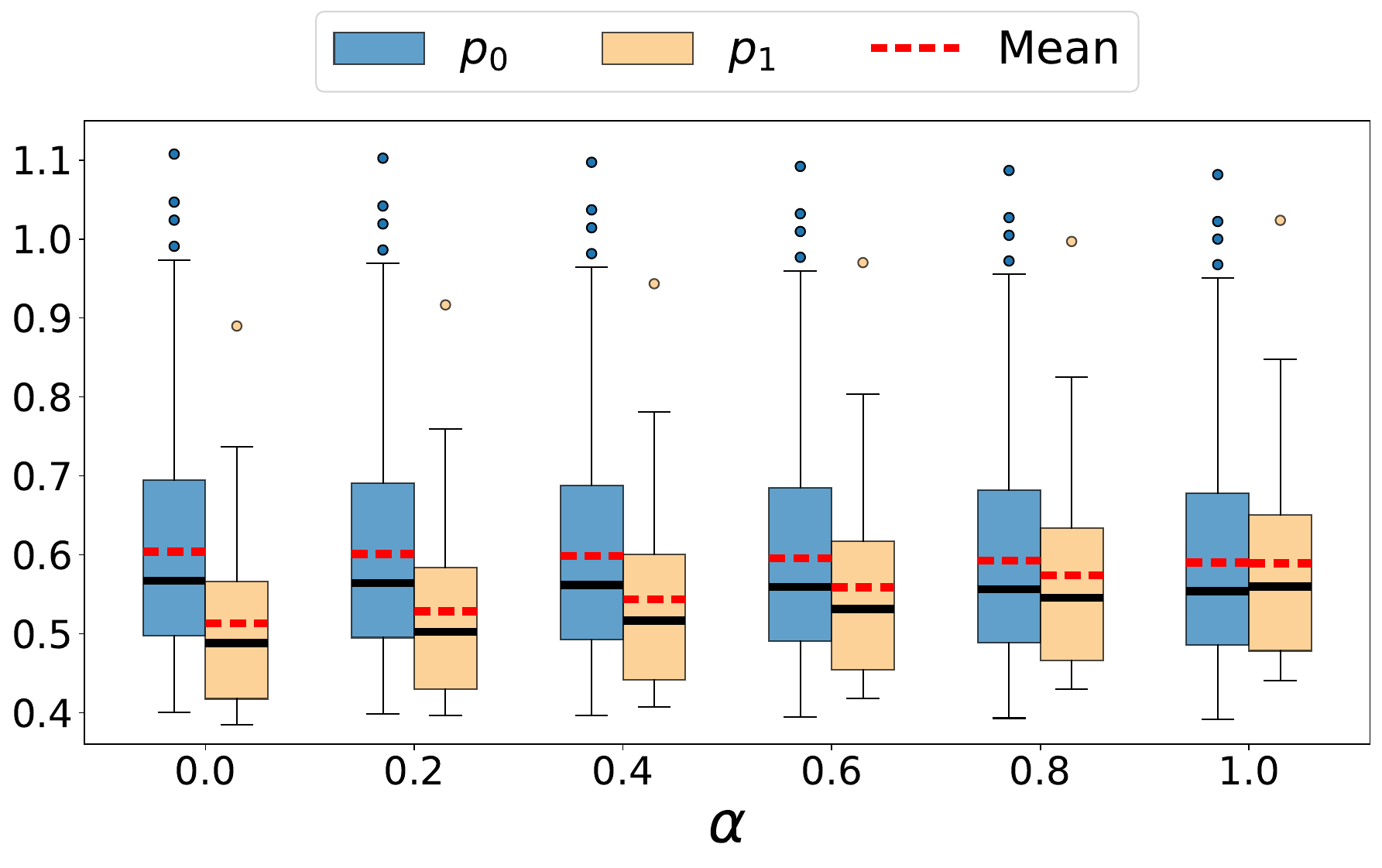}
    \caption{Price (EFO)}
\end{subfigure}
\end{minipage}
}
{Price distributions for FEO and EFO under parity-wise price fairness \label{fig:linear_parity_price_case_mis}\vspace{1mm}}
{}
\end{figure}

\begin{table}[htbp]
    \centering
    \caption{Normalized performance measures for FEO and EFO under parity-wise price fairness (\%)}
    \label{tab:measure_parity_price_1}
    \renewcommand{\arraystretch}{1.1}
    \setlength{\tabcolsep}{4pt}
    
    \begin{tabular}{l ccccc c ccccc}
        \toprule
        & \multicolumn{5}{c}{FEO ($\alpha$)} & & \multicolumn{5}{c}{EFO ($\alpha$)} \\
        \cmidrule(lr){2-6} \cmidrule(lr){8-12}
        Measure & 0.2 & 0.4 & 0.6 & 0.8 & 1.0 & & 0.2 & 0.4 & 0.6 & 0.8 & 1.0 \\
        \midrule
        
        $\mathcal{R}(\alpha)/\mathcal{R}(0)$  
        & 99.25 & 98.46 & 97.69 & 97.00 & 96.42 &
        & 99.92 & 99.83 & 99.73 & 99.62 & 99.50 \\ 
        
        $\mathcal{S}(\alpha)/\mathcal{S}(0)$  
        & 100.98 & 102.08 & 103.07 & 103.93 & 104.57 &
        & 100.11 & 100.23 & 100.35 & 100.47 & 100.60 \\ 
        
        $\mathcal{W}(\alpha)/\mathcal{W}(0)$  
        & 100.27 & 100.58 & 100.85 & 101.06 & 101.20 &
        & 100.03 & 100.07 & 100.09 & 100.12 & 100.15 \\ 
        
        \bottomrule
    \end{tabular}
\end{table}

Figure~\ref{fig:linear_parity_demand_case_mis} illustrates the demand distributions under parity-wise demand fairness. Here, only the ex-ante demand from EFO satisfies demand fairness, while the others do not. This is mainly because the true linear demand function differs from the estimated demand model. Table~\ref{tab:measure_parity_demand_1} summarizes the outcomes under parity-wise demand fairness.
%Compared to the true demand model case, FEO under perfect fairness does not achive perfect demand fairness this is because (. 
Consistent with the case in which the true demand model is linear, the changes in profit, consumer surplus, and social welfare are relatively small. An interesting observation, however, is that both FEO and EFO exhibit slight increases in profit. This arises from the mismatch between the true and estimated demand functions, where the fairness constraint acts as a form of regularization.

\begin{figure}[htbp]
\FIGURE{
\begin{minipage}{\textwidth}
\centering
\captionsetup{justification=centering}
\begin{subfigure}{0.35\textwidth}
    \centering
    \includegraphics[width=\textwidth]{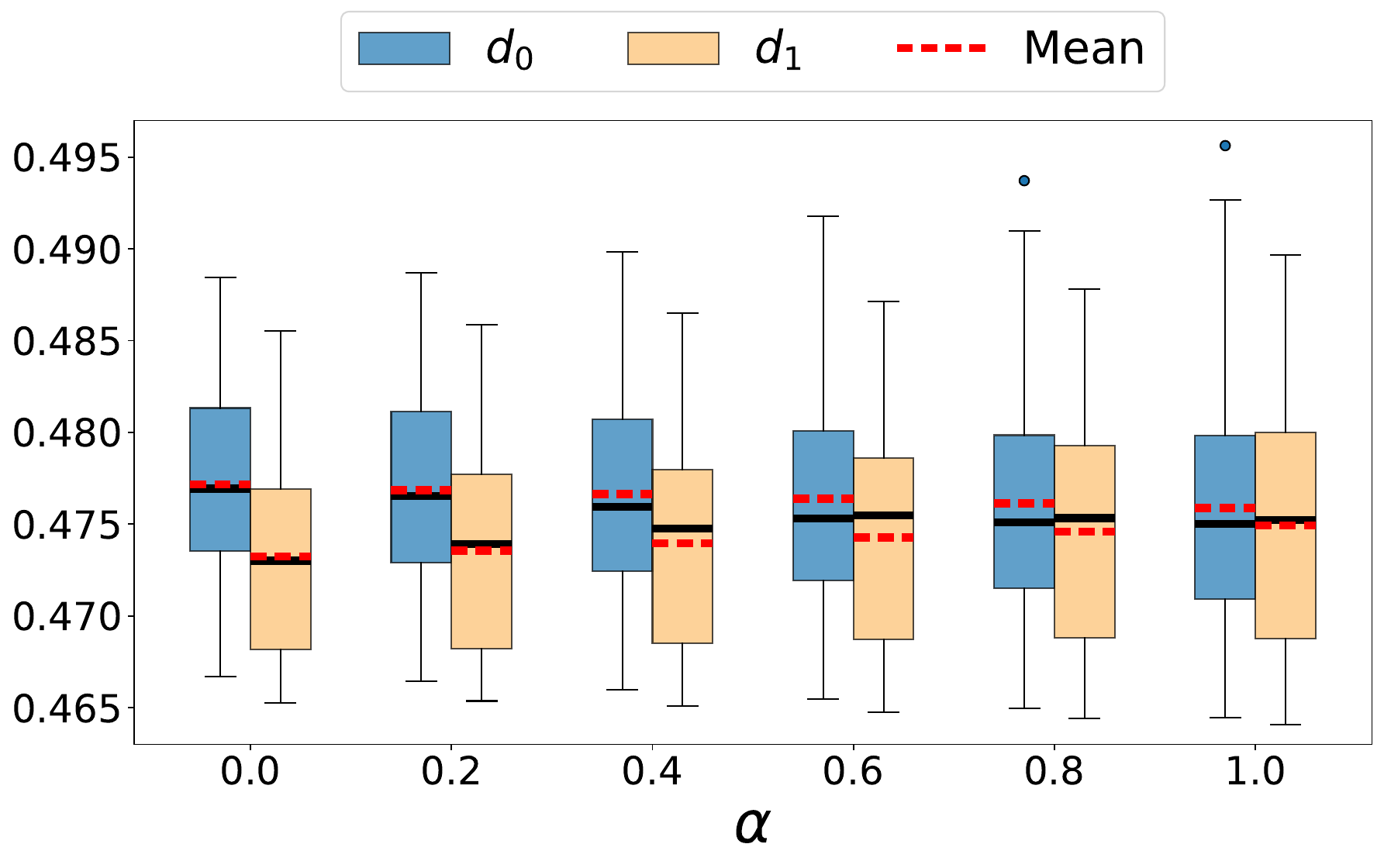}
    \caption{Ex Ante Demand (FEO)}
\end{subfigure}
\begin{subfigure}{0.35\textwidth}
    \centering
    \includegraphics[width=\textwidth]{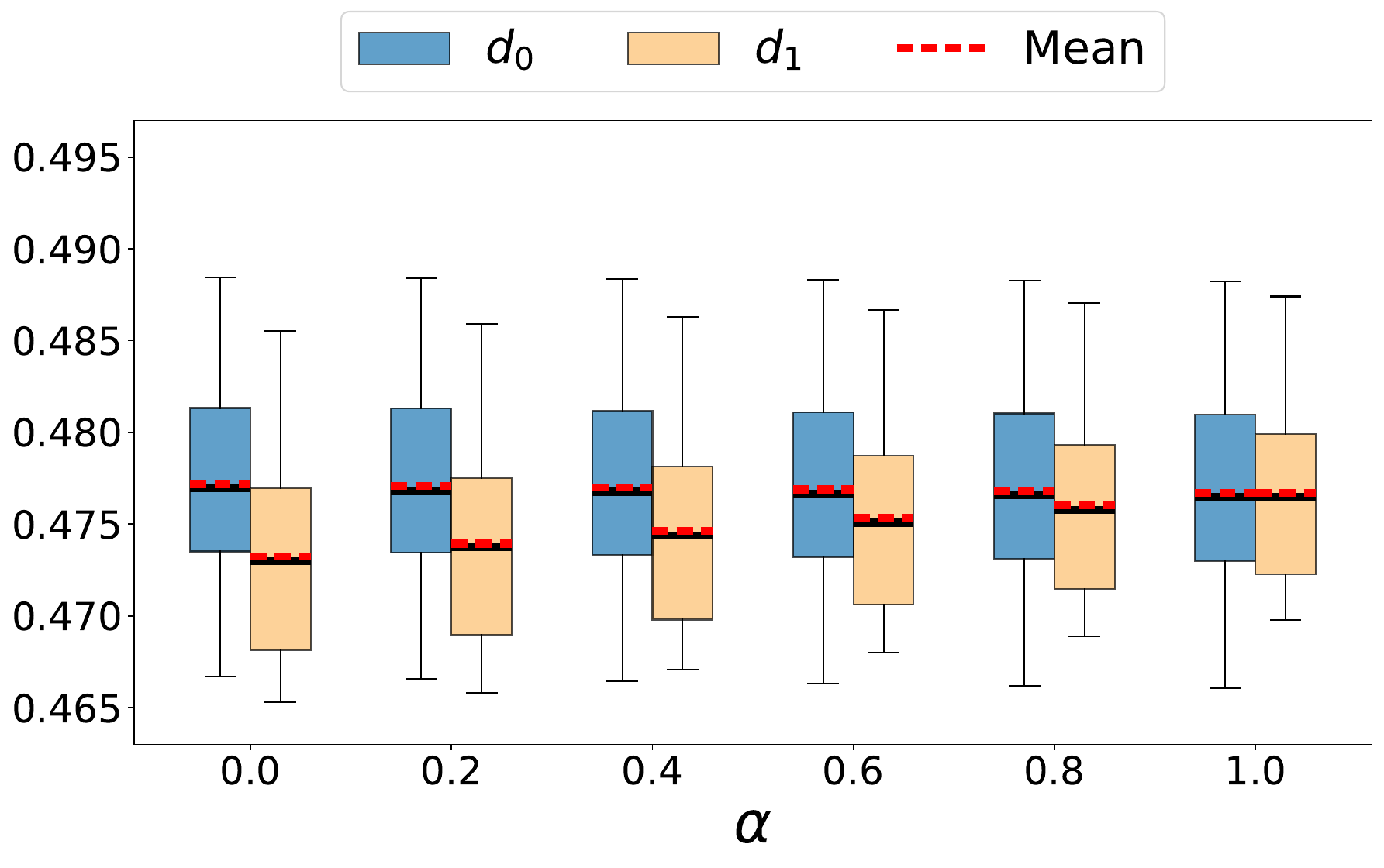}
    \caption{Ex Ante Demand (EFO)}
\end{subfigure}
\begin{subfigure}{0.35\textwidth}
    \centering
    \includegraphics[width=\textwidth]{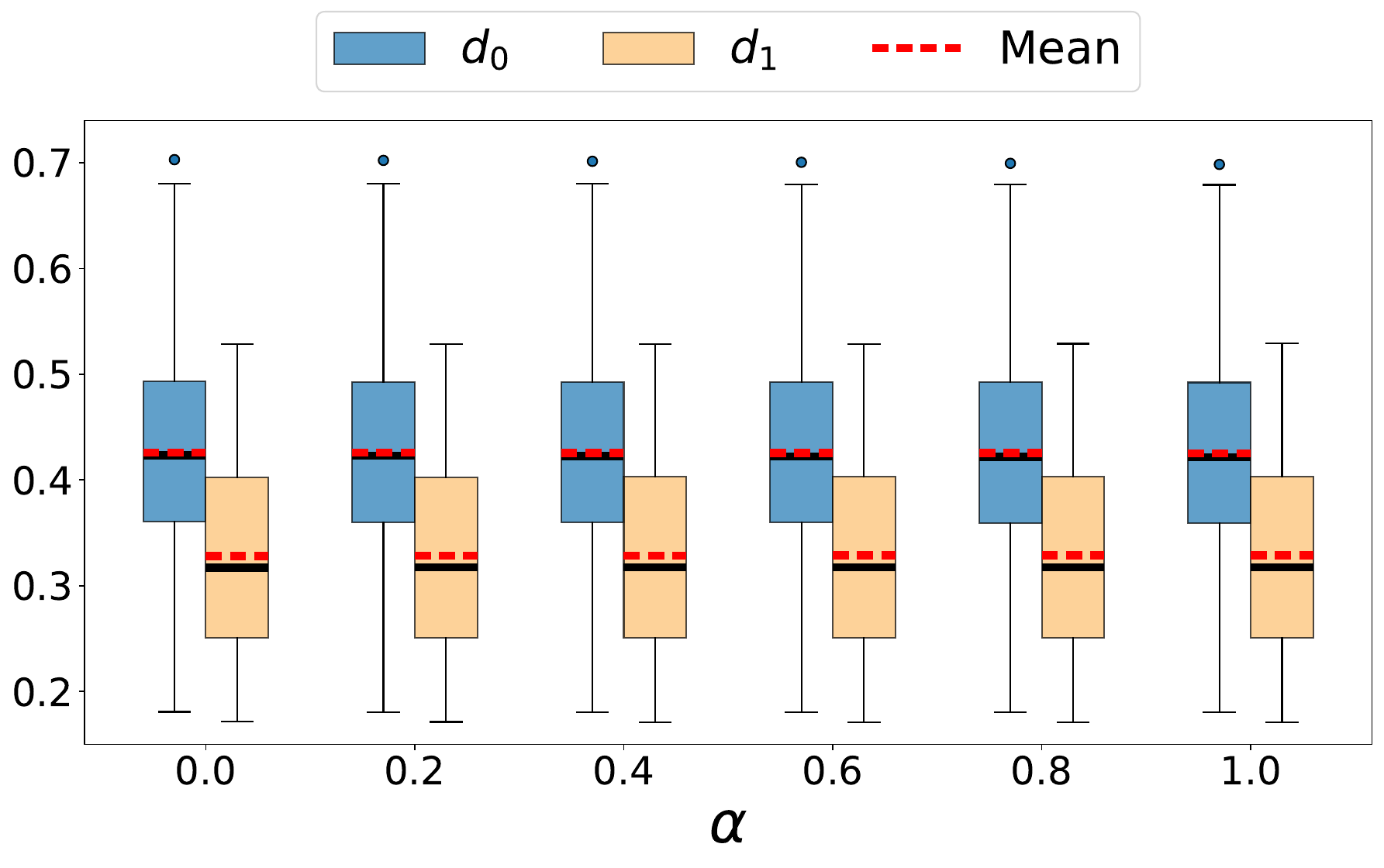}
    \caption{Ex Post Demand (FEO)}
\end{subfigure}
\begin{subfigure}{0.35\textwidth}
    \centering
    \includegraphics[width=\textwidth]{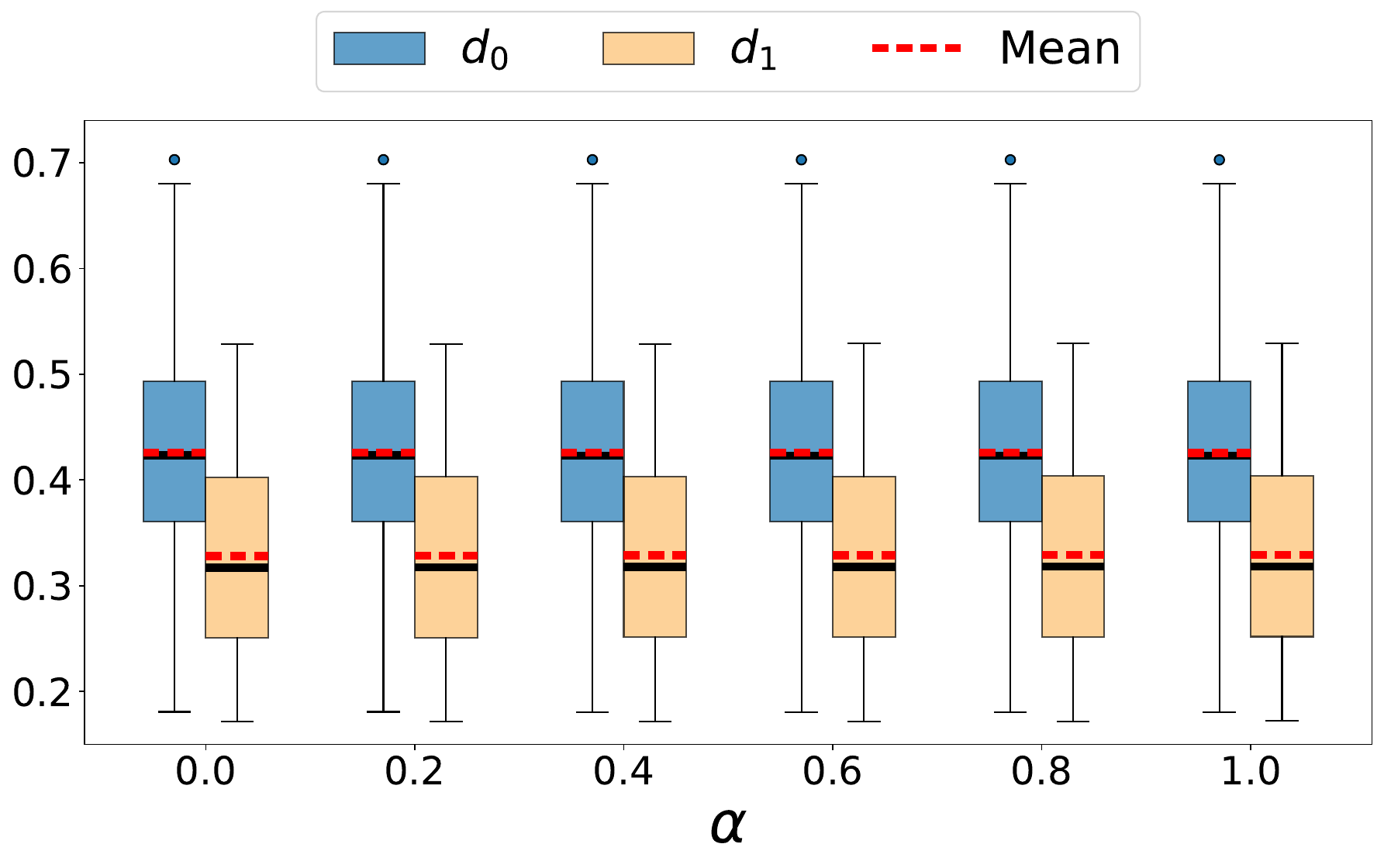}
    \caption{Ex Post Demand (EFO)}
\end{subfigure}
\end{minipage}
}
{Demand distributions for FEO and EFO under parity-wise demand fairness \label{fig:linear_parity_demand_case_mis}\vspace{2mm}}
{}
\end{figure}

\begin{table}[htbp]
    \centering
    \caption{Normalized performance measures for FEO and EFO under parity-wise demand fairness (\%)}
    \label{tab:measure_parity_demand_1}
    \renewcommand{\arraystretch}{1.1}
    
    \begin{tabular}{l ccccc c ccccc}
        \toprule
        & \multicolumn{5}{c}{FEO ($\alpha$)} & & \multicolumn{5}{c}{EFO ($\alpha$)} \\
        \cmidrule(lr){2-6} \cmidrule(lr){8-12}
        Measure & 0.2 & 0.4 & 0.6 & 0.8 & 1.0 & & 0.2 & 0.4 & 0.6 & 0.8 & 1.0 \\
        \midrule
        
        $\mathcal{R}(\alpha)/\mathcal{R}(0)$  
        & 100.03 & 100.04 & 100.07 & 100.09 & 100.11 &
        & 100.00 & 100.00 & 100.01 & 100.01 & 100.01 \\ 
        
        $\mathcal{S}(\alpha)/\mathcal{S}(0)$  
        & 99.97 & 99.95 & 99.92 & 99.89 & 99.87 &
        & 100.00 & 99.99 & 99.99 & 99.99 & 99.99 \\ 
        
        $\mathcal{W}(\alpha)/\mathcal{W}(0)$  
        & 99.99 & 99.99 & 99.98 & 99.98 & 99.97 &
        & 100.00 & 100.00 & 100.00 & 100.00 & 100.00 \\ 
        \bottomrule
    \end{tabular}
\end{table}

Figure~\ref{fig:linear_Rawlsian_price_mis} shows the price distributions under Rawlsian price fairness. Similar to the other cases in Section~\ref{appendix:well_specified_logistic} and Section~\ref{appendix:linear_well_specified}, FEO reduces the spread of the price distribution. Table~\ref{tab:measure_rawlsian_price_1} shows that, consistent with the case of a true linear demand model, FEO achieves higher consumer surplus and social welfare. An interesting observation is that, as in the setting with a linear estimated demand function, the differences in profit, consumer surplus, and social welfare are most pronounced at intermediate values of $\alpha$, rather than at extreme cases such as $\alpha = 1$. This is because, when $\alpha = 1$, prices become nearly static, thereby limiting variation in outcomes across framework.

\begin{figure}[htbp]
\FIGURE{
\begin{minipage}{\textwidth}
\centering
\captionsetup{justification=centering}
\begin{subfigure}{0.35\textwidth}
    \centering
    \includegraphics[width=\textwidth]{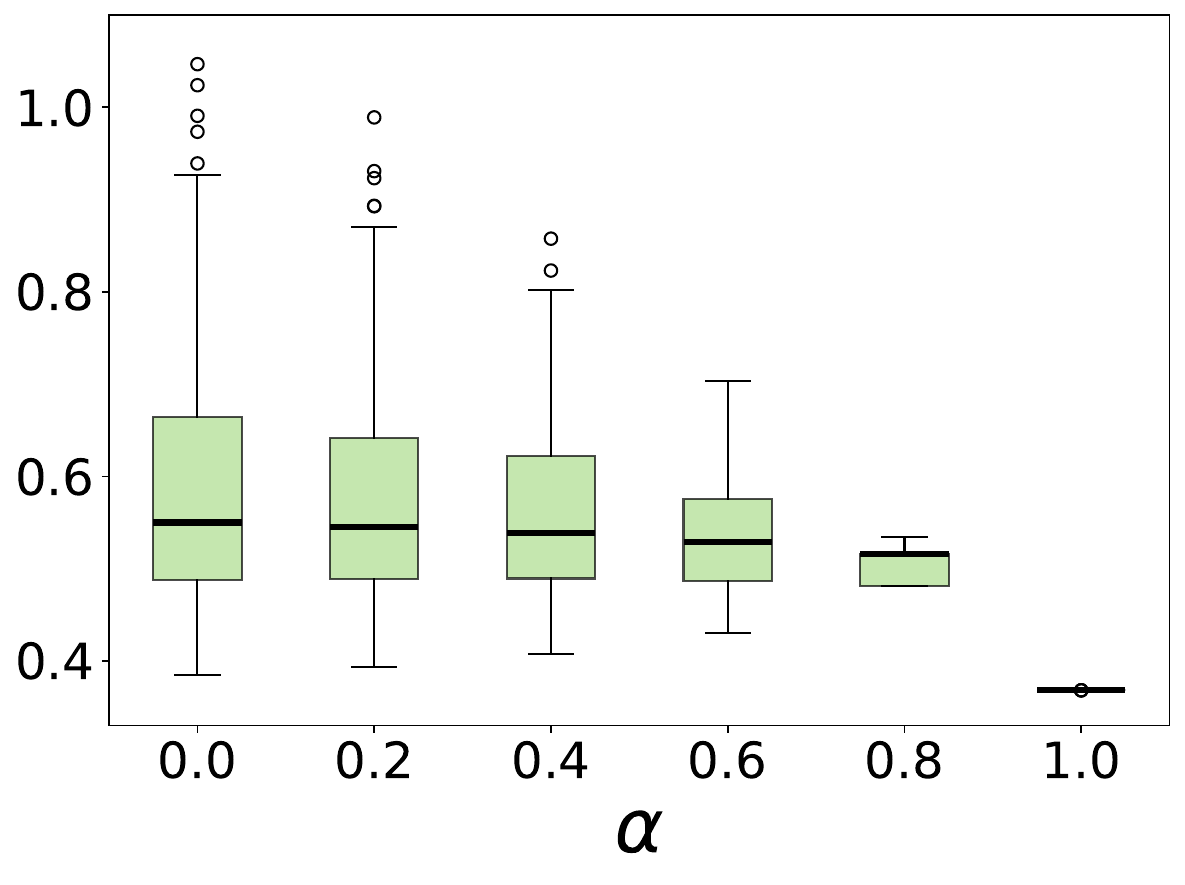}
    \caption{Price under FEO}
    % \label{fig:rawls_price_feo}
\end{subfigure}
\begin{subfigure}{0.35\textwidth}
    \centering
    \includegraphics[width=\textwidth]{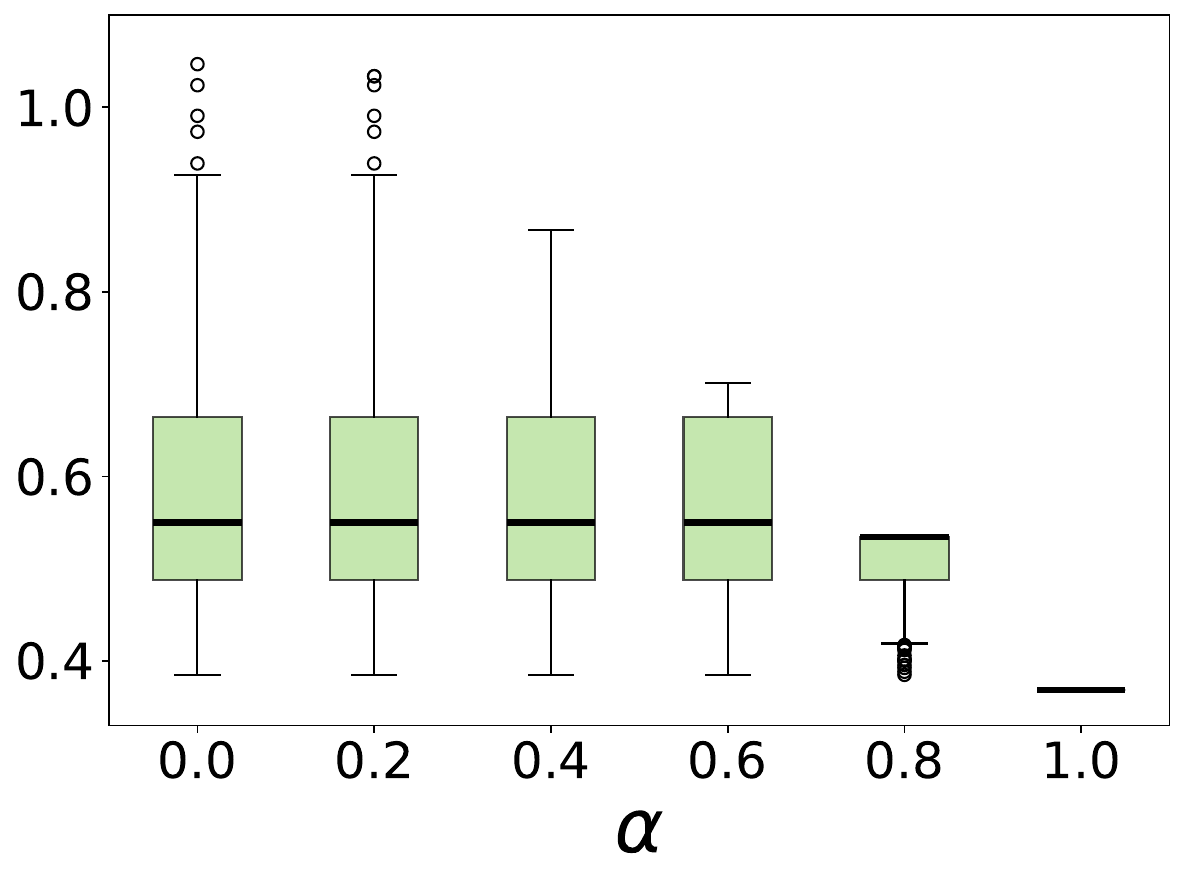}
    \caption{Price under EFO}
    % \label{fig:rawls_price_efo}
\end{subfigure}
\end{minipage}
}
{Price distributions for FEO and EFO under Rawlsian price fairness \label{fig:linear_Rawlsian_price_mis}\vspace{0.5em}}
{}
\end{figure}

\begin{table}[htbp]
    \centering
    \caption{Normalized performance measures for FEO and EFO under Rawlsian price fairness (\%)}
    \label{tab:measure_rawlsian_price_1}
    \renewcommand{\arraystretch}{1.1}
    
    \begin{tabular}{l cccccc c cccccc}
        \toprule
        & \multicolumn{5}{c}{FEO ($\alpha$)} & & \multicolumn{5}{c}{EFO ($\alpha$)} \\
        \cmidrule(lr){2-6} \cmidrule(lr){8-12}
        Measure & 0.2 & 0.4 & 0.6 & 0.8 & 1.0 & & 0.2 & 0.4 & 0.6 & 0.8 & 1.0 \\
        \midrule
        
        $\mathcal{R}(\alpha)/\mathcal{R}(0)$  
        & 98.59 & 96.54 & 93.30 & 87.60 & 67.65 &
        & 99.98 & 99.54 & 97.02 & 88.53 & 67.65 \\ 
        
        $\mathcal{S}(\alpha)/\mathcal{S}(0)$  
        & 102.22 & 105.05 & 108.96 & 115.11 & 135.26 &
        & 100.06 & 100.86 & 104.42 & 114.25 & 135.26 \\ 
        
        $\mathcal{W}(\alpha)/\mathcal{W}(0)$  
        & 100.72 & 101.53 & 102.48 & 103.73 & 107.30 &
        & 100.03 & 100.31 & 101.36 & 103.61 & 107.30 \\ 
        
        \bottomrule
    \end{tabular}
\end{table}

Lastly, Figure~\ref{fig:linear_Rawlsian_demand_mis} illustrates how ex-ante and ex-post demand change under FEO and EFO. Similar to the case in Section~\ref{appendix:linear_well_specified}, the ex post demand changes less under EFO than under FEO. Table~\ref{tab:measure_rawlsian_demand_1} shows the effects of Rawlsian demand fairness on profit, consumer surplus, and social welfare. Consistent with the case of a true linear demand model, FEO and EFO yield broadly similar results. When comparing the two, FEO entails a slightly larger reduction in profit, but achieves higher consumer surplus and social welfare. This pattern is also generally observed under Rawlsian price fairness and in the true linear demand setting. Overall, these results suggest that Rawlsian fairness with feature-based implementation favors consumers when using FEO, whereas EFO may be preferred when the objective is to mitigate profit losses.

\begin{figure}[htbp]
\FIGURE{
\begin{minipage}{\textwidth}
\centering
\captionsetup{justification=centering}
\begin{subfigure}{0.235\textwidth}
    \centering
    \includegraphics[width=\textwidth]{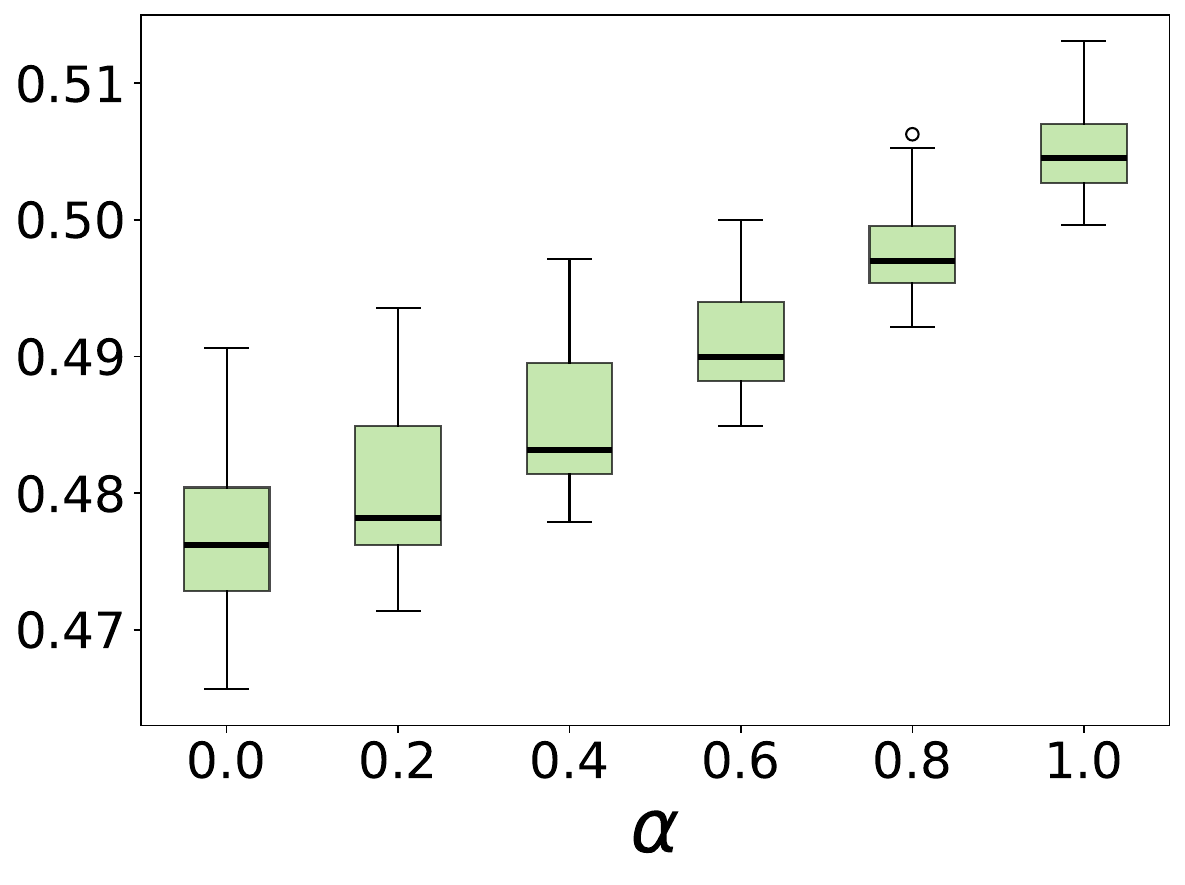}
    \caption{Ex Ante Demand (FEO)}
\end{subfigure}
\begin{subfigure}{0.235\textwidth}
    \centering
    \includegraphics[width=\textwidth]{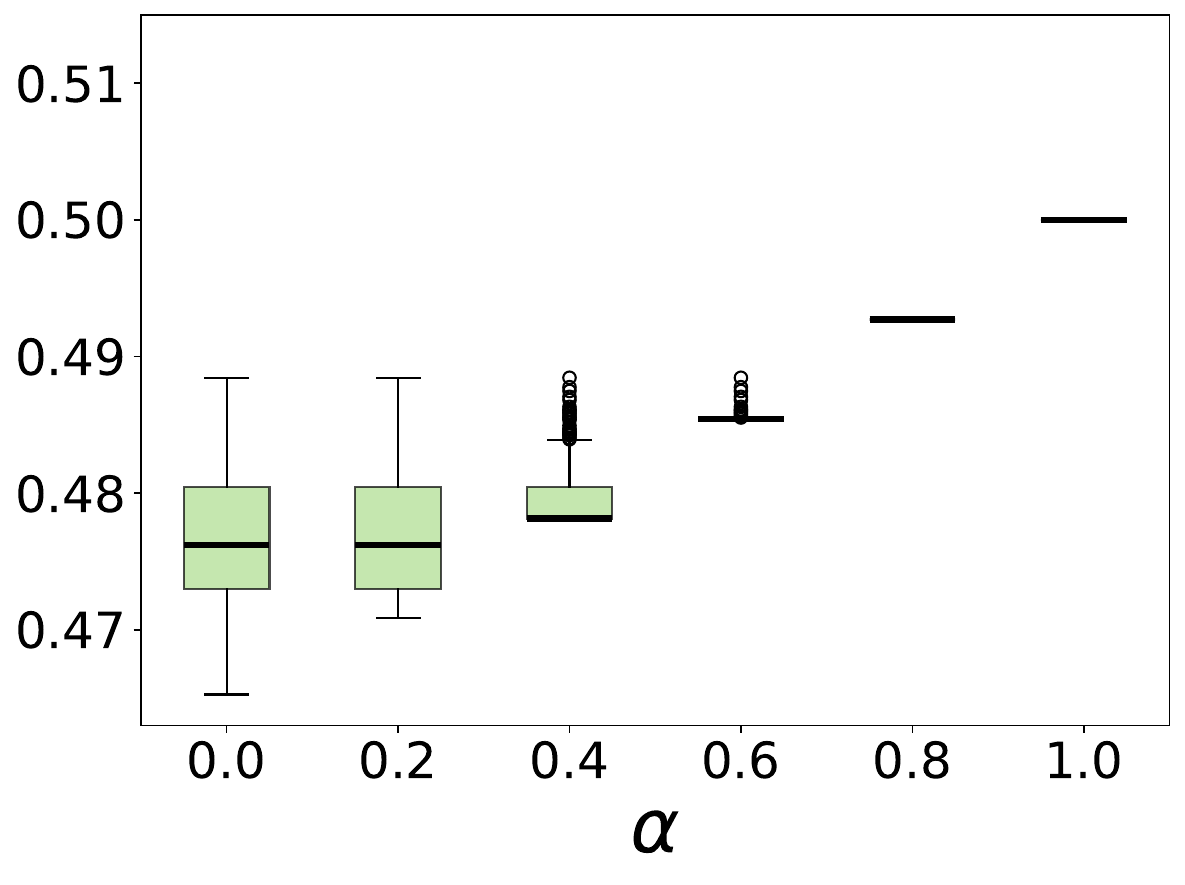}
    \caption{Ex Ante Demand (EFO)}
\end{subfigure}
\begin{subfigure}{0.235\textwidth}
    \centering
    \includegraphics[width=\textwidth]{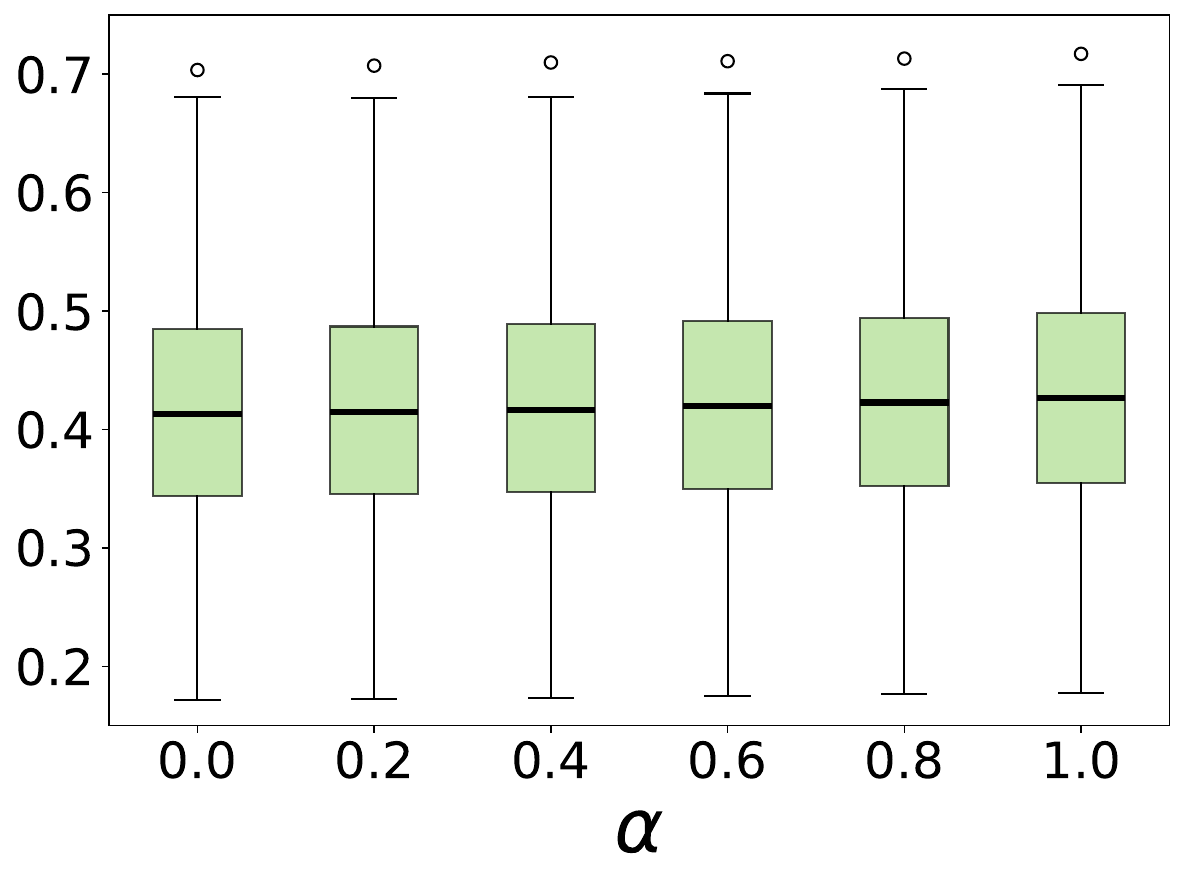}
    \caption{Ex Post Demand (FEO)}
\end{subfigure}
\begin{subfigure}{0.235\textwidth}
    \centering
    \includegraphics[width=\textwidth]{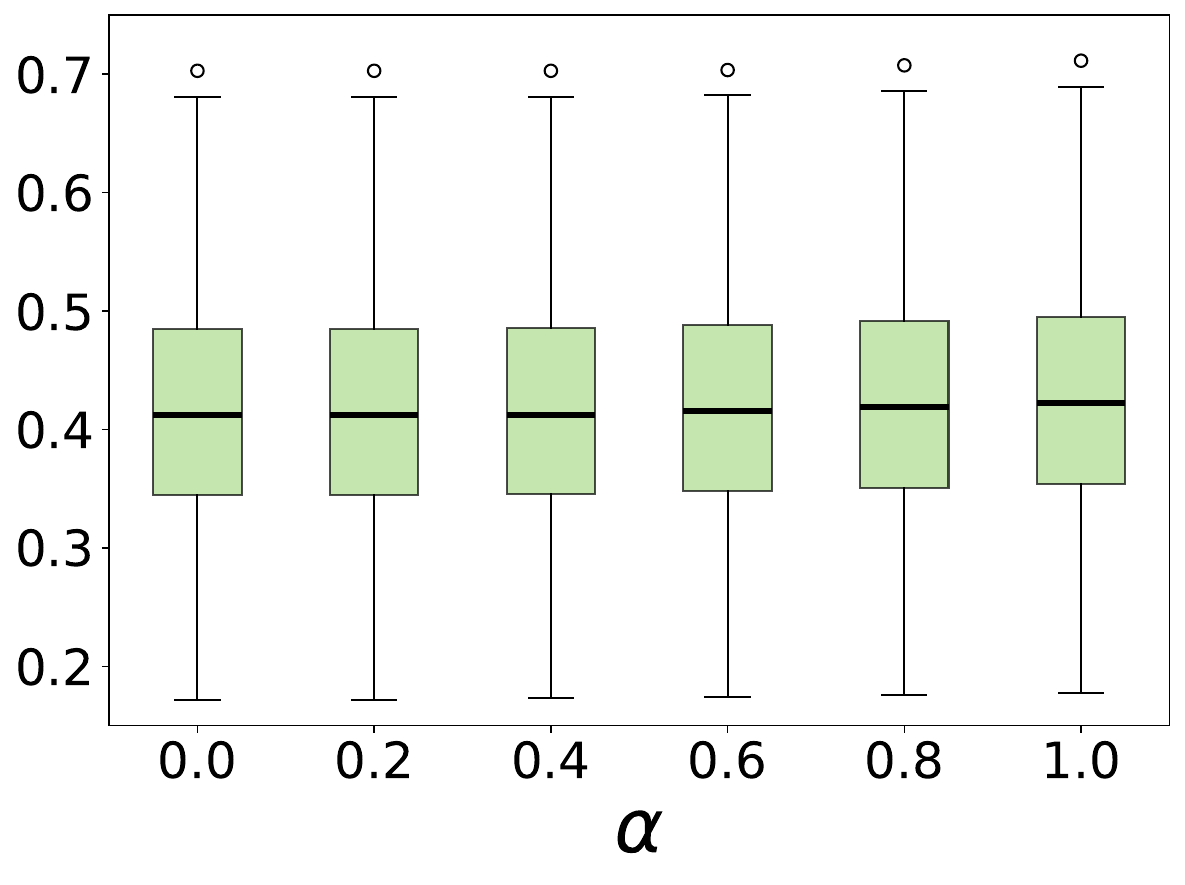}
    \caption{Ex Post Demand (EFO)}
\end{subfigure}
\end{minipage}
}
{Demand distributions for FEO and EFO under Rawlsian demand fairness \label{fig:linear_Rawlsian_demand_mis}\vspace{0.5em}}
{}
\end{figure}

\begin{table}[htbp]
    \centering
    \caption{Normalized performance measures for FEO and EFO under Rawlsian demand fairness (\%)}
    \label{tab:measure_rawlsian_demand_1}
    \renewcommand{\arraystretch}{1.1}
    
    \begin{tabular}{l cccccc c cccccc}
        \toprule
        & \multicolumn{5}{c}{FEO ($\alpha$)} & & \multicolumn{5}{c}{EFO ($\alpha$)} \\
        \cmidrule(lr){2-6} \cmidrule(lr){8-12}
        Measure & 0.2 & 0.4 & 0.6 & 0.8 & 1.0 & & 0.2 & 0.4 & 0.6 & 0.8 & 1.0 \\
        \midrule
        
        $\mathcal{R}(\alpha)/\mathcal{R}(0)$  
        & 99.66 & 99.17 & 98.63 & 97.93 & 97.14 &
        & 99.98 & 99.80 & 99.26 & 98.52 & 97.74 \\ 
        
        $\mathcal{S}(\alpha)/\mathcal{S}(0)$  
        & 100.50 & 101.19 & 101.94 & 102.90 & 104.00 &
        & 100.02 & 100.26 & 100.98 & 102.04 & 103.14 \\ 
        
        $\mathcal{W}(\alpha)/\mathcal{W}(0)$  
        & 100.15 & 100.35 & 100.57 & 100.84 & 101.16 &
        & 100.01 & 100.07 & 100.27 & 100.59 & 100.91 \\ 
        
        \bottomrule
    \end{tabular}
\end{table}

\subsection{Logistic Demand Estimation with Linear True Demand}
\label{appendix:mis_specified_logistic}
We consider a mis-specified setting where we assume the true demand function is linear but the decision maker uses a logistic demand model for estimation. In this case, both FEO and EFO suffer from the error in estimation.

Table~\ref{tab:measure_loss_fairness_logistic_mis} presents the results under parity-wise loss fairness. When increasing $\alpha$, profit first increases then decreases, whereas consumer surplus and social welfare keep decreasing, which is different from Table~\ref{tab:measure_loss_fairness_logistic},  Table~\ref{tab:measure_loss_fairness} and Table~\ref{tab:measure_loss_fairness_1}. Taken together, these findings indicate that parity-wise loss fairness does not necessarily lead to desirable downstream social outcomes and may not align with the fairness objectives of policymakers, sellers, and customers.

\begin{table}[htbp]
    \centering
    \caption{Normalized performance measures across $\alpha$ under parity-wise loss fairness (\%)}
    \label{tab:measure_loss_fairness_logistic_mis}
    \renewcommand{\arraystretch}{1.1} 

    \begin{tabular}{lccccc}
    \toprule
    & \multicolumn{5}{c}{$\alpha$} \\
    \cmidrule(lr){2-6}
    Measure & 0.2 & 0.4 & 0.6 & 0.8 & 1.0 \\
    \midrule
    $\mathcal{R}(\alpha)/\mathcal{R}(0)$ & 101.12 & 101.95 & 102.39 & 102.47 & 102.30 \\
    $\mathcal{S}(\alpha)/\mathcal{S}(0)$ & 95.98  & 92.35 &  89.73 &  88.30  & 88.04 \\
    $\mathcal{W}(\alpha)/\mathcal{W}(0)$ & 98.95 &  97.89 &  97.04 &  96.49  & 96.26 \\
    \bottomrule
    \end{tabular}
\end{table}

Figure~\ref{fig:logistic_parity_price_case_mis} presents the price distributions under parity-wise price fairness. In the setting where the true demand model is not included in the hypothesis class, parity-wise price fairness remains achievable because price is the directly controllable decision variable. However, due to model mis-specification, the resulting pricing decisions lead to worse downstream outcomes. As shown in Table~\ref{tab:measure_parity_price_logistic_mis}, profit, consumer surplus, and social welfare all decrease under EFO. This differs from the results in \cite{cohen2022price}, where consumer surplus and social welfare initially increase when the true demand model is known. 

\begin{figure}[htbp]
\FIGURE{
\begin{minipage}{\textwidth}
\centering
\captionsetup{justification=centering}
\begin{subfigure}{0.35\textwidth}
    \centering
    \includegraphics[width=\textwidth]{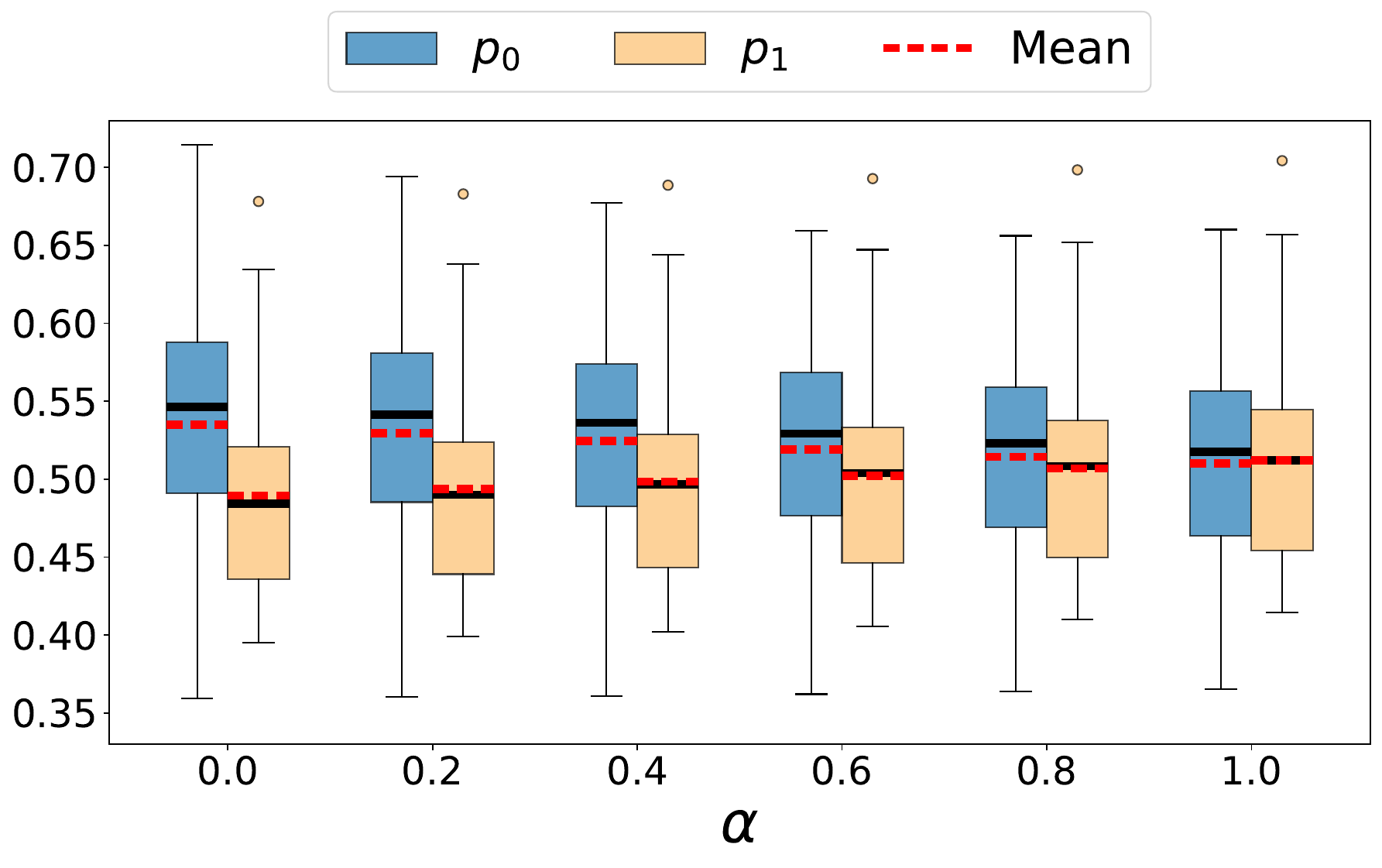}
    \caption{Price (FEO)}
\end{subfigure}
\begin{subfigure}{0.35\textwidth}
    \centering
    \includegraphics[width=\textwidth]{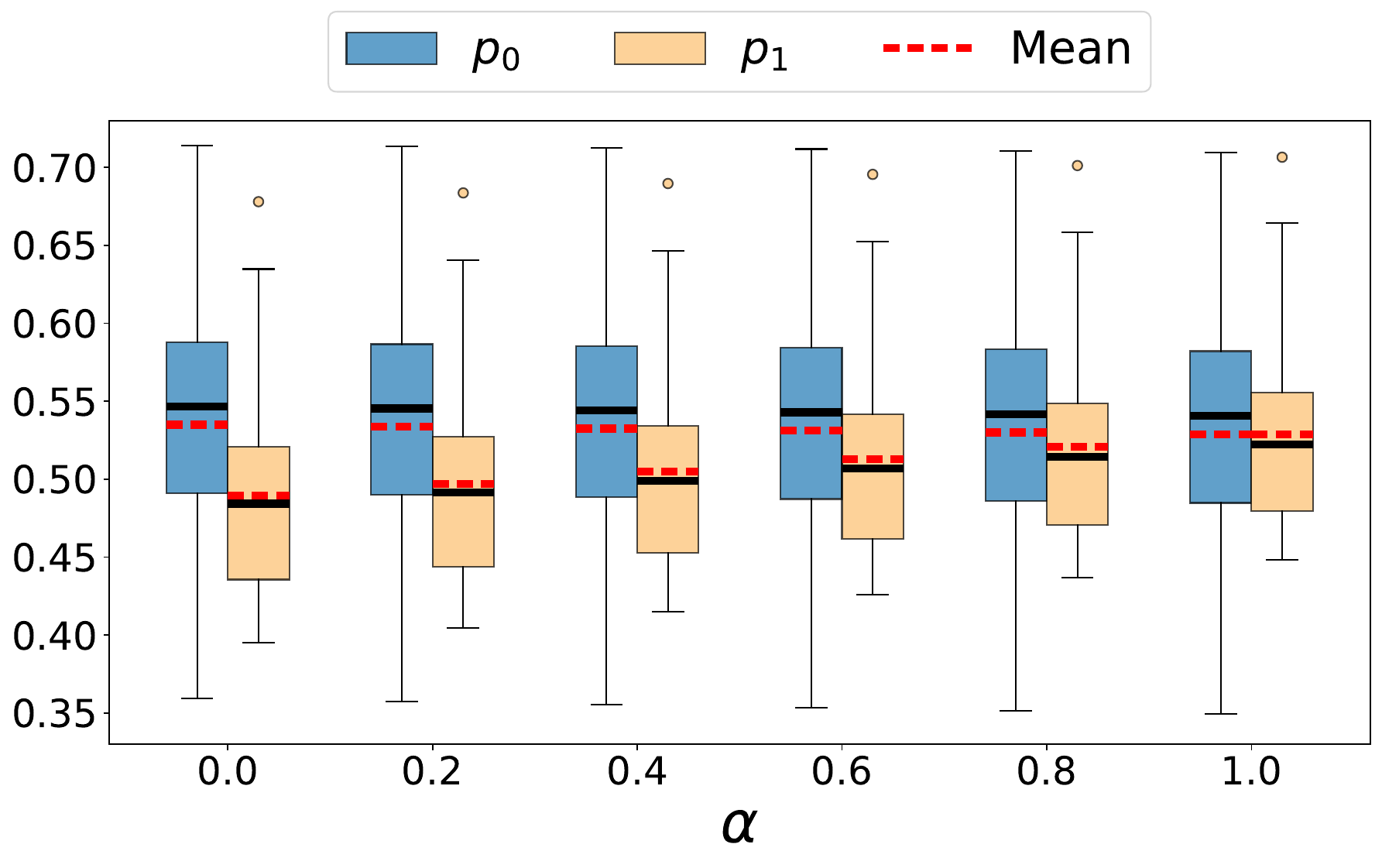}
    \caption{Price (EFO)}
\end{subfigure}
\end{minipage}
}
{Price distributions for FEO and EFO under parity-wise price fairness \label{fig:logistic_parity_price_case_mis}\vspace{1mm}}
{}
\end{figure}

\begin{table}[htbp]
    \centering
    \caption{Normalized performance measures for FEO and EFO under parity-wise price fairness (\%)}
    \label{tab:measure_parity_price_logistic_mis}
    \renewcommand{\arraystretch}{1.1}
    \setlength{\tabcolsep}{4pt}
    
    \begin{tabular}{l ccccc c ccccc}
        \toprule
        & \multicolumn{5}{c}{FEO ($\alpha$)} & & \multicolumn{5}{c}{EFO ($\alpha$)} \\
        \cmidrule(lr){2-6} \cmidrule(lr){8-12}
        Measure & 0.2 & 0.4 & 0.6 & 0.8 & 1.0 & & 0.2 & 0.4 & 0.6 & 0.8 & 1.0 \\
        \midrule
        % Revenue Row
        $\mathcal{R}(\alpha)/\mathcal{R}(0)$  
        & 99.72  & 99.44  & 99.07  & 98.76  & 98.43  & 
        & 99.99 & 99.98 & 99.95 & 99.93 & 99.89  \\ 
        
        % Surplus Row
        $\mathcal{S}(\alpha)/\mathcal{S}(0)$  
        & 101.18 & 102.21 & 103.49 & 104.49 & 105.41 & 
        & 99.98 & 99.97 & 99.95 & 99.92 & 99.89 \\ 
        
        % Welfare Row
        $\mathcal{W}(\alpha)/\mathcal{W}(0)$  
        & 100.34 & 100.61 & 100.94 & 101.18 & 101.38 & 
        & 99.99 & 99.97 & 99.95 & 99.93 & 99.89 \\ 
        \bottomrule
    \end{tabular}
\end{table}

Figure~\ref{fig:logistic_parity_demand_case_mis} illustrates the demand distributions under parity-wise demand fairness. The decision maker only has access to the estimated demand function, so they can only control the ex-ante demand gaps between two groups. However, the ex-post demand gaps even become larger for FEO and EFO when $\alpha=1.0$. Table~\ref{tab:measure_parity_demand_logistic_mis} shows that for FEO, profit increases whereas consumer surplus and social welfare decrease, and for EFO there is no significant change in this metrics. The increase in profit under FEO arises because the fairness constraints act as a regularization term, which is consistent with Proposition~\ref{prop4}.

\begin{figure}[htbp]
\FIGURE{
\begin{minipage}{\textwidth}
\centering
\captionsetup{justification=centering}
\begin{subfigure}{0.35\textwidth}
    \centering
    \includegraphics[width=\textwidth]{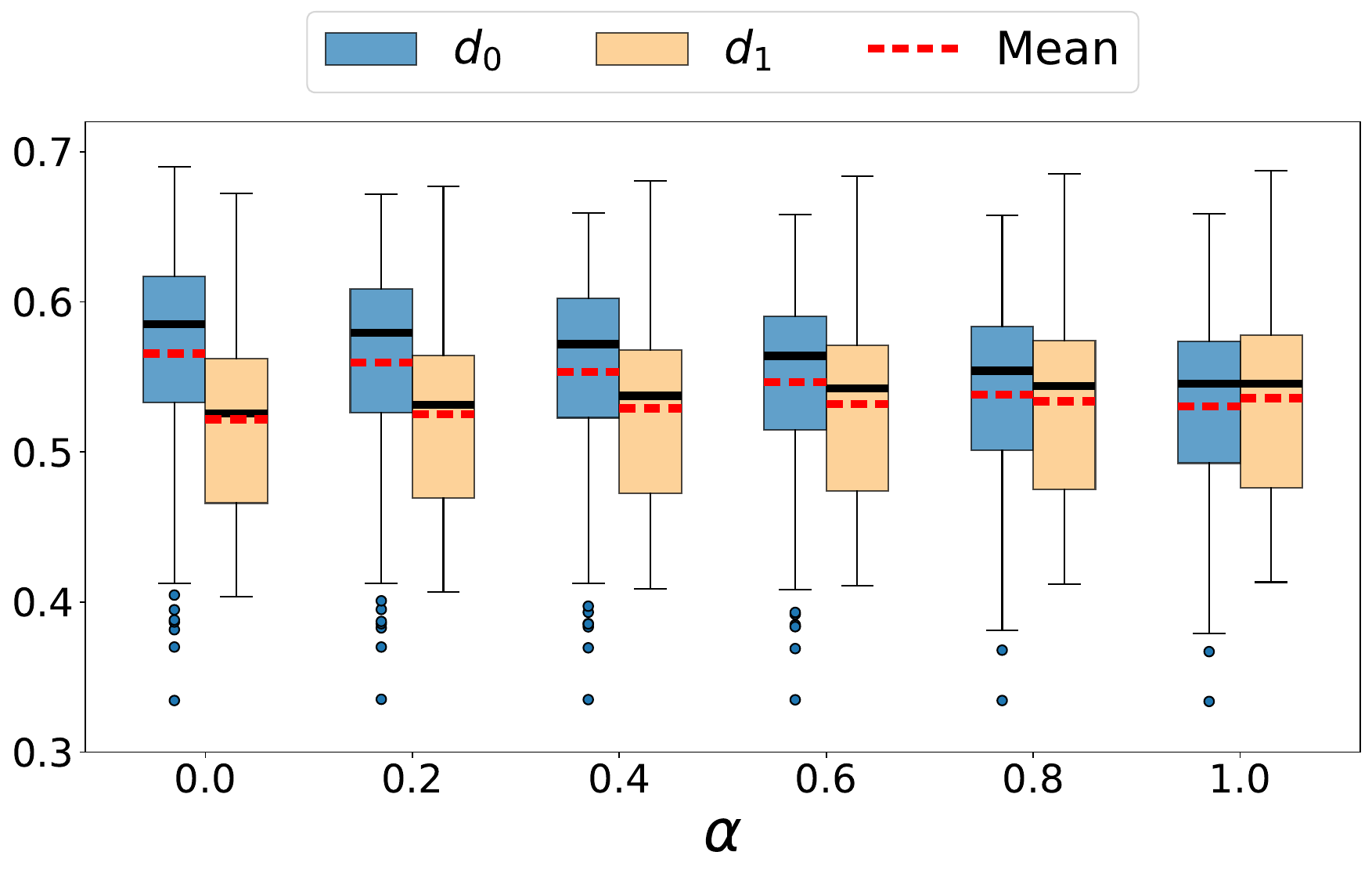}
    \caption{Ex Ante Demand (FEO)}
\end{subfigure}
\begin{subfigure}{0.35\textwidth}
    \centering
    \includegraphics[width=\textwidth]{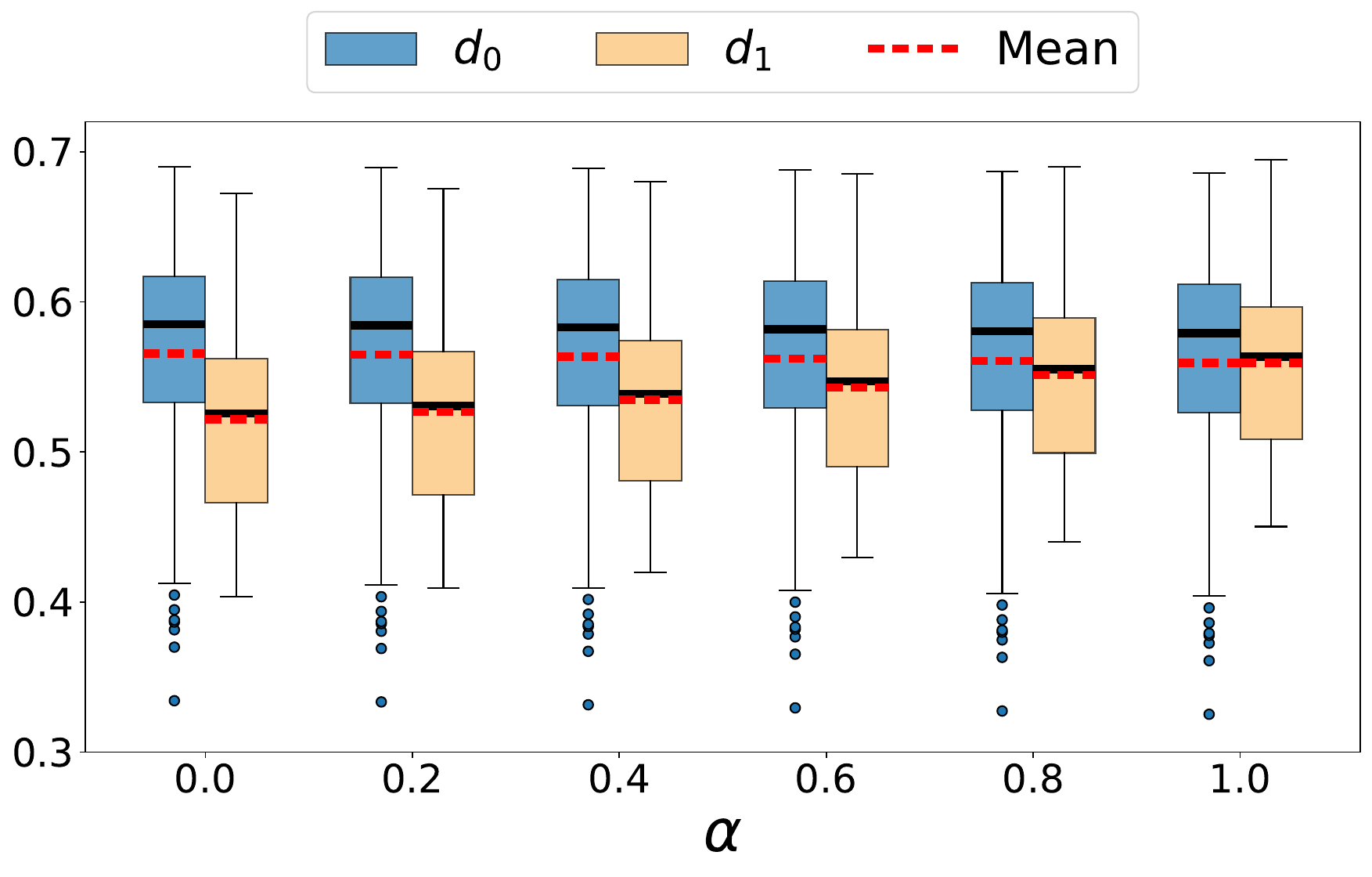}
    \caption{Ex Ante Demand (EFO)}
\end{subfigure}
\begin{subfigure}{0.35\textwidth}
    \centering
    \includegraphics[width=\textwidth]{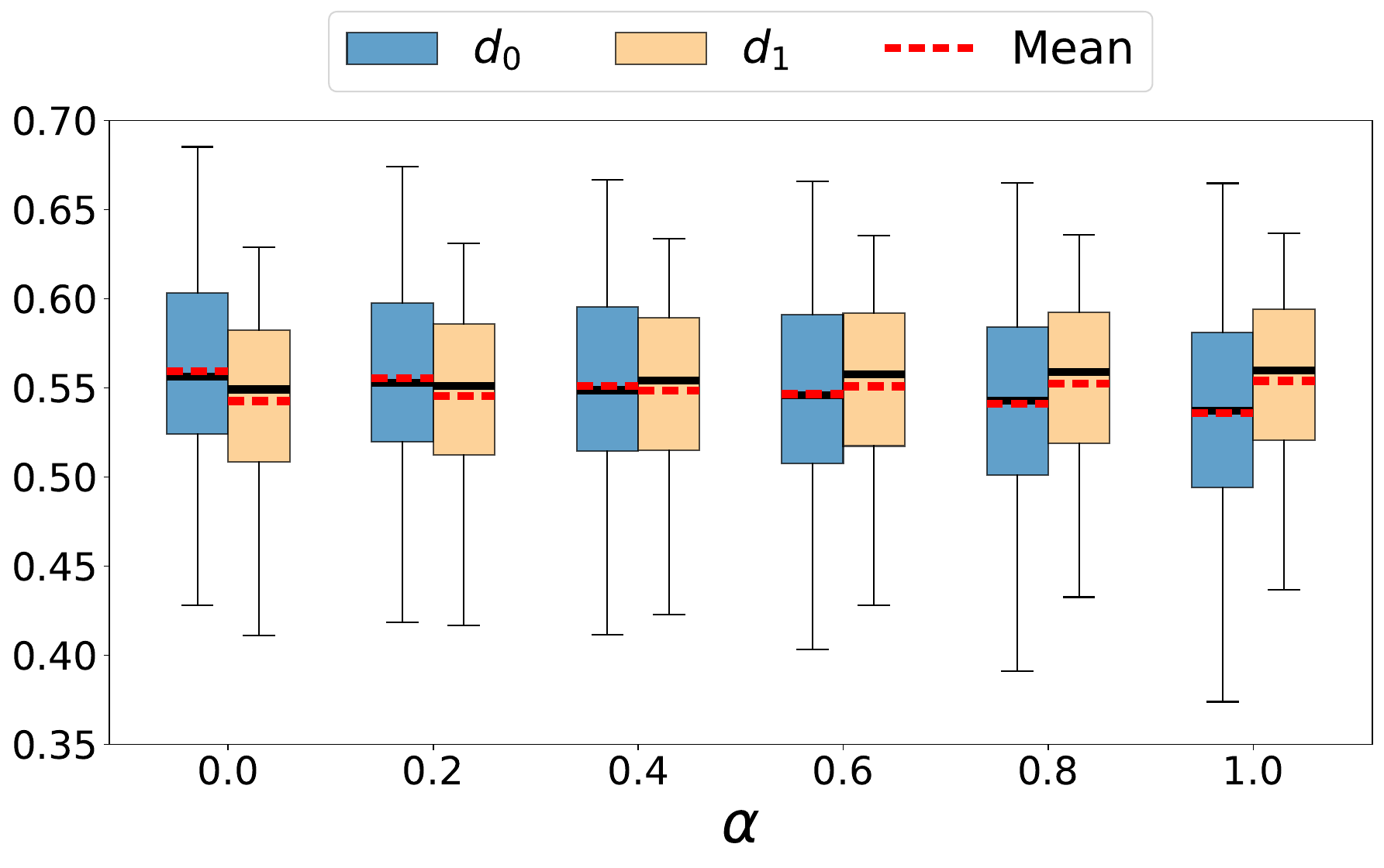}
    \caption{Ex Post Demand (FEO)}
\end{subfigure}
\begin{subfigure}{0.35\textwidth}
    \centering
    \includegraphics[width=\textwidth]{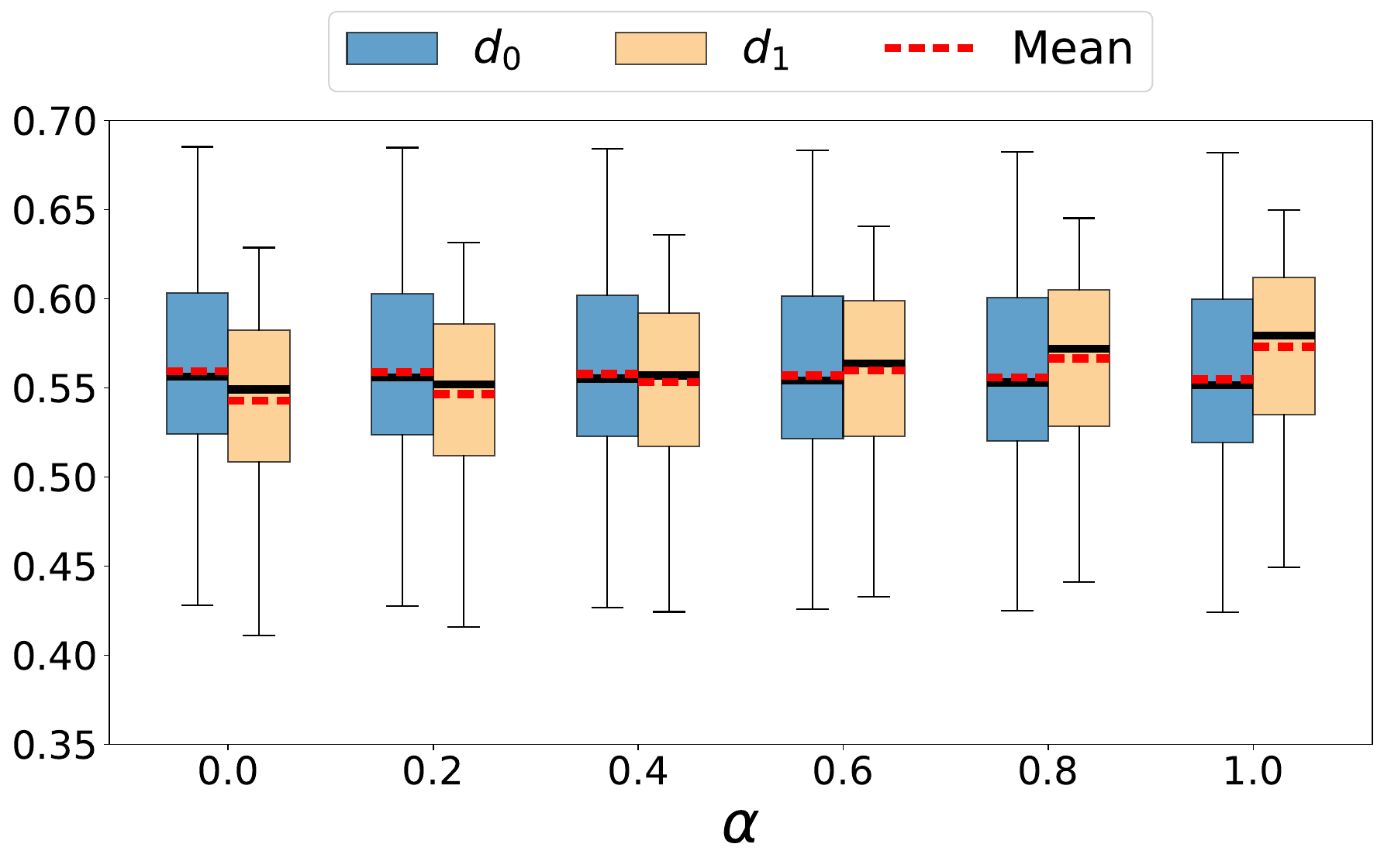}
    \caption{Ex Post Demand (EFO)}
\end{subfigure}
\end{minipage}
}
{Demand distributions for FEO and EFO under parity-wise demand fairness \label{fig:logistic_parity_demand_case_mis}\vspace{2mm}}
{}
\end{figure}

\begin{table}[htbp]
    \centering
    \caption{Normalized performance measures for FEO and EFO under parity-wise demand fairness (\%)}
    \label{tab:measure_parity_demand_logistic_mis}
    \renewcommand{\arraystretch}{1.1}
    
    \begin{tabular}{l ccccc c ccccc}
        \toprule
        & \multicolumn{5}{c}{FEO ($\alpha$)} & & \multicolumn{5}{c}{EFO ($\alpha$)} \\
        \cmidrule(lr){2-6} \cmidrule(lr){8-12}
        Measure & 0.2 & 0.4 & 0.6 & 0.8 & 1.0 & & 0.2 & 0.4 & 0.6 & 0.8 & 1.0 \\
        \midrule
        $\mathcal{R}(\alpha)/\mathcal{R}(0)$ & 100.25 & 100.48 & 100.69 & 100.95 & 101.15 & & 100.00 & 100.00  & 99.99 &  99.98 &  99.97 \\
        
        $\mathcal{S}(\alpha)/\mathcal{S}(0)$ & 98.86 &  97.69 &  96.45 &  94.89 &  93.46  & & 100.01 & 100.03 & 100.04 & 100.06 & 100.08  \\
        
        $\mathcal{W}(\alpha)/\mathcal{W}(0)$ & 99.66 &  99.30 &  98.90 &  98.39  & 97.90  & & 100.01 & 100.01 & 100.02 & 100.02 & 100.01 \\
        \bottomrule
    \end{tabular}
\end{table}

Figure~\ref{fig:logistic_Rawlsian_price_case_mis} and Table~\ref{tab:measure_rawlsian_price_logistic_mis} present the results for Rawlsian price fairness. The change in price distributions and downstream outcomes are similar to well-specified cases. This suggests that Rawlsian price fairness can remain effective even when the estimated demand model is mis-specified.

\begin{figure}[htbp]
\FIGURE{
\begin{minipage}{\textwidth}
\centering
\captionsetup{justification=centering}
\begin{subfigure}{0.35\textwidth}
    \centering
    \includegraphics[width=\textwidth]{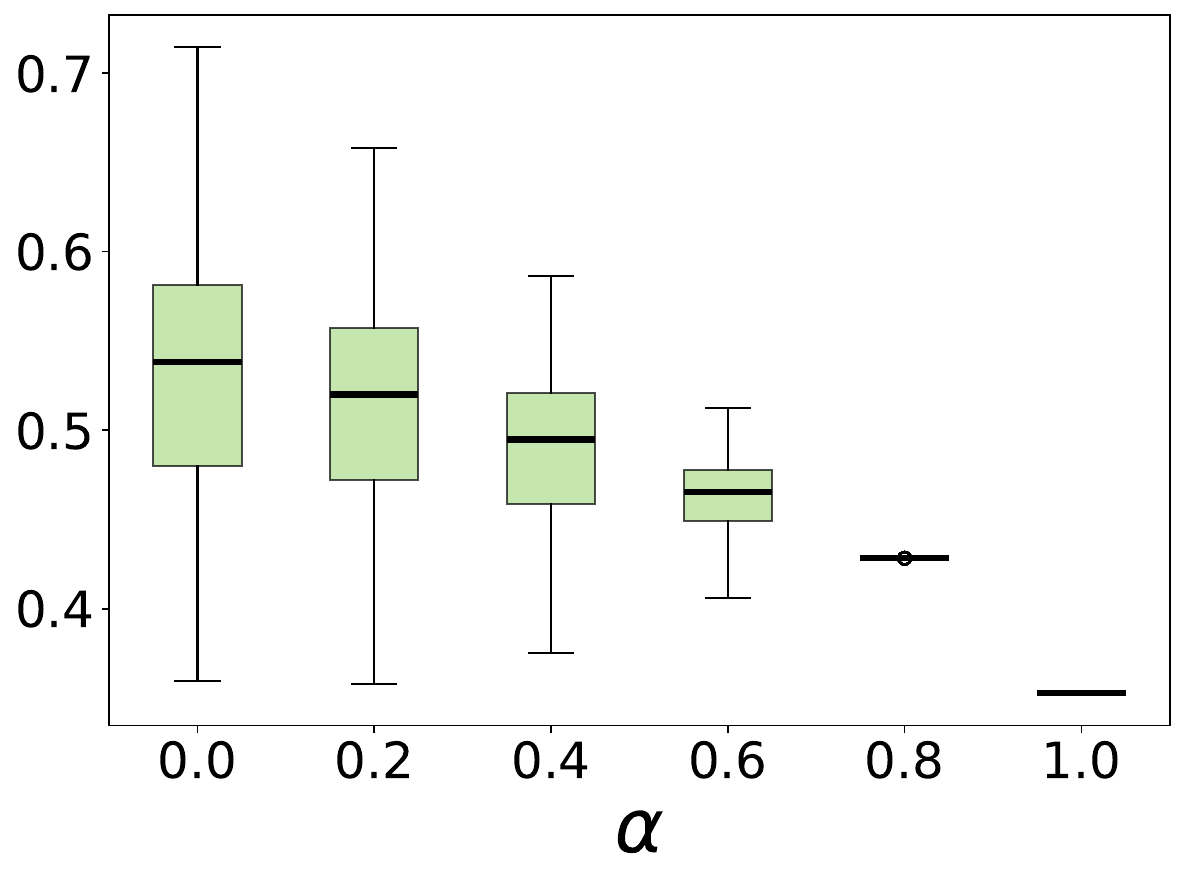}
    \caption{Price under FEO}
    % \label{fig:rawls_price_feo}
\end{subfigure}
\begin{subfigure}{0.35\textwidth}
    \centering
    \includegraphics[width=\textwidth]{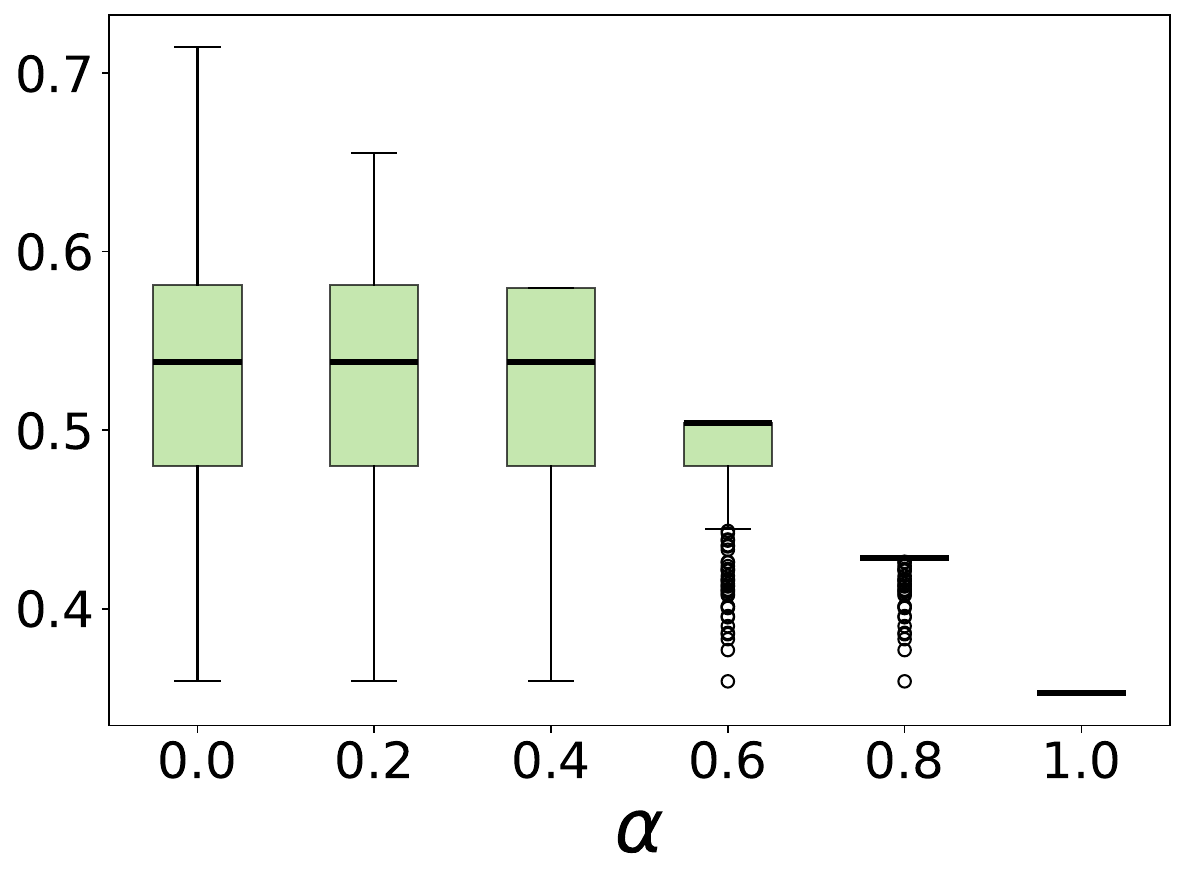}
    \caption{Price under EFO}
    % \label{fig:rawls_price_efo}
\end{subfigure}
\end{minipage}
}
{Price distributions for FEO and EFO under Rawlsian price fairness \label{fig:logistic_Rawlsian_price_case_mis}\vspace{0.5em}}
{}
\end{figure}

\begin{table}[htbp]
    \centering
    \caption{Normalized performance measures for FEO and EFO under Rawlsian price fairness (\%)}
    \label{tab:measure_rawlsian_price_logistic_mis}
    \renewcommand{\arraystretch}{1.1}
    % No font size reduction (\small/\footnotesize) as requested
    
    \begin{tabular}{l p{0.2cm} ccccc p{0.4cm} ccccc}
        \toprule
        & & \multicolumn{5}{c}{FEO ($\alpha$)} & & \multicolumn{5}{c}{EFO ($\alpha$)} \\
        \cmidrule(lr){3-7} \cmidrule(lr){9-13}
        Measure & & 0.2 & 0.4 & 0.6 & 0.8 & 1.0 & & 0.2 & 0.4 & 0.6 & 0.8 & 1.0 \\
        \midrule
        $\mathcal{R}(\alpha)/\mathcal{R}(0)$ & & 98.92 &  96.88 &  94.13 &  89.91 &  79.30 & & 99.94  & 99.31 &  96.31 &  89.78 &  79.30\\
        
        $\mathcal{S}(\alpha)/\mathcal{S}(0)$ & & 104.63 & 112.02 & 120.24 & 131.44 & 157.21 & & 100.16 & 102.62 & 113.17 & 132.21 & 157.21\\
        
        $\mathcal{W}(\alpha)/\mathcal{W}(0)$ & & 101.34 & 103.28 & 105.17 & 107.47 & 112.25 & & 100.03 & 100.71 & 103.44 & 107.72 & 112.25\\
        \bottomrule
    \end{tabular}
\end{table}

Figure~\ref{fig:logistic_Rawlsian_demand_case_mis} shows the results under Rawlsian demand fairness. Similar to parity-wise demand fairness, imposing Rawlsian demand fairness constraints has a much smaller effect on ex-post demand than on ex-ante demand. Under both FEO and EFO, the minimum ex-post demand increases only slightly, or even decreases for some $\alpha$. Moreover, the maximum ex-post demand increases substantially. This suggests that Rawlsian demand fairness constraints may be imposed on the wrong population. Therefore, even though in Table~\ref{tab:measure_rawlsian_demand_logistic_mis} consumer surplus and social welfare increase for both FEO and EFO, the intended fairness objective may still not be achieved. This leaves an open question of how to ensure demand fairness when the demand estimation is insufficiently accurate.

\begin{figure}[htbp]
\FIGURE{
\begin{minipage}{\textwidth}
\centering
\captionsetup{justification=centering}
\begin{subfigure}{0.235\textwidth}
    \centering
    \includegraphics[width=\textwidth]{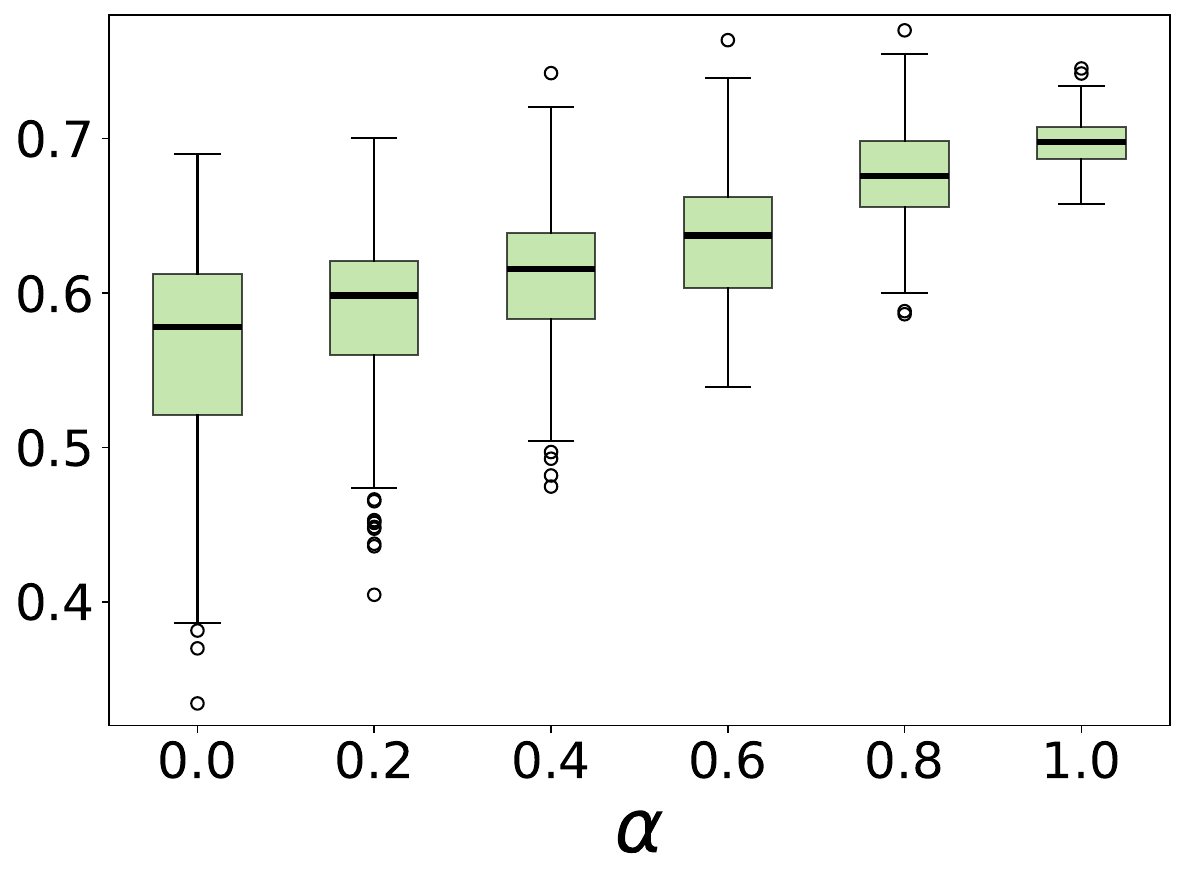}
    \caption{Ex Ante Demand (FEO)}
\end{subfigure}
\begin{subfigure}{0.235\textwidth}
    \centering
    \includegraphics[width=\textwidth]{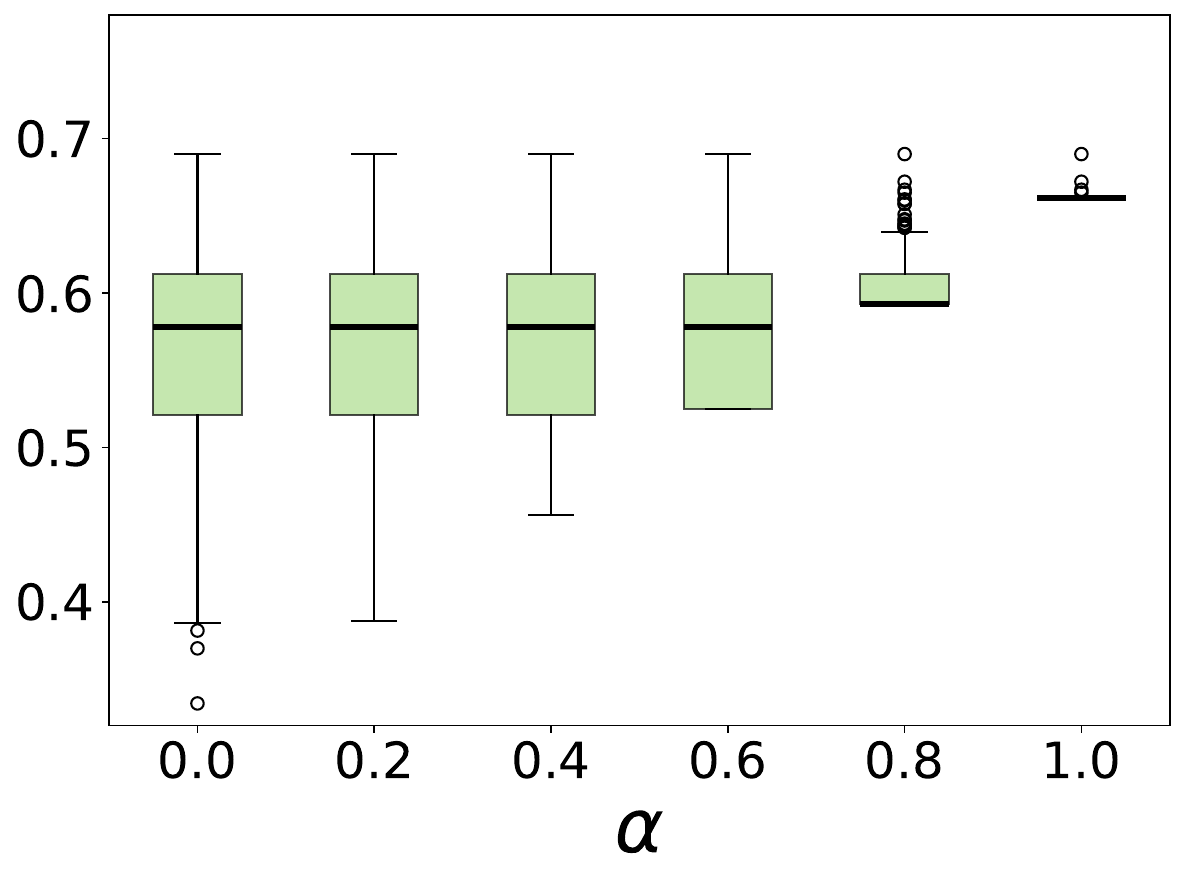}
    \caption{Ex Ante Demand (EFO)}
\end{subfigure}
\begin{subfigure}{0.235\textwidth}
    \centering
    \includegraphics[width=\textwidth]{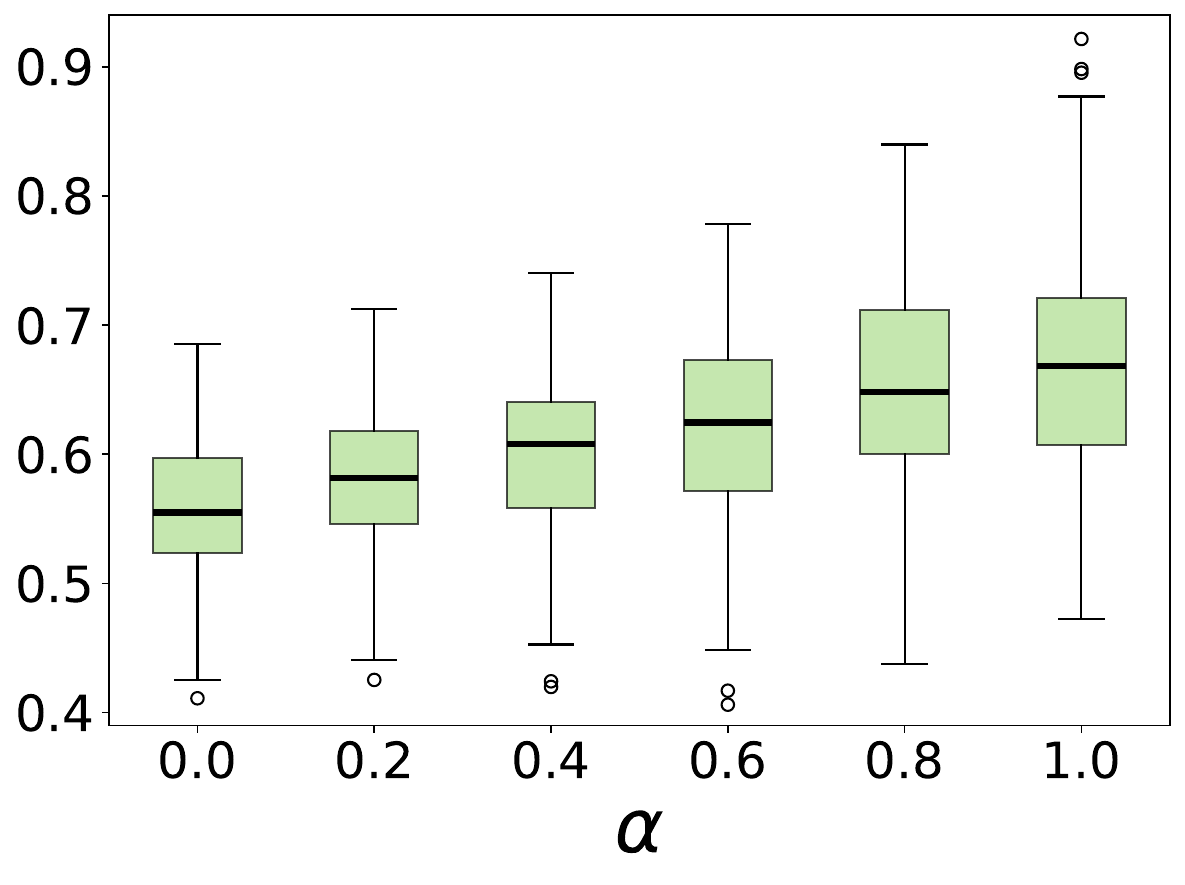}
    \caption{Ex Post Demand (FEO)}
\end{subfigure}
\begin{subfigure}{0.235\textwidth}
    \centering
    \includegraphics[width=\textwidth]{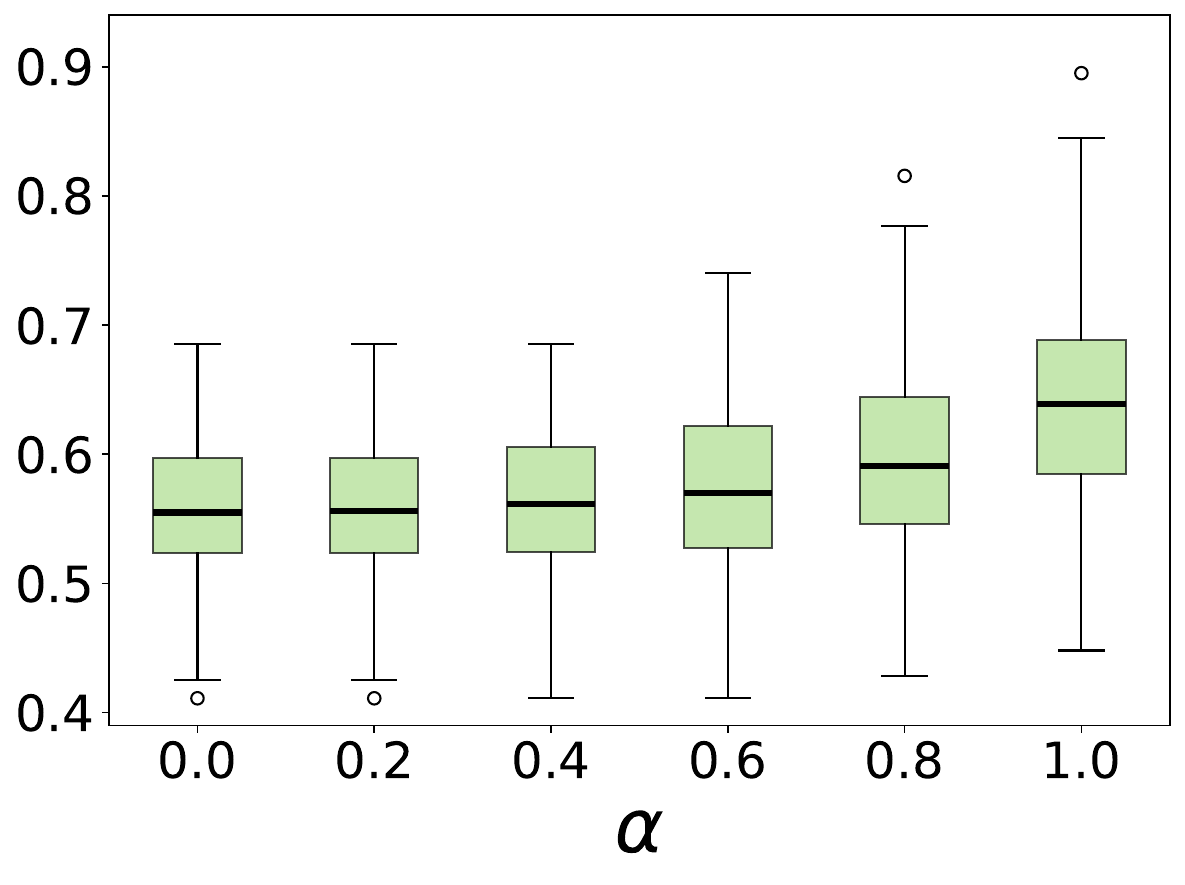}
    \caption{Ex Post Demand (EFO)}
\end{subfigure}
\end{minipage}
}
{Demand distributions for FEO and EFO under Rawlsian demand fairness \label{fig:logistic_Rawlsian_demand_case_mis}\vspace{0.5em}}
{}
\end{figure}

\begin{table}[htbp]
    \centering
    \caption{Normalized performance measures for FEO and EFO under Rawlsian demand fairness (\%)}
    \label{tab:measure_rawlsian_demand_logistic_mis}
    \renewcommand{\arraystretch}{1.1}
    \setlength{\tabcolsep}{4pt} % Adjust column spacing for a perfect fit
    
    \begin{tabular}{l ccccc p{0.3cm} ccccc}
    \toprule
    & \multicolumn{5}{c}{FEO ($\alpha$)} & & \multicolumn{5}{c}{EFO ($\alpha$)} \\
    \cmidrule(lr){2-6} \cmidrule(lr){8-12}
    Measure & 0.2 & 0.4 & 0.6 & 0.8 & 1.0 & & 0.2 & 0.4 & 0.6 & 0.8 & 1.0 \\
    \midrule
    
    % Revenue Row
    $\mathcal{R}(\alpha)/\mathcal{R}(0)$ 
    & 98.51 &  96.37 &  93.35 &  88.38 &  85.95
    & 
    & 99.98 &  99.68  & 98.66 &  96.15 &  90.51 \\

    % Surplus Row
    $\mathcal{S}(\alpha)/\mathcal{S}(0)$ 
    & 106.29 & 113.00 & 120.61 & 133.08 & 138.68
    & 
    & 100.09 & 101.15 & 104.29 & 111.91 & 128.11 \\

    % Welfare Row
    $\mathcal{W}(\alpha)/\mathcal{W}(0)$ 
    & 101.80 & 103.40 & 104.88 & 107.28 & 108.25
    & 
    & 100.03 & 100.30 & 101.04 & 102.81 & 106.41 \\

    \bottomrule
    \end{tabular}
\end{table}

\end{APPENDIX}

%%%%%%%%%%%%%%%%%
\end{document}